\newif\ifabstract
\newif\iffull
\newcommand{\connect}{\leadsto}
\par\vspace{4mm}}
\newcommand{\opt}{\mbox{\sf OPT}}
\newcommand{\set}[1]{\left\{ #1 \right\}}
\newcommand{\sse}{\subseteq}
\newcommand{\tset}{{\mathcal T}}
\newcommand{\pset}{{\mathcal{P}}}
\newcommand{\bset}{{\mathcal{B}}}
\newcommand{\aset}{{\mathcal{A}}}
\newcommand{\cset}{{\mathcal{C}}}
\newcommand{\fset}{{\mathcal{F}}}
\newcommand{\rset}{{\mathcal{R}}}
\newcommand{\zset}{{\mathcal{Z}}}
\newcommand{\sset}{{\mathcal{S}}}
\newcommand{\gset}{{\mathcal{G}}}
\newcommand{\nots}{\overline S}
\newcommand{\be}{\begin{enumerate}}
\newcommand{\ee}{\end{enumerate}}
\newcommand{\bd}{\begin{description}}
\newcommand{\ed}{\end{description}}
\newcommand{\bi}{\begin{itemize}}
\newcommand{\ei}{\end{itemize}}
\newtheorem{claim}{Claim}[section] 
\newtheorem{observation}{Observation}[section]
\newtheorem{remark}{Remark}[section]
\newtheorem{remark}{Remark}
\newtheorem{lemma}{Lemma}
\newtheorem{theorem}{Theorem}
\newtheorem{observation}{Observation}
\newtheorem{corollary}{Corollary}
\newtheorem{claim}{Claim}
\newenvironment{Definition}{{\bf Definition}: }{}
\newtheorem{proposition}{Proposition}
\newenvironment{proof}{\par \smallskip{\bf Proof:}}{\hfill\stopproof}
\def\stopproof{\square}
\def\square{\vbox{\hrule height.2pt\hbox{\vrule width.2pt height5pt \kern5pt
\vrule width.2pt} \hrule height.2pt}}
\newcommand{\qed}{\hfill\vbox{\hrule height.2pt\hbox{\vrule width.2pt height5pt \kern5pt
\vrule width.2pt} \hrule height.2pt}}
\renewcommand{\phi}{\varphi}
\newcommand{\poly}{\operatorname{poly}}
\newcommand{\MP}{\mbox{\sf Minimum Planarization}\xspace}
\newcommand{\MCN}{\mbox{\sf Minimum Crossing Number}\xspace}
\newcommand{\optmp}[1]{\mathsf{OPT}_{\mathsf{MP}}(#1)}
\newcommand{\optcro}[1]{\mathsf{OPT}_{\mathsf{cr}}(#1)}
\newcommand{\optcrosq}[1]{\mathsf{OPT}^{2}_{\mathsf{cr}}(#1)}
\newcommand{\pcro}{\mathsf{pcr}}
\newcommand{\cro}{\mathsf{cr}}
\newcommand{\irreg}{\mathsf{IRG}}
\newcommand{\dmax}{d_{\mbox{\textup{\footnotesize{max}}}}}
\newcommand{\G}{{\mathbf{G}}}
\renewcommand{\H}{{\mathbf{H}}}
\newcommand{\bphi}{{\boldsymbol{\varphi}}}
\newcommand{\bpsi}{{\boldsymbol{\psi}}}
\newcommand{\ndew}{\mathsf{ndew}}
\begin{document}

\title{\Large On Graph Crossing Number and Edge Planarization\iffull\footnote{Extended abstract to appear in SODA 2011}\fi}
\author{Julia Chuzhoy\thanks{Toyota Technological Institute, Chicago, IL
60637. Email: {\tt cjulia@ttic.edu}. Supported in part by NSF CAREER award CCF-0844872.}
 \and Yury Makarychev\thanks{Toyota Technological Institute, Chicago, IL
60637. Email: {\tt yury@ttic.edu}}\and Anastasios Sidiropoulos\thanks{Toyota Technological Institute, Chicago, IL
60637. Email: {\tt tasos@ttic.edu}}}
\date{}
\maketitle

\begin{abstract}
\ifabstract
\small\baselineskip=9pt%
\fi
Given an $n$-vertex graph $G$, a \emph{drawing} of $G$ in the plane is a mapping of its vertices into points of the plane, and its edges into continuous curves, connecting the images of their endpoints. 
A \emph{crossing} in such a drawing is a point where two such curves intersect. In the \MCN problem, the goal is to find a drawing of $G$ with minimum number of crossings. The value of the optimal solution, denoted by $\opt$, is called the graph's \emph{crossing number}. This is a very basic problem in topological graph theory, that has received a significant amount of attention, but is still poorly understood algorithmically. The best currently known efficient algorithm produces drawings with $O(\log^2 n) \cdot (n + \opt)$  crossings on bounded-degree graphs, while only a constant factor hardness of approximation is known. A closely related problem is \MP, in which the goal is to remove a minimum-cardinality subset of edges from $G$, such that the remaining graph is planar.

\ifabstract
\small\baselineskip=9pt%
\fi
Our main technical result establishes the following connection between the two problems: if we are given a solution of cost $k$ to the \MP problem on graph $G$, then we can efficiently find a drawing of $G$ with at most $\poly(d)\cdot k\cdot (k+\opt)$ crossings, where $d$ is the maximum degree in $G$. This result implies an $O(n\cdot \poly(d)\cdot \log^{3/2}n)$-approximation for \MCN, as well as improved algorithms for
special cases of the problem, such as, for example, $k$-apex and bounded-genus graphs.
\end{abstract}

\section{Introduction}

A \emph{drawing} of a graph $G$ in the plane is a mapping, in which every vertex is mapped into a point of the plane, and every edge into a continuous curve connecting the images of its endpoints.
We assume that no three curves meet at the same point (except at their endpoints), and that no curve contains an image of any vertex other than its endpoints.
A \emph{crossing} in such a drawing is a point where the drawings of two edges intersect, and the \emph{crossing number} of a graph $G$, denoted by $\optcro{G}$, is the smallest integer $c$, such that $G$ admits a drawing with $c$ crossings.
In the \MCN problem, given an $n$-vertex graph $G$,  the goal is to find a drawing of $G$ in the plane that minimizes the number of crossings.
A closely related problem is \MP, in which the goal is to find a minimum-cardinality subset $E^*$ of edges, such that the graph $G\setminus E^*$ is planar. The optimal solution cost of the \MP problem on graph $G$ is denoted by $\optmp{G}$, and it is easy to see that $\optmp{G}\leq \optcro{G}$.

The problem of computing the crossing number of a graph was first considered by Tur\'{a}n \cite{turan_first}, who posed the question of estimating the crossing number of the complete bipartite graph.
Since then, the problem has been a subject of intensive study.
We refer the interested reader to the expositions by Richter and Salazar \cite{richter_survey}, Pach and T\'{o}th \cite{pach_survey}, and Matou\v{s}ek \cite{matousek_book}, and the extensive bibliography maintained by Vrt'o \cite{vrto_biblio}. 
Despite the enormous interest in the problem, and several breakthroughs over the last four decades, there is still very little understanding of even some of the most basic questions. For example, to the time of this writing, the crossing number of $K_{13}$ remains unknown.

Perhaps even more surprisingly, the \MCN problem remains poorly understood algorithmically.
In their seminal paper, Leighton and Rao \cite{LR}, combining their algorithm for balanced separators with the framework of Bhatt and Leighton \cite{bhatt84}, gave the first non-trivial algorithm for the problem. Their algorithm computes a drawing with at most $O(\log^4 n) \cdot (n + \optcro{G})$ crossings, when the degree of the input graph is bounded.
This algorithm was later improved to $O(\log^3 n) \cdot (n+\optcro{G})$ by Even et al.~\cite{EvenGS02}, and the new approximation algorithm for the Balanced Cut problem by Arora, Rao and Vazirani~\cite{ARV} improves it further to $O(\log^2 n) \cdot (n+\optcro{G})$, thus implying an $O(n \cdot \log^2 n)$-approximation for \MCN on bounded-degree graphs. Their result can also be shown to give an $O(n\cdot\log^2n\cdot \poly(\dmax))$-approximation for general graphs with maximum degree $\dmax$. We remark that in the worst case, the crossing number of a graph can be as large as $\Omega(n^4)$, e.g.~for the complete graph.

On the negative side, computing the crossing number of a graph was shown to be NP-complete by Garey and Johnson \cite{crossing_np_complete}, and it remains NP-complete even on cubic graphs~\cite{Hlineny06a}.
Combining the reduction of \cite{crossing_np_complete} with the inapproximability result for Minimum Linear Arrangement~\cite{Ambuhl07}, we get that there is no PTAS for the \MCN problem unless problems in NP have randomized subexponential time algorithms.
Interestingly, even for the very restricted special case, where there is an edge $e$ in $G$, such that $G\setminus e$ is planar, the \MCN problem still remains NP-hard~\cite{cabello_edge}. However, an $O(\dmax)$-approximation algorithm is known for this special case, where $\dmax$ is the maximum degree in $G$~\cite{HlinenyS06}.
Therefore, while the current techniques cannot exclude the existence of a constant factor approximation for \MCN, the state of the art gives just an $O(n \cdot\poly(\dmax)\cdot \log^2 n)$-approximation algorithm. 

In this paper, we provide new technical tools that we hope will lead to a better understanding of the \MCN problem. We also obtain improved approximation algorithms for special cases where the optimal solution for the \MP problem is small or can be approximated efficiently.

\subsection{Our Results}
Our main technical result establishes the following connection between the \MCN and the \MP problems:

\begin{theorem}\label{thm:main}
Let $G=(V,E)$ be any $n$-vertex graph with maximum degree $\dmax$, and suppose we are given a subset $E^*\sse E$ of edges, $|E^*|=k$, such that $H=G\setminus E^*$ is planar. Then we can efficiently find a drawing of $G$ with at most $O\left(\dmax^3\cdot k\cdot (\optcro{G}+k)\right )$ crossings.
\end{theorem}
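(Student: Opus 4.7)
The plan is to start from an optimal drawing $\phi^*$ of $G$ (with $\optcro{G}$ crossings) and use it to guide the construction of a new drawing in which $H = G \setminus E^*$ is drawn planarly while each bad edge in $E^*$ is rerouted. After fixing a planar embedding $\phi_H$ of $H$ and drawing each $e \in E^*$ as a curve $\gamma_e$ in the plane, the crossings of the output decompose into $E^*$-$H$ crossings (at most $k$ times the maximum number of $H$-edges crossed by any single $\gamma_e$) plus $E^*$-$E^*$ crossings, so the whole argument reduces to (a) picking the right $\phi_H$ and (b) bounding the length of each route $\gamma_e$ in terms of $\optcro{G}$ and $k$.

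For step (a), I would not take an arbitrary planar embedding of $H$ but one that is ``combinatorially compatible'' with $\phi^*$. Since $H$ admits many planar embeddings in general, I would decompose $H$ via its SPQR tree (tree of $3$-connected components): each $3$-connected component has a unique embedding up to reflection, and the remaining degrees of freedom --- reflections and the cyclic order around each cut pair --- would be set so as to match the rotation system of $\phi^*|_H$ as closely as possible. Because $\phi^*|_H$ has at most $\optcro{G}$ crossings among $H$-edges, this matching succeeds everywhere except at ``defect'' locations in bijection with those crossings.

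For step (b), each $e = (u,v) \in E^*$ would be routed by walking along the curve $\phi^*(e)$ and translating each maximal sub-arc between consecutive crossings into a traversal of a face of $\phi_H$ in the dual sense. Each $H$-edge that $\phi^*(e)$ crosses in $\phi^*$ contributes roughly one crossing of $\gamma_e$ with $H$ in the output. The discrepancies stemming from the defect locations of (a), together with the need to re-concatenate segments at shared vertices, would incur $\poly(\dmax)$ additional crossings per defect, yielding a route of length at most $O(\dmax^2(\optcro{G}+k))$ per bad edge. Summing over the $k$ edges of $E^*$ and adding $\binom{k}{2}$ for $E^*$-$E^*$ crossings gives the claimed bound $O(\dmax^3 \cdot k(\optcro{G}+k))$.

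The main obstacle I anticipate is the discrepancy analysis of step (b): precisely quantifying how a single change to the embedding of $H$ (say, a flip of a $3$-connected component) can alter the homotopy class, and hence the realizable length, of a route for a bad edge. In the worst case, a single flip could force a route to wind around many faces of $\phi_H$; the heart of the argument is to charge every such winding either to an $H$-$H$ crossing of $\phi^*$ (of which there are at most $\optcro{G}$) or to local rerouting around a single vertex (whose neighbors can contribute at most $\dmax$ extra crossings each), so that the overall blow-up is bounded by $\poly(\dmax)$. Making this charging argument fully rigorous, and in particular handling cut vertices of $H$ that are traversed by many bad edges, is where I expect the bulk of the technical work to lie.
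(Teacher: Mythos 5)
Your overall plan matches the paper's: find a planar embedding $\bpsi$ of $H$ that is combinatorially close to the restriction $\bphi_{\H}$ of an optimal drawing, then insert each edge of $E^*$ as a shortest route in the dual and charge its crossings to the ``defects'' between the two embeddings plus the crossings of $\bphi$. The SPQR-tree idea for fixing the degrees of freedom of the planar embedding is also exactly what the paper does (its block decomposition of Section~\ref{sec: blocks} is built from the SPQR tree). But there are two places where the proposal as written has real gaps.

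First, the claim that the mismatches between $\bpsi$ and $\bphi_{\H}$ are ``in bijection with'' the $H$--$H$ crossings of $\bphi$ is false once $H$ is not $3$-connected. The paper's Figure~\ref{fig: example2} is a counterexample: $\bphi_{\H}$ and $\bpsi$ can both be \emph{planar} drawings of $H$ (so there are zero $H$--$H$ crossings to charge to), yet the number of vertices whose rotation differs between them can be $\Omega(n)$. This is precisely why the bound in the paper's Theorem~\ref{thm: good planar drawing} is $O(\optcro{\G}+k)$ and not $O(\optcro{\G})$: the extra irregular vertices have to be charged to edges of $E^*$, using the $3$-connectivity of $\G$ (not of $H$) to guarantee that every $2$-separator of $H$ is ``bridged'' by some $E^*$-edge. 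Your proposal never invokes $3$-connectivity of $G$ and never lets $k$ enter this count, so the bookkeeping for step (a) does not close. (The analysis is also genuinely delicate: a long chain of nested $2$-separators --- a ``tunnel'' in the paper's terminology --- can contain only a single $E^*$-endpoint, so one must additionally reflip embeddings along the tunnel and charge only $O(1)$ per tunnel, which is most of the work in the proof of Lemma~\ref{lem:irregular2}.)

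Second, the routing step (b) is more subtle than ``translate each sub-arc of $\phi^*(e)$ between consecutive crossings into a face traversal of $\phi_H$.'' The faces of $\bphi_{\H}$ and $\bpsi$ are different objects, and a single flip of an SPQR component can make a face of $\bphi_{\H}$ fragment across $\Omega(n)$ faces of $\bpsi$, so a per-sub-arc translation has no obvious bound. The paper sidesteps this by introducing an intermediate object: a path $P_e$ in $H$ itself (Proposition~\ref{thm:rerouting-along-path}) along which $e$ can be rerouted within $\bphi$ at cost $O(\cro_{\bphi}(H)+\cro_{\bphi}(e,E(H)))$. Because $P_e$ is a combinatorial path in $H$ rather than a sequence of faces, it transfers cleanly to $\bpsi$, and the extra cost of following it there is controlled vertex-by-vertex and edge-by-edge by the irregularity counts (Proposition~\ref{thm:embedding-along-path}). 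Without this ``route along a path of $H$'' intermediary, the charging argument you sketch --- charging each winding either to an $H$--$H$ crossing or to a local reroute around a vertex --- does not have a clear inductive structure to make it go through. You correctly identify this discrepancy analysis as the hard part; as written, the proposal describes the target of the argument but not a mechanism that achieves it.
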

\begin{remark}
Note that there always exists a subset $E^*$ of edges of size $\optmp{G}\leq \optcro{G}$, such that $H=G\setminus E^*$ is planar.
However, in Theorem~\ref{thm:main}, we do not assume that $E^*$ is the optimal solution to the \MP problem on $G$, and we allow
$k$ to be greater than $\optcro{G}$.
\end{remark}

A direct consequence of Theorem~\ref{thm:main} is that an $\alpha$-approximation algorithm for \MP would immediately give an algorithm for drawing any graph $G$ with $O(\alpha^2 \cdot \dmax^3\cdot \optcrosq{G})$ crossings. We note that while this connection between \MP and \MCN looks natural, it is possible that in the optimal solution $\phi$ to the \MCN problem on $G$, the induced drawing of the planar subgraph $H=G\setminus E^*$ is not planar, that is, the edges of $H$ may have to cross each other (see Figure~\ref{fig: example} for an example).

Theorem~\ref{thm:main} immediately implies a slightly improved algorithm for \MCN. In particular, while we are not aware of any approximation algorithms for the \MP problem, the following is an easy consequence of the Planar Separator theorem of Lipton and Tarjan~\cite{planar-separator}:
\begin{theorem}\label{thm:sqrt n}
There is an efficient $O(\sqrt {n\log n}\cdot \dmax)$-approximation algorithm for {\sf Minimum Planarization}.
\end{theorem}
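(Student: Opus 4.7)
The plan is to reduce \MP to an approximation of the \emph{Balanced Vertex Separator} problem. The key structural observation is that any graph $G$ with $\optmp{G} = k^*$ admits a balanced vertex separator of size $O(\sqrt n) + k^*$. Indeed, fix $E^* \subseteq E$ with $|E^*| = k^*$ so that $H = G \setminus E^*$ is planar, and apply the Planar Separator Theorem of Lipton and Tarjan to $H$ to obtain a vertex separator $S_0$ of size $O(\sqrt n)$ whose removal splits $V \setminus S_0$ into parts $A, B$ of size at most $2n/3$, with no edges of $H$ between them. Every edge of $G$ between $A$ and $B$ must then lie in $E^*$, so adding one endpoint of each such edge to $S_0$ produces a balanced vertex separator of $G$ of size $|S_0| + k^* = O(\sqrt n) + k^*$.

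Given this observation, I would run the following recursive algorithm. If $G$ is already planar (checkable in polynomial time), return $\emptyset$. Otherwise, invoke the $O(\sqrt{\log n})$-approximation for Balanced Vertex Separator (via the Feige--Hajiaghayi--Lee SDP, say) on $G$ to obtain a separator $S$ with $|S| \leq O(\sqrt{\log n})(\sqrt n + k^*)$; delete the at most $\dmax \cdot |S|$ edges of $G$ incident to $S$; and recurse on $G[A]$ and $G[B]$. Output the union of all deleted edges. Correctness is immediate: after the deletions, the remaining graph is the disjoint union $G[A] \cup G[B]$, so if the recursive calls return planarizing edge sets for each piece, their union planarizes $G$.

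The approximation ratio follows by strong induction on $n$. Setting $\phi(n) = C\,\dmax \sqrt{n \log n}$, I would prove that the algorithm outputs at most $\phi(n) \cdot \optmp{G}$ edges. The base case $\optmp{G} = 0$ is trivial since the algorithm returns $\emptyset$. Otherwise $k^* \geq 1$, and the separator contributes at most $\dmax |S| \leq C_1 \dmax (\sqrt{n \log n} + k^* \sqrt{\log n}) \leq 2 C_1 \dmax \sqrt{n \log n}\cdot k^*$ edges. Using the inductive hypothesis together with $\optmp{G[A]} + \optmp{G[B]} \leq \optmp{G}$ (by restricting $E^*$ to each side) and $\phi(n_A), \phi(n_B) \leq \sqrt{2/3}\,\phi(n)$ (since $n_A, n_B \leq 2n/3$), the two recursive calls cost at most $\phi(n_A)\optmp{G[A]} + \phi(n_B)\optmp{G[B]} \leq \sqrt{2/3}\,\phi(n)\,k^*$. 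Choosing $C \geq 2 C_1/(1 - \sqrt{2/3})$ closes the induction and yields the claimed $O(\sqrt{n \log n}\cdot \dmax)$-approximation.

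The main obstacle I anticipate is establishing the structural lemma on the size of the balanced vertex separator of $G$; once it is in hand, the geometric contraction factor $\sqrt{2/3} < 1$ is what makes the recursion work, ensuring that the $O(\sqrt{\log n})$ approximation loss for Balanced Vertex Separator is absorbed once in the final bound rather than compounding across the $O(\log n)$ recursion levels.
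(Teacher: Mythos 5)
Your proof is correct and follows essentially the same approach as the paper: apply the Planar Separator Theorem to $H = G\setminus E^*$ and augment by (endpoints of) $E^*$ to certify that $G$ has a balanced separator of size $O(\sqrt n) + \optmp{G}$, then recurse with an $O(\sqrt{\log n})$-approximate balanced-separation primitive, letting the geometric decay of $\sqrt{n_i}$ across levels absorb the approximation loss. The only cosmetic differences are your use of \emph{vertex} separators via Feige--Hajiaghayi--Lee in place of the paper's \emph{edge} separators via Arora--Rao--Vazirani (Corollary~\ref{corollary: cut}), and a direct strong induction in place of the paper's level-by-level accounting, where your step $\optmp{G[A]}+\optmp{G[B]}\le\optmp{G}$ plays exactly the role of the paper's bound $k_i\le\optmp{G}$ on the number of surviving non-planar pieces.
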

The next corollary then follows from combining Theorems~\ref{thm:main} and \ref{thm:sqrt n}, and using the algorithm of~\cite{EvenGS02}.
\begin{corollary}\label{corollary:result for general graphs}
There is an efficient algorithm, that, given any $n$-vertex graph $G$ with maximum degree $\dmax$, finds a drawing of $G$ with at most $O(n\log n\cdot \dmax^5)\optcrosq{G}$ crossings. Moreover,  there is an efficient $O(n\cdot\poly(\dmax)\cdot \log^{3/2} n)$-approximation algorithm for \MCN.
\end{corollary}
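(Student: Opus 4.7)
The plan is to combine Theorems~\ref{thm:main} and~\ref{thm:sqrt n} directly, and then balance the resulting drawing against the algorithm of~\cite{EvenGS02} to obtain the stated approximation ratio.

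First, I would apply Theorem~\ref{thm:sqrt n} to find, in polynomial time, a set $E^*\subseteq E$ of size $k \leq O(\sqrt{n\log n}\cdot\dmax)\cdot \optmp{G}$ such that $G\setminus E^*$ is planar. Since $\optmp{G}\leq \optcro{G}$, this gives $k\leq O(\sqrt{n\log n}\cdot\dmax)\cdot \optcro{G}$. Feeding this $E^*$ into Theorem~\ref{thm:main} then produces a drawing of $G$ with at most
\[
O(\dmax^3\cdot k\cdot (\optcro{G}+k)) \;\leq\; O(\dmax^3\cdot (k+\optcro{G})^2) \;\leq\; O(\dmax^5\cdot n\log n)\cdot \optcrosq{G}
\]
crossings, which establishes the first assertion of the corollary.

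For the approximation guarantee, I would run this procedure in parallel with the known $O(\log^2 n\cdot \poly(\dmax))\cdot (n+\optcro{G})$-crossings drawing algorithm of~\cite{EvenGS02} (improved by~\cite{ARV}, with an additional $\poly(\dmax)$ factor for general maximum degree), and output whichever of the two drawings has fewer crossings. The first algorithm yields approximation ratio $O(\dmax^5\cdot n\log n\cdot \optcro{G})$, which is good when $\optcro{G}$ is small; the second yields ratio $O(\log^2 n\cdot \poly(\dmax))\cdot (1+n/\optcro{G})$, which is good when $\optcro{G}$ is large. Setting the two ratios equal gives the balance $\optcro{G} = \Theta(\sqrt{\log n}/\poly(\dmax))$, at which both bounds collapse to $O(n\log^{3/2} n\cdot \poly(\dmax))$, as desired. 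The argument is a routine combination of the two theorems with a standard parameter-balancing trick, so there is no substantive obstacle here; all of the real technical work has already been absorbed into Theorems~\ref{thm:main} and~\ref{thm:sqrt n}.
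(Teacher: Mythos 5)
Your proposal is correct and follows essentially the same route as the paper: combine Theorem~\ref{thm:sqrt n} with Theorem~\ref{thm:main} to get the $O(n\log n\cdot\dmax^5)\optcrosq{G}$ bound, then balance against the (general-degree extension of the) Even et al.\ algorithm, taking whichever of the two drawings has fewer crossings, with the threshold at $\optcro{G}\approx\sqrt{\log n}$. The only detail the paper spells out more explicitly is that the extension of~\cite{EvenGS02} to unbounded-degree graphs with an extra $\poly(\dmax)$ factor is not in the original reference and requires a separate argument (stated in the paper as Theorem~\ref{thm: Even: extension}), though you do acknowledge the $\poly(\dmax)$ factor.
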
 

\begin{figure}[h]
\begin{center}
\scalebox{0.25}{\rotatebox{0}{\includegraphics{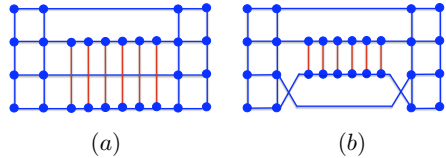}}} \caption{(a) Graph $G$. Red edges belong to $E^*$, blue edges to the planar sub-graph $H=G\setminus E^*$. Any drawing of $G$ in which the edges of $H$ do not cross each other has at least $6$ crossings. (b) An optimal drawing of $G$, with $2$ crossings.} \label{fig: example}
\end{center}
\end{figure}

Theorem~\ref{thm:main} also implies improved algorithms for several special cases of the problem, that are discussed below.

\noindent{\bf Nearly-Planar and Apex Graphs.} We say that a graph $G$ is \emph{$k$-nearly planar}, if it can be decomposed into a planar graph $H$, and a collection of at most $k$ additional edges. For the cases where the decomposition is given, or where $k$ is constant, Theorem~\ref{thm:main} immediately gives an efficient $O(\dmax^3\cdot k^2)$-approximation algorithm for \MCN.
It is worth noting that although this graph family might seem restricted, there has been a significant amount of work on the crossing number of $1$-nearly planar graphs.
Cabello and Mohar \cite{cabello_edge} proved that computing the crossing number remains NP-hard even for this special case, while
Hlin\v{e}n\'{y} and Salazar \cite{HlinenyS06} gave an $O(\dmax)$-approximation.
Riskin \cite{riskin_edge} gave a simple efficient procedure for computing the crossing number when the planar sub-graph $H$ is 3-connected, and
Mohar \cite{Mohar_almost} showed that Riskin's technique cannot be extended to arbitrary 3-connected planar graphs.
Gutwenger et al.~\cite{GutwengerMW05} gave a linear-time algorithm for the case where every crossing is required to be between $e$ and an edge of $G$.

A graph $G$ is a $k$-apex graph iff there are $k$ vertices $v_1,\ldots,v_k$, whose removal makes it planar. Chimani et al.~\cite{crossing_apex} obtained an $O(\dmax^2)$-approximation for \MCN on $1$-apex graphs. Theorem~\ref{thm:main} immediately implies an $O(\dmax^5\cdot k^2)$-approximation for $k$-apex graphs, where either $k$ is constant, or  the $k$ apices are explicitly given.

\noindent{\bf Bounded Genus Graphs.}
Recall that the genus of a graph $G$ is the minimum integer $g$ such that $G$ can drawn on an orientable surface of genus $g$ with no crossings.

B\"{o}r\"{o}zky et al.~\cite{BorozkyPT06} proved that 
the crossing number of a bounded-degree graph of bounded genus is $O(n)$.
Djidjev and Venkatesan~\cite{genus-planarization} show that $\optmp{G}\leq O(\sqrt{g\cdot n\cdot \dmax})$ for any genus-$g$ graph. Moreover, if the embedding of $G$ into a genus-$g$ surface is given, a planarizing set of this size can be found in time $O(n+g)$.
If no such embedding is given, they show how to efficiently compute a planarizing set of size $O(\sqrt{\dmax\cdot  g\cdot n\cdot \log g})$.

Hlin\v{e}n\'{y} and Chimani \cite{crossing_genus}, building on the work of Gitler et al.~\cite{crossing_projective} and Hlin\v{e}n\'{y} and Salazar~\cite{crossing_torus}  gave an algorithm for approximating \MCN on graphs that can be drawn ``densely enough\footnote{More precisely, the density requirement is that the nonseparating dual edge-width of the drawing is $2^{\Omega(g)}$.}'' in an orientable surface of genus $g$,
with an approximation guarantee of $2^{O(g)}\dmax^2$. \iffull
Despite the rather technical conditions on the input, this is the largest family of graphs for which a constant-factor approximation for the crossing number is known.\fi
We prove the following easy consequence of Theorem~\ref{thm:main} and the result of~\cite{crossing_genus}:

\begin{theorem}\label{thm: bounded genus}
Let $G$ be any graph embedded in an orientable surface of genus $g\geq 1$.
Then we can efficiently find a drawing of $G$ into the plane, with at most $2^{O(g)} \cdot \dmax^{O(1)} \cdot \optcrosq{G}$ crossings.
Moreover, for any $g\geq 1$, there is an efficient $\tilde{O}\left(2^{O(g)} \cdot \sqrt{n}\right)$-approximation for \MCN on bounded degree graphs embedded into a genus-$g$ surface.
\end{theorem}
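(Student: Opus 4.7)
The plan is to combine Theorem~\ref{thm:main} with the approximation algorithm of Hlin\v{e}n\'{y} and Chimani~\cite{crossing_genus} via an iterative genus-reduction procedure, then balance the resulting bound against the universal algorithm of~\cite{EvenGS02} for the approximation statement. For the first assertion, we first convert the embedding of $G$ into one of sufficient density at the cost of deleting few edges. Given the current embedding in an orientable surface of genus $g'\leq g$ (initially $g'=g$), either its nonseparating dual edge-width is $\geq 2^{\Omega(g)}$, in which case the embedding is ``dense enough'' for~\cite{crossing_genus} and we stop, or else the dual contains a nonseparating cycle of length $2^{O(g)}$; in the latter case the corresponding primal edge set has size $2^{O(g)}$, its removal yields a graph embedded in a surface of genus at most $g'-1$, and we iterate on the resulting embedding. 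The process terminates within $g$ rounds, after deleting a set $E_1$ of $g\cdot 2^{O(g)}=2^{O(g)}$ edges, leaving a residual graph $G''$ that is either planar or densely embedded in some genus $g''\leq g$.

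If $G''$ is planar we set $E^*=E_1$; otherwise, we apply~\cite{crossing_genus} to $G''$ to obtain a drawing with at most $2^{O(g)}\cdot \dmax^2\cdot \optcro{G''}\leq 2^{O(g)}\cdot \dmax^2\cdot \optcro{G}$ crossings, form a planarizing set $E_2$ of $G''$ by selecting one edge per crossing, and set $E^*=E_1\cup E_2$. After dispatching $\optcro{G}=0$ with a planarity test, we may assume $\optcro{G}\geq 1$, so $|E^*|\leq 2^{O(g)}\cdot \dmax^{O(1)}\cdot \optcro{G}$. Feeding this $E^*$ into Theorem~\ref{thm:main} produces a drawing of $G$ with at most
\[
O(\dmax^3\cdot |E^*|\cdot (\optcro{G}+|E^*|))\;=\;2^{O(g)}\cdot \dmax^{O(1)}\cdot \optcrosq{G}
\]
crossings, establishing the first part of the theorem.

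For the approximation guarantee on bounded-degree graphs, we run the procedure above in parallel with the algorithm of~\cite{EvenGS02} (instantiated with the~\cite{ARV} Balanced Cut algorithm), which returns a drawing with $O(\log^2 n)(n+\optcro{G})$ crossings, and output whichever is better. Using the bound $\optcro{G}=O(n)$ for bounded-degree bounded-genus graphs~\cite{BorozkyPT06}, the second algorithm always returns a drawing with $O(n\log^2 n)$ crossings unconditionally. The first yields approximation ratio $2^{O(g)}\cdot \optcro{G}$ while the second yields ratio $\tilde{O}(n/\optcro{G})$; the minimum of the two is worst-case maximized when the two are balanced, giving the claimed $\tilde{O}(2^{O(g)}\cdot \sqrt{n})$ ratio. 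The main technical obstacle is the genus-reduction step itself: one must argue that cutting along a short nonseparating dual cycle genuinely reduces the genus by at least one and can be carried out in polynomial time, while the embedding of the residual graph is updated so that density-testing and short-cycle-finding remain feasible in the next iteration---standard topological surgery, but requiring careful bookkeeping to ensure that the total number of deleted edges stays within $2^{O(g)}$ across all $g$ rounds.
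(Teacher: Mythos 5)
Your proposal matches the paper's proof essentially step for step: the same iterative genus reduction by repeatedly cutting along a shortest surface-nonseparating dual cycle (bounding $|E_1|$ by roughly $g\cdot 2^{2g+2}\dmax$), the same invocation of Hlin\v{e}n\'{y}--Chimani on the residual densely-embedded graph to produce $E_2$, the same combination $E^*=E_1\cup E_2$ fed into Theorem~\ref{thm:main}, and the same ``run both and output the better'' trick with Even et al.\ for the approximation ratio. The only cosmetic difference is that you argue the balancing via the $\optcro{G}=O(n)$ bound of~\cite{BorozkyPT06}, whereas the paper leaves the balancing implicit; both routes give the stated $\tilde{O}(2^{O(g)}\sqrt{n})$ guarantee.
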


We notice that when $g$ is a constant, a drawing of a genus-$g$ graph on a genus-$g$ surface can be found in linear time \cite{Mohar99, KawarabayashiMR08}.

\subsection{Our Techniques}
We now provide an informal overview of the proof of Theorem~\ref{thm:main}. We will use the words ``drawing'' and ``embedding'' interchangeably.
We say that a drawing $\psi$ of the planar graph $H=G\setminus E^*$ is \emph{planar} iff $\psi$ contains no crossings. Let $\phi$ be the optimal drawing of $G$, and let $\phi_H$ be the induced drawing of $H$.
For simplicity, let us first assume that the graph $H$ is $3$-vertex connected. 
Then we can efficiently find a planar drawing $\psi$ of $H$, which by Whitney's Theorem~\cite{Whitney} is guaranteed to be unique.
Notice however that the two drawings $\phi_H$ and $\psi$ of $H$ are not necessarily identical, and in particular $\phi_H$ may be non-planar. 

We now add the edges $e\in E^*$ to the drawing $\psi$ of $H$. The algorithm for adding the edges is very simple. 
For each edge $e\in E^*$, we  choose the drawing $c_e$ that minimizes the number of crossings between $c_e$ and the images of the edges of $H$ in $\psi$. This task reduces to finding the shortest path in the graph dual to $H$.
We can ensure that the drawings of any pair $e,e'$ of edges in $E^*$ cross at most once, by performing an un-crossing step, which does not increase the number of other crossings. Let $\psi'$ denote this new drawing of the whole graph. The total number of crossings between pairs of edges that both belong to $E^*$ is then bounded by $k^2$, and it only remains to  bound the number of crossings between the edges of $E^*$ and the edges of $H$. In order to complete the analysis,
it is enough, therefore, to show, that for every edge $e\in E^*$, there is a drawing of $e$ in $\psi$,  
that has at most $\poly(\dmax) \optcro{G}$ crossings with the edges of $H$.
Since our algorithm finds the best possible drawing for each edge $e$, the bound on the total number of crossings will follow.

One of our main ideas is the notion of \emph{routing edges along paths}. Consider the optimal drawing $\phi$ of $G$, and let $e=(u,v)$ be some
edge in $E^*$, that is mapped into some curve $\gamma_e$ in $\phi$. We show that we can find a path $P_e$ in the graph $H$, whose endpoints are $u$ and $v$, such that, instead of drawing the edge $e$ along $\gamma_e$, we can draw it along a different curve $\gamma'_e$, that ``follows'' the drawing of the path $P_e$. That is, we draw $\gamma'_e$ very close to the drawing of $P_e$, in parallel to it. Moreover, we show that this re-routing of the edge $e$ along $P_e$ does not increase the number of crossings in which it participates by much. Consider now the drawing of $P_e$ in the planar embedding $\psi$ of the graph $H$. We can again draw the edge $e$ along the embedding of the same path $P_e$ in $\psi$. Let $\gamma''_e$ be the resulting curve. Since the embeddings $\phi_H$ and $\psi$ are different, it is possible that $\gamma''_e$ participates in more crossings than $\gamma'_e$. However, we show that the number of such new crossings can be bounded by the number of vertices and edges in $P_e$, whose local embeddings are different in $\phi$ and $\psi$. We then bound this number, in turn, by $\poly(\dmax)\optcro{G}$.

We now explain the notion of local embeddings in more detail. Given two drawings $\phi_H$ and $\psi$ of the graph $H$, we say that a vertex $v\in V(H)$ is \emph{irregular} iff the ordering of its adjacent edges, as their images enter $v$, is different in the two drawings. In other words, the local drawing around the vertex $v$ is different in $\phi_H$ and $\psi$ (see Figure~\ref{fig: irregular vertices}). We say that an edge $e=(u,v)\in E(H)$ is \emph{irregular} iff both of its endpoints are not irregular, but their orientations are different. That is, the orderings of the edges adjacent to each one of the two endpoints are the same in both $\phi_H$ and $\psi$, but say, for vertex $v$, both orderings are clock-wise, while for vertex $u$, one is clock-wise and the other is counter-clock-wise (see Figure~\ref{fig: irregular edges}). In a way, the number of irregular edges and vertices measures the difference between the two drawings. We show that, on the one hand, if $H$ is $3$-vertex connected, and $\psi$ is a planar embedding of $H$, then the number of irregular vertices and edges is bounded by roughly the number of crossings in $\phi_H$, which is in turn bounded by $\optcro{G}$. On the other hand, we show that for each edge $e\in E^*$, the number of new crossings incurred by the curve $\gamma''_e$ is bounded by the total number of irregular edges and vertices on the path $P_e$, thus obtaining the desired bound.

Assume now that $H$ is not $3$-vertex connected. In this case, it is easy to see that the number of irregular vertices and edges cannot be bounded by the number of crossings in $\phi_H$ anymore. In fact, it is possible that both $\psi$ and $\phi_H$ are planar drawings of $H$, so the number of crossings in $\phi_H$ is $0$, while the number of irregular vertices may be large (see Figure~\ref{fig: example2} for an example). However, if the original graph $G$ was $3$-vertex connected, then for any $2$-vertex cut $(u,v)$ in $H$, there is an edge $e\in E^*$ connecting the resulting two components of $H\setminus\set{u,v}$. We use this fact to find a specific planar drawing $\psi'$ of $H$, that is ``close'' to $\phi_H$, in the sense that, if we define the irregular edges and vertices with respect to the embeddings $\phi_H,\psi'$ of $H$, then we can bound their number  by the number of crossings in $\phi_H$.

Finally, if $G$ is not $3$-vertex  connected, then we first decompose it into $3$-vertex connected components, and then apply the above algorithm to each one of the components separately. In the end, we put all the resulting drawings together, while only losing a small additional factor in the number of crossings.

\begin{figure}[h]
\centering
\subfigure[Vertex $v$ is irregular.]{
	\scalebox{0.3}{\includegraphics{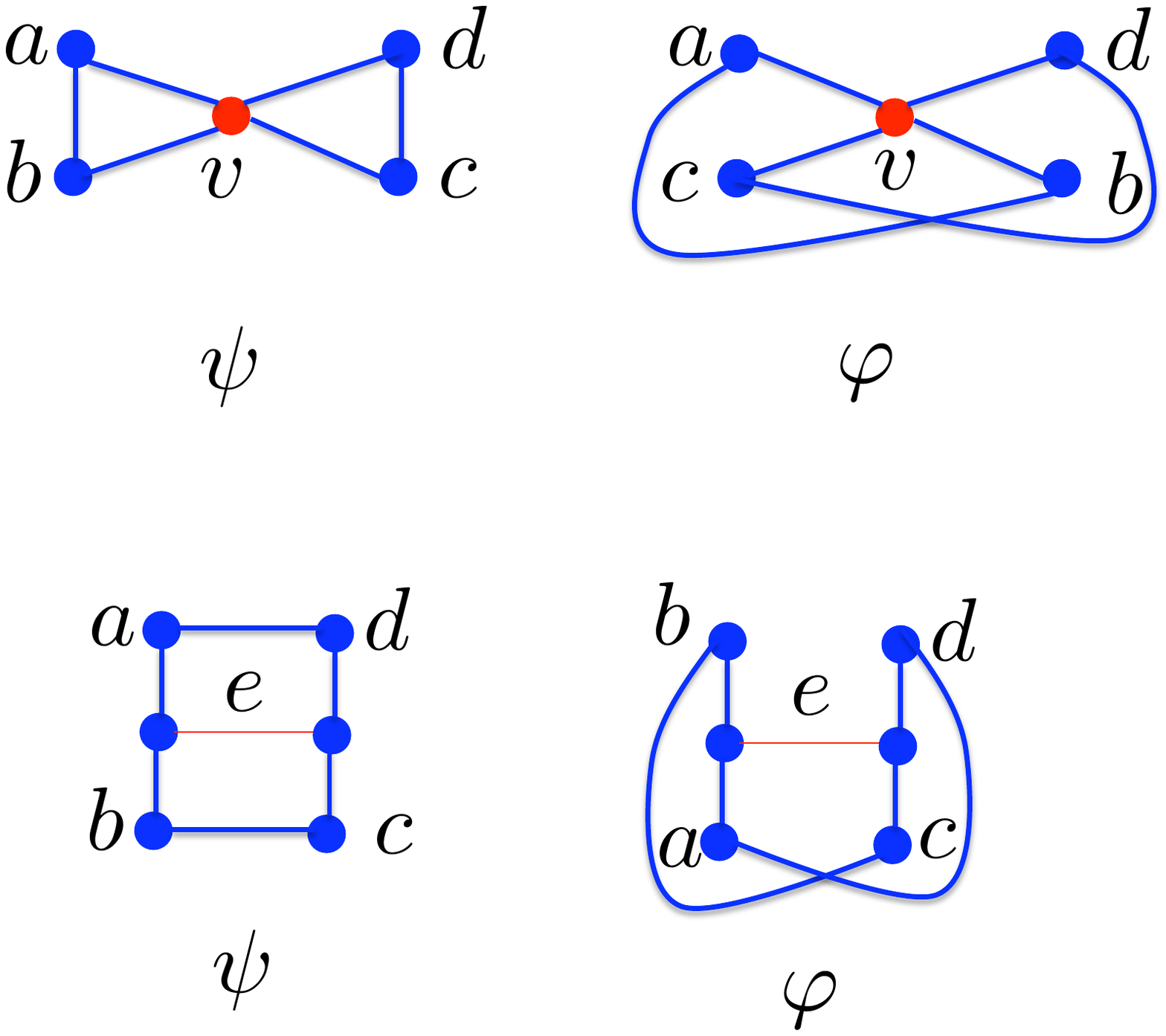}}
	\label{fig: irregular vertices}
}
\hspace{2cm}
\subfigure[Edge $e$ is irregular.]{
	\scalebox{0.3}{\includegraphics{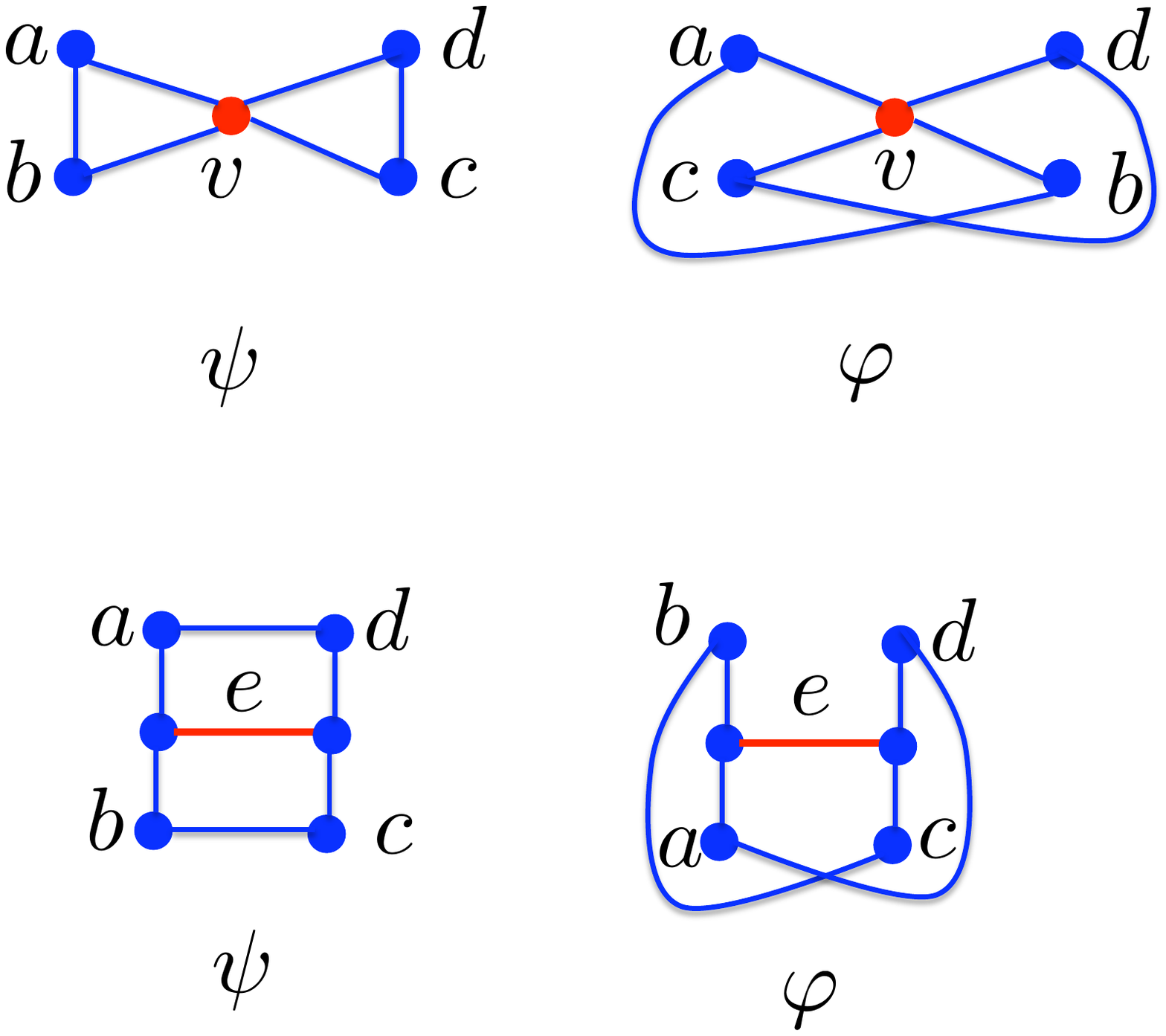}}
	\label{fig: irregular edges}
}
\caption{Irregular vertices and edges.}
\end{figure}

\begin{figure}[h]
\begin{center}
\scalebox{0.25}{\rotatebox{0}{\includegraphics{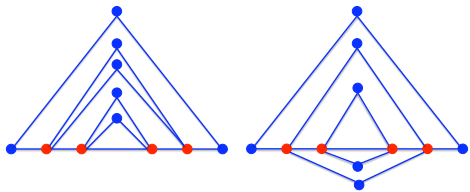}}} \caption{Example of planar drawings $\phi_H$ and $\psi$ of graph $H$. Irregular vertices are shown in red.}\label{fig: example2}
\end{center}
\end{figure}

\subsection{Other Related work}
Although it is impossible to summarize here the vast body of work on \MCN, we give a brief overview of some of the highlights, and related results.

\textbf{Exact algorithms.}
Grohe \cite{Grohe04}, answering a question of Downey and Fellows \cite{fpt}, proved that the crossing number is fixed-parameter tractable.
In particular, for any fixed number of crossings his algorithm computes an optimal drawing in $O(n^2)$ time.
Building upon the breakthrough result of Mohar \cite{Mohar99} for embedding graphs into a surface of bounded genus, Kawarabayashi and Reed \cite{KawarabayashiR07} gave an improved fixed-parameter algorithm with running time $O(n)$.

\textbf{Bounds on the crossing number of special graphs.}
Ajtai et al.~\cite{ajtai82}, and independently Leighton \cite{leighton_book}, settling a conjecture of Erd\"{o}s and Guy \cite{erdos_guy73}, proved that every graph with $m\geq 4n$ edges has crossing number $\Omega(m^3/n^2)$.
B\"{o}r\"{o}zky et al.~\cite{BorozkyPT06} proved that 
the crossing number of a bounded-degree graph of bounded genus is $O(n)$.
This bound has been extended to all families of bounded-degree graphs that exclude a fixed minor by Wood and Telle \cite{WoodT06}.
Spencer and T\'{o}th \cite{spencer_random} gave bounds on the expected value of the crossing number of a random graph.

\noindent{\bf Organization} Most of this paper is dedicated to proving Theorem~\ref{thm:main}. We start in Section~\ref{sec:prelims} with preliminaries, where we introduce some notation and basic tools. We then prove Theorem~\ref{thm:main} in Section~\ref{sec:alg}. We prove Theorem~\ref{thm:sqrt n} and Corollary~\ref{corollary:result for general graphs} in Section~\ref{sec:planarization for general graph}. 
\ifabstract
The proof of Theorem~\ref{thm: bounded genus} appears in the full version of the paper, available from the authors' web pages.
\fi
\iffull 
The proof of Theorem~\ref{thm: bounded genus} appears in Section~\ref{sec:genus} of the Appendix.
\fi

\section{Preliminaries}\label{sec:prelims}
In this section we provide some basic definitions and tools used in the proof of Theorem~\ref{thm:main}. In order to avoid confusion, throughout the paper, we denote the input graph by $\G=(V,E)$, with $|V|=n$, and maximum degree $\dmax$. We also denote $\H=\G\setminus E^*$, the planar sub-graph of $\G$ (where $E^*$ is the set of edges from the statement of Theorem~\ref{thm:main}), and by $\bphi$ the optimal drawing of $\G$ with $\optcro{\G}$ crossings. When stating definitions or results for general arbitrary graphs, we will be denoting them by $G$ and $H$, to distinguish them from the specific graphs $\G$ and $\H$. 

We use the words ``drawing'' and ``embedding'' interchangeably. 
Given any graph $G$, a drawing $\phi$ of $G$, and any sub-graph $H$ of $G$, we denote by $\phi_H$ the drawing of $H$ induced by $\phi$, and by $\cro_{\phi}(G)$ the number of crossings in the drawing $\phi$ of $G$. For any pair $E_1,E_2\sse E(G)$ of subsets of edges, we denote by $\cro_{\phi}(E_1,E_2)$ the number of crossings in $\phi$ in which images of edges of $E_1$ and edges of $E_2$ intersect, and by $\cro_{\phi}(E_1)$ the number of crossings in $\phi$ between pairs of edges that both belong to $E_1$. Finally, for any curve $\gamma$, we denote by $\cro_{\phi}(\gamma,E_1)$ the number of crossings between $\gamma$ and the images of the edges of $E_1$, and $\cro_{\phi}(\gamma,H)$ denotes $\cro_{\phi}(\gamma,E(H))$. We will omit the subscript $\phi$ when clear from context.
If $G$ is a planar graph, and $\phi$ is a drawing of $G$ that contains no crossings, then we say that $\phi$ is a \emph{planar} drawing of $G$.

For the sake of brevity, we write $P: u\connect v$ to denote that a path $P$ connects vertices $u$ and $v$. Similarly, if we have a drawing of a graph, we write $\gamma:u\connect v$ to denote that a curve $\gamma$ connects the images of vertices $u$ and $v$ (curve $\gamma$ may not be a part of the current drawing). 
In order to avoid confusion, when a curve $\gamma$ is a part of a drawing $\phi$ of some graph $G$, we write $\gamma\in \phi$. We denote by $\Gamma(\phi)$ the set of all curves that can be added to the drawing $\phi$ of $G$. In other words, these are all curves that do not contain images of vertices of $G$ (except as their endpoints), and do not contain any crossing points of $\phi$.
Finally, for a graph $G=(V,E)$, and subsets $V'\sse V$, $E'\sse E$ of its vertices and edges respectively, we denote by $G\setminus V'$, $G\setminus E'$ the sub-graphs of $G$ induced by $V\setminus V'$, and $E\setminus E'$, respectively.

\begin{Definition} For any graph $G=(V,E)$, a subset $V'\sse V$ of vertices is called a $c$-separator, iff $|V'|=c$, and the graph $G\setminus V'$ is not connected. We say that $G$ is $c$-connected iff it does not contain any $c'$-separators, for any $c'<c$.
 \end{Definition}
 
We will be using the following two well-known results:

\begin{theorem} (Whitney~\cite{Whitney})\label{thm:Whitney} Every 3-connected planar graph has a unique planar embedding.
\end{theorem}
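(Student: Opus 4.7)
The plan is to follow the classical strategy of characterizing face boundaries purely combinatorially. The key insight is that in a 3-connected planar graph, the cycles that bound faces in any planar embedding are precisely the non-separating induced cycles, a notion that makes no reference to a specific drawing. Once this is established, uniqueness of the embedding follows because the rotation system at each vertex (the cyclic order of its incident edges) is completely determined by which cycles use pairs of consecutive edges at that vertex.

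First, I would show that in any planar embedding of a 3-connected planar graph $G$, every face boundary is an induced non-separating cycle. The boundary being a cycle rather than a more general closed walk already uses 2-connectivity. That it is induced uses 3-connectivity: a chord of a facial cycle $C$ would split $C$ into two arcs whose endpoints form a 2-vertex cut of $G$. That it is non-separating uses 3-connectivity in a similar way: if removing $V(C)$ disconnected $G$, one could isolate a component drawn on one side of $C$ and extract a cut of size at most two from $V(C)$ by a planarity/connectivity argument.

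Second, I would prove the converse: any induced non-separating cycle $C$ must bound a face of the embedding. If it did not, then $C$ would separate the open plane into two regions, each containing some vertex of $G \setminus V(C)$; since $C$ is induced, no edge of $G$ outside $C$ can cross from one side to the other, contradicting the assumption that $G \setminus V(C)$ is connected. Combining the two directions, the set of face boundaries of any planar embedding of $G$ coincides with the set of non-separating induced cycles of $G$ — an intrinsic graph invariant.

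Finally, I would recover uniqueness of the embedding from uniqueness of the set of facial cycles. Two edges $e,e'$ incident to a common vertex $v$ are cyclically consecutive in the rotation around $v$ if and only if some facial cycle contains both $e$ and $e'$. This pins down the rotation system at each vertex up to simultaneous reversal, which corresponds exactly to the global reflection of the sphere — and so gives uniqueness in the standard combinatorial sense (uniqueness of the embedding on the sphere, or equivalently, on the plane up to reflection and choice of outer face). The main obstacle will be the careful case analysis in the first two steps: extracting small vertex cuts from hypothetical chords of facial cycles or from hypothetical separating induced cycles requires precise use of planarity together with 3-connectivity, distinguishing interior and exterior regions of $C$ and ensuring that the resulting cut genuinely has size at most two.
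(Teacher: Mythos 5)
The paper does not prove this statement: Theorem~\ref{thm:Whitney} is quoted as a classical result with a citation to Whitney, and no proof is given or needed for the paper's purposes. Your sketch is the standard modern argument (usually attributed to Tutte): characterize the facial cycles of \emph{any} planar embedding of a 3-connected planar graph as exactly the induced non-separating cycles --- an intrinsic, embedding-independent family --- and then recover the rotation system from this family. Both directions of your characterization are on the right track; the chord-of-a-facial-cycle argument and the separating-cycle argument do produce 2-cuts as you claim, and the converse (an induced non-separating cycle must bound a face) follows from the Jordan-curve argument you give, with ``induced'' needed to rule out a side of $C$ containing only chords.

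The one point you pass over quickly is the last step. From ``$e$ and $e'$ are consecutive at $v$ iff some facial cycle uses both'' you get the rotation at each vertex only up to an \emph{independent} reversal at that vertex, not up to a single global reflection. To finish one must argue these per-vertex reversals are linked: for each edge $uv$ the two faces incident to it are determined, and compatibility of those two faces with the rotations at $u$ and at $v$ forces either both rotations or neither to flip; 2-connectivity then propagates this coherence across the whole graph, so the only freedom is a simultaneous reversal, i.e.\ a reflection. This is a standard detail rather than a gap in the strategy, but since you explicitly claim ``simultaneous reversal,'' it deserves the sentence of justification.
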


\begin{theorem}(Hopcroft-Tarjan~\cite{planar-drawing})\label{thm:planar drawing}
For any graph $G$, there is an efficient algorithm to determine whether $G$ is planar, and if so, to find a planar drawing of $G$.
\end{theorem}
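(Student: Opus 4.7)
The plan is to attack Theorem~\ref{thm:planar drawing} via a DFS-based path-addition scheme, using Kuratowski's theorem as the certificate of non-planarity. First I would reduce to the 2-connected case: compute the block-cut tree of $G$ in linear time, and observe that $G$ is planar iff every biconnected component is planar, since planar embeddings of distinct blocks can always be glued together at a shared cut vertex inside a common face. So from now on assume $G$ is 2-connected.

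Next I would root $G$ at an arbitrary vertex and perform a depth-first search, obtaining a spanning tree $T$ together with a set $B$ of back edges. Each back edge $b\in B$ closes a unique fundamental cycle in $T$. The plan is to process these cycles (and their associated ``bridges,'' i.e.\ maximal pieces attached outside the cycle) one at a time in reverse DFS order, maintaining the invariant that the subgraph processed so far has an explicit combinatorial embedding in the sphere together with a distinguished outer face. When a new bridge is added, each of its attachments along the current spine must be placed either inside or outside the current outer face; two bridges whose attachment intervals interlock on the spine are forced to be placed on opposite sides. This produces a growing system of 2-colorability constraints on the bridges, and $G$ is planar iff the system remains satisfiable throughout the process.

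To implement the feasibility test efficiently I would either use the Booth--Lueker PQ-tree, which supports each update in amortized $O(1)$ time on the relevant segment and so yields an overall $O(n)$ running time, or, for a simpler polynomial-time proof, a union-find structure maintaining the ``same side / opposite side'' equivalence classes. Whenever the algorithm succeeds, the combinatorial embedding it has built can be turned into an actual planar drawing in the plane by invoking Tutte's barycenter theorem (or Schnyder's straight-line realization) on any triangulated refinement, extracting the drawing in polynomial time.

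The main obstacle will be the correctness argument in the rejecting direction: whenever the 2-colorability constraints become unsatisfiable, one must exhibit a topological $K_5$ or $K_{3,3}$ in $G$ so that Kuratowski's theorem certifies non-planarity. The key combinatorial step here is that an odd cycle in the interlocking/conflict graph of bridges can be traced back through the DFS tree to three internally disjoint paths between two vertices whose endpoint-cyclic-orderings force either a $K_{3,3}$ subdivision across two bridges or a $K_5$ subdivision from a single bridge with five interlocking attachments. Once this certificate-producing reduction is in place, both planarity decision and the construction of an explicit planar embedding follow, giving the desired efficient algorithm.
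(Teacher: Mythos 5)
The paper does not prove this statement; it simply cites it as the classical Hopcroft--Tarjan planarity algorithm, so there is no in-paper argument to compare against. Your sketch does outline a legitimate route to the result, and the high-level structure (reduce to 2-connected blocks, DFS decomposition into a spine plus bridges/segments, maintain an embedding with side constraints, test feasibility, then realize the combinatorial embedding geometrically) is exactly how modern expositions of the theorem proceed.

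Two points in the sketch are worth tightening. First, you conflate two distinct linear-time paradigms: the path-addition method of Hopcroft and Tarjan, which maintains a left/right bipartition of back-edges via a stack (and which is what the cited reference actually does), versus the vertex-addition method of Lempel--Even--Cederbaum implemented with Booth--Lueker PQ-trees. Your description of the algorithm (process fundamental cycles in reverse DFS order, place bridges left/right of the current spine) is path addition, but you then propose to implement it with PQ-trees, which serve a different incremental scheme (vertex-by-vertex with an ``st-numbering''). Either route works, but the data structure has to match the paradigm; as written the two are spliced. Relatedly, plain union-find cannot by itself encode ``same side'' and ``opposite side'' constraints -- you need a parity-augmented union-find (weighted by $\mathbb{Z}/2$), or equivalently an explicit bipartiteness check of the interlacement graph; this is an easy fix but should be stated.

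Second, your correctness argument in the rejecting direction is doing more work than is necessary for the theorem as stated, and the extra work is the hardest part. The theorem only asks for a planarity decision and, in the affirmative case, a drawing; it does not require exhibiting a topological $K_5$ or $K_{3,3}$ on rejection. Correctness of rejection already follows from the Auslander--Parter characterization (a 2-connected $G$ is planar iff, for a chosen cycle $C$, every bridge together with $C$ is planar and the interlacement graph of the bridges with respect to $C$ is bipartite), applied recursively to the bridges; you do not need to trace an odd interlacement cycle back to an explicit Kuratowski subdivision. If you do want the Kuratowski certificate, note also that your sketch omits the recursion on bridges entirely -- the conflict-graph test on the top-level cycle alone is not sufficient, since a single bridge may itself be non-planar relative to $C$, and this is where the recursion (and most of the bookkeeping in Hopcroft--Tarjan) lives. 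I would either drop the Kuratowski-extraction claim, or separate it out as a strengthening beyond what Theorem~\ref{thm:planar drawing} requires.
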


\paragraph{Irregular Vertices and Edges}
Given any pair $\phi,\psi$ of drawings of a graph $G$, we measure the distance between them in terms of \textit{irregular edges} and \textit{irregular vertices}:

\begin{Definition}
We say that a vertex $x$ of $G$ is irregular iff its degree is greater than $2$, 
and the circular ordering of the edges incident on it, as their images enter $x$, is different in $\phi$ and $\psi$ (ignoring the orientation). Otherwise we say that $v$ is regular.
We denote the set of irregular vertices by $\irreg_V(\phi, \psi)$. (See Figure~\ref{fig: irregular vertices}).
\end{Definition}

\begin{Definition}
For any pair $(x,y)$ of vertices in $G$,
we say that a path $P:x\connect y$ in $G$ is irregular iff $x$ and $y$ have degree at least $3$, all other
vertices on $P$ have degree $2$ in $G$, vertices $x$ and $y$ are regular, but their
orientations differ in $\phi$ and $\psi$. That is, the orderings of the edges adjacent to $x$ and to $y$ are identical in both drawings, but the pairwise orientations are different: for one of the two vertices, the orientations are identical in both drawings (say clock-wise), while for the other vertex, the orientations are opposite (one is clock-wise, and the other is counter-clock-wise). An edge $e$ is an irregular edge iff it is the first or the last edge on an irregular path. In particular, if the irregular path only consists of edge $e$, then $e$ is an irregular edge (see Figure~\ref{fig: irregular edges}). If an edge is not irregular, then we say that it is regular.
We denote the set of irregular edges by $\irreg_E(\phi, \psi)$.
\end{Definition}





\paragraph{Routing along Paths.} One of the central concepts in our proof is  that of routing along paths. Let $G$ be any graph, and $\phi$ any drawing of $G$. Let $e=(u,v)$ be any edge of $G$, and let $P:u\connect v$ be any path connecting $u$ to $v$ in $G\setminus \set{e}$. It is possible that the image of $P$ crosses itself in $\phi$. We will first define a very thin strip $S_P$ around the image of $P$ in $\phi$. We then say that the edge $e$ is routed along the path $P$, iff its drawing follows the drawing of the path $P$ inside the strip $S_P$, possibly crossing $P$.

In order to formally define the strip $S_P$, we first consider the graph $G'$, obtained from $G$, by replacing every edge of $G$ with a $2$-path containing $2\cro_{\phi}(G)$ inner vertices. The drawing $\phi$ of $G$ then induces a drawing $\phi'$ of $G'$, such that, if $P'$ is the path corresponding to $P$  in $G'$, then every edge of $G'$ crosses the image of $P'$ at most once; every edge of $G'\setminus P'$ has at most one endpoint that belongs to $P'$; and if an image of $e\not\in P'$ crosses $P'$, then no endpoint of $e$ belongs to $P'$. Let $E_1$ denote the subset of edges of $G'\setminus P'$ whose images cross the image of $P'$, let $E_2$ denote the subset of edges of $P'$ whose images cross the images of other edges in $P'$, and let $E_3$ denote the set of edges in $G'$ that have exactly one endpoint belonging to $P'$.

We now define a thin strip $S_{P'}$ around the drawing of path $P'$ in $\phi'$, by adding two curves, $\gamma'_L$ and $\gamma'_R$, immediately to the left and to the right of the image of $P'$ respectively, that follow the drawing of $P'$. Each edge in $E_1$ is crossed exactly once by $\gamma'_L$, and once by $\gamma'_R$. Each edge in $E_3$ is crossed exactly once by either $\gamma'_R$ or $\gamma'_L$. For each pair $(e,e')$ of edges in $E_2$ whose images cross, $\gamma'_L$ and $\gamma'_R$ will both cross each one of the edges $e$ and $e'$ exactly once. Curves $\gamma_L'$ and $\gamma_R'$ do not have any other crossings with the edges of $G'$. The region of the plane between the drawings of $\gamma'_L$ and $\gamma'_R$, which contains the drawing of $P'$, defines the strip $S'_P$. We let $S_P$ denote the same strip, only when added to the drawing $\phi$ of $G$. Let $\gamma_L$ and $\gamma_R$ denote the two curves that form the boundary of $S_P$, and let $\gamma\in \set{\gamma_L,\gamma_R}$. Then the crossings between $\gamma$ and the edges of $G$ can be partitioned into four sets, $C_1,C_2,C_3, C_4$ (see Figure~\ref{crossings-c1-c2-c3}), where: (1) There is a $1:1$ mapping between $C_1$ and the crossings between the edges of $P$ and the edges of $G\setminus P$; (2) For each edge $e'\not\in P$ that has exactly one endpoint in $P$, there is at most one crossing between $\gamma$ and $e'$ in $C_2$, and there are no other crossings in $C_2$; (3) For each edge $e'\not\in P$ that has exactly two endpoints in $P$, there are at most two crossings of $\gamma$ and $e'$ in $C_3$, and there are no other crossings in $C_3$; and (4) for each crossing between a pair $e,e'\in P$ of edges, there is one crossing between $\gamma$ and $e$, and one crossing between $\gamma$ and $e'$. Additionally, if $P$ crosses itself $c$ times, then $\gamma$ also crosses itself $c$ times.

\begin{Definition}
We say that the edge $e$ is \emph{routed along the path $P$}, 
iff its drawing follows the drawing of path $P$ inside the strip $S_P$, in parallel to the drawing of $P$, 
except that it is allowed to cross the path $P$. 
\end{Definition}

%
\begin{figure}
\begin{center}
\scalebox{0.8}{\includegraphics{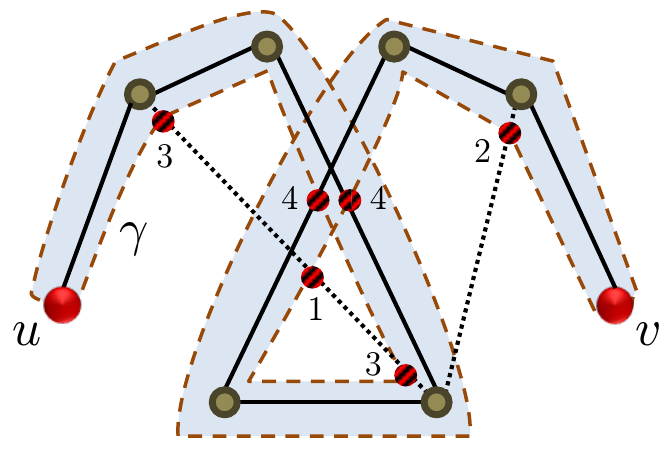}}
\caption{Strip $S_p$ and the four types of crossing between $\gamma$ and edges of $G$. Crossings in $C_1$, $C_2$, $C_3$ and $C_4$ are labeled with ``1'', ``2'', ``3'' and ``4'' respectively. Path $P$ is shown in solid line, dotted lines correspond to other edges of $G$.}
\label{crossings-c1-c2-c3}
\end{center}
\end{figure}


\section{Proof of Theorem~\ref{thm:main}}\label{sec:alg}
The proof consists of two steps. We first assume that the input graph $\G$ is $3$-vertex connected, and prove a slightly stronger version of Theorem~\ref{thm:main} for this case. Next, we show how to reduce the problem on general graphs to the $3$-vertex connected case, while only losing a small additional factor in the number of crossings.

\subsection{Handling $3$-connected Graphs}
In this section we assume that the input graph $\G$ is $3$-vertex connected, and we prove a slightly stronger version of Theorem~\ref{thm:main} for this special case, that is summarized below. 

\begin{theorem}\label{thm:main2}
Let $\G,\H$ and $E^*$ be as in Theorem~\ref{thm:main}, and assume that $\G$ is $3$-connected and has no parallel edges.
 Then we can efficiently find a drawing of $\G$ with at most $O\left(\dmax \cdot k\cdot (\optcro{\G}+k)\right )$ crossings.
\end{theorem}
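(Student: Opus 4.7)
The plan is to follow the four-step strategy outlined in the introduction. First, I would construct a specific planar drawing $\psi$ of $\H$ that is \emph{close} to $\bphi_{\H}$, in the precise sense that $|\irreg_V(\bphi_{\H},\psi)|+|\irreg_E(\bphi_{\H},\psi)| = O(\optcro{\G})$. Second, for each edge $e=(u,v)\in E^*$, insert $e$ into $\psi$ along the curve $c_e$ that minimizes $\cro(c_e,\H)$; this minimizer is computed as a shortest path between the faces incident to $u$ and $v$ in the planar dual of $\psi$. Third, repeatedly uncross pairs of $E^*$-curves so that every pair crosses at most once; the uncrossing operation does not increase $\cro(c_e,\H)$ for any $e$ and leaves at most $\binom{k}{2}=O(k^2)$ crossings among edges of $E^*$. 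Let $\psi'$ denote the resulting drawing of $\G$. Fourth, bound $\sum_{e\in E^*}\cro_{\psi'}(c_e,\H) = O(\dmax\cdot k\cdot \optcro{\G})$ via a routing-along-paths argument.

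For the construction of $\psi$, if $\H$ were 3-connected then Whitney's Theorem (Theorem~\ref{thm:Whitney}) would give a unique planar embedding, and a local charging argument would bound the number of irregularities by $\cro_{\bphi}(\H)\leq\optcro{\G}$. In general $\H$ is only $1$- or $2$-connected, and its planar embeddings are parametrized by flip choices at its $2$-cuts via the standard decomposition into $3$-connected components. The crucial leverage is that every $2$-cut $\{u,v\}$ of $\H$ must be bridged by some edge of $E^*$, because $\G$ is $3$-connected; the position of this bridge in $\bphi$ canonically prescribes the correct flip for each split component (the one that places the bridge's far endpoint on the same side of $\{u,v\}$ as in $\bphi$). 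Performing these canonical flips simultaneously yields a planar drawing $\psi$ of $\H$ such that every irregular vertex or edge corresponds to a local reordering that $\bphi_{\H}$ itself must resolve by a crossing, giving the desired bound $O(\optcro{\G})$ on the total number of irregularities.

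For step four, I would exhibit, for each $e=(u,v)\in E^*$, a witness curve $\gamma''_e$ drawn in $\psi$ whose crossings with $\H$ upper-bound $\cro_{\psi'}(c_e,\H)$, since $c_e$ was chosen as the minimizer. The witness is built in two stages. Stage (a): in $\bphi$ itself, choose a path $P_e\subseteq\H$ from $u$ to $v$ and route $e$ through the thin strip $S_{P_e}$ of Section~\ref{sec:prelims}; the strip-counting construction shows that the resulting curve crosses $\H$ at most $O(\dmax)\cdot\cro_{\bphi}(e,\G)$ times, provided $P_e$ is chosen to hug the $\bphi$-image of $e$. Stage (b): transplant the routing to $\psi$ by following the $\psi$-image of the \emph{same} path $P_e$. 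A regular internal vertex or regular edge of $P_e$ contributes no extra crossings with $\H$ because the local cyclic orderings agree in $\bphi$ and $\psi$; an irregular vertex or edge on $P_e$ contributes at most $O(\dmax)$ extra crossings. Summing over $e\in E^*$, using $\sum_{e\in E^*}\cro_{\bphi}(e,\G)\leq 2\,\optcro{\G}$ together with the bound on total irregularities and the fact that each irregularity appears on at most $k$ of the paths $P_e$, we obtain $\sum_{e\in E^*}\cro_{\psi'}(c_e,\H)=O(\dmax\cdot k\cdot\optcro{\G})$. Combined with the $O(k^2)$ $E^*$-pair crossings, this matches the bound stated in Theorem~\ref{thm:main2}.

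The main obstacle is step one: producing $\psi$ with only $O(\optcro{\G})$ irregularities. Non-$3$-connectivity of $\H$ means that wrong flips at the $2$-cuts could introduce arbitrarily many irregular vertices and edges, yet $3$-connectivity of $\G$ supplies an $E^*$-bridge across each such cut, and the flip choices at the split components of the decomposition must be synchronized coherently across the whole decomposition so that every choice is simultaneously consistent with $\bphi$. A secondary difficulty is the strip-routing analysis of Stage (a): each crossing of $\bphi_e$ with $\H$ must be charged to only $O(\dmax)$ strip crossings (rather than a higher power of $\dmax$), which requires a careful choice of $P_e$ and a careful accounting over the four crossing types $C_1,\ldots,C_4$ identified in Section~\ref{sec:prelims}, in order to land in the improved bound of Theorem~\ref{thm:main2} rather than the weaker bound of Theorem~\ref{thm:main}.
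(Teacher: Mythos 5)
Your high-level plan follows the paper exactly: find a planar drawing $\psi$ of $\H$ that is close to $\bphi_{\H}$ in terms of irregular vertices and edges, insert each $e\in E^*$ via a dual shortest path, uncross pairs of $E^*$-curves, and analyze via routing-along-paths. However, there are two places where the plan as you wrote it would not carry through, both concentrated exactly where the paper does its real work.

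First, your construction of $\psi$ is circular. You write that each $E^*$-bridge ``prescribes the correct flip... the one that places the bridge's far endpoint on the same side of $\{u,v\}$ as in $\bphi$.'' That rule requires reading off information from $\bphi$, which the algorithm does not have. The paper's Lemma~\ref{lem:irregular2} instead makes a \emph{canonical} flip choice that never references $\bphi$: it computes a laminar block decomposition (Theorem~\ref{thm: block decomposition}), identifies ``tunnels'' of nested blocks all sharing one connector vertex, and inside each tunnel flips the blocks so that all block endpoints come to lie on the boundary of a single face, as dictated by the unique planar embeddings of the 3-connected pieces $\tilde B_i'$. It is the \emph{analysis}, not the construction, that then brings in $\bphi$: one argues (via Lemmas~\ref{lem:irregular1}--\ref{lem:irregular3} and a charging argument against $\cro_{\bphi}$) that this $\bphi$-oblivious choice still has only $O(\optcro{\G}+k)$ irregular vertices and $O(\dmax)(\optcro{\G}+k)$ irregular edges. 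Your claimed bound of $O(\optcro{\G})$ (with no $k$ and no $\dmax$) is also not what the paper establishes, though this particular overclaim would not change the final asymptotics because the $k^2$ term is absorbed anyway.

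Second, your stage (a) bound of $O(\dmax)\cdot\cro_{\bphi}(e,\G)$ per edge is not achievable by routing along a path that ``hugs the $\bphi$-image of $e$.'' When you route $e$ through the thin strip $S_{P_e}$, you incur $C_2$- and $C_3$-type crossings with \emph{every} edge of $\H$ that has an endpoint on $P_e$, and these have nothing to do with $\cro_{\bphi}(e,\G)$; they are governed by the degrees along $P_e$. The paper's Proposition~\ref{thm:rerouting-along-path} avoids this by an entirely different choice of $P_e$: delete from $\H$ all crossed edges and all edges crossing $\gamma_e$ (getting a planar graph $\H'$), find the face $F$ of $\H'$ containing $u$ and $v$, and take $P_e$ to be a boundary walk of $F$ stitched together by at most $O(\cro_{\bphi}(\H)+\cro_{\bphi}(e,\H))$ deleted edges. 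Because $\gamma^*$ stays inside $F$, the only edges it can cross are the deleted ones, giving a per-edge bound of $O(\cro_{\bphi}(\H)+\cro_{\bphi}(e,\H))$ --- with a $\cro_{\bphi}(\H)$ term that is global, not per-edge, and cannot be pushed below $\optcro{\G}$. Summing over $e\in E^*$ then gives the factor $k\cdot\optcro{\G}$ that the theorem carries. Your proposal correctly flags both of these steps as the main obstacles, but as described neither would close.
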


Notice that we can assume w.l.o.g. that graph $\H$ is connected. Otherwise, we can choose an edge $e\in E^*$ whose endpoints belong to two distinct connected components of $\H$, remove $e$ from $E^*$ and add it to $\H$. It is easy to see that this operation preserves the planarity of $\H$, and we can repeat it until $\H$ becomes connected. We therefore assume from now on that $\H$ is connected.

Recall that $\bphi$ denotes the optimal drawing of $\G$, and $\bphi_{\H}$ is the drawing of $\H$ induced by $\bphi$.
Since the graph $\H$ is planar, we can efficiently find its planar drawing, using Theorem~\ref{thm:planar drawing}. However, since $\H$ is not necessarily $3$-connected, there could be a number of such drawings, and we need to find one that is ``close'' to $\bphi_{\H}$. 
We use the following theorem, whose proof appears in Appendix.
\begin{theorem}\label{thm: good planar drawing}
We can efficiently find a planar drawing $\bpsi$ of $\H$, such that
\begin{align*}
|\irreg_V(\bpsi, \bphi_{\H})| &=  O(\optcro{\G} + k)\\
|\irreg_E(\bpsi, \bphi_{\H})| &= O(\dmax)(\optcro{\G} + k).
\end{align*}
\end{theorem}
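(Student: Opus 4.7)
My plan is to construct $\bpsi$ by exploiting the limited combinatorial freedom of planar embeddings of $\H$, and using the $3$-connectivity of $\G$ together with the edges of $E^*$ to steer those choices toward $\bphi_\H$. The key structural observation is that, since $\G$ is $3$-connected, every $2$-vertex cut $\{u,v\}$ of $\H$ must be bridged by at least one edge of $E^*$; otherwise $\{u,v\}$ would already disconnect $\G$. In particular the number of ``active'' 2-cuts in $\H$ is at most $k$. I would first compute the SPQR-tree decomposition of $\H$ into triconnected components (R-nodes), simple cycles (S-nodes) and bonds (P-nodes). By Whitney's theorem (Theorem~\ref{thm:Whitney}), each R-node admits a unique planar embedding up to a global reflection, so a planar embedding of the whole graph $\H$ is specified by a finite collection of discrete choices: a mirror orientation at each R-node, and a cyclic arrangement of the attached pieces at each P-node or 2-cut.

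Next, I would build $\bpsi$ greedily, making each of these discrete choices so as to agree with $\bphi_\H$ whenever that is locally well-defined. For every R-node I would choose the reflection that maximizes the number of interior vertices (of degree at least $3$) whose cyclic edge-order in the resulting embedding matches the one they have in $\bphi_\H$. At each 2-cut $\{u,v\}$ I would arrange the attached pieces in the cyclic order suggested by the uncrossed portions of $\bphi_\H$, using the $E^*$-edges incident to the cut to resolve any remaining ambiguity. Theorem~\ref{thm:planar drawing} then turns the resulting combinatorial embedding into an actual planar drawing.

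For the vertex bound, an interior vertex $v$ of an R-node with $\deg(v)\ge 3$ can be irregular only if either (i) the reflection chosen for its R-node disagrees with the cyclic order $\bphi_\H$ produces at $v$, or (ii) $\bphi_\H$ itself is locally inconsistent at $v$ because of nearby crossings. In case (i), $v$ is by construction a ``minority'' vertex; but since the R-node has a unique embedding up to reflection, any cyclic order inconsistent with the majority must be explained by a crossing of $\bphi_\H$ near $v$, and each such crossing can produce only $O(1)$ such minority vertices. Case (ii) also charges directly to crossings of $\bphi_\H$, of which there are at most $\optcro{\G}$. Vertices sitting at 2-cuts may additionally be irregular when a neighboring piece is attached in the ``wrong'' cyclic position; each such mismatch is charged either to a crossing of $\bphi_\H$ or to one of the at most $k$ active 2-cuts, giving the overall bound $O(\optcro{\G}+k)$.

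Finally, for the edge bound, observe that an irregular edge lies on an irregular path between endpoints $x$ and $y$ whose cyclic orders agree in $\bpsi$ and $\bphi_\H$ but whose orientations disagree; geometrically, this corresponds to a maximal subtree of the SPQR decomposition that is globally reflected between $\bpsi$ and $\bphi_\H$ while all its internal choices remain consistent. Every such reflected subtree is separated from the rest of $\H$ by a 2-cut, which is bridged by an $E^*$-edge (or else its boundary vertex is irregular through the previous analysis), so the number of reflected subtrees is $O(\optcro{\G}+k)$. Each reflected subtree contributes at most $O(\dmax)$ irregular edges, because only edges incident to its two attaching cut-vertices can lie on the resulting irregular paths. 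This yields the claimed $O(\dmax)(\optcro{\G}+k)$ bound. I expect the main obstacle to lie in the bookkeeping at 2-cuts: one must define a canonical witness --- either a crossing of $\bphi$ or an edge of $E^*$ --- for every disagreement between $\bpsi$ and $\bphi_\H$, and verify that these witnesses are not over-used when the SPQR tree is deep. This is where the detailed interplay between the $3$-connectivity of $\G$ and the bridging structure of $E^*$ needs to be exploited.
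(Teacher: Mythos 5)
Your high-level plan is in the right family of ideas --- the paper's proof also goes through an SPQR-type block decomposition of $\H$, also exploits $3$-connectivity of $\G$ to link $2$-separators of $\H$ to edges of $E^*$, and also chooses reflections at $R$-nodes and orderings at $2$-cuts to mimic $\bphi_\H$ --- but there is a genuine gap where you wave your hands, and that gap is exactly where the paper's argument has to work hardest.

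The gap is your claim that ``the number of `active' $2$-cuts in $\H$ is at most $k$.'' This is false. What $3$-connectivity of $\G$ gives you is that every $2$-cut of $\H$ must be bridged by \emph{some} edge of $E^*$, but a single edge $e\in E^*$ can bridge $\Omega(n)$ distinct $2$-cuts of $\H$ simultaneously: think of $\H$ containing a long nested chain of blocks $B_1\supset B_2\supset\cdots\supset B_\kappa$ whose $2$-separators all separate the interior of the chain from the same vertex $x$ incident to one endpoint of $e$. The number of $2$-cuts, and hence of R/P-node attachment choices, is therefore not $O(k)$, and neither your greedy ``majority vote'' rule nor your ``charge each mismatch to an active $2$-cut'' accounting can give the claimed bound without further ideas. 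Your own closing remark --- that the bookkeeping must ``verify that these witnesses are not over-used when the SPQR tree is deep'' --- identifies the problem but does not resolve it.

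The paper resolves it with the notion of \emph{tunnels}: maximal chains of nested blocks all sharing the same connector vertex and all having tree-degree exactly two. The proof (in the proof of Lemma~\ref{lem:irregular2}) first isolates a bounded set of blocks $\rset_1\cup\rset_2\cup\rset_3$, of total size $O(\optcro{\G}+k)$, that can be directly charged either to edges of $E^*$, to $1$-separators, or to crossings of $\bphi$ (this last charge requires a non-trivial argument, Lemma~\ref{lemma:r3}, that uses the distinction between \emph{simple} and \emph{complex} blocks and the uniqueness of the planar embedding of the $3$-connected augmented block $\tilde B'$). The remaining blocks then form at most $O(\optcro{\G}+k)$ tunnels, and within each tunnel the embedding is fixed not by a majority vote per block, but by a sequence of flips that force all tunnel endpoints onto the boundary of a single face; the residual irregularity inside each tunnel is then bounded by re-applying the wheel-graph argument (your ``case~(ii)'') to the auxiliary graph $J_Z$ of the tunnel, which the paper shows is nearly $3$-connected. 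None of this machinery is present in your proposal, and without it the vertex and edge bounds do not follow. Separately, your vertex-bound argument for $R$-node interiors --- ``any cyclic order inconsistent with the majority must be explained by a crossing near $v$'' --- needs to be made precise via the wheel-graph lemma, and crucially that lemma only applies to vertices that are \emph{not} in $2$-separators of their component; for vertices in $2$-separators you must instead route through the tunnel argument.
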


We are now ready to describe the algorithm for finding a drawing of $\G$. We start with the planar embedding $\bpsi$ of $\H$, guaranteed by Theorem~\ref{thm: good planar drawing}.
For every edge $e=(u,v)\in E^*$, we add an embedding of $e$ to the drawing $\bpsi$ of $\H$, via a curve $\gamma_e\in \Gamma(\bpsi)$, $\gamma_e:u\connect v$, that crosses the minimum possible number of edges of $\H$. Such a curve can be computed as follows.
Let $\H^{dual}$ be the dual graph of the drawing $\bpsi$ of $\H$. 
Every curve  $\gamma\in \Gamma(\bpsi)$, $\gamma:u\connect v$,  
defines a path in $\H^{dual}$.
The length of the path, measured in the number of edges of $\H^{dual}$ it contains, is exactly the number of edges of $\H$ that $\gamma$ crosses.
Similarly, every path in $\H^{dual}$ corresponds to a curve in $\Gamma(\bpsi)$.
Let $\cal U$ be the set of all faces of $\bpsi$ (equivalently, vertices of $\H^{dual}$) whose boundaries contain $u$, and
let $\cal V$ be the set of all faces whose boundaries contain $v$. 
We find the shortest path $P_{(u,v)}$ between sets $\cal U$ and $\cal V$ in $\H^{dual}$,
and the corresponding curve $\gamma_{(u,v)}:u \connect v$ in $\Gamma(\bpsi)$.
Clearly, the number of crossings between $\gamma_{(u,v)}$ and the edges of $\H$ is 
minimal among all curves connecting $u$ and $v$ in $\Gamma(\bpsi)$. 
%
%
By slightly perturbing the lengths of edges in $\H^{dual}$, we may
assume that for every pair of vertices in $\H^{dual}$, there is exactly one shortest path connecting them. In particular, any pair of such shortest paths may share at most one consecutive segment.
Consequently, for any pair $e,e'\in E^*$ of edges, the drawings $\gamma_e,\gamma_{e'}$ that we have obtained cross at most once.

Let $\bpsi'$ denote the union of $\bpsi$ with the drawings $\gamma_e$ of edges $e\in E^*$ that we have computed. It now only remains to bound the number of crossings in $\bpsi'$. Clearly, $\cro_{\bpsi'}(\G)=\cro_{\bpsi'}(E^*)+\cro_{\bpsi'}(E^*,E(\H))\leq k^2+\sum_{e\in E^*}\cro_{\bpsi'}(\gamma_e,E(\H))$. In order to bound $\cro_{\bpsi'}(\gamma_e,E(\H))$, we use the following theorem, whose proof appears in the next section.

\begin{theorem}\label{thm:main-along-path}
Let $\phi$ and $\psi$ be two drawings of any planar connected graph $H$, whose maximum degree is $\dmax$, where $\psi$ is a planar drawing.
Then for every curve $\gamma\in \Gamma(\phi)$, $\gamma:u \connect v$
there is a  curve $\gamma'\in \Gamma(\psi)$, $\gamma':u \connect v$,
that participates in at most $O(\cro_{\phi}(H) + \cro_{\phi}(\gamma, E(H)) + |\irreg_E(\phi,\psi)| + \dmax|\irreg_V(\phi,\psi)|)$ crossings.
\end{theorem}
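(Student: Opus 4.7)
The plan is to construct $\gamma'$ in three steps: first identify a simple path $P$ in $H$ from $u$ to $v$ whose drawing in $\phi$ closely tracks $\gamma$, then route a curve along $P$ inside $\phi$ as an intermediate object, and finally transfer this routing to the planar drawing $\psi$, using the irregularity counts to control the discrepancy.

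First I would construct $P$. Trace $\gamma$ from $u$ to $v$ in $\phi$, listing the edges $e_1, \ldots, e_m$ of $H$ it crosses in order, so $m = \cro_\phi(\gamma, E(H))$. For each crossing pick an endpoint $x_i$ of $e_i$ as a landmark. Between the $i$-th and $(i{+}1)$-th crossings, $\gamma$ lies inside a single face $F_i$ of the drawing of $H$ in $\phi$, and both $x_i, x_{i+1}$ lie on the boundary of $F_i$; I would connect them by walking along this boundary in $H$. Concatenating these segments yields a walk from $u$ to $v$, which I shortcut to a simple path $P$. By construction, every edge of $H \setminus P$ that crosses $P$ in $\phi$ must either cross $\gamma$ or be involved in a self-crossing of the drawing of $H$, so $\cro_\phi(E(P), E(H) \setminus E(P))$ is $O(\cro_\phi(\gamma, E(H)) + \cro_\phi(H))$.

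Next I would route an intermediate curve $\gamma_1 \in \Gamma(\phi)$ along $P$ inside a thin strip $S_P$ as in Section~\ref{sec:prelims}. The crossings of $\gamma_1$ with $E(H)$ are bounded via the four classes $C_1, C_2, C_3, C_4$ of boundary-strip crossings. The $|C_1|$ term is controlled by Step~1; the self-crossing term $|C_4|$ is at most $O(\cro_\phi(H))$; and by drawing $\gamma_1$ to stay on a single consistent side of $P$, switching sides only when necessary to cross $P$ itself, the $|C_2|, |C_3|$ contributions are at most a constant per transition plus $O(\dmax)$ at each endpoint $u, v$. Altogether this yields
\[
\cro_\phi(\gamma_1, E(H)) = O\bigl(\cro_\phi(H) + \cro_\phi(\gamma, E(H))\bigr).
\]

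Finally I would transfer to $\psi$ by defining $\gamma'$ as the curve that routes along the same path $P$ inside the planar drawing $\psi$. Since $\psi$ is planar, $P$ is drawn without self-crossings, eliminating the $C_4$ contribution. The crux is to show $\cro_\psi(\gamma', E(H))$ exceeds $\cro_\phi(\gamma_1, E(H))$ by at most $O(|\irreg_E(\phi,\psi)| + \dmax|\irreg_V(\phi,\psi)|)$. Away from irregular vertices and irregular edges on $P$, the local cyclic order and orientation of the edges incident to $P$ agree in both drawings, so the ``side'' of $P$ is topologically consistent and the set of $H$-edges crossing the strip is preserved. At each irregular vertex of $P$, up to $\dmax$ incident edges may swap sides, contributing at most $O(\dmax)$ new crossings; at each irregular edge of $P$, the orientation flip contributes an $O(1)$ correction. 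Summing over the irregularities of $P$ yields the advertised bound, and combining with Step~2 completes the proof.

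The hard part will be Step~3: carefully formalizing the topological claim that along any maximal sub-path of $P$ consisting entirely of regular vertices and regular edges, the drawings of $P$ in $\phi$ and $\psi$ are ``side-equivalent'', i.e., each edge of $H$ leaves $P$ on the same side in both drawings. This requires exploiting the subtle definition of irregular edges (where both endpoints are regular but their orientations disagree) to argue that an orientation flip between two regular vertices of $P$ can occur only across an irregular edge, so every such flip is charged to $|\irreg_E|$. The per-vertex cost $\dmax$ then accounts for how many edges can locally swap sides at an irregular vertex.
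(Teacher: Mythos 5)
Your proposal follows the paper's high-level architecture---pick a path $P:u\connect v$ in $H$, route an intermediate curve along $P$ in $\phi$, then transfer the routing to $\psi$ and pay for the discrepancy with irregularity counts---and your Step~3 is essentially the paper's Proposition~\ref{thm:embedding-along-path}: partition $P$ into regular arcs, argue side-consistency on each, and pay $O(\dmax)$ per irregular vertex and $O(1)$ per irregular edge. That part is sound.

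Steps~1 and~2 have a genuine gap, and it is the same gap in two guises: you never pass to a planar subgraph of $H$. In Step~1, ``walking along the boundary of $F_i$ in $H$'' is not well-defined when $\phi_H$ has crossings. $F_i$ is a face of the \emph{arrangement}, so its boundary is a cyclic sequence of \emph{arcs} of edges separated by crossing points, not a closed walk in $H$; and the landmark $x_i$ need not lie on $\partial F_i$ at all if $e_i$ hits other crossings between the point where $\gamma$ meets it and the endpoint $x_i$. In Step~2, the claim that $|C_2|+|C_3|$ is ``a constant per transition plus $O(\dmax)$ at the endpoints'' is false for a generic thin strip $S_P$: at every internal vertex $w$ of $P$ a curve committed to one side must cross every non-$P$ edge that leaves $w$ on that side, which can total $\Theta(\dmax\,|P|)$ crossings---far more than the target $O(\cro_\phi(H)+\cro_\phi(\gamma,H))$.

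The paper's proof (Proposition~\ref{thm:rerouting-along-path}) repairs both problems with one move: let $E_1$ be the edges of $H$ involved in $H$--$H$ crossings and $E_2$ the remaining edges crossed by $\gamma$, set $H'=H\setminus(E_1\cup E_2)$. Then $\phi_{H'}$ is a genuine planar drawing, $\gamma$ never crosses $H'$, so $u$ and $v$ lie on the boundary of a single face $F$ of $\phi_{H'}$, whose boundary \emph{is} a union of walks in $H'$ (yielding $P$, possibly using edges of $E_1\cup E_2$ to bridge boundary components in the disconnected case). Routing $\gamma^*$ along $P$ \emph{inside} $F$ means $\gamma^*$ never meets an edge of $H'$; the only crossings are with $E_1\cup E_2$, each at most twice, giving $O(|E_1|+|E_2|)=O(\cro_\phi(H)+\cro_\phi(\gamma,H))$. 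Your proposal is not salvageable by local patches to the strip bookkeeping---you need this $H'$/face idea to get Steps~1 and~2 to go through.

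One smaller remark: you shortcut $P$ to a simple path, whereas the paper allows each edge to appear at most twice and then shortcuts only at repeated \emph{irregular} vertices before Step~3. Making $P$ simple is fine in principle, but note that shortcutting a face-boundary walk can leave the inside-$F$ routing argument above intact only if you verify that the shortcut path still bounds a region of $F$; the paper avoids the issue by keeping the possibly non-simple walk.
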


In other words, the number of additional crossings incurred by $\gamma'$ is roughly bounded by the total number of crossings in $\phi$, and the difference between the two drawings, that is, the number of irregular vertices and edges.

Since the optimal embedding $\bphi$ of $G$ contains an embedding of every edge $e\in E^*$, Theorem~\ref{thm:main-along-path} guarantees that for every edge $e=(u,v)\in E^*$, there is a curve $\gamma'_e:u\connect v$ in $\Gamma(\bpsi)$, that participates in at most $O(\cro_{\phi}(\H)+\cro_{\phi}(e,E(\H))+ |\irreg_E(\bphi,\bpsi)| + \dmax|\irreg_V(\bphi,\bpsi)|)\leq O(\optcro{\G}+|\irreg_E(\bphi,\bpsi)| + \dmax|\irreg_V(\bphi,\bpsi)|)$ crossings. Combining this with Theorem~\ref{thm: good planar drawing}, the number of crossings between $\gamma'_e$ and $E(H)$ is bounded by $O(\dmax)(\optcro{\G} + k)$. 
Since for each edge $e\in E^*$, our algorithm chooses the optimal curve $\gamma_e$, we are guaranteed that $\gamma_e$ participates in at most $O(\dmax)(\optcro{\G} + k)$ crossings with edges of $H$.
Summing up over all edges $e\in E^*$, we obtain that $\cro_{\bpsi'}(\G)\leq k^2+k\cdot O(\dmax)(\optcro{\G} + k)\leq O(\dmax\cdot k\cdot (\optcro{\G}+k))$, as required. In order to complete the proof of Theorem~\ref{thm:main2}, it now only remains to prove Theorem~\ref{thm:main-along-path}.

\subsection{Proof of Theorem~\ref{thm:main-along-path}: Routing along Paths}\label{sec:routing along paths}
The proof consists of two steps. In the first step, we focus on the drawing $\phi$ of $H$, and we show that for any curve $\gamma:u\connect v$ in $\Gamma(\phi)$, there is a path $P:u\connect v$ in $H$, and another curve $\gamma^*: u\connect v$ in $\Gamma(\phi)$ routed along $P$ in $\phi$, such that the number of crossings in which $\gamma^*$ is involved is small. In the second step, we consider the planar drawing $\psi$ of $H$, and show how to route a curve $\gamma':u\connect v$ along the same path $P$ in $\psi$, so that the number of crossings is suitably bounded. The next proposition handles the first step of the proof.

\begin{proposition}\label{thm:rerouting-along-path} 
Let  $\gamma: u\connect v$ be any curve in $\Gamma(\phi)$, where $\phi$ is a drawing of $H$. Then there is a path $P:u\connect v$ in $H$, and a curve $\gamma^*: u\connect v$ in $\Gamma(\phi)$ routed along $P$, such that 
$\cro_{\phi}(\gamma^*, H)\leq O(\cro_{\phi}(H) + \cro_{\phi}(\gamma, H)).$ 
Moreover, $\gamma^*$ does not cross the images of the edges of $P$. Path $P$ is not necessarily simple, but an edge may appear at most twice on $P$.
\end{proposition}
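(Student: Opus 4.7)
My plan is to construct $P$ and $\gamma^*$ by tracing $\gamma$ through a planarized version of the drawing of $H$ inherited from $\phi$. First, I will planarize $\phi_H$ by replacing each crossing of two edges of $H$ in $\phi$ with an artificial degree-$4$ vertex; let $H^{pl}$ denote the resulting planar graph with planar drawing $\phi^{pl}$. The curve $\gamma$ then visits a sequence of faces $F_1, F_2, \ldots, F_m$ of $\phi^{pl}$, with $u \in \partial F_1$, $v \in \partial F_m$, and $m \le \cro_\phi(\gamma, E(H)) + 1$. For each $j = 1, \ldots, m-1$, the curve $\gamma$ transitions from $F_j$ to $F_{j+1}$ by crossing a segment $\sigma_j$ of some edge $e_j = (a_j,b_j) \in E(H)$.

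To construct $P$, I will build a sequence of walks $Q_1, Q_2, \ldots, Q_{m-1}$ in $H$, where $Q_j$ connects a vertex $w_j$ of $H$ on $\partial F_j$ to a vertex $w_{j+1}$ of $H$ on $\partial F_{j+1}$, with $w_1 = u$ and $w_m = v$. Each $Q_j$ is obtained by tracing a portion of $\partial F_j$ in $H^{pl}$ from $w_j$ to an endpoint of $e_j$ and then traversing $e_j$ to its other endpoint (which lies on $\partial F_{j+1}$). Since a boundary traversal in $H^{pl}$ is not a walk in $H$ (at each artificial crossing vertex it switches between two different edges of $H$), I convert it into a walk in $H$ by the following device: at every artificial crossing vertex encountered, I follow the current $H$-edge to its nearest true-vertex endpoint and double back. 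This is the only source of repeated edges, so each edge appears in $P$ at most twice, as required. I then define $\gamma^*$ as a curve routed along $P = Q_1 Q_2 \cdots Q_{m-1}$ inside the strip $S_P$ defined in Section~\ref{sec:prelims}, choosing the side of $P$ so that $\gamma^*$ does not cross any edge of $P$; this yields $\gamma^* \in \Gamma(\phi)$ with $\gamma^*: u \connect v$.

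To bound $\cro_\phi(\gamma^*, E(H))$, I use the analysis of strip boundaries from Section~\ref{sec:prelims}: $\gamma^*$ has at most as many crossings with $H$ as the boundary curve $\gamma_L$ of $S_P$, which in turn is bounded, up to constant factors, by the crossings between $P$ and $H\setminus P$, by the number of edges of $H$ with one or two endpoints on $P$, and by the self-crossings of $P$ in $\phi$. I will bound each of these quantities by $O(\cro_\phi(H) + \cro_\phi(\gamma, E(H)))$ by charging them to (i) the crossings of $H$ that lie on the boundaries of the visited faces $F_1, \ldots, F_m$, and (ii) the $m - 1 = \cro_\phi(\gamma, E(H))$ crossings of $\gamma$ itself with $H$. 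The main obstacle will be the analysis of the doubled-back segments introduced when converting boundary traversals of $\phi^{pl}$ into walks in $H$: each double-back corresponds to an artificial crossing vertex on some $\partial F_j$, i.e., a self-crossing of $H$ in $\phi$, and I need to argue that each such crossing is charged only a constant number of times across all the $Q_j$'s, so that the total contribution of double-backs to the crossing count is $O(\cro_\phi(H))$. This accounting, together with the straightforward bound on crossings coming directly from the $m-1$ transitions of $\gamma$, yields the desired bound $\cro_\phi(\gamma^*, H) \leq O(\cro_\phi(H) + \cro_\phi(\gamma, H))$.
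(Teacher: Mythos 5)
Your overall strategy is genuinely different from the paper's: you planarize $\phi_H$ and follow $\gamma$ face-by-face through $H^{pl}$, whereas the paper deletes every $H$-edge that participates in a crossing with another $H$-edge or with $\gamma$, observes that $u$ and $v$ then lie on the boundary of one face of the surviving planar subgraph $H'$, and builds $P$ out of segments of that single face boundary (together with a minimal connecting set of the deleted edges). The paper's deletion step is exactly what lets it sidestep the difficulty your construction runs into.

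The gap is in the conversion of the boundary traversal of $\partial F_j$ in $H^{pl}$ into a walk in $H$. At an artificial vertex $x$ where the boundary of $F_j$ passes from a segment of $e$ to a segment of $e'$, the two $H$-edges $e$ and $e'$ merely cross in $\phi$; they need not share an endpoint and need not be anywhere near each other in $H$. Your ``follow the current $H$-edge to its nearest true endpoint and double back'' device brings you from $x$ to an endpoint $a$ of $e$ and back to (a point inside) $e$, but it never gets you onto $e'$; there is no local graph structure in $H$ that bridges from a vertex of $e$ to a vertex of $e'$. Relatedly, the plan assumes that $\partial F_j$ contains a true $H$-vertex $w_j$, that a boundary traversal of $F_j$ from $w_j$ reaches a true endpoint of $e_j$, and that the other endpoint of $e_j$ lies on $\partial F_{j+1}$; none of these holds in general (a face of $H^{pl}$ can be bounded entirely by segments meeting at artificial crossing vertices, and the true endpoints of $e_j$ can lie deep inside other faces). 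Finally, even granting a fix, the ``each edge appears at most twice'' claim is unsupported: an $H$-edge with many crossings can appear on the boundary of many of the visited faces $F_j$, and your device would then re-traverse it once per artificial vertex encountered, not at most twice. Compare with the paper: by deleting $E_1\cup E_2$ first, it gets a genuinely planar drawing whose face boundaries \emph{are} walks in $H$, so no conversion is needed, and a segment of a single boundary walk automatically uses each edge at most twice.
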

\begin{proof}
Consider the drawing $\phi$ of $H$, together with the curve $\gamma$.
Let $E_1\sse E(H)$ be the subset of edges whose images cross the images of other edges of $H$, and
let $E_2\sse E(H)\setminus E_1$ be the subset of edges whose images cross $\gamma$ and that are not in $E_1$. 
Let $H'=H\setminus(E_1\cup E_2)$. Note that $\phi_{H'}$ is a planar drawing of $H'$, and $\gamma$ does not cross any edges of $H'$. Therefore, vertices $u$ and $v$ lie on the boundary of one face, denoted by $F$, of $\phi_{H'}$. Without loss of generality, we may assume that 
$F$ is the outer face of $\phi_{H'}$. The boundary of $F$
consists of one or several connected components. 
Let $B_1, \dots, B_r$ be the boundary walks of the face $F$ (where $r \geq 1$ is the number of connected components):
each $B_i$ is the (not necessarily simple) cycle obtained by walking around the boundary of the $i$th connected component,
if the component contains at least 2 vertices; and it is a single vertex otherwise.

Consider two cases. First, assume that $u$ and $v$ are connected in $H'$, and so they both belong to the same component $B_i$.
We then let $P$ be one of the two segments of $B_i$ that connect $u$ and $v$. Notice that while $P$ is not necessarily simple, each edge appears at most twice on it. We let $\gamma^*$ be a curve drawn along the path $P$ inside the face $F$. 
Notice that the only edges that $\gamma^*$ crosses in the drawing $\phi$ of $H$, are the edges of $E_1\cup E_2$ that have at least one endpoint on $P$. Each such edge is crossed at most twice by $\gamma^*$ (once for each endpoint that belongs to $P$). Therefore, $\cro_{\phi}(\gamma^*, H)\leq O(|E_1|+|E_2|)\leq O(\cro_{\phi}(H) + \cro_{\phi}(\gamma, H)).$
%
%
%
%
Assume now that $u$ and $v$ are not connected in $H'$, and assume w.l.o.g. that $u\in B_1$ and $v\in B_2$.
Let $L$ be a minimal set of edges of $E_1\cup E_2$, such that $u$ and $v$ are connected in $H'\cup L$.
Each edge $e\in L$ connects two distinct components $B_i$ and $B_j$ (as otherwise we could remove
$e$ without affecting the connectivity of $H'\cup L$). In particular, the drawings of all edges of $L$ in $\phi$ lie inside 
the face $F$. 
Consider the following graph $H^*$: each vertex of $H^*$ corresponds to a component $B_i$, for $1\leq i\leq r$, and the edges of $H^*$ are the edges of $L$ connecting these components. Since $u$ and $v$ are connected in $H'\cup L$, vertices representing $B_1$ and $B_2$ belong to the same connected component $C$ of $H^*$. Moreover, because of the minimality of $L$, this connected component is a simple path $P'$ connecting the vertices representing $B_1$ and $B_2$ in $H^*$.

\begin{figure}
\begin{center}
\scalebox{0.45}{\includegraphics{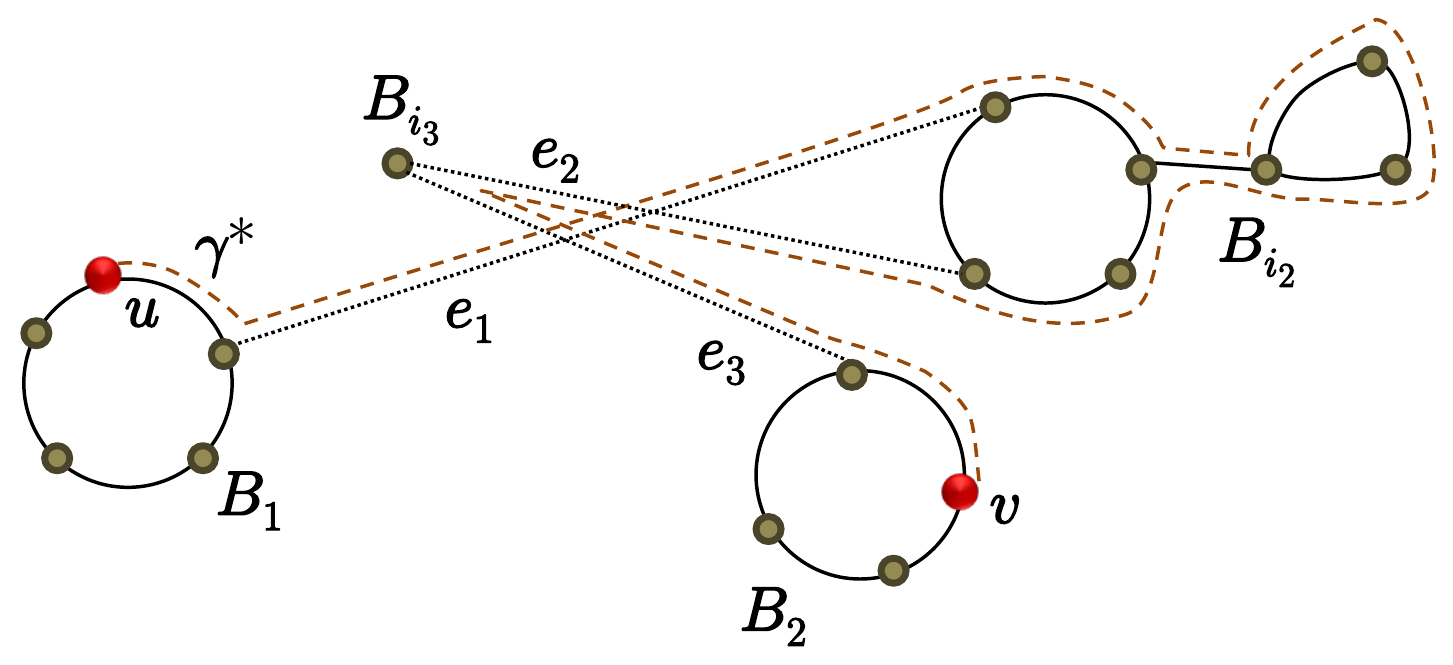}}
\caption{Routing the curve $\gamma^*:u\connect v$.\label{fig:curve}}
\end{center}
\end{figure}

Denote the edges of this path by $e_1,\dots, e_{q-1}$;
denote its vertices by $B_{i_1} \equiv B_1, B_{i_2}, \dots, B_{i_q}\equiv B_2$; each edge $e_j$, for $1\leq j\leq q-1$, corresponds to an edge $e'_j\in L$, that connects a pair $(x_j,y_j)$ of vertices, where $x_j\in B_{i_j}$,  $y_j\in B_{i_{j+1}}$, in $H$. We denote $u=y_0,v=x_q$. Notice that since $P'$ is simple, each component $B_j$, $1\leq j\leq r$ appears at most once on the path.
We now define the path $P$ and the curve $\gamma^*$. The path $P$ is defined as follows: $P=(P_1,e_1',P_2,e_2',\ldots,e_{q-1}',P_q)$, where for each $j: 1\leq j\leq q$, $P_j$ is obtained by traversing the boundary $B_{i_j}$ in the clock-wise direction from $y_{j-1}$ to $x_j$. The curve $\gamma^*: u\connect v$ is simply routed along $P_j$ on the inside of the face $F$. That is, curve $\gamma^*$ never crosses the images of the edges of $H'$ (see Figure~\ref{fig:curve}).

We now bound the number of edges of $H$, whose images in $\phi$ are being crossed by $\gamma^*$. We partition the crossings in which $\gamma^*$ participates into four sets $C_1,C_2,C_3, C_4$, like in the definition of routing along paths in Section~\ref{sec:prelims}. The only edges incident on vertices of path $P$ that $\gamma^*$ crosses are the edges in $E_1\cup E_2$, and each such edge contributes at most two crossings to $C_2\cup C_3$,  while the number of crossings in $C_1\cup C_4$ is bounded by $2\cro_{\phi}(H)$. 
Therefore, $\cro_{\phi}(\gamma^*, H) \leq O(|E_1|+|E_2|+\cro_{\phi}(H))=O(\cro_{\phi}(H) + \cro_{\phi}(\gamma, H))$.
\ifabstract \qed \fi \end{proof}

We now focus on the other embedding, $\psi$ of $H$, and show how to obtain the final curve $\gamma':u\connect v$, $\gamma'\in \Gamma(\psi)$, that participates in a small number of crossings.

\begin{proposition}\label{thm:embedding-along-path}
There is a curve $\gamma':u \connect v$ in $\Gamma(\psi)$, that has no self-crossings, and participates in at most $\cro_{\phi}(\gamma^*, H) + O(|\irreg_E(\phi,\psi)| + \dmax|\irreg_V(\phi,\psi)|)$ crossings with the edges of $H$.
\end{proposition}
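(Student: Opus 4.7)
The plan is to route $\gamma'$ along the image of $P$ in the planar drawing $\psi$, inside a thin tubular strip around $P$. Since $\psi$ is planar and $P \subseteq H$, this strip is a simply connected planar region containing no edges of $H \setminus P$ in its interior (except where they are incident to $P$), so $\gamma'$ will be free of self-crossings, and any crossing of $\gamma'$ with $E(H)$ must involve an edge incident to a vertex of $P$ or an edge of $P$ itself.

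To specify $\gamma'$, I will choose for each edge $e'_j$ on $P$ a side $\sigma_\psi(e'_j) \in \{L, R\}$ on which $\gamma'$ lies; the routing of $\gamma^*$ in $\phi$ analogously prescribes $\sigma_\phi(e'_j)$. Let $\tau(e'_j) = 1$ if $\sigma_\psi(e'_j) \neq \sigma_\phi(e'_j)$ and $\tau(e'_j) = 0$ otherwise. I propagate $\tau$ along $P$ starting from $u$: at an internal vertex $x$ of $P$ where the path enters via $e'_j$ and leaves via $e'_{j+1}$, if $x$ is regular with orientation preserved between $\phi$ and $\psi$ I set $\tau(e'_{j+1}) = \tau(e'_j)$; if $x$ is regular with orientation reversed I set $\tau(e'_{j+1}) = 1 - \tau(e'_j)$; and if $x$ is irregular I set $\tau(e'_{j+1})$ arbitrarily. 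Across an irregular edge on $P$, the endpoints of the surrounding irregular path have conflicting orientations, which forces a flip in $\tau$; we pay for this by letting $\gamma'$ cross that irregular edge once.

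The crossing count then decomposes by the vertices of $P$. The crucial claim is that at every regular vertex $x$ on $P$ with consistent $\tau$-propagation between $e'_j$ and $e'_{j+1}$, the set of edges incident to $x$ that $\gamma'$ crosses at $x$ in $\psi$ equals the set that $\gamma^*$ crosses at $x$ in $\phi$. This follows because at a regular vertex the cyclic orderings of incident edges agree up to reflection, and the $\tau$-rule compensates exactly for this reflection; a small case analysis is needed depending on whether $\sigma_\phi(e'_j) = \sigma_\phi(e'_{j+1})$ (both curves cross an entire arc of incident edges at $x$) or not (both cross a single edge). Summing over regular vertices bounds this contribution by $\cro_\phi(\gamma^*, H)$. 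Each irregular vertex contributes at most $\deg(x) - 2 \leq \dmax$ additional crossings, and each irregular edge contributes $O(1)$ from a forced flip. Altogether,
\[
\cro_\psi(\gamma', E(H)) \;\leq\; \cro_\phi(\gamma^*, H) + O\bigl(|\irreg_E(\phi,\psi)| + \dmax \cdot |\irreg_V(\phi,\psi)|\bigr).
\]

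The main obstacle will be to verify the regular-vertex claim carefully, which requires working through the case analysis of how $\gamma^*$ and $\gamma'$ turn at $x$ in each of the two drawings, and confirming that the $\tau$-rule I described is the correct combinatorial encoding of the reflection at $x$. A secondary subtlety is that by Proposition~\ref{thm:rerouting-along-path} the path $P$ is not necessarily simple (each edge appears at most twice), so the $\tau$-propagation may be required to take inconsistent values on the two visits of a repeated edge or a high-degree vertex; this conflict can be resolved by one additional local flip per repetition, which is absorbable into the same irregular-edge/vertex budget.
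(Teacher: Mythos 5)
Your overall strategy matches the paper's: route $\gamma'$ along $P$ in the planar drawing $\psi$, tracking on which side of $P$ the curve runs so as to reproduce $\gamma^*$'s crossings at regular vertices, and paying for irregularities with $O(1)$ per irregular edge and $O(\dmax)$ per irregular vertex. The paper does this by partitioning $P$ into maximal segments $\sigma_j$ of regular edges and regular vertices and routing $\gamma'$ on the side matching $\gamma^*$ within each segment; you do it per-edge with a side-bit $\tau$.

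However, your $\tau$-propagation rule is incorrect, and the error is not cosmetic. Fix an orientation of the plane and for each regular internal vertex $x_j$ of $P$ let $r_j\in\{0,1\}$ record whether the cyclic order at $x_j$ is reversed between $\phi$ and $\psi$. For $\gamma'$ to cross the same arc of incident edges at $x_j$ as $\gamma^*$ does, the side of $\gamma'$ relative to $\gamma^*$'s side near $x_j$ must equal $r_j$. Since $\gamma^*$ stays on one global side of $P$ (it never crosses $P$, by Proposition~\ref{thm:rerouting-along-path}), this means $\gamma'$ must switch sides of $P$ precisely between two consecutive high-degree vertices $x_{j-1},x_j$ with $r_{j-1}\neq r_j$ — which is exactly when the edge between them is irregular. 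In particular, across a \emph{regular} vertex (whether $r_j=0$ or $r_j=1$) the side bit is \emph{preserved}, not flipped; the flip lives on irregular edges. Your rule flips $\tau$ at every orientation-reversed vertex, which is a cumulative XOR of the $r_j$, not a per-vertex match. A clean counterexample: take $\psi$ to be a global reflection of $\phi_{\H}$, so every vertex has $r_j=1$ and there are no irregular vertices or edges. The correct $\gamma'$ is the mirror image of $\gamma^*$ with zero extra crossings, but your rule makes $\tau$ alternate along $P$, so $\gamma'$ would cross the wrong arc at roughly every other vertex, incurring up to $\Omega(m\cdot\dmax)$ spurious crossings that cannot be charged to any irregularity. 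This is not an issue your proposed "case analysis" can dodge; the rule itself encodes the wrong quantity.

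A second, smaller gap: since $P$ may repeat edges (and hence vertices up to $\dmax$ times), the per-vertex charge $O(\dmax)$ at irregular vertices can be multiplied by the number of visits. You wave this away as "absorbable", but the paper makes it rigorous by first shortcutting $P$ so that every irregular vertex appears at most once; without something like that, the bound $O(\dmax|\irreg_V|)$ is not justified.
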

\begin{proof}
We will route $\gamma'$ along the path $P$ in $\psi$. Since an edge
may appear at most twice on $P$, path $P$ may visit a vertex at most $\dmax$ times. We will assume however that $P$ visits every irregular vertex
at most once, by changing $P$ as follows: whenever an irregular vertex $x$ appears more than once on $P$, we create a shortcut, by removing the segment of $P$ that lies between the two consecutive appearances of $x$ on $P$.  As a result, in the final path $P$, each edge appears at most twice, and each irregular vertex at most once.

We will route the curve $\gamma'$ along $P$, but we will allow it to cross the image of the path $P$. Therefore, we only need to specify, for each edge $e\in P$, whether $\gamma'$ crosses it, and if not, on which side of $e$ it is routed. Since $\psi$ is planar, the edges of $P$ do not cross each other.

We partition the path $P$ into consecutive segments $\tau_0,\sigma_1,\tau_1,\sigma_2,\ldots,\sigma_t,\tau_t$, where for each $j: 1\leq j\leq t$, $\sigma_j$ contains regular edges only, and all its vertices are regular, except perhaps the first and the last. For each $j: 0\leq j\leq t$, either $\tau_j$ contains one or several consecutive  irregular edges connecting the last vertex of $\sigma_j$ and the first vertex of $\sigma_{j+1}$; or it contains a single irregular vertex, which serves as the last vertex of $\sigma_j$ and the first vertex of $\sigma_{j+1}$. 

Consider some such segment $\sigma_j$, and a thin strip $S=S_{\sigma_j}$ around this segment. Then the parts of the drawings of the edges incident on the vertices of $\sigma_j$, that fall inside $S$ are identical in both $\phi$ and $\psi$ (except possibly for the edges incident on the first and the last vertex of $\sigma_j$). We can therefore route $\gamma'$ along the same side of $\sigma_j$ along which $\gamma^*$ is routed. If necessary, we may need to cross the path $P$ once for each consecutive pair of segments, if the routings are performed on different sides of $P$. Let $\gamma^*_j$ and $\gamma'_j$ denote the segments of $\gamma^*$ and $\gamma'$, respectively, that are routed along $\sigma_j$, and include crossings with all edges incident on $\sigma_j$.
It is easy to see that the difference $\cro_{\psi}(\gamma'_j, H)-\cro_{\phi}(\gamma^*_j,H)$ is bounded by $2\dmax$: we pay at most $\dmax$ for crossing the edges incident on each endpoint of $\sigma_j$, which may be an irregular vertex. We may additionally pay $1$ crossing for each irregular edge on $P$. Since each irregular vertex appears at most once on $P$, and each irregular edge at most twice, $\cro_{\psi}(\gamma',H)-\cro_{\phi}(\gamma^*,H)\leq O(|\irreg_E(\phi,\psi)| + \dmax|\irreg_V(\phi,\psi)|)$. Finally, if $\gamma'$ crosses itself, we can simply short-cut it by removing all resulting loops. 
\ifabstract \qed \fi \end{proof}

Combining Propositions~\ref{thm:rerouting-along-path} and~\ref{thm:embedding-along-path}, we get that $\cro_{\phi}(\gamma^*, H)\leq O(\cro_{\phi}(H) + \cro_{\phi}(\gamma, H))$, and $\cro_{\psi}(\gamma',H)\leq \cro_{\phi}(\gamma^*, H) + O(|\irreg_E(\phi,\psi)| + \dmax|\irreg_V(\phi,\psi)|)\leq O(\cro_{\phi}(H) + \cro_{\phi}(\gamma, H) + |\irreg_E(\phi,\psi)| + \dmax|\irreg_V(\phi,\psi)|)$.

\subsection{Non 3-Connected Graphs}
\vspace{-3mm}

We briefly explain how to reduce the general case to the $3$-connected case. We decompose the graph into a collection of sub-graphs. For each sub-graph, we find a drawing separately, and then combine them together to obtain the final solution. Each one of the sub-graphs is either a $3$-connected graph, for which we can find a drawing using Theorem~\ref{thm:main2}, or it can be decomposed into a planar graph plus one additional edge. In the latter case, we employ the algorithm of Hlineny and Salazar~\cite{HlinenyS06} to find an $O(\dmax)$-approximate drawing. 
\ifabstract
The detailed proof of this part appears in the full version of this paper.
\fi\iffull
The detailed proof of this part is presented in Section~\ref{sec:reduce-three-connected} in the Appendix.
\fi

\section{Improved Algorithm for General Graphs}\label{sec:planarization for general graph}
In this section we prove Theorem~\ref{thm:sqrt n} and Corollary~\ref{corollary:result for general graphs}.
We will rely on the Planar Separator Theorem of Lipton and Tarjan~\cite{planar-separator}, and on the approximation algorithm for the Balanced Cut problem of Arora, Rao and Vazirani~\cite{ARV}, that we state below.
\begin{theorem}[Planar Separator Theorem~\cite{planar-separator}]\label{thm: planar separator}
Let $G$ be any $n$-vertex planar graph. Then there is an efficient algorithm to partition the vertices of $G$ into three sets $A,B,C$, such that $|A|,|C|\leq 2n/3$, $|B|\leq O(\sqrt{n})$, and there are no edges in $G$ connecting the vertices of $A$ to the vertices of $C$.
\end{theorem}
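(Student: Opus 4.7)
The plan is to implement the classical argument of Lipton and Tarjan. First I would reduce to the case where $G$ is connected: if it is not, I apply the algorithm to the largest component and then greedily add the remaining components to whichever of $A$ and $C$ is currently lighter; this preserves both the imbalance bound and the no-edges-across condition, since distinct components share no edges.

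So assume $G$ is connected. Pick an arbitrary root $r$ and compute a BFS tree from $r$. Let $L_i$ denote the $i$-th BFS level and let $b$ be the smallest index with $\sum_{i\leq b}|L_i|\geq n/2$; then the ``upper region'' $\bigcup_{i<b}L_i$ and the ``lower region'' $\bigcup_{i>b}L_i$ each contain at most $n/2$ vertices. Next I search for two thin levels bracketing $L_b$: indices $a\leq b\leq c$ such that $|L_a|,|L_c|\leq 2\sqrt{n}$ and $c-a\leq 2\sqrt{n}$. By a simple averaging argument such levels must exist, because among any $\sqrt{n}$ consecutive BFS levels at least one has size at most $\sqrt{n}$.

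If the middle strip $\bigcup_{a<i<c}L_i$ contains at most $2n/3$ vertices, I set $B=L_a\cup L_c$. Since BFS levels block all cross-edges between non-adjacent levels, the three regions above $L_a$, strictly between $L_a$ and $L_c$, and below $L_c$ have no edges between them, and each has at most $2n/3$ vertices; I distribute them greedily into $A$ and $C$ to balance sizes. In the harder case, the middle strip has more than $2n/3$ vertices, and I need an additional cycle-type separator inside it. Here I contract the upper region to a single vertex $u$, the lower region to a single vertex $v$, and triangulate the resulting planar graph $G'$. Note that $G'$ is planar, has at most $n+2$ vertices, and admits a rooted spanning tree $T$ (using the BFS edges) of depth $c-a=O(\sqrt{n})$ from $u$.

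For every non-tree edge $e$ of $G'$ the fundamental cycle $C_e$ (the tree path between its endpoints together with $e$) has length $O(\sqrt{n})$ and, because $G'$ is planar, separates its interior from its exterior. By Lipton and Tarjan's sliding-across-triangles argument, some non-tree edge $e^{*}$ produces a cycle $C_{e^{*}}$ whose two sides each contain at most $2/3$ of the faces of $G'$; by the Euler formula this translates into the same balance on the vertex count up to an $O(\sqrt{n})$ additive error. Taking $B=L_a\cup L_c\cup (V(C_{e^{*}})\setminus\{u,v\})$ then gives $|B|=O(\sqrt{n})$, and the balance in the middle strip together with the bounds above $L_a$ and below $L_c$ yields sides of total size at most $2n/3$. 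The main obstacle is this last step: showing that among all fundamental cycles of a triangulated planar graph with a shallow spanning tree, some one splits the face set in a $2/3$--$1/3$ balanced way. This requires the standard intermediate-value sweep through the dual triangulation, where crossing a single triangle shifts at most one face from one side of the cycle to the other; everything else is bookkeeping over BFS levels.
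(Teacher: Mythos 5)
The paper does not prove this theorem---it simply cites Lipton and Tarjan \cite{planar-separator} and uses the result as a black box in Section~\ref{sec:planarization for general graph}. What you have written is therefore a reconstruction of the original Lipton--Tarjan argument rather than a comparison with anything in the paper.

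Your reconstruction is faithful to the classical proof: BFS layering, locating two thin levels $L_a,L_c$ close to the weight-median level, the easy case where $L_a\cup L_c$ already splits off a bounded middle strip, and the hard case where you contract the top and bottom, triangulate, and extract a fundamental-cycle separator of length $O(\sqrt{n})$ from the shallow spanning tree. The arithmetic checks out: with $a\le b\le c$ the regions above $L_a$ and below $L_c$ automatically have at most $n/2<2n/3$ vertices, so in the easy case the three pieces (each $\le 2n/3$) can always be packed into two sides of size $\le 2n/3$; and in the hard case the separator $L_a\cup L_c\cup\bigl(V(C_{e^*})\setminus\{u,v\}\bigr)$ has size $O(\sqrt n)+O(\sqrt n)+O(\sqrt n)=O(\sqrt n)$. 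The disconnected case is also handled correctly, since after recursing on the largest component the smaller side has size at most $m/2$ while the remainder has $n-m$ vertices, and $m/2+(n-m)\le 2n/3$ whenever $m\ge 2n/3$.

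The one step you rightly flag as the crux---that in a triangulated planar graph with a low-depth spanning tree, some fundamental cycle splits the graph $2/3$--$1/3$---is stated but not argued; you defer to the ``sliding across triangles'' / intermediate-value sweep. That is indeed where the work is, and leaving it as a pointer to the standard lemma is acceptable for a sketch, but a complete proof would need to carry it out (in particular, to convert a balanced split of the face set into a balanced split of the vertex set, taking care of the fact that $u$ and $v$ may appear on the cycle and may absorb many contracted vertices). As a matter of bookkeeping, your averaging claim actually yields the slightly stronger $|L_a|,|L_c|\le\sqrt n$ with $c-a<2\sqrt n$, so the $2\sqrt n$ bound you state on the level sizes is safe. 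Overall this is a correct outline of the argument the paper relies on by citation.
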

\begin{theorem}[Balanced Cut~\cite{ARV}]\label{thm: ARV}
Let $G$ be any $n$-vertex graph, and suppose there is a partition of vertices of $G$ into two sets, $A$ and $C$, with $|A|,|C|\leq 2n/3$, and 
$|E(A,C)|=c$. Then there is an efficient algorithm to find a partition $(A',C')$ of vertices of $G$, such that $|A'|,|C'|\leq \alpha n$ for some constant $\alpha<1$, and $|E(A',C')|\leq O(c\sqrt{\log n})$.
\end{theorem}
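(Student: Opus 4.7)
The plan is to follow the semidefinite programming approach of Arora--Rao--Vazirani for Balanced Cut. I would formulate an SDP relaxation by assigning a unit vector $v_i \in \reals^n$ to each vertex $i$, and minimizing $\frac{1}{4}\sum_{(i,j)\in E}\norm{v_i - v_j}^2$ subject to $\norm{v_i}^2 = 1$, the $\ell_2^2$ triangle inequality $\norm{v_i - v_j}^2 + \norm{v_j - v_k}^2 \geq \norm{v_i - v_k}^2$ for every triple, and a spread constraint $\sum_{i,j}\norm{v_i - v_j}^2 \geq \Omega(n^2)$ enforcing balance. Taking $v_i = +u$ for $i \in A$ and $v_i = -u$ for $i \in C$ (for an arbitrary unit $u$) gives a feasible integral solution of cost $c$, so the SDP optimum $\mathrm{OPT}_{\mathrm{SDP}}$ is at most $c$. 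The task is then to round a vector solution into a genuine cut of value $O(c\sqrt{\log n})$ while maintaining a constant balance.

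The heart of the argument, and the main obstacle, is the ARV $\ell_2^2$ structure theorem: any unit vectors satisfying the triangle and spread constraints admit two sets $S, T \subseteq V$ with $|S|, |T| = \Omega(n)$ and $\norm{v_i - v_j}^2 \geq \Delta := \Omega(1/\sqrt{\log n})$ for all $i \in S$, $j \in T$. I would prove this by projecting the vectors onto a random Gaussian direction, defining $S$ and $T$ as the top-$\Omega(n)$ and bottom-$\Omega(n)$ vertices in the projected order, then pruning pairs $(i,j) \in S \times T$ whose $\ell_2^2$ distance is less than $\Delta$. If the pruning removed more than half of $S$ or $T$, a chaining/matching argument would produce a long sequence of pairs with large Gaussian projection gaps but tiny $\ell_2^2$ distances; summing $\ell_2^2$ triangle inequalities along the chain yields a contradiction with the total projection variance. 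This geometric/probabilistic step requires delicate Gaussian concentration estimates on how many short pairs can survive projection and is where the $\sqrt{\log n}$ factor enters.

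Given the well-separated sets $S, T$, rounding is standard: order vertices by their $\ell_2^2$ distance from a carefully chosen point of $S$, consider all threshold sweep cuts, and output the best one separating constant fractions of $S$ and $T$. A Cheeger-style region-growing argument bounds the value of the best sweep cut by $O(\mathrm{OPT}_{\mathrm{SDP}}/\Delta) = O(c\sqrt{\log n})$, and because $|S|, |T| = \Omega(n)$, the resulting partition $(A',C')$ automatically satisfies $|A'|, |C'| \leq \alpha n$ for some constant $\alpha < 1$. Should a single sweep fail to produce a sufficiently balanced split, one peels off the found piece and iterates on the larger side; a constant number of iterations suffice, and the cut values telescope to $O(c\sqrt{\log n})$ overall, yielding the claimed pseudo-approximation.
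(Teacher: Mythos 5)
Your proposal sets out to reprove the Arora--Rao--Vazirani theorem itself, whereas the paper does not prove this statement at all: it is stated with the attribution ``[ARV]'' and used as a black box (the appendix restates it as Theorem~\ref{thm:ARV-alg} in a vertex-weighted form and only adds a short iteration argument, needed elsewhere, to boost the constant $b'$ to $1/3$ --- a step that is not required for the version quoted here, since any constant $\alpha<1$ is allowed). So the relevant comparison is between ``cite ARV'' and ``rederive ARV.'' Your outline of the rederivation is faithful to the actual ARV argument: the $\ell_2^2$-triangle-inequality SDP with a spread constraint, the feasibility of the $\pm u$ integral solution showing $\mathrm{OPT}_{\mathrm{SDP}}\le c$, the structure theorem producing $\Omega(n)$-sized sets at $\ell_2^2$-distance $\Omega(1/\sqrt{\log n})$ via Gaussian projection plus a matching/chaining contradiction, and the level-set rounding giving value $O(\mathrm{OPT}_{\mathrm{SDP}}\sqrt{\log n})$ with both sides of size $\Omega(n)$. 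The balance accounting also checks out: the guaranteed sets $S,T$ force $|A'|,|C'|\ge\Omega(n)$, hence $|A'|,|C'|\le\alpha n$ for a constant $\alpha<1$, which is exactly the bi-criteria guarantee the theorem asserts.

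The caveat is that, taken as a proof rather than as a citation, your second paragraph is a gesture at the single hardest step. The well-separated-sets structure theorem is the entire technical content of ARV; ``a chaining/matching argument would produce a long sequence of pairs\ldots{} summing triangle inequalities along the chain yields a contradiction'' compresses a delicate measure-concentration and boosting argument (stretched pairs, iterated matchings, and the trade-off that yields the $\sqrt{\log n}$ rather than the easy $\log n$) into one sentence, and none of the quantitative claims there are verified. For the purposes of this paper that is immaterial --- the right move is simply to cite the result, as the authors do --- but if you intend your write-up to stand as a self-contained proof, the structure theorem must be proved in full, not invoked by description.
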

Combining the two theorems together, we get the following corollary\ifabstract, whose proof appears in the full version of the paper.\fi\iffull:\fi
\begin{corollary}\label{corollary: cut}
Let $G$ be any $n$-vertex graph with maximum degree $\dmax$. Then there is an efficient algorithm to partition the vertices of $G$ into two sets $A',C'$, with $|A'|,|C'|\leq \alpha n$ for some constant $\alpha$, such that $|E(A',C')|\leq O(\sqrt{\log n})(\dmax\sqrt n+\optmp{G})$.
\end{corollary}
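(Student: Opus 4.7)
The plan is to prove the existence of a balanced cut of the required size by combining an (unknown) optimal planarization with the Planar Separator Theorem, and then to invoke the ARV balanced-cut algorithm, which only needs existence of such a cut, not an explicit construction.

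First, I would fix an optimal planarizing set $E^*\subseteq E(G)$ (which we cannot find efficiently, but whose existence suffices), so that $H = G\setminus E^*$ is planar and $|E^*|=\optmp{G}$. Applying Theorem~\ref{thm: planar separator} to $H$ gives a vertex partition $(A,B,C)$ with $|A|,|C|\leq 2n/3$, $|B|=O(\sqrt n)$, and no $H$-edge between $A$ and $C$. The next step is to absorb $B$ into $A$ and $C$ to produce a two-part partition $(A_0,C_0)$ that still satisfies $|A_0|,|C_0|\leq 2n/3$: since $|A|+|C|+|B|=n$, at least one of $|A|,|C|$ is strictly less than $2n/3$, so I greedily move vertices of $B$ into that side until it reaches size $2n/3$ (or $B$ is exhausted), and place the rest of $B$ on the other side, which then has size at most $n/3$.

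Next, I would bound the cut $|E_G(A_0,C_0)|$. Every edge of $G$ crossing this cut is either (i) an edge of $E^*$, contributing at most $\optmp{G}$ edges in total, or (ii) an edge of $H$ incident on some vertex of $B$ (since $H$ has no edges directly between $A$ and $C$, and only $B$-vertices could be reassigned to the ``wrong'' side). The latter contributes at most $\dmax\cdot |B| = O(\dmax\sqrt n)$. Hence $|E_G(A_0,C_0)| \leq c$ for some $c = O(\dmax\sqrt n + \optmp{G})$.

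Finally, I apply Theorem~\ref{thm: ARV} to $G$. The algorithm does not need $(A_0,C_0)$ or $E^*$ to be known: the mere existence of a partition into halves of size at most $2n/3$ with cut at most $c$ guarantees that ARV efficiently outputs a partition $(A',C')$ with $|A'|,|C'|\leq \alpha n$ for some universal constant $\alpha<1$ and $|E(A',C')| = O(c\sqrt{\log n}) = O(\sqrt{\log n})(\dmax\sqrt n+\optmp{G})$. There is no real obstacle in this argument; the only point that needs care is the redistribution of $B$, which is handled by the greedy assignment that preserves the $2n/3$ balance required by Theorem~\ref{thm: ARV}.
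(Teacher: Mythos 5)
Your proof is correct and follows essentially the same route as the paper's: take an optimal planarizing set $E^*$ (existence suffices), apply the Planar Separator Theorem to $H = G\setminus E^*$, redistribute $B$ to get a two-sided $2n/3$-balanced cut with $|E_G(\cdot,\cdot)| = O(\dmax\sqrt n + \optmp{G})$, and invoke the ARV algorithm. The paper simply dumps all of $B$ onto the smaller side of $\{A,C\}$, whereas you distribute $B$ greedily; this is a cosmetic difference (your version is marginally more careful about the balance constraint for small $n$) and does not change the substance of the argument.
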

\iffull
\begin{proof}
Let $E^*$ be an optimal solution for the \MP problem on $G$, $|E^*|=\optmp{G}$, and let $H=G\setminus E^*$. Since $H$ is a planar graph, there is a partition $(A,B,C)$ of its vertices as in Theorem~\ref{thm: planar separator}. Assume w.l.o.g. that $|A|\leq |C|$, and consider the partition $(A\cup B,C)$. Then $|A\cup B|,|C|\leq 2n/3$, and $|E_G(A\cup B,C)|\leq |E_H(A\cup B,C)|+|E^*|\leq O(\dmax\sqrt n)+\optmp{G}$. We can now apply Theorem~\ref{thm: ARV} to obtain the desired partition of $G$.
\ifabstract \qed \fi \end{proof}
\fi

We are now ready to describe the algorithm from Theorem~\ref{thm:sqrt n}. The algorithm consists of $O(\log n)$ iterations, and in each iteration $i$, we are given a collection $G_1^i,\ldots,G_{k_i}^i$ of disjoint sub-graphs of $G$, with $k_i\leq \optmp{G}$. The number of vertices in each such sub-graph is bounded by $n_i=\alpha^{i-1}n$, where $\alpha<1$ is the constant from Corollary~\ref{corollary: cut}. In the input to the first iteration, $k_1=1$, and $G_1^1=G$.
Iteration $i$, for $i\geq 1$ is performed as follows. Consider some graph $G^i_j$, for $1\leq j\leq k_i$. We apply Corollary~\ref{corollary: cut} to this graph, and denote by $H_j,H'_j$ the two sub-graphs of $G^i_j$ induced by $A'$ and $C'$, respectively. The number of vertices in each one of the subgraphs is at most $\alpha\cdot |V(G^i_j)|\leq \alpha n_i=n_{i+1}$. We denote by $E^i_j$ the corresponding set of edges $E(A',C')$, and let $E^i=\bigcup_{j=1}^{k_i}E^i_j$. Since for all $j$, $|E^i_j|\leq O(\sqrt{\log n})(\dmax\sqrt {n_i}+\optmp{G^i_j})$, and $\sum_{j=1}^{k_i}\optmp{G^i_j}\leq \optmp{G}$, we get that $|E^i|\leq O(\sqrt{\log n})(k_i\dmax\sqrt{n_i}+\optmp{G})\leq O(\dmax\sqrt{\log n}\cdot \sqrt{n_i})\optmp{G}$, as $k_i\leq \optmp{G}$. Finally, consider the collection $\gset_{i+1}=\set{H_1,H_1',\ldots,H_{k_i},H_{k_i}'}$ of the new graphs, and let $\gset'_{i+1}\sse \gset_{i+1}$ contain the non-planar graphs. Then $|\gset'_{i+1}|\leq \optmp{G}$, and the graphs in $\gset'_{i+1}$ become the input to the next iteration. 
Since we can efficiently check whether a graph is planar, the set $\gset'_{i+1}$ can be computed efficiently.

The algorithm stops, when all remaining sub-graphs contain at most $O(\sqrt{\log n})$ edges. We then add the edges of all remaining sub-graphs to set $E^{i^*}$, where $i^*=O(\log n)$ is the last iteration. Our final solution is $E'=\bigcup_{i=1}^{i^*}E^i$, and its cost is bounded by $|E'|\leq \sum_{i=1}^{i^*}|E^i|\leq \sum_{i=1}^{i^*} O(\dmax\sqrt{\log n}\cdot \sqrt{n_i})\optmp{G}\leq O(\dmax\sqrt{n\log n})\optmp{G}$, since the values $\sqrt{n_i}$ form a decreasing geometric series for $i\geq 1$. This finishes the proof of Theorem~\ref{thm:sqrt n}.
We now show how to obtain Corollary~\ref{corollary:result for general graphs}. Combining Theorems~\ref{thm:main} and \ref{thm:sqrt n}, we immediately obtain an efficient algorithm for drawing any graph $G$ with at most $O(n\log n\cdot \dmax^5)\optcrosq{G}$ crossings. In order to get the approximation guarantee of $O(n \cdot\poly(\dmax)\cdot \log^{3/2}n)$, we use an extension of the result of Even et al.~\cite{EvenGS02} to arbitrary graphs, that we formulate in the next theorem, \iffull whose proof appears in Appendix.\fi \ifabstract whose proof appears in the full version of the paper.\fi

\begin{theorem}[Extension of~\cite{EvenGS02}]\label{thm: Even: extension}
There is an efficient algorithm that, given any $n$-vertex graph $G$ with maximum degree $\dmax$, outputs a drawing of $G$ with $O(\poly(\dmax)\log^2n)(n+\optcro{G})$ crossings.
\end{theorem}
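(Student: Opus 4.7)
The plan is to reduce the general case to the bounded-degree case of Even et al.~\cite{EvenGS02} by blowing up each vertex of $G$ into a small planar gadget, solving the resulting bounded-degree instance, and then contracting the gadgets back. Concretely, I construct $G'$ from $G$ by replacing every vertex $v \in V(G)$ of degree $d_v$ with a cycle $C_v$ on $d_v$ vertices, so that each edge of $G$ previously incident on $v$ becomes incident on a distinct vertex of $C_v$. Then $G'$ has maximum degree $3$ and $|V(G')| = \sum_v d_v \leq n \dmax$. Any drawing of $G$ yields a drawing of $G'$ with the same number of crossings: shrink $v$ into a small disk and draw $C_v$ as a planar cycle inside it, with the cyclic order of its vertices matching the cyclic order of edges leaving $v$; hence $\optcro{G'} \leq \optcro{G}$.

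Running the bounded-degree algorithm of~\cite{EvenGS02} (as improved via~\cite{ARV}) on $G'$ produces a drawing $\phi'$ of $G'$ with
$$
\cro_{\phi'}(G') \;=\; O(\log^2 n')\bigl(n' + \optcro{G'}\bigr) \;=\; O(\log^2 n)\bigl(n\dmax + \optcro{G}\bigr),
$$
where I used $\log (n\dmax) = O(\log n)$. It remains to convert $\phi'$ into a drawing of $G$. For each $v$, I pick a representative $u_v \in C_v$ and place the image of $v$ at the location of $u_v$ in $\phi'$. For every other vertex $u \in C_v$, I extend its external edge $f_u$ by a curve from $u$ to $u_v$ routed along the shorter arc of $C_v$ inside a thin strip, in the style of Section~\ref{sec:routing along paths}. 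I then discard the edges and non-representative vertices of $C_v$.

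To bound the number of crossings, let $X_v$ denote the number of crossings of $\phi'$ incident on an edge of $C_v$. The thin-strip rerouting of a single extension along an arc of length at most $d_v$ adds at most $O(d_v)$ crossings with the external edges at intermediate cycle vertices, at most $O(X_v)$ crossings with edges that cross $C_v$ in $\phi'$, and at most $O(d_v)$ crossings with other extensions at the same gadget, analogously to Proposition~\ref{thm:rerouting-along-path}. Summing over the $d_v$ extensions at $v$ and then over all $v$ gives at most
$$
O\!\left(\sum_v d_v^2 + \dmax \sum_v X_v \right) \;=\; O\bigl(\dmax^2 n\bigr) + O\bigl(\dmax \cdot \cro_{\phi'}(G')\bigr)
$$
new crossings, using $\sum_v d_v^2 \leq \dmax \sum_v d_v \leq 2 \dmax |E(G)| \leq \dmax^2 n$ and the fact that each crossing of $\phi'$ contributes at most $2$ to $\sum_v X_v$. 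Combining with the surviving crossings of $\phi'$, the total number of crossings in the resulting drawing of $G$ is at most $O(\dmax \cdot \cro_{\phi'}(G') + \dmax^2 n) = O(\poly(\dmax) \log^2 n)(n + \optcro{G})$, as required.

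The main technical obstacle is the contraction step: one must ensure that shrinking $C_v$ to a point costs only a $\poly(\dmax)$ factor in crossings. The key point is that because the rerouted curves follow the cycle inside a thin strip, every edge crossing $C_v$ in $\phi'$ is crossed at most once by each external-edge extension, giving the bound $O(d_v X_v)$ per gadget instead of something that grows with $\cro_{\phi'}(G')^2$. Subtle cases, such as extensions routed on opposite sides of $C_v$ or cycles drawn with self-crossings in $\phi'$, are handled by always choosing the shorter arc of $C_v$ and short-cutting any self-crossings of the final curve, exactly as in Proposition~\ref{thm:embedding-along-path}.
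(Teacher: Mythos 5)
Your approach is essentially the same as the paper's: replace each vertex by a bounded-degree gadget, run the bounded-degree algorithm of~\cite{EvenGS02} (as improved via~\cite{ARV}), and contract the gadgets back, charging the contraction cost to crossings of the bounded-degree drawing. The paper uses a path gadget rather than a cycle gadget, which is a cosmetic difference, and the contraction/rerouting argument is the same in spirit. That said, there are two places where your accounting is off, although both stay within the $\poly(\dmax)$ budget, so the final bound is unaffected.

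First, the claim $\optcro{G'} \leq \optcro{G}$ does not hold as stated. The cyclic order of the vertices $u_1,\dots,u_{d_v}$ around the cycle $C_v$ is \emph{fixed} by the construction of $G'$, which is done without knowledge of the optimal drawing $\phi$ of $G$. If the cyclic order of edges around $v$ in $\phi$ is a different permutation, then placing the $u_i$'s on the boundary of a small disk in that permuted order forces the cycle edges of $C_v$ to cross each other, up to $\binom{d_v}{2}$ times. The correct statement is $\optcro{G'} \leq \optcro{G} + O\bigl(\sum_v d_v^2\bigr) = \optcro{G} + O(\dmax^2 n)$; the paper's path gadget has exactly the same issue and explicitly carries this additive $n\dmax(\dmax-1)$ term. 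Since the bound already contains an $n\dmax$ term, the extra $\dmax^2 n$ is absorbed and the final answer is unchanged, but the inequality as you wrote it, together with the justification ("cyclic order matching"), is not correct because you cannot choose the cyclic order of $C_v$ to match an unknown drawing.

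Second, your per-gadget bound $O(d_v^2 + d_v X_v)$ omits crossings between an extension at $v$ and an extension at a \emph{different} gadget $u$. Where the strips $S_{C_v}$ and $S_{C_u}$ overlap (near each $\phi'$-crossing between $C_v$ and $C_u$), up to $d_v\cdot d_u$ pairs of extensions can cross, so these contribute $O(\dmax^2\cdot\cro_{\phi'}(G'))$ in total, not $O(\dmax\sum_v X_v)$. This matches the $O(\dmax^2\,\cro_{\phi'_{\tilde G}}(\tilde G))$ bound the paper derives for the type-2 versus type-2 crossings. Again, the discrepancy is only a $\dmax$ factor and is absorbed by $\poly(\dmax)$, but the category should be accounted for explicitly.
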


We run our algorithm, and the algorithm given by Theorem~\ref{thm: Even: extension} on the input graph $G$, and output the better of the two solutions. If $\optcro{G}\geq \sqrt{\log n}$, then the algorithm of Even et al. is an $O(n \cdot\poly(\dmax)\cdot \log^{3/2}n)$-approximation. Otherwise, our algorithm gives an $O(n\cdot \poly(\dmax)\cdot \log^{3/2}n)$-approximation.


\bibliography{soda}

\def\cprime{$'$} \def\cprime{$'$}
\begin{thebibliography}{10}

\bibitem{ajtai82}
{\sc M.~Ajtai, V.~Chv\'{a}tal, M.~Newborn, and E.~Szemer\'{e}di}, {\em
  Crossing-Ðfree subgraphs}, Theory and Practice of Combinatorics,  (1982),
  pp.~9--Ð12.

\bibitem{Ambuhl07}
{\sc C.~Ambuhl, M.~Mastrolilli, and O.~Svensson}, {\em Inapproximability
  results for sparsest cut, optimal linear arrangement, and precedence
  constrained scheduling}, in Proceedings of the 48th Annual IEEE Symposium on
  Foundations of Computer Science, 2007, pp.~329--337.

\bibitem{ARV}
{\sc S.~Arora, S.~Rao, and U.~V. Vazirani}, {\em Expander flows, geometric
  embeddings and graph partitioning}, J. ACM, 56 (2009), pp.~1--37.

\bibitem{bhatt84}
{\sc S.~N. Bhatt and F.~T. Leighton}, {\em A framework for solving {VLSI} graph
  layout problems}, J.~Comput.~Syst.~Sci., 28 (1984), pp.~300--343.

\bibitem{BorozkyPT06}
{\sc K.~J. B{\"o}r{\"o}zky, J.~Pach, and G.~T{\'o}th}, {\em Planar crossing
  numbers of graphs embeddable in another surface},
  Int.~J.~Found.~Comput.~Sci., 17 (2006), pp.~1005--1016.

\bibitem{CabelloC07}
{\sc S.~Cabello and E.~W. Chambers}, {\em Multiple source shortest paths in a
  genus $g$ graph}, in SODA, 2007, pp.~89--97.

\bibitem{cabello_edge}
{\sc S.~Cabello and B.~Mohar}, {\em Adding one edge to planar graphs makes
  crossing number hard}, in Proc.~ACM Symp.~on Computational Geometry, 2010,
  pp.~68--76.

\bibitem{crossing_apex}
{\sc M.~Chimani, P.~Hlin\v{e}n\'{y}, and P.~Mutzel}, {\em Approximating the
  crossing number of apex graphs}, in Graph Drawing, I.~Tollis and
  M.~Patrignani, eds., vol.~5417 of Lecture Notes in Computer Science, Springer
  Berlin / Heidelberg, 2009, pp.~432--434.

\bibitem{genus-planarization}
{\sc H.~Djidjev and S.~M. Venkatesan}, {\em Planarization of graphs embedded on
  surfaces}, in WG '95: Proceedings of the 21st International Workshop on
  Graph-Theoretic Concepts in Computer Science, vol.~1017 of Lecture Notes in
  Computer Science, London, UK, 1995, Springer-Verlag, pp.~62--72.

\bibitem{fpt}
{\sc R.~G. Downey and M.~R. Fellows}, {\em Parameterized Complexity},
  Springer-Verlag, 1999.

\bibitem{erdos_guy73}
{\sc P.~Erd\"{o}s and R.~Guy}, {\em Crossing number problems},
  Amer.~Math.~Monthly, 80 (1973), pp.~52--Ð58.

\bibitem{EvenGS02}
{\sc G.~Even, S.~Guha, and B.~Schieber}, {\em Improved approximations of
  crossings in graph drawings and {VLSI} layout areas}, SIAM J.~Comput., 32
  (2002), pp.~231--252.

\bibitem{crossing_np_complete}
{\sc M.~R. Garey and D.~S. Johnson}, {\em Crossing number is {NP}-complete},
  SIAM J.~Algebraic Discrete Methods, 4 (1983), pp.~312--Ð316.

\bibitem{crossing_projective}
{\sc I.~Gitler, P.~Hlin\v{e}n{\'y}, J.~Lea{\~n}os, and G.~Salazar}, {\em The
  crossing number of a projective graph is quadratic in the face-width},
  Electronic Notes in Discrete Mathematics, 29 (2007), pp.~219--223.

\bibitem{Grohe04}
{\sc M.~Grohe}, {\em Computing crossing numbers in quadratic time},
  J.~Comput.~Syst.~Sci., 68 (2004), pp.~285--302.

\bibitem{SPQR-trees}
{\sc C.~Gutwenger and P.~Mutzel}, {\em A linear time implementation of
  spqr-trees}, in GD '00: Proceedings of the 8th International Symposium on
  Graph Drawing, vol.~1984 of Lecture Notes in Computer Science, London, UK,
  2001, Springer-Verlag, pp.~77--90.

\bibitem{GutwengerMW05}
{\sc C.~Gutwenger, P.~Mutzel, and R.~Weiskircher}, {\em Inserting an edge into
  a planar graph}, Algorithmica, 41 (2005), pp.~289--308.

\bibitem{Hlineny06a}
{\sc P.~Hlinen{\'y}}, {\em Crossing number is hard for cubic graphs}, J. Comb.
  Theory, Ser. B, 96 (2006), pp.~455--471.

\bibitem{crossing_genus}
{\sc P.~Hlin\v{e}n\'{y} and M.~Chimani}, {\em Approximating the crossing number
  of graphs embeddable in any orientable surface}, in Proc.~21st Annual
  ACM-SIAM Symposium on Discrete Algorithms, 2010, pp.~918--927.

\bibitem{crossing_torus}
{\sc P.~Hlin\v{e}n\'{y} and G.~Salazar}, {\em Approximating the crossing number
  of toroidal graphs}, Lecture Notes in Computer Science, 4835/2007 (2007),
  pp.~148--159.

\bibitem{HlinenyS06}
{\sc P.~Hlin\v{e}n{\'y} and G.~Salazar}, {\em On the crossing number of almost
  planar graphs}, in Graph Drawing, M.~Kaufmann and D.~Wagner, eds., vol.~4372
  of Lecture Notes in Computer Science, Springer Berlin / Heidelberg, 2007,
  pp.~162--173.

\bibitem{planar-drawing}
{\sc J.~Hopcroft and R.~Tarjan}, {\em Efficient planarity testing}, J. ACM, 21
  (1974), pp.~549--568.

\bibitem{KawarabayashiMR08}
{\sc K.~Kawarabayashi, B.~Mohar, and B.~A. Reed}, {\em A simpler linear time
  algorithm for embedding graphs into an arbitrary surface and the genus of
  graphs of bounded tree-width}, in FOCS '08: Proceedings of the 2008 49th
  Annual IEEE Symposium on Foundations of Computer Science, 2008, pp.~771--780.

\bibitem{KawarabayashiR07}
{\sc K.~Kawarabayashi and B.~A. Reed}, {\em Computing crossing number in linear
  time}, in STOC '07: Proceedings of the thirty-ninth annual ACM symposium on
  Theory of computing, 2007, pp.~382--390.

\bibitem{KolmanM04}
{\sc P.~Kolman and J.~Matou\v{s}ek}, {\em Crossing number, pair-crossing
  number, and expansion}, J.~Comb.~Theory, Ser.~B, 92 (2004), pp.~99--113.

\bibitem{leighton_book}
{\sc F.~T. Leighton}, {\em Complexity issues in {VLSI}: optimal layouts for the
  shuffle-exchange graph and other networks}, MIT Press, 1983.

\bibitem{LR}
{\sc F.~T. Leighton and S.~Rao}, {\em Multicommodity max-flow min-cut theorems
  and their use in designing approximation algorithms}, Journal of the ACM, 46
  (1999), pp.~787--832.

\bibitem{planar-separator}
{\sc R.~J. Lipton and R.~E. Tarjan}, {\em A separator theorem for planar
  graphs}, SIAM Journal on Applied Mathematics, 36 (1979), pp.~177--189.

\bibitem{matousek_book}
{\sc J.~Matou\v{s}ek}, {\em Lectures on discrete geometry}, Springer-Verlag,
  2002.

\bibitem{Mohar99}
{\sc B.~Mohar}, {\em A linear time algorithm for embedding graphs in an
  arbitrary surface}, SIAM J.~Discrete Math., 12 (1999), pp.~6--26.

\bibitem{Mohar_almost}
\leavevmode\vrule height 2pt depth -1.6pt width 23pt, {\em On the crossing
  number of almost planar graphs}, Informatica (Slovenia), 30 (2006),
  pp.~301--303.

\bibitem{pach_survey}
{\sc J.~Pach and G.~T\'{o}th}, {\em Thirteen problems on crossing numbers},
  Geombinatorics, 9 (2000), pp.~194Ð--207.

\bibitem{richter_survey}
{\sc R.~B. Richter and G.~Salazar}, {\em Crossing numbers}, in Topics in
  Topological Graph Theory, L.~W. Beineke and R.~J. Wilson, eds., Cambridge
  University Press, 2009, ch.~7, pp.~133--150.

\bibitem{riskin_edge}
{\sc A.~Riskin}, {\em The crossing number of a cubic plane polyhedral map plus
  an edge}, Studia Sci.~Math.~Hungar., 31 (1996), pp.~405--Ð413.

\bibitem{spencer_random}
{\sc J.~Spencer and G.~T\'{o}th}, {\em Crossing numbers of random graphs},
  Random Struct.~Algorithms, 21 (2002), pp.~347--358.

\bibitem{szekely_survey}
{\sc L.~A. Szekely}, {\em Progress on crossing number problems}, in SOFSEM,
  vol.~3381 of Lecture Notes in Computer Science, Springer Berlin / Heidelberg,
  2005, pp.~53--61.

\bibitem{genus-np-complete}
{\sc C.~Thomassen}, {\em The graph genus problem is np-complete},
  J.~Algorithms, 10 (1989), pp.~568--576.

\bibitem{turan_first}
{\sc P.~Tur\'{a}n}, {\em A note of welcome}, J.~Graph Theory, 1 (1977),
  pp.~1--Ð5.

\bibitem{vrto_biblio}
{\sc I.~VrtÕo}, {\em Crossing numbers of graphs: A bibliography}.
\newblock \url{http://www.ifi.savba.sk/~imrich}.

\bibitem{Whitney}
{\sc H.~Whitney}, {\em Congruent graphs and the connectivity of graphs},
  American Journal of Mathematics, 54 (1932), pp.~150--168.

\bibitem{WoodT06}
{\sc D.~R. Wood and J.~A. Telle}, {\em Planar decompositions and the crossing
  number of graphs with an excluded minor}, in Graph Drawing 2006, vol.~4372 of
  Lecture Notes in Computer Science, Springer Berlin / Heidelberg, 2007,
  pp.~150--161.

\end{thebibliography}
\bibliographystyle{siam}

\iffull
\newpage
\fi

\appendix
\section{Block Decompositions}\label{sec: blocks}
In this section we introduce the notion of \textit{blocks}, and present a theorem for computing block decompositions of graphs, that we will
later use to handle graphs that are not $3$-connected. 

\begin{Definition}
Let $G=(V,E)$ be a $2$-connected graph. 
A subgraph $B=(V',E')$ of $G$ is called a \emph{block} iff: 

\begin{itemize}
\item $V\setminus V'\neq\emptyset$ and $|V'|\geq 3$;
\item There are two special vertices $u,v\in V'$, called \emph{block end-points} and denoted by $I(B)=(u,v)$, such that there are no edges connecting vertices in $V\setminus V'$ to $V'\setminus\set{u,v}$ in $G$, that is, $E(V\setminus V',V'\setminus\set{u,v})=\emptyset$. All other vertices of $B$ are called \emph{inner vertices}; 

\item  $B$ is the subgraph of $G$ induced by $V'$, except that it {\bf does not} contain the edge $\{u,v\}$ even if it is present in $G$.
\end{itemize}
\end{Definition}

Notice that every $2$-separator $(u,v)$ of $G$ defines at least two internally disjoint blocks $B',B''$ with $I(B'),I(B'')=(u,v)$.

\begin{Definition}
Let $\fset$ be a laminar family of sub-graphs of $G$, and let $\tset$ be the decomposition tree associated with $\fset$. We say that $\fset$ is a \emph{block decomposition} of $G$, iff:

\begin{itemize}
\item The root of the tree $\tset$ is $G$, and all other vertices of $\tset$ are blocks. For consistency, we will call the root vertex ``block'' as well.

\item For each block $B\in \fset$, let $\tilde{B}$ be the graph obtained by replacing each child $B'$ of $B$
with an artificial edge connecting its endpoints. Let $\tilde{B}'$ be the graph obtained from $\tilde{B}$ by adding an artificial edge connecting the endpoints of $B$ (for the root vertex $G$, $\tilde{G}'=\tilde{G}$). Then $\tilde{B}'$ is $3$-connected.
\item If a block $B\in \fset$ has exactly one child $B'$ then $I(B) \neq I(B')$.
\end{itemize} 
\end{Definition}


\iffull
The next theorem states that we can always find a good block decomposition for any 2-connected graph. 
\fi
\ifabstract{The proof of the next theorem appears in the full version of the paper.}\fi

\begin{theorem}\label{thm: block decomposition}
Given a $2$-connected graph $G=(V,E)$ with $|V|\geq 3$, we can efficiently find a laminar block decomposition $\fset$ of $G$, such that for
every vertex $v\in V$ that participates in any $2$-separator $(u,v)$ of $G$, one of the following holds:
\iffull
\begin{itemize}
\item\fi Either $v$ is an endpoint of a block $B\in\cal F$;
\iffull
\item\fi or  $v$ has exactly two neighbors in $G$, and there is an edge $(u',v)\in E$, such that $u'$ is an endpoint of a block $B\in\cal F$.
\iffull
\end{itemize}
\fi
\end{theorem}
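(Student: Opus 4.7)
The plan is to construct $\fset$ by a top-down recursive procedure that mimics the classical Tutte decomposition of a $2$-connected graph into its $3$-connected components. Start with $G$ as the root, and at every candidate block $B$ form the augmented graph $\tilde{B}'$ by contracting each already-created child into a virtual edge between its endpoints and, for non-root $B$, adding one more virtual edge between the endpoints $I(B)$. If $\tilde{B}'$ is $3$-connected, declare $B$ a leaf of $\fset$; otherwise, pick a $2$-separator $(x,y)$ of $\tilde{B}'$, partition the inner vertices of $B$ according to the connected components of $\tilde{B}' \setminus \{x,y\}$, create one child block per component with endpoint pair $(x,y)$, and recurse on each child. The resulting family is laminar by construction, and each non-root block is itself $2$-connected with the claimed endpoint structure, since its inner vertices form the interior of a piece cut out by $\{x,y\}$.

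I would next verify the three defining conditions of a block decomposition. Condition~(ii), that $\tilde{B}'$ is $3$-connected for every block, is exactly the stopping criterion for the recursion, so it is automatic at leaves, and at internal nodes it follows by the same argument applied at the moment of splitting. Condition~(iii), that a single child must have endpoints distinct from its parent, is enforced by a tie-breaking rule: if the chosen $2$-separator would produce a unique child $B'$ with $I(B')=I(B)$, abort the split and select a different $2$-separator of $\tilde{B}'$ (which must exist since $\tilde{B}'$ is not $3$-connected), and if no such alternative exists one instead merges $B'$ back into $B$; termination of this rule follows from a straightforward potential argument on the number of inner vertices.

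To verify the $2$-separator property, consider any $2$-separator $(u,v)$ of $G$ and descend the decomposition tree from the root, at each level following the child whose inner vertices contain $v$. If along this descent the algorithm splits along a separator containing $v$, then $v$ becomes an endpoint of some block, establishing case~(1). Otherwise the descent reaches a leaf block $B$ whose $\tilde{B}'$ is $3$-connected, and the contracted image of $(u,v)$ still separates $\tilde{B}'$; combined with $3$-connectedness of $\tilde{B}'$, this forces $v$ to have degree exactly $2$ in $\tilde{B}'$, hence also in $G$, and then the degree-$\ge 3$ neighbor $u'$ of $v$ must already be an endpoint of a sibling or ancestor block, giving case~(2).

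The main obstacle I anticipate is the interplay between conditions~(ii) and~(iii) together with the $2$-separator property, particularly in the presence of long chains of degree-$2$ vertices (``S-type'' components in SPQR terminology) and bundles of parallel $2$-separators (``P-type'' components). Cycle-like pieces must be absorbed into adjacent $3$-connected components so that their degree-$2$ inner vertices are covered by case~(2), while parallel-edge pieces must be split into chains whose consecutive blocks have distinct endpoint pairs to satisfy condition~(iii). I expect this careful bookkeeping, rather than any deep structural insight, to be the technical crux of the proof.
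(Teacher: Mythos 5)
Your top-down ``split along any $2$-separator until $3$-connected'' scheme does not establish condition~(ii) of a block decomposition, and this is not a bookkeeping issue. You split a block $B$ precisely because $\tilde B'$ is \emph{not} $3$-connected; after the split the new $\tilde B'$ (with the children replaced by virtual edges) is some small multigraph, and there is no reason it is now $3$-connected. The concrete failure mode is the cycle: if $B$ together with its parent edge is a cycle $(a_1,\dots,a_s,a_1)$ with $s\geq 4$ and you split along a $2$-separator $(a_i,a_j)$, the contracted parent is again a cycle (or a graph with a cut pair), not a $3$-connected graph. Worse, the components of $\tilde B'\setminus\{a_i,a_j\}$ may induce \emph{disconnected} subgraphs of $G$ (since the parent edge is virtual), so the ``blocks'' you create are not blocks in the paper's sense at all. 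Your proposed fix for cycles --- ``absorb them into adjacent $3$-connected components'' --- is the opposite of what is needed: it would destroy $3$-connectedness of the absorbing component's $\tilde B'$. The paper instead builds, inside each $S$-node, a \emph{nested chain} $B_x^1\supset\cdots\supset B_x^{s-3}$ of blocks each of whose $\tilde B'$ is a triangle; that construction is the actual content here, not bookkeeping.

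Your argument for the $2$-separator property is also broken. You write that at a leaf block $B$, ``the contracted image of $(u,v)$ still separates $\tilde B'$; combined with $3$-connectedness of $\tilde B'$, this forces $v$ to have degree exactly $2$.'' A $3$-connected graph on at least four vertices has no $2$-separator, so this premise is a contradiction rather than a deduction; moreover $u$ need not lie in $B$ at all, so ``the contracted image of $(u,v)$'' is not even defined. The paper avoids this by working directly with the SPQR tree: every $2$-separator of $G$ is either the endpoint pair of a virtual edge (case~(1): $v$ is a block endpoint) or a non-adjacent pair inside an $S$-node cycle, and then the nested $B_x^i$ blocks guarantee that either $v$ is one of their endpoints or $v$ has degree $2$ with a block-endpoint neighbor (case~(2)). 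You need that canonical structure --- arbitrary recursive splitting does not produce it, and your termination/tie-breaking rule for condition~(iii) rests on a claimed potential argument that you never supply.
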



\iffull


We will use the notion of \textit{SPQR-trees} in the proof. 
Recall that an SQPR tree $\tset'$ for the graph $G$ defines a recursive decomposition
of $G$, as follows. Each node $x$ of the tree $\tset'$ is associated with a graph $G_x=(V_x,E_x)$, where $V_x\sse V$, and $E_x$ consists of edges of $G$ (called  \textit{actual} edges), and some additional edges, called \textit{artificial} or \textit{virtual} edges.
Graph $G_x$ is allowed to contain parallel edges. Additionally, we are given a bijection $f_x$ between the artificial edges of $G_x$, and the edges adjacent to the vertex $x$ in the tree $\tset'$.

Each actual edge of graph $G$ belongs to exactly one of the graphs $G_x$, for $x\in V(\tset')$. The edges of the tree $\tset'$, together with the artificial edges of the graphs $G_x$, show how to compose the graphs $G_x$ together to obtain the original graph $G$. More specifically, if $e=(x,y)$ is an edge in the tree $\tset'$, then there is a unique artificial edge $e_x$ in $G_x$ associated with it, and a unique artificial edge $e_y$ in $G_y$ associated with it (that is, $f_x(e_x)=f_y(e_y)=e$). The endpoints of both these artificial edges are copies of the same two vertices, that is, $e_x=(u,v)=e_y$, for $u,v\in V(G)$. 
The graphs $G_x$ and $G_y$ share no vertices other than $u$ and $v$.
The removal of the edge $(x,y)$ from $\tset'$ decomposes the tree into two connected components, $C_1$ and $C_2$. Let $V_1\sse V$ denote the set of all vertices appearing in the graphs $G_z$, for $z\in C_1$, and similarly, let $V_2\sse V$ denote the vertices appearing in the graphs $G_z$ where $z\in C_2$. Then $V_1\cap V_2=\set{u,v}$, and $V_1\cup V_2=V(G)$. 

The vertices of the tree $\cal T'$ belong to one of the four types: $S$, $P$, $Q$, and $R$.
\begin{itemize}
\item If $x$ is an \emph{$S$-node}, then the graph $G_x$ is a cycle, with no parallel edges.
\item If $x$ is a \emph{$P$-node}, then $G_x$ consists of a pair of vertices connected by at least 
three parallel edges, and at most one of these edges is an actual edge.
\item If $x$ is an \emph{$R$-node}, then $G_x$ is a 3-connected graph, with more than 3 vertices and
no parallel edges.
\item If $x$ is a \emph{$Q$-node}, then $G_x$ is just a single edge. The tree $\tset'$ has
$Q$-nodes only if $G$ itself is an edge, and then $\cal T'$ has no other nodes.
\end{itemize}
No two $P$-nodes and no two $S$-nodes are adjacent in ${\cal T'}$.

For every 2-connected graph $G$, there is a unique SPQR tree. Moreover,
this tree can be found in linear time, as was shown by Gutwenger and Mutzel~\cite{SPQR-trees}.  
The SPQR tree of $G$ describes the set of all 2-separators of $G$, as follows: a pair of 
vertices $u$ and $v$ of $G$ is a 2-separator if and only if either (1) there is an artificial edge $(u,v)$ in some graph $G_x$, for $x\in \tset'$, or (2) $u$ and $v$ are non-adjacent vertices in some graph 
$G_x$, where $x$ is an $S$-node.

We now describe how to construct the laminar block decomposition $\cal F$, and the associated decomposition tree $\tset$, for $G$. Since $|V|\geq 3$ and $G$ is $2$-connected, the tree $\tset'$ does not contain any $Q$-nodes.
We assume first that $G$ is not a cycle, and we treat the case of the cycle graph 
separately. We start by computing the SPQR tree $\cal T'$ of $G$. We then choose an arbitrary 
$P$ or $R$-node $r$ of $\cal T'$ to serve as the root of the tree $\tset'$. For each node $x\in V(\tset')$, we denote the subtree rooted at
$x$ by ${\cal T}_x'$. If $x$ is the parent of $y$ in the tree $\tset'$, we call the artificial edge 
of $e_x$ of $G_x$ that is mapped to $(x,y)$ a \emph{child edge};
we call the edge of $G_y$ that is mapped to $(x,y)$ a \emph{parent edge}. For each $x\in V(\tset')\setminus\set{r}$, $G_x$  has exactly one parent edge. If $(u,v)$ is the parent edge of $G_y$ and graph $G$ contains an actual edge $(u,v)$, we can assume w.l.o.g that $G_y$ does not contain the actual edge $(u,v)$, as we can remove it from $G_y$ and add it to $G_x$. This operation may introduce parallel edges in graphs $G_x$ corresponding to $S$-nodes or $R$-nodes. 

We now proceed in two steps. First, for each node $x\in V(\tset')$, we define a block $B_x$, that is added to $\fset$. This will define a valid laminar block decomposition $\fset$, except that if $x$ is an $S$-node, then $\tilde{B}_x'$ is not necessarily $3$-connected. For each such $S$-node $x$, we then add additional blocks to $\fset$ in the second step, to avoid this problem.

\paragraph{Step 1:}
Consider a node $x$ of $\cal T'$. If $x\neq r$, denote the parent edge of $G_x$ by 
$(u,v)\in E(G_x)$ (if $x=r$, we do not define $u$ and $v$). Let $B_x$ be the union
of all graphs $G_y$ associated with the nodes $y$ in ${\cal T}_x'$, with all artificial 
edges removed. 

Since $B_x$ does not contain any artificial edges, it is a subgraph of $G$.
Clearly, $B_r = G$ since every edge of $G$ appears in some graph $G_z$, for $z\in\tset'$.
We now verify that for every node $x\in V(\tset')\setminus\set{r}$, $G_x$ is indeed a block. 
Let $V_x=V(B_x)$, and let $(u,v)$ be the parent edge of $G_x$. 
Since the graph does not contain any $Q$-nodes, $|V(B_y)|\geq 3$ for all $y\in V(\tset')$, and since for every adjacent pair $(y,z)$ of vertices on $\tset'$, $G_y$ and $G_z$ only share two vertices, $V\setminus V(B_x)\neq \emptyset$.
We now show that $B_x$ is an induced subgraph of $G$ (except that edge $(u,v)$ is not in $B_x$,  even if it is present in $G$).
Let $e$ be the edge of $\tset'$, connecting $G_x$ to its parent. Recall that $e$ decomposes the tree $\tset'$ into two connected components, $C_1$ and $C_2$, where one of the components is $\tset_x'$. Assume it is $C_1$. We have also defined $V_1\sse V$ to be the union of $V(G_z)$ for $z\in C_1$, and similarly $V_2\sse V$ is the union of $V(G_z)$ for $z\in C_2$. Recall that $V_1\cap V_2=\set{u,v}$, and by definition of $B_x$, $V(B_x)=V_1$.
Consider some edge $(w_1,w_2) \neq (u,v)$ of $G$ that connects two vertices of $V_x$. Then since  $V_1\cap V_2=\set{u,v}$, vertices $w_1,w_2$ do not both belong to $V_2$. 
By the definition of SPQR trees, the edge $(w_1, w_2)$ belongs to some graph $G_z$, for $z\in V(\tset')$. Therefore $z\in C_1$ must hold, and $(w_1,w_2)\in E(B_x)$. Also, as we have observed before, there are no edges connecting $V_1\setminus\set{u,v}$ to $V_2\setminus\set{u,v}$ in $G$. Therefore, $B_x$ is indeed a block.
 Note that if $x$ is a $P$-node with only one child then $G_x$ consists of 2 edges: one parent artificial edge, and one child artificial edge (in this case, originally $G_x$ contained a third, actual edge, that we have moved to the graph of its father). Therefore for the child node $y$ of $x$, $B_y$ contains all vertices and edges that $B_x$ contains. We add all blocks $B_x$, for all $x\in V(\tset')$, to $\cal F$ except if $x$ is a $P$-node with only one child. Notice that under this definition of $\fset$, for each block $B_x\in \fset$, $\tilde{B}_x'=G_x$. Clearly, if $x$ is not an $S$-node, then $\tilde{B}'_x$ is $3$-connected.

\paragraph{Step 2} In this step we take care of the $S$-nodes. Consider an $S$-node $x$, and assume that $G_x$ is a cycle
$(a_1,\dots, a_s)$, where $(a_1,a_s)$ is the parent artificial edge of $G_x$. We define a nested set $B^1_x\supset B^2_x\supset\cdots\supset B^{s-3}_x$ of blocks, where $B^1_x\subset B_x$, that will be added to $\fset$, in addition to $B_x$. In order to define the blocks $B^i_x$, we define a collection $P_0,P_1,\ldots,P_{s-3}$ paths, as follows. Path $P_0$ is obtained from $G_x$ by removing the artificial parent edge $(a_s,a_1)$, so $P_0=(a_1,a_2,\ldots,a_s)$. Path $P_i$, for $i> 0$, is obtained from $P_{i-1}$ as follows: if $i$ is even, remove the first edge of $P_{i-1}$, and if it is odd, remove the last edge of $P_{i-1}$. Therefore, $P_i$ is the portion of $P_0$ between $a_{1 + \lfloor i/2 \rfloor}$ and $a_{s-\lceil i/2 \rceil})$. Let $Y_i$ be the set of all child nodes of $x$ in $\tset'$, corresponding to the artificial edges of $P_i$.
We are now ready to define $B^i_x$, for $1\leq i\leq s-3$: it contains all actual edges of $P_i$ and
all blocks $B_y$ for $y\in Y_i$. 
Graph $B_x^i$ has two types of vertices: inner vertices of blocks $B_y$, for $y\in Y_i$, and
the vertices of $P_i$. 
We now show that $B_x^i$ is a block. Since each $B_y$, for $y\in Y_i$, is a block, no edge connects the interior of $B_y$
to $G \setminus B_y$ (and therefore to $G \setminus B^i_x$). So in order to prove that
$B^i_x$ is a block with endpoints $a_{1 + \lfloor i/2 \rfloor}$ and $a_{s-\lceil i/2 \rceil}$,
it remains to show that there is no edge connecting an internal vertex $a_j$ of $P_i$ and 
a vertex in $V \setminus B_x^i$. But this is clearly true since we have already proved that $B_x$ is a block.
We add all blocks $B_x^i$, for $1\leq i\leq s$ to $\cal F$.
For convenience, we denote $B_x^0 = B_x$ and $B_x^{s-2} = \text{``\textsc{empty graph}''}$.

This completes the description of the family of blocks $\cal F$.
It is clear from the construction, that $\cal F$ is a laminar family of blocks.
Now consider a block $B\in \cal F$, and the corresponding graph $\tilde{B}'$. We need to prove that $\tilde{B}'$ is $3$-connected. First,
if $B=B_x$, where $x$ is an $R$ or $P$-node then $\tilde{B}' = G_x$, and therefore it is 3-connected. 
If $B=B_x^i$, for $i\in \{0,\dots, s-3\}$, for an $S$-node $x$, where $G_x$ is a cycle on $s$ vertices, then $\tilde{B}'$ is obtained from $P_i$ by replacing $P_{i+1}$ with an artificial edge, and connecting the endpoints of $P_i$ with another artificial edge. Therefore, $\tilde{B}'$ is the triangle graph, which is $3$-connected.

We now prove that every vertex $w$ that belongs to some 2-separator of $G$ 
is an endpoint of a block in $\fset$, or it is a degree-$2$ vertex, and it has a neighbor that serves as an endpoint of a block in $\fset$. By the properties of the SPQR trees, every such vertex 
$w$ is either an endpoint of some artificial edge $e$, lying in some graph $G_x$, for $x\in V(\tset')$, or it belongs to some graph $G_y$
associated with an $S$-node $y$. In the former case, let  $G_x$ be the graph for which the artificial edge $e$, containing $w$, is the parent edge. Then $B_x$ belongs to $\fset$, and $w$ is one of its endpoints. The only exception is when $x$ is a $P$-node, with a unique child $x'$. But then $x'$ cannot be a $P$-node, and $w$ is one of its endpoints. In the latter case, we consider the graph
$G_y$, containing $w$, where $y$ is an $S$-node. As before, we denote the vertices of $G_y$ by $a_1, \dots, a_s$.
Assume that $w=a_i$. If $i\neq \lceil(s-1)/2\rceil$, then $a_i$ is an endpoint of some path
$P_j$, $1\leq j\leq s-3$, and thus, it is an endpoint of the block $B^j_x$. Assume now that $i=\lceil(s-1)/2\rceil$. Then the vertex 
$w=a_{\lceil(s-1)/2\rceil}$ is connected to $w'=a_{\lceil(s-1)/2\rceil + 1}$ and $w''=a_{\lceil(s-1)/2\rceil - 1}$ by edges $e'$ and $e''$, respectively, in $G_x$. If either of these edges is an artificial edge, then there is a block $B\in \fset$, such that $I(B)=(w,w')$, or $I(B)=(w,w'')$. Otherwise, if both edges are actual edges, then the degree of $w$ is $2$, and $w'$ is a neighbor of $w$ that serves as an endpoint of a block in $\fset$. 

Finally, we show that if a block $B$ has only one child $B'$ then $I(B)\neq I(B')$. Observe that if $B=B_x$ for a $P$ or $R$-node $x$ then $I(B)$ is the set of endpoints of the parent artificial edge, and $I(B')$ is the set of endpoints of the only child artificial edge. Therefore, if $I(B)=I(B')$, then $G_x$ has two parallel aritificial edges, and thus $x$ must be a $P$-node. However, we add a block $B_x$ associated with a $P$-node to $\fset$ only if it has more than one child. Now let $B$ be either $B_x$ or $B_x^i$ for some $S$-node $x$. Since all paths $P_i$ (defined for $x$) have distinct pairs of endpoints, and they differ from the endpoints of edges, it is straightforward that $I(B)\neq I(B')$.

In the proof, we did not consider the case where $G$ is the cycle graph. We now briefly address this case. 
Denote the vertices of the cycle by $a_1,\dots, a_s$.  
We create blocks $B_1, \dots, B_{s-4}$ defined by  $B_i = \{a_{1 + \lfloor i/2 \rfloor}, \dots, a_{s - 1 -\lceil i/2 \rceil}\}$.
Additionally, we create a block $\hat B = \{a_{s-1}, a_s\}$ if $s > 3$.
It is straightforward to verify that this family of blocks together with $G$ satisfies the conditions 
of the theorem.






\fi

\section{Proof of Theorem~\ref{thm: good planar drawing}}\label{sec:finding a planar drawing}


We subdivide the sets of irregular vertices and edges into several subsets, that are then bounded separately. 
We start by defining the following sets of vertices and edges.
\begin{align*}
S_1 & = \{u\in V(\H) : u \mbox{ is a $1$-separator in } \H\}\\
E_1 & = \{e\in E(\H) : e \mbox{ is incident on some } u\in S_1\}
\end{align*}
Let $\cset$ be the set of all $2$-connected components of $\H$.
For every 2-connected component $X\in \cset$, we define
\ifabstract
\begin{align*}
S_2 (X)& = \{u\in V(X)\setminus S_1: \exists v\in V(X) 
\\&\phantom{{}=\{{}} \mbox{ s.t. $(u,v)$ is a $2$-separator in } X\}\\
E_2 (X)& = \{e\in E(X) : e \mbox{ has both end-points in } S_2(X)\}
\end{align*}
\fi
\iffull
\begin{align*}
S_2 (X)& = \{u\in V(X)\setminus S_1: \exists v\in V(X) \mbox{ s.t. $(u,v)$ is a $2$-separator in } X\}\\
E_2 (X)& = \{e\in E(X) : e \mbox{ has both end-points in } S_2(X)\}
\end{align*}
\fi
Let $S_2 = \cup_{X\in \cset} S_2(X)$ and $E_2 = \cup_{X\in \cset} E_2(X)$. 
We start by showing that the number of vertices and edges in sets $S_1$ and $E_1$, respectively, is small, in the next lemma, whose proof appears in Section~\ref{subsec:lemma1}.

\begin{lemma}[Irregular 1-separators]\label{lem:irregular1} We can bound the sizes of sets $S_1$  and $E_1$ as follows:
$|S_1|=O(|E^*|)$ and $|E_1| = O(\dmax \cdot |E^*|)$.
Moreover, $\sum_{C\in \cset} |S_1 \cap V(C)| \leq 9|E^*|$.
\end{lemma}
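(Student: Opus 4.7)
The plan is to use the $3$-vertex-connectivity of $\G$ to charge the $1$-separators of $\H$ to edges of $E^*$. We may assume that $\H$ is connected (otherwise we could have moved an edge of $E^*$ lying between two components of $\H$ into $\H$, as in the preprocessing discussed earlier). Consider the block-cut tree $T$ of $\H$: its nodes are the $2$-connected components (the elements of $\cset$) together with the cut vertices (the elements of $S_1$), with an edge joining each cut vertex to each block that contains it. Since
\[
\sum_{C \in \cset} |S_1 \cap V(C)| \;=\; \sum_{u \in S_1} \deg_T(u) \;=\; |E(T)|,
\]
the heart of the proof is to show $|E(T)| \leq 9|E^*|$. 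Both $|S_1| \leq |E(T)| = O(|E^*|)$ and $|E_1| \leq \dmax \cdot |S_1| = O(\dmax \cdot |E^*|)$ then follow immediately.

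To bound $|E(T)|$ I would proceed in three stages. First, for each leaf block $B$ of $T$ with unique cut vertex $c_B$, the set $V(B)\setminus\{c_B\}$ is an entire component of $\H\setminus\{c_B\}$ whose vertices lie in no other block. Since $\G$ is $3$-vertex-connected, $\G\setminus\{c_B\}$ is $2$-edge-connected, so contracting the components of $\H\setminus\{c_B\}$ produces a $2$-edge-connected multigraph; hence at least two edges of $E^*$ have an endpoint in $V(B)\setminus\{c_B\}$. As each endpoint of an $E^*$-edge lies in the interior of at most one leaf block, the number of leaf blocks is at most $|E^*|$. Second, an analogous argument, applied with a pair of cut vertices of an internal block $B$ whose interior $V(B)\setminus S_1$ is non-empty, produces an $E^*$-edge with an endpoint in that interior and bounds the number of such internal blocks by $2|E^*|$. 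Third, the standard tree fact that the number of degree-$\geq 3$ nodes of a tree is at most its number of leaves minus $2$ bounds the "branching" nodes of $T$ by $O(|E^*|)$.

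The main obstacle is handling the degree-$2$ nodes of $T$ that lie along chains, and in particular \emph{bridge} blocks of $\H$ whose two endpoints are both cut vertices (and which have no interior to charge to). For these I would use the degree inequality $\deg_\H(v) + \deg_{E^*}(v) \geq 3$, which holds for every $v \in V(\H)$ by $3$-connectivity of $\G$: an interior cut vertex of a chain whose $\H$-degree is at most $2$ has an incident $E^*$-edge and can be charged directly to it. A cut vertex with $\deg_\H \geq 3$ on a chain must have at least two $\H$-neighbors in one of its two incident blocks, which forces that block to be non-bridge and to possess a non-empty interior, so it is already counted by the internal-block argument. A careful case analysis along each chain confirms that each edge of $E^*$ receives at most nine charges, yielding $|E(T)| \leq 9|E^*|$. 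The chief technical difficulty is tracking these overlapping charging schemes along chains so that the explicit constant $9$ is obtained; a weaker constant follows more easily from essentially the same scheme.
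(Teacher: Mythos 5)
Your overall strategy coincides with the paper's: form the block-cut tree $T$ of $\H$, identify $\sum_{C\in\cset}|S_1\cap V(C)|$ with $|E(T)|$, and charge to $E^*$ via the $3$-connectivity of $\G$. The three easy charges you set up (leaf blocks via $2$-edge-connectivity of $\G\setminus\{c_B\}$, internal blocks with non-empty interior, branching nodes bounded by leaves) all appear in some form in the paper's argument; the paper then closes by writing $|E(T)|=|S_1|+|\cset|-1$, bounding $|\cset|\le 4|E^*|$ and $|S_1|\le|\cset|$, and adding, rather than running the direct $9$-to-$1$ charging you aim for.

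The genuine gap in the proposal is that the ``careful case analysis along each chain'' is announced but not carried out, and this is where the real content lies, not merely the constant. A bridge block both of whose endpoints have $\H$-degree at least $3$ touches no $E^*$-endpoint at all, and your degree inequality fires at neither endpoint. Concretely, take $\H$ to be two disjoint triangles $abc$, $def$ joined by a bridge $cd$, and $E^*=\{ae,bf\}$: then $\G$ is the triangular prism (hence $3$-connected, simple), the block $\{c,d\}$ has $v_C$ of tree-degree $2$, yet it contains no endpoint of $E^*$ and $\deg_\H(c)=\deg_\H(d)=3$. Such blocks can only be charged indirectly through neighbouring blocks (which are themselves being charged as leaves or for their own interiors), and controlling that double-counting along chains is precisely the missing work. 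It is worth noting that the paper's own text is also cavalier at the analogous point: it asserts that every $C$ with $v_C\in D_1\cup D_2$ contains an endpoint of an $E^*$-edge, which the prism example shows is false as stated. The lemma is nonetheless true, and the inequality $\deg_\H(v)+\deg_{E^*}(v)\ge 3$ is the right extra tool; but a complete proof must trace charges along chains, whether a cut vertex is ``covered'' by an incident $E^*$-edge or by a fat neighbouring block, with the bookkeeping made explicit. You at least flag this difficulty, but as written the proposal does not establish the bound.
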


Next, we show that for {\bf any} planar drawing $\psi$ of $\H$, the number of irregular vertices and edges that do not belong to sets $S_1\cup S_2$, and $E_1\cup E_2$, respectively is small, in the next lemma, whose proof appears in Section~\ref{subsec:lemma2}.
Given any drawing $\phi$ of any graph $H$,
we denote by $\pcro_{\phi}(H)$ the number of pairs of crossing edges in the drawing $\phi$ of $H$. Clearly, $\pcro_{\phi}(H)\leq\cro_{\phi}(H)$ for any drawing $\phi$ of $H$.

\begin{lemma}\label{lem:irregular3}
Let $H$ be any planar graph, and let the sets $S_1,S_2$ of vertices and the sets $E_1,E_2$ of edges be defined as above for $H$. Let $\phi$ be an arbitrary drawing of $H$ and $\psi$ be a planar drawing of $H$.
Then
\ifabstract
\begin{multline*}
|\irreg_V(\psi, \phi)\setminus (S_1\cup S_2)| + |\irreg_E(\psi, \phi)\setminus (E_1\cup E_2)| \\= O(\pcro_{\phi}(H))=O(\cro_{\phi}(H)).
\end{multline*}
\fi\iffull
$$
|\irreg_V(\psi, \phi)\setminus (S_1\cup S_2)| + |\irreg_E(\psi, \phi)\setminus (E_1\cup E_2)| = O(\pcro_{\phi}(H))=O(\cro_{\phi}(H)).
$$
\fi
\end{lemma}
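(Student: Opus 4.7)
My proof proceeds by reducing the bound, in two steps, to a 3-connected statement where Whitney's Theorem~\ref{thm:Whitney} pins down the rotation system.

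\emph{Step 1 (Reduce to 2-connected components).} Any irregular vertex outside $S_1$ is not a 1-separator, so it lies in a unique 2-connected component $X$ of $H$, together with all of its incident edges; an analogous statement holds for irregular edges (and their irregular paths) outside $E_1$. Therefore the counts $|\irreg_V(\psi,\phi)\setminus(S_1\cup S_2)|$ and $|\irreg_E(\psi,\phi)\setminus(E_1\cup E_2)|$ split as disjoint unions over the 2-connected components of $H$. Since any two edges that cross in $\phi$ and lie in a common 2-connected component $X$ contribute to $\pcro_{\phi_X}(X)$, we have $\sum_X \pcro_{\phi_X}(X)\leq \pcro_\phi(H)$. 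It is thus enough to prove the bound for a single 2-connected component $X$, with $S_2(X),E_2(X)$ playing the roles of $S_2,E_2$.

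\emph{Step 2 (SPQR reduction inside $X$).} Use the SPQR-tree decomposition of $X$, whose 2-separators are exactly recorded by the virtual edges and by non-adjacent pairs within S-nodes. A vertex $u\in V(X)\setminus S_2(X)$ belongs to the interior of a single piece; the interior vertices of S-nodes have degree 2 and so are never irregular, and P-nodes have no interior vertices. Hence every irregular vertex in $V(X)\setminus S_2(X)$ lies in the interior of some R-node $R$, and similarly every irregular edge in $E(X)\setminus E_2(X)$ sits inside some R-node. For each R-node $R$, extend $\phi_X$ and $\psi_X$ to drawings $\phi_R,\psi_R$ of $R$ by drawing each virtual edge of $R$ as a thin curve along the image of the subgraph of $X$ that it represents; this leaves the cyclic orderings at interior vertices untouched, so it suffices to bound the irregularity of $\phi_R$ versus $\psi_R$ at interior vertices and edges of $R$ by $O(\pcro_{\phi_R}(R))$, and then sum over R-nodes (disjoint edges of $X$ go into disjoint R-nodes, so the crossing counts add up).

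\emph{Step 3 (The 3-connected claim).} The core remaining statement is: for any 3-connected planar graph $R$ with its (by Whitney unique, up to reflection) planar embedding $\psi_R$ and any drawing $\phi_R$, the number of irregular interior vertices plus irregular interior edges is $O(\pcro_{\phi_R}(R))$. The approach is to planarize $\phi_R$: at each of the $c=\pcro_{\phi_R}(R)$ crossings, insert a degree-4 vertex, producing a planar graph $R^+$ with planar drawing $\phi_R^+$; the rotation at each original vertex of $R$ in $\phi_R$ is the same as in $\phi_R^+$ after identifying each subdivided path with its original edge. Now $R^+$ is planar, but fails to be 3-connected only through 2-separators created by the new degree-4 vertices. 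One argues that each inserted vertex contributes $O(1)$ such separators, and across each such separator the only freedom Whitney allows is to flip one side; a flip reverses orderings on a whole side globally, which by the ``up to reflection'' convention does not create interior irregularities except near the two separator vertices themselves. A careful charging of each irregular vertex/edge to one of the $O(c)$ newly created separator vertices then yields the bound.

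\emph{Main obstacle.} The delicate point is Step~3: the combinatorial bookkeeping of how the extra degree-4 vertices break 3-connectivity of $R^+$, how flips at the resulting 2-separators propagate through the rotation system, and the charging argument showing that each irregular interior vertex or edge of $R$ can be blamed on $O(1)$ inserted crossing vertices. The other steps are essentially bookkeeping once this 3-connected claim is in hand.
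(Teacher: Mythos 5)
Your plan is genuinely different from the paper's, but it has two concrete gaps. The SPQR reduction in Step~2 does not preserve the crossing count: when you draw a virtual edge of an R-node $R$ as a curve following a path $P$ in the subgraph it represents, a single crossing of $\phi_X$ between edges $e'$ (in skeleton $R_1$) and $f$ (in skeleton $R_2$) can be picked up by \emph{every} R-node on the SPQR-tree path between $R_1$ and $R_2$ whose two virtual-edge paths happen to traverse $e'$ and $f$ respectively. Along a chain of $k$ R-nodes this gives a factor-$k$ blow-up in $\sum_R \pcro_{\phi_R}(R)$, so the claim ``disjoint edges of $X$ go into disjoint R-nodes, so the crossing counts add up'' is false without a careful, and non-obvious, choice of the routing paths. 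Step~3 is also just a sketch: you assert that each inserted crossing vertex contributes $O(1)$ two-separators to $R^+$ and that ``a flip $\ldots$ does not create interior irregularities except near the two separator vertices,'' but neither claim is proved, and you explicitly flag this as the main obstacle. As written the proposal does not establish the lemma.

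The paper avoids both issues by never reducing to the 3-connected case at all. After the trivial reduction to a 2-connected component, it works directly with the planar drawing $\psi$. For an irregular vertex $u\notin S$, the key structural fact is that $u\notin S$ forces $H\setminus\{u\}$ to be 2-connected, so the face of $\psi_{H\setminus\{u\}}$ containing $u$ has a simple cycle boundary $\gamma$; the neighbors of $u$ appear on $\gamma$ in a fixed cyclic order, and the subgraph $\gamma\cup\{u\}\cup\{(u,v_i)\}$ is a subdivided wheel, so any deviation of $\phi$'s rotation at $u$ from $\psi$'s forces a crossing among its edges, which is then charged via a witness pair of faces (each pair witnesses $\leq 3$ vertices, with $\leq 4$ choices of face pair per crossing pair, giving the constant $12$). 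For an irregular edge $e=(u,v)\notin E_S$, the two faces of $\psi$ incident to $e$ have simple-cycle boundaries $C_1,C_2$, and $e\notin E_S$ (so some endpoint is outside $S$) forces $P_1=C_1\setminus e$ and $P_2=C_2\setminus e$ to be internally disjoint; an orientation-parity argument on the theta graph $P_1\cup e\cup P_2$ then yields a crossing to charge. This direct face-boundary argument is the essential idea you would need to discover; Whitney's theorem plays no role in the paper's proof of this lemma.
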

Finally, we need to bound the number of irregular vertices in $S_2$ and irregular edges in $E_2$. The bound does  not necessarily hold for {\bf every} drawing $\psi$. However, we show how to efficiently find a planar drawing, for which we can bound this number, in the next lemma.

\begin{lemma}[Irregular 2-separators]\label{lem:irregular2}
Let $\G$, $\H$, $E^*$ and $\bphi$ be as in Theorem \ref{thm: good planar drawing}.
Given $\G$, $\H$ and $E^*$ (but not $\bphi$), 
we can efficiently compute a planar drawing $\bpsi$ of $\H$, such that
\[
|\irreg_V(\bpsi, \bphi_{\H})\cap S_2| =O(\optcro{\G} + |E^*|).\]
and
\[|\irreg_E(\bpsi, \bphi_{\H})\cap E_2|=O(\dmax)(\optcro{\G} + |E^*|)
\]
\end{lemma}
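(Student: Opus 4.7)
My approach is to reduce the choice of a planar drawing of $\H$ to a collection of local binary ``flip'' decisions via the block decomposition of Theorem~\ref{thm: block decomposition}, and then use the edges of $E^*$ as anchors to resolve each flip without ever looking at $\bphi$. Concretely, I would first apply Theorem~\ref{thm: block decomposition} to every $2$-connected component of $\H$, obtaining a laminar family $\fset$ of blocks whose completions $\tilde B'$ are $3$-connected planar. By Whitney's theorem (Theorem~\ref{thm:Whitney}) and Hopcroft--Tarjan (Theorem~\ref{thm:planar drawing}), each $\tilde B'$ has a combinatorially unique planar embedding up to reflection. Thus any planar embedding of $\H$ is specified by (i) the cyclic ordering of $2$-connected components glued at each $1$-separator (already controlled by Lemma~\ref{lem:irregular1}), (ii) a choice of outer face, and (iii) a reflection bit for each block in $\fset$. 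Only (iii) is relevant for $S_2$ and $E_2$.

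Next, I exploit $3$-connectivity of $\G$: for every block $B\in \fset$ with $I(B)=(u,v)$, the pair $\{u,v\}$ is a $2$-separator of $\H$ but not of $\G$, so there must exist at least one edge $e^B\in E^*$ with one endpoint in $V(B)\setminus\{u,v\}$ and the other in $V(\H)\setminus V(B)$. I would pick one such anchor edge $e^B$ per block and build an auxiliary graph $B^\circ$ by augmenting $B$ with a pendant at the interior endpoint of each such anchor (all pendants glued to one ``external'' vertex placed in the outer face). When $B^\circ$ is planar, its embedding determines the reflection of $B$ relative to the surrounding drawing, by forcing the external vertex to lie on the correct side of $B$; this can be detected efficiently via Theorem~\ref{thm:planar drawing}. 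I process the blocks from the root of the decomposition tree downward, so that the ``outer'' side of each block has already been fixed by the time its flip is chosen.

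The core of the analysis is a charging argument that bounds the number of blocks where the chosen flip disagrees with the local behaviour of $\bphi_\H$. I would distinguish two sources of disagreement. First, if $B^\circ$ is non-planar, or if the planar embedding of $B^\circ$ is inconsistent with how $\bphi$ draws the anchor edges $e^B$, then $\bphi$ must contain at least one crossing involving an edge of $B$ and an edge of $E^*$ incident to $V(B)\setminus\{u,v\}$, or two edges of $\bphi_\H$ within $V(B)$; by laminarity of $\fset$, each such crossing in $\bphi$ lies ``inside'' at most $O(1)$ blocks (the two minimal blocks containing the endpoints of each crossing pair), so these wrong flips collectively cost $O(\optcro{\G})$. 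Second, even when $B^\circ$ embeds planarly, its unique embedding can still disagree with $\bphi$'s if several candidate anchors disagree with each other; each such discrepancy can be charged to a distinct edge of $E^*$, giving $O(|E^*|)$. A wrongly flipped block contributes $O(1)$ irregular vertices from $S_2$ per endpoint, yielding the stated $O(\optcro{\G}+|E^*|)$ bound on $|\irreg_V(\bpsi,\bphi_\H)\cap S_2|$; the $\dmax$ factor in the bound on $|\irreg_E(\bpsi,\bphi_\H)\cap E_2|$ then follows because every irregular path contributing an $E_2$-edge must touch an irregular $S_2$-vertex or an anchor-inconsistency, and each such vertex/edge has at most $\dmax$ adjacent edges.

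The main obstacle I anticipate is the laminarity/locality part of the charging: I must ensure that each crossing in $\bphi$, and each edge of $E^*$, is blamed for at most $O(1)$ wrong flips globally, despite the fact that $e^B$ and $e^{B'}$ for nested blocks $B\supset B'$ may well coincide, and flips propagate down the decomposition tree. Handling this cleanly likely requires processing the tree top-down and invoking the third bullet of Theorem~\ref{thm: block decomposition} (that $I(B)\neq I(B')$ when $B$ has a unique child) to prevent cascades of dependent flips that would all share the same anchor edge.
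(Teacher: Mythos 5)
Your high-level framing is reasonable — the decomposition into blocks via Theorem~\ref{thm: block decomposition}, the reduction to per-block ``reflection bits'', and the plan to charge bad bits to $E^*$ and to crossings all appear in the paper's argument. The gap is in the mechanism that resolves each reflection bit and makes the charging go through, and it is precisely the obstacle you name at the end. Your anchor-based rule is under-determined exactly where it matters most: for a long chain of nested blocks $B_1\supset\cdots\supset B_\kappa$ all sharing the same connector vertex (the paper's ``tunnels''), the anchor edge $e^B$ is the same for every $B_i$, and for each of these blocks both reflections yield a planar embedding of $B_i^\circ$ with the external vertex in the outer face. So ``forcing the external vertex to the correct side'' picks out nothing — the reflection is genuinely free, and there is no crossing of $\bphi$ you can charge a disagreeing choice to, because in $\bphi$ the anchor path may cross nothing at all inside the tunnel. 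If you fix the bits ad hoc (say always the same side), adversarial $\bphi$ can make nearly every $u_i,v_i$ in the tunnel irregular with zero crossings, and your $O(\optcro{\G}+|E^*|)$ bound fails. Conversely, the case where the reflection \emph{is} forced corresponds to the paper's set $\rset_3$ — the blocks for which the four vertices $u_i,v_i,u_{i+1},v_{i+1}$ do \emph{not} share a face in the unique planar embedding of $\tilde B_i'$ — and showing that such blocks are few requires a nontrivial structural argument (``complex'' blocks, the Fan Lemma, Claim~\ref{cor:regular-block-in-a-chain}); nothing in your sketch replaces it.

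What the paper actually does inside a tunnel is purely structural, not anchor-driven: it iteratively flips each $B_i$ so that the endpoints $u_1,v_1,\ldots,u_\kappa,v_\kappa$ all end up on one face (Figures~\ref{fig: flipping one round},~\ref{fig:tunnel}), which is possible exactly because none of the $B_i$ is in $\rset_3$. With that canonical ``flat'' drawing $\bpsi_Z$ in hand, the irregular vertices and edges inside the tunnel are bounded not one block at a time but all at once, by applying Lemma~\ref{lem:irregular3} to an auxiliary graph $J_Z$ formed from the tunnel plus five artificial edges representing $P_0,P_{in},P_{out}$; the key technical facts are that $J_Z$ has at most eight ``bad'' vertices (Lemma~\ref{lem:JZ-connected}) and that $\sum_Z \cro_{\bphi(J_Z)}(J_Z)=O(\optcro{\G}+|E^*|)$. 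Your proposal has no analogue of $J_Z$ or of the near-$3$-connectivity of the tunnel graph, which is what converts ``one anchor per tunnel'' into ``$O(1)$ free irregular vertices per tunnel''. So while your ingredients are the right ones, the step you flag as the anticipated obstacle is the whole proof, and the top-down processing plus the $I(B)\neq I(B')$ condition you invoke do not by themselves prevent the cascade.
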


Theorem \ref{thm: good planar drawing} then immediately follows from Lemmas \ref{lem:irregular1}, \ref{lem:irregular3}, and \ref{lem:irregular2}, where we apply Lemma~\ref{lem:irregular3} to the drawings $\bpsi$ and $\bphi_{\H}$ of the graph $\H$.
In the following subsections, we present the proofs of Lemmas~\ref{lem:irregular1}, \ref{lem:irregular3} and \ref{lem:irregular2}.

\subsection{Proof of Lemma~\ref{lem:irregular1}}\label{subsec:lemma1}
Consider the following tree $\tset$: the vertices of $\tset$ are $S_1\cup\set{v_C\mid C\in \cset}$, and there is an edge between $v_C$ and $u\in S_1$ iff $u\in V(C)$.
We partition the set $\set{v_C:C\in \cset}$ into three subsets: set $D_1$ contains the leaf vertices of $\tset$, set $D_2$ contains vertices whose degree in $\tset$ is $2$, and set $D_3$ contains all remaining vertices.
Since $\G$ is 3-connected, for every component $C$ with $v_C\in D_1\cup D_2$, there is an edge $e\in E^*$ with 
one end-point in $C$. We \textit{charge} edge $e$ for $C$. 
Clearly, we charge each edge at most twice (at most once for each of its endpoints), and
therefore, $|D_1|+|D_2| \leq 2|E^*|$. Since the number of vertices of degree greater than $2$ is bounded by the number of leaves in any tree, we get that $|D_3|\leq |D_1|\leq 2|E^*|$, and so $|\cset|\leq |D_1|+|D_2|+|D_3|\leq 4|E^*|$.  Since the parent of every vertex $u\in S_1$ in the tree is a vertex of the form $v_C$ for $C\in \cset$, this implies that $|S_1|\leq |\cset|+1\leq 4|E^*|+1$, and $|E_1|\leq \dmax|S_1|\leq O(\dmax)|E^*|$.

We now bound the sum $\sum_{C\in \cset} |S_1 \cap V(C)|$. The sum equals the number of pairs $(C,u)$, where $C\in\cset$ and $u\in S_1\cap V(C)$. The number of such pairs in the tree $\tset$ is bounded by the number of edges in the tree, which in turn is bounded by the number of vertices, $|S_1|+|\cset|\leq 8|E^*|+1$.
This finishes the proof of Lemma~\ref{lem:irregular1}.

\subsection{Proof of Lemma~\ref{lem:irregular3}}\label{subsec:lemma2}
\label{sec:irregular-vertices-edges}

In this section we bound on the number of irregular vertices and irregular edges that do not belong to 
$S_1\cup S_2$ and $E_1\cup E_2$, respectively. Lemma~\ref{lem:irregular-vertices} bounds the number of irregular vertices and Lemma~\ref{lem:irregular-edges} the number of irregular edges.

\begin{lemma}
\label{lem:irregular-vertices}
Let $\phi$ be an arbitrary drawing of $H$ and 
let $\psi$ be a planar drawing of $H$. Let $S = S_1\cup S_2$. Then
\begin{equation}
|\irreg_V(\psi, \phi) \setminus S| \leq 12 \pcro_{\phi}(H) \leq 12 \cro_{\phi}(H). \label{eq:bound-conflicts}
\end{equation}
\end{lemma}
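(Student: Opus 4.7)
The plan is to combine a structural characterization of vertices in $V(H)\setminus S$ based on the SPQR decomposition with an edge-removal argument that reduces $\phi$ to a planar drawing. First, I would establish the following key observation: for any vertex $v\in V(H)\setminus S$, the cyclic order of edges incident to $v$ in \emph{every} planar drawing of $H$ is the same, up to reversal. Indeed, since $v\notin S_1$, the vertex $v$ lies in a unique 2-connected block $X$ of $H$; and since $v\notin S_2(X)$, the vertex $v$ lies in a unique R-node of the SPQR decomposition of $X$ and is not an endpoint of any virtual edge of that R-node. The underlying 3-connected graph of this R-node then has a unique planar embedding by Whitney's Theorem~\ref{thm:Whitney}, which determines the cyclic order at $v$ in $\psi$ (as well as in any other planar drawing of $H$).

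Next, I would select a set $F\subseteq E(H)$ of at most $\pcro_{\phi}(H)$ edges, one from each crossing pair in $\phi$, so that $\phi$ restricted to $H':=H\setminus F$ is a planar drawing. I then split the irregular vertices in $\irreg_V(\psi,\phi)\setminus S$ into two groups: those incident to at least one edge of $F$, of which there are at most $2|F|$; and those not incident to $F$. For a vertex $v$ in the second group, the cyclic order at $v$ in $\phi$ agrees with its cyclic order in $\phi_{H'}$, and likewise for $\psi$ and $\psi_{H'}$. Consequently, it suffices to compare the two planar drawings $\phi_{H'}$ and $\psi_{H'}$ of $H'$, and by the structural observation applied to $H'$, whenever $v$ also satisfies the analogue of ``$v\notin S$'' relative to $H'$, the cyclic orders at $v$ in $\phi_{H'}$ and $\psi_{H'}$ must agree up to reversal, so $v$ is regular.

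The main obstacle will be bounding the number of vertices $v\in V(H)\setminus S$ that are not incident to $F$ but for which $v\in S_1(H')\cup S_2(X'(v))$ in $H'$ — in other words, vertices whose structural type in the SPQR decomposition is corrupted by the removal of some subset of $F$. Here I would argue that each edge $e\in F$ contributes only a constant number of such newly-corrupted vertices: since $e$ sits inside a single R-node of $H$'s SPQR decomposition, its removal can split that R-node and introduce new 1-separators and 2-separator endpoints only locally, namely at vertices that can be charged back to $e$ in a canonical way within the SPQR tree. A careful local analysis — which is the technical heart of the proof — yields a constant contribution per edge of $F$. Combining the counts from the two groups, with the explicit bookkeeping chosen so that the total constant is $12$, then gives $|\irreg_V(\psi,\phi)\setminus S|\leq 12\,\pcro_\phi(H)$, as claimed.
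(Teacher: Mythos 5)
Your first two steps are a legitimate and genuinely different route from the paper's: the paper never decomposes $H$ or removes edges, but instead, for each irregular $u\notin S$, looks at the face of $H\setminus\{u\}$ containing $u$ in the \emph{planar} drawing $\psi$, forms the resulting wheel-like subgraph $W$, observes that a non-planar drawing of $W$ in $\phi$ forces a crossing, and then runs a careful ``face-pair witness'' counting argument to show each crossing can pay for at most $12$ such vertices. Your structural observation (a vertex of degree $\geq 3$ not in any $1$- or $2$-separator has its rotation fixed, up to reversal, across all planar drawings, via Whitney applied to its R-node) is correct, and the decomposition of irregular vertices into ``incident to $F$'' and ``not incident to $F$'' is a natural start.

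The gap is in the final paragraph, and it is not a matter of filling in routine bookkeeping. The claim that each edge $e\in F$ ``contributes only a constant number of newly-corrupted vertices'' is false. Consider $H$ equal to the $n$-rung ladder $L_n$ (vertices $a_1,\dots,a_n,b_1,\dots,b_n$, rails $a_ia_{i+1}$, $b_ib_{i+1}$ and rungs $a_ib_i$) together with the single closing edge $e=(a_1,b_n)$. This $H$ is planar and $3$-connected, so the whole graph is one R-node and $S=\emptyset$. Removing $e$ leaves the ladder $L_n$, whose $2$-separators are $\{a_i,b_i\}$ for every $2\leq i\leq n-1$, giving $\Omega(n)$ newly-created $2$-separator endpoints from a \emph{single} edge removal. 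So ``charging back to $e$ in a canonical way within the SPQR tree'' cannot give a constant bound. The lemma is of course still true: of those $\Omega(n)$ corrupted vertices, only $O(\pcro_\phi(H))$ can actually be irregular, because $\phi_{H'}$ is not an arbitrary planar drawing of $H'$ --- it has to be one on top of which the removed edges of $F$ can be drawn with the prescribed small number of crossings. But showing that is precisely the hard part, and it requires tracking how the drawings of edges in $F$ interact with the planar rotation system of $\phi_{H'}$, which your sketch has not done; once you set up that interaction you are essentially rediscovering the paper's wheel-and-witness argument rather than bypassing it. As written, then, the proposal replaces the technical core of the lemma with an unproven (and in fact false) local claim.
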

\begin{proof}
Note first that we may assume that no two adjacent edges cross each other
in the drawing $\varphi$. Indeed, if the images of two edges incident to a vertex $u$ cross, we can uncross their drawings, possibly
changing the cyclic order of edges adjacent to $u$, and preserving the cyclic order for all other
vertices. The right-hand side of (\ref{eq:bound-conflicts})  will then decrease by
12, and the left hand side by at most 1, so we only strengthen the inequality.
We can also assume w.l.o.g. that the graph $H$ is $2$-connected: otherwise, if $\cset$ is the set of all $2$-connected components of $H$, then,  
since $\pcro_{\phi} (H) \geq \sum_{C\in \cset} \pcro_{\phi}(C)$, it is enough to prove the inequality~(\ref{eq:bound-conflicts}) for each component $C\in\cset$ separately. So we assume below that $H$ is 2-connected.

Consider some vertex $u \in \irreg_V(\psi, \phi) \setminus S$. Let $F$ be the face of $H \setminus\set{u}$ 
that contains the image of $u$ in the drawing $\psi_{H\setminus\set{u}}$. Note that graph  $H \setminus\set{u}$ is 2-connected: otherwise,
if $v$ is a vertex separator of $H\setminus\set{u}$ then $\{u,v\}$ is a $2$-separator for $H$, contradicting
the fact that $u\notin S$. Therefore, the boundary of $F$ is a simple cycle, that we denote by $\gamma$. Let $v_1, \dots, v_{\kappa}$ be the neighbors of $u$ in the order induced by $\gamma$. Vertices $v_i$ partition $\gamma$ into $\kappa$ paths $P_1, \dots, P_{\kappa}$, where path $P_i$ connects vertices $v_i$ and $v_{i+1}$ (we identify indices $\kappa+1$ and $1$). Let $F_i$ be the face of the planar drawing $\psi$, that is
bounded by $(u,v_i)$, $P_i$ and $(v_{i+1},u)$. Note that since for all $i\neq j$, the two paths $P_i$ and $P_j$ do not share
any internal vertices, the total number of vertices that the boundaries of $F_i$ and $F_j$ for $i\neq j$ share is at most $3$, with the only possibilities being $u$, $v_i$ and $v_{i+1}$ (the endpoints of $P_i$).

Consider the graph $W$ formed by $\gamma$, $u$, and edges $(u, v_i)$, for $1\leq i\leq \kappa$. This graph is homeomorphic to the wheel 
graph on $\kappa$ vertices. In any planar embedding of $W$, the ordering of the vertices $\set{v_i}_{i=1}^{\kappa}$ is $(v_1,\dots, v_{\kappa})$. So if the drawing 
$\varphi_W$ of $W$ is planar, then the circular ordering of the edges adjacent to $u$ in $\phi$ is
$((u,v_1),\dots, (u,v_{\kappa}))$ -- the same as in $\psi$, up to orientation. Therefore, if
$u\in \irreg_V(\psi, \phi)$ then either there is a pair $P_i,P_j$ of paths, with $i\neq j$, whose images cross in $\phi$, or an image of an edge $(u, v_i)$ crosses a path $P_j$ (recall that we have assumed that no two 
edges $(u, v_i)$ and $(u, v_j)$ cross each other; all self-intersections of paths $P_i$ can be removed without changing the rest of the embedding).
We say that this crossing point pays for $u$.
Thus every irregular vertex is paid for by a crossing in the drawing $\varphi$.
It only remains to show that every pair of crossing edges pays for at most 12 vertices.

Suppose that $u$ is paid for by a crossing of edges $e_1$ and $e_2$. 
For each edge $e\in\{e_1,e_2\}$, there is a face $F^e$ (in the embedding of $\psi$)
such that $e$ and $u$ lie on the boundary of $F^e$: if $e$ lies on path 
$P_i$ then $F^e = F_i$; if $e = (u, v_j)$ then $F^e$ is either 
$F_{j-1}$ or $F_j$. Since in the latter case we have two choices
for $F^e$, we can choose distinct faces $F^{e_1}$ and $F^{e_2}$.
Therefore, if a crossing of edges $e_1$ and $e_2$ pays for a vertex $u\in\irreg_V(\psi, \phi) \setminus S$,
then there are two distinct faces $F^{e_1}$ and $F^{e_2}$ in $\psi$, incident to $e_1$ and 
$e_2$ respectively, such that $u$ lies on the intersection of the boundaries of $F^{e_1}$ and
$F^{e_2}$.
We say that the pair of faces $F^{e_1}$ and $F^{e_2}$ is the witness 
for the irregular vertex $u$. Since the boundaries of $F^{e_1}$ and $F^{e_2}$ may share at most $3$ vertices that do not belong to $S_2$, the pair $(F^{e_1}, F^{e_2})$ is a witness
for at most 3 irregular vertices. Since each edge $e_i$ is incident to at most two faces in $\psi$,
there are at most $4$ ways to choose $F^{e_1}$ and $F^{e_2}$, and for each such choice $(F^{e_1}, F^{e_2})$ is a witness for at most 3 irregular vertices. We conclude that each pair of edges that cross in $\phi$ pays for at most $12$ irregular vertices.
\ifabstract \qed \fi \end{proof}

\begin{lemma}\label{lem:irregular-edges}
Let $\varphi$ be an arbitrary drawing of $H$ and $\psi$ be its planar drawing. 
Let $E_S = E_1\cup E_2$. Then
$$
|\irreg_E(\psi, \phi) \setminus E_S| \leq 8 \pcro_{\varphi}(H) \leq 8 \cro_{\varphi}(H).
$$
\end{lemma}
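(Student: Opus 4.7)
The proof closely follows that of Lemma~\ref{lem:irregular-vertices}, with suitable modifications for paths instead of single vertices.  First I would reduce to the case where $H$ is $2$-connected: since every irregular edge not in $E_S$ has both endpoints within a single $2$-connected component of $H$ (neither endpoint lies in $S_1$ because $e \notin E_1$), and $\pcro_{\varphi}(H) \ge \sum_{C \in \cset} \pcro_{\varphi}(C)$ where $\cset$ denotes the set of $2$-connected components of $H$, it suffices to bound the count separately within each component.  As in the vertex lemma, one may also assume without loss of generality that no two edges sharing a common endpoint cross each other in $\varphi$, by uncrossing any such adjacent pairs.

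Next, for each irregular edge $e \in \irreg_E(\psi,\varphi) \setminus E_S$, consider the maximal irregular path $P : x \connect y$ containing $e$, with first and last edges $e_1, e_k$.  In the planar drawing $\psi$, the edges of $P$ appear along the shared boundary of two distinct faces, which I denote $F_L^P$ and $F_R^P$.  Since the cyclic orderings at $x$ and $y$ are preserved from $\psi$ to $\varphi$ (both vertices are regular) while their orientations are reversed, the edge immediately to the ``left'' of $e_1$ at $x$ in $\varphi$ belongs to $F_L^P$, but the edge immediately to the ``left'' of $e_k$ at $y$ in $\varphi$ belongs to $F_R^P$.  Suppose, toward contradiction, that no edge of $P$ participates in any crossing in $\varphi$.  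Then a curve $\gamma$ drawn in $\varphi$ parallel to $P$, starting in the sector adjacent to $F_L^P$ at $x$, can be drawn disjoint from $P$; since all internal vertices of $P$ have degree $2$, the curve $\gamma$ remains on a single topological side of $P$ and must end in the sector corresponding to $F_L^P$ at $y$ in $\varphi$.  But the orientation flip at $y$ makes the left-at-$y$ sector in $\varphi$ coincide with the $F_R^P$ sector, a contradiction.  Hence some edge of $P$ must be involved in a crossing in $\varphi$, and by choosing appropriate incident faces of the two crossing edges we obtain a \emph{witness} for $P$: a crossing pair $(e_1',e_2')$ together with an ordered face assignment $F^{e_1'} = F_L^P$, $F^{e_2'} = F_R^P$ where each $F^{e_j'}$ is a face of $\psi$ incident to $e_j'$.

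I would then close with the same counting argument as in the vertex case.  Each pair of crossing edges in $\varphi$ gives rise to at most $2 \times 2 = 4$ ordered witness assignments $(F^{e_1'}, F^{e_2'})$, since each of $e_1', e_2'$ borders at most $2$ faces in $\psi$.  A given ordered pair of distinct faces $(F, F')$ can witness at most one irregular path — namely the unique path, if any, whose degree-$2$ internal vertices all lie on the shared boundary of $F$ and $F'$, which forms a union of simple arcs whose endpoints outside $S_2$ are uniquely determined.  Therefore each crossing pays for at most $4$ irregular paths, and since each irregular path contributes at most $2$ irregular edges outside $E_S$, we conclude $|\irreg_E(\psi,\varphi)\setminus E_S| \le 8\,\pcro_{\varphi}(H)$.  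The principal obstacle is making the parallel-curve argument fully rigorous: one must rule out the case in which $\gamma$ ``escapes'' via a self-crossing of $P$ or through a vertex of $S_2$ on $P$, and this is where the hypothesis $e \notin E_2$ is essential, because it guarantees that at least one endpoint of each extremal edge of $P$ lies outside $S_2$ and can serve as a rigid anchor for the topological argument.
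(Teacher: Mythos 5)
Your proposal follows the paper's proof strategy closely — reduce to $2$-connected components, assign to each irregular edge/path the pair of faces it bounds in $\psi$, use the orientation flip to force a crossing, and count via face pairs — but the central topological step contains a genuine error.

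You assume that no edge of $P$ participates in any crossing and try to derive a contradiction, i.e.\ you try to show that some edge of $P$ \emph{itself} must be crossed. That intermediate claim is false. Writing $Q_L$ and $Q_R$ for the portions of the boundary cycles of $F_L^P$ and $F_R^P$ that complete $P$, the three arcs $P, Q_L, Q_R$ form a theta graph, and the orientation flip forces this theta graph to be non-planarly drawn in $\varphi$. But the forced crossing can lie entirely between $Q_L$ and $Q_R$, with $P$ drawn as a simple arc untouched by crossings; your parallel curve $\gamma$ can then simply end on whichever side of $P$ it ends on, and there is no contradiction — the sector it finishes in is determined by the embedding $\varphi$ around $y$, and nothing forces it to be the $F_L^P$-sector. (The paper avoids this trap by defining the orientation signs $o_u^\mu, o_v^\mu$ of the theta graph and observing that $o_u^\psi = -o_v^\psi$ in the planar drawing while $o_u^\varphi = o_v^\varphi$, so \emph{some} pair among $\{e, P_1, P_2\}$ must cross, without claiming the crossing hits $e$.) The good news is that the final count survives this correction, because whatever two arcs cross, both are incident to $F_1$ or $F_2$, so the witness pair is still well-defined.

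There is also a second, smaller gap: you assert that an ordered pair of faces $(F, F')$ can witness at most one irregular path because the shared boundary ``forms a union of simple arcs whose endpoints outside $S_2$ are uniquely determined,'' but you never prove that this union is a single arc. The paper proves this explicitly: using $e \notin E_S$ (so at least one endpoint, say $u$, lies outside $S$), it shows that any vertex $w \notin \{u,v\}$ shared by $P_1$ and $P_2$ would make $(u,w)$ a $2$-separator of $H$, contradicting $u \notin S$. Without this step your counting lemma — that each crossing pays for at most four irregular paths — is not justified. You identify the right hypothesis ($e \notin E_2$) as the key ingredient, but the $2$-separator argument that actually uses it is missing.
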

\begin{proof}
We can assume w.l.o.g. that there are no vertices of degree $2$ in $H$, by iteratively removing such vertices $u$, and replacing the two edges incident on $u$ with a single edge. This operation may decrease the number of irregular edges by at most factor $2$, and can only decrease the number of pairs of crossing edges.
Similarly to the proof of Lemma~\ref{lem:irregular-vertices}, we assume that the graph $H$ is $2$-connected:
otherwise, we can apply the argument below separately to each 2-connected component.


We say that the orientation of a regular vertex $u$ is positive,
if the ordering of the edges incident to $u$ is the same  in $\phi$ and $\psi$, including the flip.
If the flips in $\phi$ and $\psi$ are opposite, we say that the orientation is negative.
For every irregular edge $e$, the orientation of one of its endpoints is positive, and of the other is negative.

Consider an irregular edge $e = (u,v) \in \irreg_E(\psi, \phi)\setminus E_S$, and assume w.l.o.g. that the orientation of $u$ is positive and the orientation of $v$ is negative.
Let $F_1$ and $F_2$ be the two faces incident to $e$ in the embedding $\psi$. 
Since $H$ is 2-connected, the boundaries of $F_1$ and $F_2$ are simple cycles. Denote them by $C_1$ and $C_2$. 
Let $P_i = C_i \setminus\set{e}$ be the sub-path of $C_i$ that connects $u$ to $v$. We now prove that $P_1$ and $P_2$ do not share any
vertices except for $u$ and $v$. 
Indeed, assume for contradiction that a vertex $w\notin \{u, v\}$ lies
on both $P_1$ and $P_2$. Since $e\notin E_S$, either $u$ or $v$ (or both) are not in $S$. Assume w.l.o.g. that $u\notin S$.
 We draw two curves, connecting $w$ to the middle
of the edge $e$ inside the planar drawing $\psi$ of $H$; one of the two curves lies inside $F_1$ and the other lies inside $F_2$.
The union of the two curves defines a cycle that separates $H\setminus\set{w}$ into two pieces, with $u$ belonging to one piece and 
$v$ to the other. Denote these pieces by $B_u$ and $B_v$, respectively. (We assume that $w\in B_u$, $w\in B_v$). We will now show that $(u,w)$ is a $2$-separator for $H$, leading to a contradiction. Observe first that  since the degrees of $u$ and $v$ are at least $3$, and the separating cycle only crosses one edge of $H$ (the edge $e$), both $B_u$ and $B_v$ contain at least $3$ vertices each. Since every path from 
$B_u$ to $B_v$ must cross the separating cycle,  each such path either contains
the vertex $w$ or the edge $e$. Therefore, $(u,w)$ is a $2$-separator for $H$, contradicting our assumption that $u\notin S$.

We say that the pair of faces $(F_1, F_2)$ is the witness for the irregular edge $e$. From the above discussion,
each pair of faces is a witness for at most one irregular edge.

\begin{figure}
\begin{center}
\scalebox{0.65}{\includegraphics{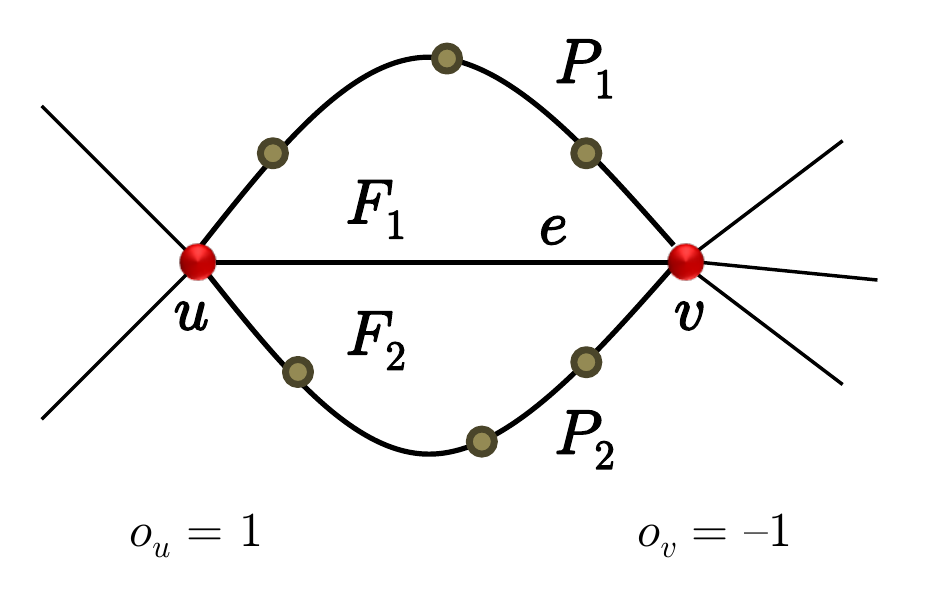}}
\caption{Graph $H$, irregular edge $e$, paths $P_1$ and $P_2$.\label{fig:irregular-edge}}
\end{center}
\end{figure}
Let $o_u^\mu$ be the orientation --- either clockwise or counterclockwise ---
in which paths paths $P_1$, $e$, and $P_2$ leave $u$ in the embedding $\mu$
(where $\mu$ is either $\phi$ or $\psi$). If the orientation is clockwise
$o_u^\mu = 1$; otherwise $o_u^\mu = -1$. Similarly, we define $o_v^\mu$.
Note that in any embedding $\mu'$ in which paths $P_1$, $e$ and $P_2$
do not cross each other, $o^{\mu'}_u = - o^{\mu'}_v$. 
In particular, since $\psi$ is a planar embedding, $o_u^\psi=-o_v^\psi$. But since the orientation of $u$ is positive, and the orientation of $v$ is negative, 
$o_u^\phi =o_v^\phi$.
Therefore, there is a pair $(e_1,e_2)$ of crossing edges in $\phi$, where either $e_1 \in P_1$, $e_2 \in P_2$;
or $e_1 \in P_1$, $e_2 = e$; or $e_1 = e$ and $e_2 \in P_2$. 
We say that the crossing of $e_1$ and $e_2$ pays for the irregular edge $e$.
The edges $e_1$ and $e_2$ lie on the boundaries of $F_1$ and $F_2$ respectively.
Similarly to the previous lemma, given two crossing edges $e_1$ and $e_2$, there are at most $4$ ways to choose the 
faces $(F_1, F_2)$ incident to them, and each such pair of faces is a witness for at most one edge. 
Therefore, each pair of crossing edges pays
for at most 4 irregular edges. We conclude that the number of irregular edges is bounded by $4\cro_{\phi}(H)$. Replacing the edges back by the original $2$-paths increases the number of irregular edges by at most factor $2$, as each irregular $2$-path contains two irregular edges.
\ifabstract \qed \fi \end{proof}


\subsection{Proof of Lemma \ref{lem:irregular2}} 
We start with a high level overview of the proof. Assume first that the graph $\H$ is 2-connected. We can then use Theorem~\ref{thm: block decomposition} to find a laminar block decomposition $\fset$ of $\H$. Moreover, each vertex $v\in S_2$ is either an endpoint of a block in $\fset$, or it is a neighbor of an endpoint of a block in $\fset$. Therefore, $|S_2|$ is roughly bounded by $O(|\fset|\cdot \dmax)$. On the other hand, since the graph $\G$ is 3-connected, each block $B\in \fset$ must contain an endpoint of an edge from $E^*$ as an inner vertex, that can be charged for the block $B$, for its endpoints, and for the neighbors of its endpoints. This approach would work if we could show that every edge $e\in E^*$ is only charged for a small number of blocks. This unfortunately is not necessarily true, and an edge $e\in E^*$ may be charged for many blocks in $\fset$. However, this may only happen if there is a large number of nested blocks, all of which contain the same endpoint of the edge $e$. We call such set of blocks a ``tunnel''. We then proceed in two steps. First, we bound the number of blocks of $\fset$ that do not participate in such tunnels, by charging them to the edges of $E^*$, as above. Next, we perform some local changes in the embeddings of the tunnels (by suitably flipping the embedding of each block of the tunnel), so that we can charge the number of irregular vertices that serve as endpoints of blocks participating in the tunnels to the crossings in $\bphi$.

We now proceed with the formal proof.
We start with an arbitrary planar drawing $\bpsi_{init}$ of $\H$. Let $\cset$ be the set of all $2$-connected components of $\H$. We consider each component $X\in \cset$ separately.
For a component $X\in \cset$,
let $\cro_{\bphi}(\G,X)$ 
denote the number of crossings in $\bphi$ in which edges of $X$ participate, and let $E^*(X)$ denote the subset of edges of $E^*$ that have at least one endpoint in $X$.
We will modify $\bpsi_{init}$ locally on each 2-connected component $X\in \cset$ and obtain a planar
drawing $\bpsi$ of $\H$ such that
\ifabstract
\begin{align*}
|\irreg_V(\bphi_{\H},\bpsi)\cap S_2(X)| &=  O(\cro_{\bphi}(\G,X)+|E^*(X)| \\ &\phantom{{}=O()}+ |S_1 \cap X|)\\
|\irreg_E(\bphi_{\H},\bpsi)\cap E_2(X)| &=  O(\dmax (\cro_{\bphi}(\G,X)+|E^*(X)| \\ &\phantom{{}=O()} + |S_1 \cap X|)).
\end{align*}
\fi%
\iffull
\begin{align*}
|\irreg_V(\bphi_{\H},\bpsi)\cap S_2(X)| &=  O(\cro_{\bphi}(\G,X)+|E^*(X)| + |S_1 \cap X|)\\
|\irreg_E(\bphi_{\H},\bpsi)\cap E_2(X)| &=  O(\dmax (\cro_{\bphi}(\G,X)+|E^*(X)| + |S_1 \cap X|)).
\end{align*}
\fi%
Summing up over all $X\in \cset$, and using Lemma~\ref{lem:irregular1} gives the desired bound.
Since we  guarantee that the modifications of $\bpsi_{init}$ are restricted to $X$,
we can modify the 2-connected components $X\in \cset$ independently to obtain 
the final desired drawing.

Fix a 2-connected component $X\in \cset$. If $X$ is 3-connected then 
$S_2(X) = E_2(X) = \emptyset$ and there is nothing to prove. So we assume below that $X$ is not
3-connected. We compute the laminar block decomposition $\fset(X)$ and the corresponding decomposition tree $\tset(X)$ for $X$, given by Theorem \ref{thm: block decomposition}. For convenience, we use $\fset'(X)=\fset(X)\setminus \set{X}$ to denote the set of all blocks in $\fset(X)$, excluding the whole component $X$.
We now proceed in three steps. Our first step is to explore some structural properties of the blocks $B\in \fset(X)$. We will use these properties, on the one hand, to bound the number of blocks that do not participate in tunnels, and on the other hand, to find the layout of the tunnels.
In the second step, we define the subsets of blocks that we can charge to the edges in $E^*$. We then charge some of the vertices in $S_2$ and edges in $E_2$ to these blocks. In the last step, we define tunnels, to which all remaining blocks belong, and we show how to take care of them.

\subsubsection*{Step 1: Structural properties of blocks}
Consider some block $B \in \fset'(X)$, with endpoints $u$ and $v$. Since $X$ is 2-connected, 
there is a path $P_{out}^B:u\connect v$ in $(X\setminus B)\cup\set{u,v}$.
Moreover, if $B'$ is the parent of $B$ in $\tset(X)$, whose endpoints are $u'$ and $v'$, we can ensure that $P_{out}^{B'}\sse P_{out}^B$, as follows. Consider the graph $B^*$ obtained from $B'$ after we remove all inner vertices of $B$ from it. Since $X$ is $2$-connected, so is $B'$. Therefore, there are $2$ vertex disjoint paths in $B^*$, connecting the vertices in $\set{u',v'}$ to the vertices in $\set{u,v}$. We assume w.l.o.g. that these paths are $P_1: u\connect u'$ and $P_2:v\connect v'$. We can then set $P_{out}^B=(P_1,P_{out}^{B'},P_2)$ (see Figure~\ref{fig: paths pout}). Therefore, from now on we assume that if $B'$ is the parent of $B$, then $P_{out}^{B'}\sse P_{out}^B$.

\begin{figure}[h]
\begin{center}
\scalebox{0.3}{\rotatebox{0}{\includegraphics{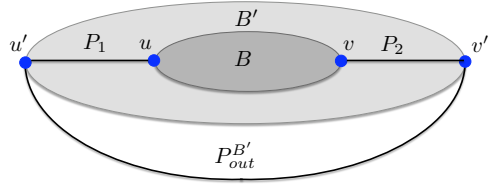}}} \caption{Paths $P^B_{out}$, $P^{B'}_{out}$.} \label{fig: paths pout}
\end{center}
\end{figure}

Since we have assumed that $\G$ is 3-vertex connected, for every block $B\in \fset'(X)$, there is also a path $Q$ in $\G\setminus\set{u,v}$, connecting
an {\bf inner} vertex of the block $B$, with an {\bf inner} vertex of the path $P_{out}^B$. 
Let $x_B$ be the last vertex on $Q$ that belongs to $B$ and $y_B$ 
be the first vertex on $Q$ that belongs to $P^B_{out}$ (notice that $y_B\neq u,v$, since $Q$ does not contain $u$ or $v$). We denote the segment of $Q$ 
between $x_B$ and $y_B$ by $P_0^B$, and we call the vertex $x_B$ \textit{the connector vertex} for 
the block $B$. 

Note that if $B''$ is a child block of $B$ and $x_B$ is an inner vertex of $B''$ as well, then since $P_{out}^{B}\sse P_{out}^{B''}$, we can choose
$x_B$ to be the connector vertex of $B''$ as well, and use $P_0^{B''}=P_0^B$.  So we assume that each connector 
vertex $x$ appears contiguously in the tree $\tset$. That is, if $B$ is a descendant
of $B_1$ and an ancestor of $B_2$ and $x_{B_1} = x_{B_2}$, then $x_B = x_{B_1} = x_{B_2}$. We also assume that in this case $P_0^{B_1}=P_0^B=P_0^{B_2}$.
We denote the segment of $P_{out}^B$ between $u$ and $y_B$ by $P_{1,out}^B$ and the segment
between $y_B$ and $v$ by $P_{2, out}^B$. 

Since $X$ is 2-connected, there are 
two vertex disjoint paths between $x_B$ and $y_B$ in $X$. One of them must pass through $u$
and the other through $v$. We denote the segment between $u$ and $x_B$ of the former path
by $P^B_{1,in}$ and the segment between $x_B$ and $v$ of the latter path by $P^B_{2,in}$.
Let $P^B_{in}$ be the concatenation of $P_{1,in}^B$, $P_{2,in}^B$.
Note that the paths $P_0^B$, $P_{1,in}^B$, $P_{2,in}^B$, $P_{1,out}^B$ and $P_{2,out}^B$
do not intersect, except at endpoints (see Figure~\ref{fig: paths pout2}). We emphasize that $x_B$ is an {\bf inner} vertex of $B$, and $y_B$ is an {\bf inner} vertex on path $P_{out}^B$ --- a fact that we use later.

\begin{figure}[h]
\begin{center}
\scalebox{0.3}{\rotatebox{0}{\includegraphics{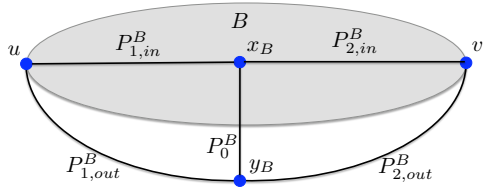}}} \caption{Paths $P_0^B$, $P_{1,in}^B$, $P_{2,in}^B$, $P_{1,out}^B$ and $P_{2,out}^B$. Vertex $x_B$ is an inner vertex of $B$, and vertex $y_B$ is an inner vertex of $P_{out}^B$. All five paths are non-empty and completely disjoint except for their endpoints.} \label{fig: paths pout2}
\end{center}
\end{figure}

For each component $X\in\cset$, let $\sset_X$ be the union of 
(i) the set $S_1\cap X$ and (ii) the set of vertices of $X$ incident to edges of $E^*$. 
 Using Lemma~\ref{lem:irregular1},

\begin{equation}\label{eq: bound SX}
\sum_{X\in \cset} |{\cal S}_X| \leq \sum_{X\in \cset} (|E^*(X)| + |S_1 \cap X|) \leq O(|E^*|).
\end{equation}

We now show that for each block $B\in \fset'(X)$, the connector vertex $x_B\in \sset_X$.
Indeed, consider the first edge $(x_B, z)$ of the path
$P_0^B$. If $z\in X$, then $(x_B, z) \in E^*(X)$, as by the definition
of the block, no edges of $X$ connect inner vertices of $B$ to $X\setminus B$. Otherwise, if $z\notin X$, then $x_B$ must be a $1$-separator, so $x_B\in S_1$. 


Finally, we study structural properties of chains of nested blocks. We also introduce a notion of a \textit{simple} block, and show that all non-simple blocks contain a certain useful structure.

\begin{Definition}
Let $B_1\in \fset'(X)$ be any block, whose endpoints are denoted by $u_1$ and $v_1$. We say that $B_1$ is a \emph{simple block} iff it contains exactly three vertices, $u_1,v_1$, and $u_2$, and has exactly one child in $\tset(X)$, denoted by $B_2$ (assume w.l.o.g. that the endpoints of $B_2$ are $(u_2,v_1)$). Moreover, $B_1$ is obtained by adding exactly one edge, $(u_1,u_2)$, to $B_2$ (see Figure~\ref{fig:irregular-block}). If $B_1\in \fset'(X)$ has exactly one child in $\tset(X)$, but it is not a simple block, then we say that it is \emph{complex}.
\end{Definition}

\begin{figure}
\begin{center}
\scalebox{0.6}{\includegraphics{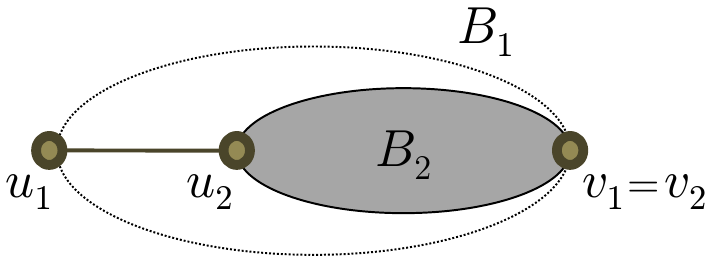}}
\caption{A simple block $B_1$.
\label{fig:irregular-block}}
\end{center}
\end{figure}

\begin{figure}
\begin{center}
\scalebox{0.6}{\includegraphics{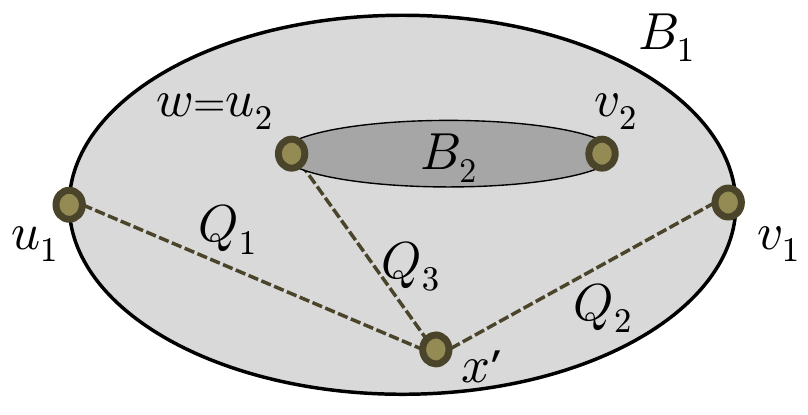}}
\caption{A complex block. Paths $Q_1,Q_2,Q_3$ are pairwise vertex disjoint, except for containing $x'$ as a common endpoint.}
\label{regular-block}
\end{center}
\end{figure}

We need the following two claims.

\begin{claim}
\label{cor:regular-block-in-a-chain}
Consider a chain of 5 nested blocks: $B_1$, $B_2$, $B_3$, $B_4$ and $B_5$, where $B_{i+1}$ is the only child of $B_i$ (for $i\in\{1,\dots, 4\}$).
Assume that no vertices in $V(B_1)\setminus V(B_5)$ have degree 2 in $X$. Then one of the blocks $B_1$,$B_2$,$B_3$, or $B_4$ is complex.
\end{claim}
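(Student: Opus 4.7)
The plan is to argue by contradiction: assume $B_1, B_2, B_3, B_4$ are all simple, and exhibit a vertex of degree $2$ in $X$ that lies in $V(B_1)\setminus V(B_5)$. Since each $B_i$ with $i\leq 4$ has the unique child $B_{i+1}$ in $\tset(X)$, a block that fails to be simple is complex by definition, so eliminating the all-simple case suffices.

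First I would unfold the simple-block definition iteratively along the chain. Applying it to each $B_i$, with the relabeling convention that the shared endpoint common to $B_i$ and $B_{i+1}$ is the second coordinate, one obtains that $B_i$ has endpoints $(u_i, b)$ and a distinguished inner vertex $c_i$; the child $B_{i+1}$ has endpoints $(c_i, b)$, and $B_i = B_{i+1}\cup\{(u_i, c_i)\}$ as subgraphs of $X$. A routine induction on $i$ then shows that the shared endpoint $b$ is the same vertex throughout the chain, that $c_i = u_{i+1}$ for $i=1,2,3$, that the vertices $u_1, u_2, u_3, u_4$ are pairwise distinct, and that $u_i \in V(B_i)\setminus V(B_{i+1})$, so $\{u_1, u_2, u_3, u_4\}\subseteq V(B_1)\setminus V(B_5)$.

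The core computation is to show $\deg_X(u_i)=2$ for $i\in\{2,3,4\}$. Fix such an $i$. Because $u_i = c_{i-1}$ is an inner vertex of $B_{i-1}$, the block axiom prohibits any edge from $u_i$ to $V(X)\setminus V(B_{i-1})$, so every edge of $X$ incident to $u_i$ lies inside $B_{i-1}$. The identity $B_{i-1} = B_i \cup \{(u_{i-1}, c_{i-1})\}$ contributes exactly one edge incident to $u_i$, namely $(u_{i-1}, u_i)$, inside $B_{i-1}\setminus B_i$; the identity $B_i = B_{i+1}\cup\{(u_i, c_i)\}$ contributes exactly the edge $(u_i, u_{i+1})$ inside $B_i\setminus B_{i+1}$; and since $u_i \notin V(B_{i+1})$, no further edge at $u_i$ can lie in $B_{i+1}$ or any deeper block. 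Hence $\deg_X(u_i)=2$, and taking for instance $i=2$ already produces a degree-$2$ vertex lying in $V(B_1)\setminus V(B_5)$, contradicting the hypothesis and completing the proof.

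The main subtlety I anticipate is the labeling bookkeeping when iterating the simple-block convention, specifically verifying that at each step the endpoint of $B_{i+1}$ that becomes inner at the next level is the distinguished new vertex $c_i$ rather than the shared endpoint $b$, and that $u_i\notin V(B_{i+1})$. This is forced by the requirement $I(B_i)\neq I(B_{i+1})$ from the block-decomposition definition (which rules out degenerate label collisions) together with the minimality built into the simple-block definition (one new visible vertex and one new actual edge per level). Once this labeling consistency is in place, the degree count is essentially a short diagram chase.
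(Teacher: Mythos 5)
The central gap in your argument is the assertion, attributed to a ``routine induction,'' that the shared endpoint $b$ is the same vertex at every level of the chain. This is not forced by the definitions, and it is false in general. The simple-block definition only guarantees that $I(B_i)$ and $I(B_{i+1})$ share exactly one vertex; it does not determine \emph{which} one, and the shared vertex can change from step to step. Concretely, it is consistent with all the axioms to have $I(B_1)=\{u_1,b\}$, $I(B_2)=\{c_1,b\}$ with $B_1 = B_2\cup\{(u_1,c_1)\}$, and then $I(B_3)=\{c_2,c_1\}$ with $B_2 = B_3\cup\{(b,c_2)\}$: here the vertex shared by $B_2$ and $B_3$ is $c_1$, not $b$. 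The requirement $I(B_i)\neq I(B_{i+1})$ that you invoke is satisfied in both scenarios and does not exclude this alternation. Moreover, the block decomposition that the paper actually constructs via SPQR trees exhibits exactly this behavior: for an $S$-node whose cycle is $(a_1,\dots,a_s)$, the nested blocks $B_x^i$ have endpoint pairs $(a_1,a_{s-1})$, $(a_2,a_{s-1})$, $(a_2,a_{s-2})$, $(a_3,a_{s-2}),\dots$, so the shared endpoint switches sides at every level.

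Once the shared vertex is allowed to change, the identity $c_i=u_{i+1}$ breaks down; the vertex you call $u_2$ may coincide with one of the original endpoints of $B_1$, which is then not an inner vertex of $B_1$, and its degree in $X$ is unbounded, so the degree-two computation no longer applies to it. This is precisely why the paper's (terse) proof considers all four endpoints $\{u_2,v_2,u_3,v_3\}$ of $B_2$ and $B_3$ and asserts that at least one of those lying outside $V(B_5)$ has degree two. To repair your argument along the same lines, you would need a short case split on whether the shared endpoint persists or switches at each level. In every branch, one of the four endpoints of $B_2$ or $B_3$ turns out to be an inner vertex of $B_1$, lies in $V(B_1)\setminus V(B_5)$, and is incident to exactly two edges of $B_1$ (two of the single edges added in the simple-block identities), which furnishes the required degree-two vertex.
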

\begin{proof}
Notice that from the definition of simple blocks, if all blocks $B_1,B_2,B_3,B_4$ are simple, at least one vertex $z\in\set{u_2,v_2,u_3,v_3}\setminus V(B_5)$ must have degree $2$ in $X$ (where $u_i$ and $v_i$ are endpoints of $B_i$), contradicting the fact that $V(B_1)\setminus V(B_5)$ cannot contain such vertices.
\ifabstract \qed \fi \end{proof}

\begin{claim}\label{claim: irregular}
Suppose that a non-simple block $B_1\in \fset'(X)$ has exactly one child $B_2$ in $\tset(X)$. Denote the endpoints of $B_1$ by $u_1$ and $v_1$, and the endpoints of $B_2$ by $u_2$ and $v_2$. 
Then for every vertex $x'\in V(\tilde B_1)\setminus \{u_1, v_1\}$, 
there are three paths $Q_1:x'\connect u_1$, $Q_2:x'\connect v_1$, and $Q_3:x'\connect w$, with $w\in\set{u_{2},v_{2}}$, and all three paths are contained in $\tilde B_1 \setminus\set{(u_{2}, v_{2})}$. Moreover, $Q_1,Q_2$ and $Q_3$ do not share any vertices, except for the vertex $x'$ that serves as their endpoint. (See Figure~\ref{regular-block} for an illustration.)
\end{claim}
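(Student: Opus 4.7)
My plan is to introduce an auxiliary graph $G^{*}=(\tilde B_1\setminus\{(u_2,v_2)\})\cup\{t,(t,u_2),(t,v_2)\}$, where $t$ is a new vertex adjacent only to $u_2$ and $v_2$, and then apply Menger's theorem in $G^*$ to produce three internally vertex-disjoint paths from $x'$ to the target set $\{u_1,v_1,t\}$. Shortcutting the path that lands at $t$ will yield the desired $Q_3$ ending in $\{u_2,v_2\}$.

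The crux is showing that the minimum vertex cut between $x'$ and $\{u_1,v_1,t\}$ in $G^*$ is at least three. Suppose $\{a,b\}$ is such a $2$-cut. Since $t$'s neighbors in $G^*$ are precisely $\{u_2,v_2\}$, the separation of $x'$ from $\{u_1,v_1,t\}$ in $G^*\setminus\{a,b\}$ implies the separation of $x'$ from $\{u_1,v_1,u_2,v_2\}\setminus\{a,b\}$ in $\tilde B_1\setminus\{(u_2,v_2),a,b\}$. I then re-add the two artificial edges $(u_1,v_1)$ and $(u_2,v_2)$: each of them joins two vertices that already lie on the non-$x'$ side of the cut (or have been deleted), so neither edge can create a new $x'$-to-target connection. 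Hence $\{a,b\}$ would be a $2$-cut in $\tilde B_1'$ separating $x'$ from the non-empty set $\{u_1,v_1,u_2,v_2\}\setminus\{a,b\}$, contradicting the $3$-connectedness of $\tilde B_1'$ supplied by the block decomposition (Theorem~\ref{thm: block decomposition}). Once the cut bound is established, Menger delivers three internally vertex-disjoint paths $Q_1,Q_2,Q_3$ from $x'$ to $u_1,v_1,t$ respectively. The path $Q_3$ reaches $t$ through a penultimate vertex $w\in\{u_2,v_2\}$, and $w\notin\{u_1,v_1\}$, because internal disjointness prevents $u_1$ or $v_1$ from appearing internally on $Q_3$ (they are the endpoints of $Q_1$ and $Q_2$). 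Shortcutting $Q_3$ at $w$ yields a path lying entirely in $\tilde B_1\setminus\{(u_2,v_2)\}$ with endpoint in $\{u_2,v_2\}$, as required.

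The main obstacle will be the careful treatment of the boundary sub-cases of the cut argument where $\{a,b\}$ intersects $\{u_2,v_2\}$: removing such a vertex already deletes the edge $(u_2,v_2)$, so re-adding it is vacuous and I must reason directly. In those sub-cases the $3$-connectedness of $\tilde B_1'$ still implies $\tilde B_1'\setminus\{a,b\}$ is connected, and removing only the single edge $(u_1,v_1)$ from a connected graph keeps $x'$ connected to at least one of $\{u_1,v_1\}$, which precludes separation. The non-simplicity of $B_1$ enters only indirectly, ensuring $\tilde B_1'$ is a genuine $3$-vertex-connected graph (rather than a degenerate triangle-multigraph whose vertex connectivity is only $2$), so that the hypotheses of the block decomposition apply verbatim.
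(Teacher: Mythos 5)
Your auxiliary-graph idea — adjoin a new vertex $t$ adjacent to $u_2,v_2$ and invoke Menger from $x'$ to $\{u_1,v_1,t\}$ — is a genuine alternative to the paper's argument, which instead does an explicit case split and applies the Fan Lemma inside $\tilde B_1'$ (with a separate $3$-vertex base case). Your route would unify the paper's two sub-cases, which is appealing. However, the cut argument as written has a gap precisely at the boundary case the paper isolates.

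Your translation ``the separation of $x'$ from $\{u_1,v_1,t\}$ in $G^*\setminus\{a,b\}$ implies the separation of $x'$ from $\{u_1,v_1,u_2,v_2\}\setminus\{a,b\}$ in $\tilde B_1\setminus\{(u_2,v_2),a,b\}$'' tacitly assumes two things: that $t\notin\{a,b\}$ (otherwise $x'$ being cut off from $t$ says nothing about $u_2,v_2$), and that $x'\notin\{u_2,v_2\}$ (otherwise $x'$ trivially lies in the set it is supposedly separated from, and re-adding $(u_2,v_2)$ joins $x'$ itself to the other side rather than two far-side vertices). But the claim must cover $x'\in\{u_2,v_2\}$: in that case $t$ is a neighbour of $x'$ in $G^*$, so every $2$-cut from $x'$ to $\{u_1,v_1,t\}$ necessarily contains $t$, and the translation breaks down. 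Your boundary fix only addresses $\{a,b\}\cap\{u_2,v_2\}\ne\emptyset$, not $t\in\{a,b\}$; moreover, in the $t\in\{a,b\}$ case one removes \emph{two} edges from $\tilde B_1'\setminus\{b\}$ (namely $(u_1,v_1)$ and $(u_2,v_2)$), not just the single edge $(u_1,v_1)$ as you assert, so the ``precludes separation'' step does not follow as stated. The gap can be closed — e.g.\ note $\tilde B_1'\setminus\{b\}$ is $2$-connected, delete $(u_2,v_2)$ to get a connected graph, then observe $(u_1,v_1)$ is at worst a bridge whose removal produces one component containing $u_1$ and one containing $v_1$, so $x'$ still reaches a target — but this extra argument is exactly where the paper's separate treatment of $x'\in\{u_2,v_2\}$ (and of the degenerate $3$-vertex $\tilde B_1$) lives, and your write-up does not supply it. Finally, your remark that non-simplicity makes $\tilde B_1'$ ``genuinely'' $3$-connected is off: under the paper's definition $\tilde B_1'$ is $3$-connected even when $B_1$ is simple; what non-simplicity actually buys you in the $3$-vertex case is the extra real edge $(u_2,v_1)$, without which $\deg_{G^*}(x')=2$ and three disjoint paths cannot exist.
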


\begin{proof}
Since $B_1$ has only one child, $u_2\notin\{u_1, v_1\}$ or $v_{2}\notin \{u_1, v_1\}$ 
(or both). Let us assume w.l.o.g. that $u_2\notin \{u_1, v_1\}$.
 In particular, $B_1$ contains at least $3$ vertices.
 
We consider two cases. Assume first that $\tilde{B}_1$ contains exactly $3$ vertices. Then these vertices must be $u_1,v_1$ and $u_{2}$. The only valid choice for the vertex $x'$ is $x'=u_2$. From the definition of blocks, $B_1$ cannot contain the edge $(u_1,v_1)$. But since it is connected, it must contain the edge $(u_1,u_{2})$. Therefore, the only way for $B_i$ not to be simple (since we have assumed that $\G$ contains no parallel edges) is if $B_1$ contains the edge $(u_{2},v_1)$. But in this case, we get the following three paths: $Q_1=(u_1,u_2), Q_2=(u_2,v_1)$, and $Q_3=\emptyset$.

Assume now that $\tilde{B}_1$ contains at least $4$ vertices.  From Theorem~\ref{thm: block decomposition}, the graph $\tilde{B}_1'$ is 3-connected.  Let $x'$ be an arbitrary inner vertex of $B_1$. Assume first that $x'\notin\{u_{2}, v_{2}\}$.
Recall that the Fan Lemma states that for every $r$-connected graph
$A$, a vertex $a$ in $A$ and a set of $r$
vertices $B\subset V(A)\setminus\{a\}$, there exist $r$
paths that connect $a$ to vertices of $B$ that have no
common vertices other than $a$.
We apply the Fan Lemma in graph $\tilde{B}_1'$ to $x'$ and $\{u_1,v_1, u_{2}\}$.
Let $Q_1$ be the resulting path between $x'$ and $u_1$,
$Q_2$ the path between $x'$ and $v_1$, and
$Q_3'$ the path between $x'$ and $u_{2}$. 
Note that paths $Q_1$ and $Q_2$ do not contain 
the artificial edge $(u_{2},v_{2})$, as otherwise they would
contain $u_{2}$. 
Notice also that none of the three paths contains the artificial edge $(u_1,v_1)$, as this would violate their disjointness. 
Finally, let $Q_3$ be equal to either $Q_3'$, if $Q_3'$ does not visit $v_{2}$,
or the segment of $Q_3'$ between $x'$ and $v_{2}$, if it does (the latter can only happen if $v_2\not\in\set{u_1,v_1}$).
We have thus constructed the required paths $Q_1$, $Q_2$ and $Q_3$.
Assume now that $x'\in \{u_{2}, v_{2}\}$.
Since $\tilde B_1'$ is 3-vertex connected (and $\tilde B_1' \neq K_3$),
the graph $\tilde B_1'\setminus \set{(u_{2},v_{2})}$ is 2-vertex connected. We again apply the Fan Lemma
to $w$ and $\{u_{1}, v_{1}\}$ in this graph and find the desired paths $Q_1$ and $Q_2$.
We let $Q_3$ to be the trivial path of length $0$.
\ifabstract \qed \fi \end{proof}

\subsubsection*{Step 2: Blocks we can pay for}
Fix a $2$-connected component $X\in \cset$.
In this step, we define three subsets $\rset_1(X),\rset_2(X),\rset_3(X)$ of $\fset(X)$, and bound the number of blocks contained in them. We also define a subset $\tilde{S}_2\sse S_2$ of vertices and a subset $\tilde{E}_2\sse E_2$ of edges, that can be charged to these blocks. The remaining blocks of $\fset(X)$ will be partitioned into structures called tunnels, and we take care of them in the next step.

\paragraph{Set $\mathbf{\boldsymbol{\rset}_1(X)}$:} Let $\rset_1(X)$ denote the set of blocks $B\in \fset(X)$, 
such that $B$ is either the root of $\tset(X)$, or it is one of its leaves, 
or it has a degree greater than $2$ in $\tset(X)$,
or it contains a vertex from ${\cal S}_X$ that does not belong to any of its child blocks.
We also add five immediate ancestors of every such block to $\rset_1(X)$.

\begin{claim} $\sum_{X\in\cset}|\rset_1(X)|=O(|E^*|)$.
\end{claim}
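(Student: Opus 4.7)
The plan is to bound each of the four conditions defining a block of $\rset_1(X)$ separately, and then observe that adjoining the five immediate ancestors to every such block multiplies the count by at most a factor of six.

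The easiest category consists of the roots of the trees $\tset(X)$: each $X\in\cset$ contributes exactly one, so the total equals $|\cset|$, which is $O(|E^*|)$ by the argument used in the proof of Lemma~\ref{lem:irregular1}. The critical category is (iv), blocks containing a vertex of $\sset_X$ that does not belong to any child block. Here the laminarity of $\fset(X)$ does the work: for any fixed vertex $v$, the collection of blocks of $\fset(X)$ containing $v$ is pairwise nested, hence forms a chain in $\tset(X)$, and only the unique minimal block of this chain can witness $v$ under condition (iv). Thus condition (iv) contributes at most $|\sset_X|$ blocks per component, which sums to $O(|E^*|)$ by~\eqref{eq: bound SX}.

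The leaves of $\tset(X)$ and the high-degree blocks can then be absorbed into the previous cases. Every leaf $B\in\fset'(X)$ carries its connector vertex $x_B\in\sset_X$, produced in Step~1, as an inner vertex; since $B$ has no children, $x_B$ belongs to no child block, so $B$ already satisfies condition (iv) and is counted there. For the degree-$\geq 3$ nodes of the tree $\tset(X)$, the standard degree-sum inequality $\sum_v \deg(v) = 2(|V(\tset(X))|-1)$ gives that their number is at most the number of leaves minus $2$, so category (iii) is bounded by category (ii), and therefore also by $|\sset_X|$ per component.

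Summing the four categories across all $X\in\cset$ yields $O(|E^*|)$ in total, and incorporating up to five immediate ancestors per primary block contributes only a multiplicative factor of $6$, which is absorbed in the $O(\cdot)$. No serious obstacle arises; the only mildly delicate observation is the chain structure of blocks containing a fixed vertex, which drives the bound for condition (iv) and lets leaves be charged to $\sset_X$ via connector vertices.
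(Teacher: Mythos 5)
Your proof is correct and follows essentially the same strategy as the paper: charge leaf blocks to their connector vertices $x_B\in\sset_X$, use the chain structure (laminarity) to ensure each vertex of $\sset_X$ pays for at most one block, bound the number of degree-$\geq 3$ tree nodes by the number of leaves, invoke $\sum_X |\sset_X| = O(|E^*|)$, and absorb the five ancestors into a constant factor. The only cosmetic difference is that you fold leaves into the ``condition (iv)'' count (observing that a leaf already witnesses its own connector vertex), whereas the paper charges leaves and condition-(iv) blocks in one unified step and then argues no vertex is charged twice; the underlying counting is identical.
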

\begin{proof}
Denote the number of leaves in $\tset(X)$ by $L_X$.
For each leaf block $B$, we charge the connector vertex $x_B\in {\cal S}_X$ for $B$.
For each non-leaf block $B$, such that $B$ contains a vertex $x\in \sset_X$ that does not belong to any of its children, we charge $x$ for $B$ (even if $x_B \neq x$).
Since $\fset(X)$ is a laminar family, it is easy to see that each vertex $x\in \sset_X$ is charged at most once.
The number of vertices of degree at least $3$ in ${\cal T}(X)$ is at most $L_X-1$. 
By adding five ancestors of each block, we increase the size of $\rset_1(X)$
by at most a factor of $5$. 
Therefore, $\sum_{X\in\cset} |\rset_1(X)| \leq \sum_{X\in \cset} O(|{\cal S}_X|) = O(|E^*|)$.
\ifabstract \qed \fi \end{proof}

\paragraph{Set $\mathbf{\boldsymbol{\rset}_2(X)}$:} 
Consider a vertex $x\in {\cal S}_X$.
Notice that the set of blocks $B\in\fset(X)$ with $x_B = x$ must be a nested set.
We add the smallest such block and its five immediate ancestors to $\rset_2(X)$.

\begin{claim} $\sum_{X\in\cset}|\rset_2(X)|=O(|E^*|)$.
\end{claim}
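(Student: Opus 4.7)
The plan is to charge each block added to $\rset_2(X)$ to the vertex $x \in \sset_X$ that caused it to be added, and then apply the global bound on $\sum_{X\in\cset}|\sset_X|$ from inequality~(\ref{eq: bound SX}).

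More concretely, I would first fix a component $X\in\cset$ and a vertex $x\in\sset_X$. By the construction of $\rset_2(X)$, this vertex $x$ contributes at most $6$ blocks to $\rset_2(X)$: namely, the (unique) smallest block $B_x\in\fset(X)$ with $x_{B_x}=x$, together with its at most five immediate ancestors in $\tset(X)$ (there may be fewer if $B_x$ is close to the root). Here I rely on the earlier observation that the blocks $B\in\fset(X)$ with $x_B=x$ form a totally nested chain, so ``smallest'' is well defined. Therefore
\[
|\rset_2(X)| \;\leq\; 6\,|\sset_X|.
\]

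Summing over all 2-connected components $X\in\cset$ and invoking inequality~(\ref{eq: bound SX}), which gives $\sum_{X\in\cset}|\sset_X|=O(|E^*|)$, I obtain
\[
\sum_{X\in\cset}|\rset_2(X)| \;\leq\; 6\sum_{X\in\cset}|\sset_X| \;=\; O(|E^*|),
\]
which is exactly the claim.

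There is essentially no obstacle: the bound is purely a bookkeeping argument from the definition of $\rset_2(X)$. The only non-trivial input is that the blocks with a common connector vertex $x_B=x$ form a chain in $\tset(X)$, which follows from the laminarity of $\fset(X)$ together with the earlier convention that a connector vertex appears contiguously along any root-to-leaf path in $\tset(X)$. Once that is used, taking the bottommost such block plus five ancestors yields the factor-$6$ bound immediately.
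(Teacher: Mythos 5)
Your proof is correct and follows essentially the same route as the paper's: charge each block in $\rset_2(X)$ to the vertex $x\in\sset_X$ on whose account it was inserted, observe that each such $x$ accounts for at most $6$ blocks (the bottommost block with connector $x$ plus its five immediate ancestors), so $|\rset_2(X)|\leq 6|\sset_X|$, and then sum over $X\in\cset$ using inequality~(\ref{eq: bound SX}). Your phrasing of the charging is, if anything, a bit more careful than the paper's (the paper says it charges $B$ to $x_B$, which for the ancestor blocks need not be the vertex that triggered their inclusion; your version charges to the triggering vertex directly, which is the cleaner way to make the factor-$6$ bound immediate).
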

\begin{proof}
For each block $B\in\rset_2(X)$, we charge the connector vertex $x_B$ for $B$. 
By the definition of $\rset_2(X)$, each connector vertex pays for at most $6$ blocks.
Therefore, $\sum_X |\rset_2(X)| \leq \sum_X O(|{\cal C}_X|) = O(|E^*|)$.
\ifabstract \qed \fi \end{proof}

\paragraph{Set $\mathbf{\boldsymbol{\rset}_3(X)}$:} 
Note that the blocks of $\fset(X)$ that do not belong to $\rset_1(X) \cup \rset_2(X)$ 
all have degree exactly 2 in $\tset(X)$, and therefore the sub-graph of $\tset(X)$ induced by such blocks is simply a collection of disjoint paths. 
Consider some block $B\in \fset(X)\setminus (\rset_1(X)\cup \rset_2(X))$.
It has exactly one child in $\tset(X)$, that we denote by $B'$.
Let $u$ and $v$ be the endpoints of $B$, and let $u'$ and $v'$ be the endpoints of $B'$.
Consider the graph $\tilde B'$ obtained from $B$ by first replacing $B'$ with 
an artificial edge $(u',v')$ and then by adding a new artificial edge $(u,v)$.
By Theorem \ref{thm: block decomposition}, the graph $\tilde B'$ is $3$-vertex connected.
Therefore, it has a unique planar drawing $\pi_{\tilde{B}'}$.
We add $B$ to $\rset_3(X)$ iff the four vertices $u,v,u',v'$ {\bf do not lie} 
on the boundary of the same face in this drawing.

\begin{lemma}\label{lemma:r3}
 $\sum_{X\in \cset}\left(|\rset_3(X)|\right) = O(\cro_{\bphi}(\G)).$
\end{lemma}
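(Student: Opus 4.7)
The plan is to charge each block $B\in\rset_3(X)$ to a distinct crossing in the optimal drawing $\bphi$ of $\G$ via a non-planarity argument that exploits the defining face condition of $\rset_3(X)$ together with Whitney's theorem and the $3$-connectivity of $\G$.

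First, I will observe a non-planarity fact at the combinatorial level. For $B\in\rset_3(X)$ with unique child $B'$ and endpoints $u,v,u',v'$, define $\tilde{B}''$ as the graph obtained from $\tilde{B}'$ by subdividing the artificial edges $(u,v)$ and $(u',v')$ at new degree-$2$ vertices $m$ and $m'$ respectively, and then adding the edge $(m,m')$. By Theorem~\ref{thm: block decomposition}, $\tilde{B}'$ is $3$-connected, so by Theorem~\ref{thm:Whitney} it has a unique planar embedding $\pi$; the subdivision preserves this embedding in the obvious way. The defining condition for $B\in\rset_3(X)$ states exactly that $(u,v)$ and $(u',v')$ do not lie on the boundary of a common face of $\pi$, which means $m$ and $m'$ lie in no common face of the embedding of subdivided $\tilde{B}'$. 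Hence $\tilde{B}''$ is non-planar.

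Second, I will realize $\tilde{B}''$ as a subdivision inside $\G$. The non-artificial edges of $\tilde{B}'$ are exactly the edges of $B\setminus B'$; the subdivision of $(u,v)$ at $m$ is realized by the path $P_{out}^B\subseteq X\setminus B$ with the inner vertex $y_B$ playing the role of $m$; the subdivision of $(u',v')$ at $m'$ is realized by a simple path $R$ in $B'$ from $u'$ to $v'$ passing through the connector vertex $x_{B'}$, with $x_{B'}$ playing the role of $m'$; and the new edge $(m,m')$ is realized by a path $Q_B$ in $\G$ from $x_{B'}$ to $y_B$ that is internally disjoint from $V(B)\cup V(P_{out}^B)\cup V(R)$. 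Existence of $Q_B$ follows from the $3$-connectivity of $\G$ (so that $\G\setminus\{u,v\}$ is connected) together with the structural machinery of Step 1: the path $P_0^{B'}$ already connects $x_{B'}$ to an inner vertex $y_{B'}$ of $P_{out}^{B'}\supseteq P_{out}^B$, and by re-routing $P_{out}^B$ through $y_{B'}$ when possible, or otherwise extending $P_0^{B'}$ using an edge of $E^*$ exiting $V(B)$ (which exists because $B$ is not in $\rset_1(X)\cup\rset_2(X)$ and $\G$ is $3$-connected), one obtains a suitable $Q_B$. Since $\tilde{B}''$ is non-planar, the induced drawing of this subdivision in $\bphi$ must contain at least one crossing between edges of $E(B\setminus B')\cup E(P_{out}^B)\cup E(R)\cup E(Q_B)$.

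Third, I need to argue that each crossing in $\bphi$ serves as the witness for at most $O(1)$ blocks across all $X\in\cset$. This is the main obstacle: for nested blocks $B_1\supseteq B_2$ both in $\rset_3(X)$, the associated subdivision edge sets overlap substantially, since $P_{out}^{B_1}\subseteq P_{out}^{B_2}$ and $V(B_2)\subseteq V(B_1')$, so a single crossing deep inside a descendant could naively be charged to an entire chain of ancestors. To resolve this I will refine Step 2 so that the distinguished witness crossing for $B$ necessarily involves at least one edge of the annular layer $V(B)\setminus V(B')$, specifically an edge of $B\setminus B'$ or an edge of $Q_B$ incident to this annular layer. Since these annular layers are pairwise disjoint across nested blocks in $\rset_3(X)$ by the laminar property of $\fset(X)$, and trivially disjoint across distinct $X\in\cset$, each crossing in $\bphi$ will be charged by at most a constant number of blocks. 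The refinement is based on a closer analysis of the unique embedding $\pi$: since any curve from $m$ to $m'$ in the embedding of subdivided $\tilde{B}'$ must cross an edge on the boundary of the face containing $m$, and that face is incident to an edge of $B\setminus B'$ (as $B'$ itself is collapsed to an artificial edge not in this face), the witness crossing can always be chosen to involve such an edge. Summing over all $X\in\cset$ then yields $\sum_{X\in\cset}|\rset_3(X)|=O(\cro_{\bphi}(\G))$.
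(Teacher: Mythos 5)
Your overall strategy---constructing a non-planar graph $\tilde{B}''$ and pointing to a crossing in its induced drawing in $\bphi$---differs from the paper's approach, which goes in the contrapositive direction: the paper defines $\hat{E}_B = E(B)\setminus E(B_5)$ using a chain of \emph{five} nested children, and shows that if no edge of $\hat{E}_B$ participates in a crossing in $\bphi$, then one can assemble a planar drawing of $\tilde{B}'$ with $u,v,u',v'$ on a common face, contradicting $B\in\rset_3(X)$. The five-deep chain is what guarantees (via the complex-block argument and the Fan Lemma) that suitable replacement paths $P'_{in}$, $P'_0$ exist entirely within $\hat{E}_B$ plus paths outside $B$, and it is also what makes the charging trivial: the witness edge lies inside $B$, and each edge of $\G$ belongs to at most five sets $\hat{E}_{B_j}$.

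The critical gap in your proposal is the charging step (third paragraph). You correctly identify that the edge sets of your realized subdivision of $\tilde{B}''$ overlap badly across nested blocks---in particular $P_{out}^{B_1}\subseteq P_{out}^{B_2}\subseteq\cdots$ by construction, and $R\subseteq B'$ is shared with deeper realizations. Your fix is to claim that the witness crossing ``can always be chosen to involve'' an edge of the annular layer $E(B)\setminus E(B')$. But the only justification offered reasons about the \emph{abstract} planar embedding $\pi$ of $\tilde{B}'$ (``any curve from $m$ to $m'$ in the embedding of subdivided $\tilde{B}'$ must cross the boundary of the face containing $m$''). That statement concerns the clean planar embedding $\pi$, not the drawing induced by $\bphi$, and it does not transfer. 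The non-planarity of the subdivision only tells you there is \emph{some} crossing among the edges realizing $\tilde{B}''$; nothing prevents all such crossings from lying on $P_{out}^B$, on $R\subseteq B'$, or on $Q_B$---none of which are in the annular layer. Indeed $Q_B$ is required to be internally disjoint from $V(B)$, so it has no edge incident to $V(B)\setminus V(B')$ at all (its only contact with $B$ is the endpoint $x_{B'}\in V(B')$), making the phrase ``an edge of $Q_B$ incident to this annular layer'' vacuous. A single crossing between two edges of $P_{out}^{B_1}$ could be the only crossing in the induced drawings of $\tilde{B}_i''$ for \emph{every} block $B_i$ in a long chain, and your argument would charge it to all of them. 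This is precisely the phenomenon the paper's proof is engineered to avoid by certifying a crossing \emph{inside} $B$ rather than merely inside the realized subdivision.

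A secondary, lesser issue: the existence of the path $Q_B$ with the required disjointness properties is asserted rather than proved. The phrases ``by re-routing $P_{out}^B$ through $y_{B'}$ when possible'' and ``extending $P_0^{B'}$ using an edge of $E^*$ exiting $V(B)$'' are not backed by an argument, and the premise that $B\notin\rset_1(X)\cup\rset_2(X)$ implies an $E^*$-edge exits $V(B)$ does not obviously follow from the definitions of $\rset_1$ and $\rset_2$. Even setting aside the charging gap, Step 2 would need a careful construction along the lines of the paper's use of complex blocks and the Fan Lemma.
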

\begin{proof}
Consider some block $B\in \rset_3(X)$. Denote $B_0 = B$, and for $i=1,\dots,5$,
let $B_{i}$ be the child of $B_{i-1}$ in $\tset(X)$. For each $i: 1\leq i\leq 5$, let $(u_i,v_i)$ denote the endpoints of the block $B_i$.
Since when we added a block to $\rset_1(X)$ or $\rset_2(X)$, we also added 
five its immediate ancestors to $\rset_1(X)$ or $\rset_2(X)$, respectively, each of the blocks $B_i$, for $0\leq i\leq 5$,
has a unique child, and moreover, for $i=1,\ldots,5$, $x_{B_i} = x_{B}$ and $P_0^{B_i}=P_0^B$.
Let $\hat{E}_B$ denote the edges of $B$ that do not belong to $B_5$, that is, $\hat{E}_B=E(B)\setminus E(B_5)$. We will show that for each $B\in\rset_3$, there is at least one crossing in $\bphi$, in which the edges of $\hat{E}_B$ participate. Since every edge may belong to at most $5$ such sets $\hat{E}_B$, it will follow that $|\rset_3(X)|\leq O(\cro_{\bphi}(X,\G))$, and $\sum_{X\in \cset}\left(|\rset_3(X)|\right) = O(\cro_{\bphi}(\G))$.
Therefore, it now only remains to show that for each block $B\in \rset_3(X)$, the edges of $\hat{E}_B$ participate in at least one crossing in $\bphi$. Assume for contradiction that this is not true, and let $B$ be the violating block. We will show that we can find a planar drawing of $\tilde{B}'$, in which the vertices $(u,v,u_1,v_1)$ all lie on the boundary of the same face, contradicting the fact that $B\in\rset_3(X)$.

We denote by $B^*$ the graph obtained from $B$ after we remove all inner vertices of $B_1$ and their adjacent edges from it. Notice that all edges of $B^*$ belong to $\hat{E}_B$. We also denote $x_B=x, y_B=y$ and $P_0^B=P_0$. Recall that for all $1\leq i\leq 5$, $x_i=x,y_i=y$ and $P_0^{B_i}=P_0$.
Recall that by definition, $x$ is an {\bf inner} vertex on $P_{in}^{B_i}$ for all $1\leq i\leq 5$, and $y$ is an {\bf inner} vertex on $P_{out}^{B}$.

We start with a high-level intuition for the proof. Let $P_{in}=P^{B_1}_{in}\sse B_1$, and assume for now that $P_{in}$ only contains the edges of $\hat{E}_B$ (this is not necessarily true in general). Observe that $P_{in}$ contains no edges of $B\setminus B_1$. Therefore, the sets $E(B^*),E(P_{in}),E(P_0)$ and $E(P_{out}^B)$ of edges are completely disjoint. 
Consider the drawing $\bphi$ of $\G$, and erase from it all edges and vertices, except those participating in $B^*$, $P_{in},P_0$ and $P_{out}^B$. Let $\phi'$ be the resulting drawing. For convenience, we call the edges of $\hat{E}_B$ \emph{blue edges}, and the remaining edges \emph{red edges}. By our assumption, the blue edges do not participate in any crossings. Since we have assumed that $P_{in}$ only consists of blue edges, all crossings in $\phi'$ are between the edges of $P_0$, $P_{1,out}^B$ and $P_{2,out}^B$. All these three paths share a common endpoint, $y$, and they are completely disjoint otherwise. Therefore, we can uncross their drawings in $\phi'$, and obtain a planar drawing $\phi''$ of $B^*\cup P_{in}\cup P^B_{out}\cup P_0$. Erase the drawing of $P_0$ from $\phi''$, and replace the drawings of paths $P_{out}^B$ and $P_{in}$ by drawings of edges $e: u\connect v$, $e': u_1\connect v_1$, respectively, to obtain a planar drawing $\pi'$ of $\tilde{B}'$. 
Note that in $\pi'$, the drawings of edges $(u,v)$ and $(u_1, v_1)$ (and therefore
their endpoints) lie on the boundary of one face, since the drawing of the path $P_0$ in $\phi''$ connects {\bf internal} points of edges
$(u_1,v_1)$ and $(u,v)$ and does not cross the images of any edges.  Therefore, we have found a planar drawing of $\tilde{B}'$, in which the vertices $u,v,u',v'$ lie on the boundary of the same face, contradicting the fact that $B\in \rset_3(X)$. The only problem with this approach is that $P_{in}$ does not necessarily only consist of edges of $\hat{E}_B\setminus E(B^*)$. We overcome this by finding a new path $P'_{in}:v\connect u$ that only contains edges of $\hat{E}_B$ but no edges of $B^*$, and another path $P'_0$ connecting an inner vertex $x'$ of $P'_{in}$ to the vertex $y$. 
If we ensure that (1) $P'_{in}:v\connect u$ only contains edges of $\hat{E}_B$ but no edges of $B^*$; (2) path $P'_0:x'\connect y$ connects an inner vertex $x'$ of $P'_{in}$ to $y$ and contains no edges of $B^*$; and (3) The paths $P'_{in},P'_0$ and $P_{out}^B$ are completely disjoint, except for possibly sharing endpoints, then we can again apply the above argument, while replacing the path $P'_{in}$ with $P_{in}$, and path $P_0$ with $P'_0$. We now provide the formal proof. 
We first note that at least one of the four blocks $B_1,B_2,B_3,B_4$ is complex. 
Indeed, by Claim~\ref{cor:regular-block-in-a-chain} it suffices to show that $V(B_1)\setminus V(B_5)$ does not contain a vertex $w$ whose degree is $2$ in $X$. Note that if $w\in V(B_1)\setminus V(B_5)$ and the degree of $w$ in $X$ is $2$, then $w\in\sset_X$. This is since $\G$ is $3$-connected, and so all degree-$2$ vertices in $X$ must either be incident on an edge of $E^*$, or belong to $S_1$. Therefore, one of the blocks $B_1,\ldots,B_4$ must have been added to $\rset_1(X)$, together with its five immediate ancestors. 

We finally show that since one of the blocks $B_i$, for $1\leq i\leq 4$, is complex, we can find the planar drawing of $\tilde{B}'$ in which $u,v,u_1,v_1$ lie on the same face, thus leading to contradiction.

\begin{claim}
If at least one of the blocks $B_i$, for $1\leq i\leq 4$ is complex, then there is a planar drawing of $\tilde{B}'$, in which $u,v,u_1,v_1$ all lie on the boundary of the same face.
\end{claim}

\begin{proof} 
Let $B_i$ be the first complex block among $B_1$, $B_2$, $B_3$ and $B_4$. Notice that since $B_i$ has only one child in $\tset(X)$, it must contain at least one inner vertex.
Choose an arbitrary inner vertex $x'$ of $\tilde B_i$. Since $B_i$ is complex, 
there are three paths $Q_1:x'\connect u_i$, $Q_2:x'\connect v_i$,
and $Q_3:x'\connect w$, as in Claim~\ref{claim: irregular}. We assume w.l.o.g, that $w = u_{i+1}$.
We extend paths $Q_1$ and $Q_2$ to paths $Q_1'$ and $Q_2'$, connecting $x'$ to vertices $u_1$ and $v_1$, as follows.
Since $X$ is 2-connected, there are two vertex disjoint paths connecting $\set{u_i,v_i}$ to $\set{u_1,v_1}$ in $B_1$. We assume w.l.o.g. that these paths are $\Delta_1: u_i\connect u_1$ and $\Delta_2: v_i\connect v_1$.
We append these paths to $Q_1$ and $Q_2$, obtaining the desired paths $Q_1': x'\connect u_1$ and $Q_2': x'\connect v_1$. 
Finally, we define paths $P'_{in}$ and $P'_0$, as follows. 
Let $P'_{in}:u_1\connect v_1$ be the union of paths $Q_1':x'\connect u_1$ and $Q_2':x'\connect v_1$.
Let $P'_0:x'\connect y_B$ be the union of paths $Q_3:x'\to u_{i+1}$,
$P^{B_i}_{1, in}:u_{i+1} \connect x$ and $P_0^B:x \connect y$ (see Figure~\ref{fig:b3-paths}).
Observe that $x'$ is indeed an inner vertex of $P_{in}'$, so $P_0'$ connects an inner vertex of $P_{in}'$ to an inner vertex of $P_{out}^B$, as required.

\begin{figure}
\begin{center}
\scalebox{0.7}{\includegraphics{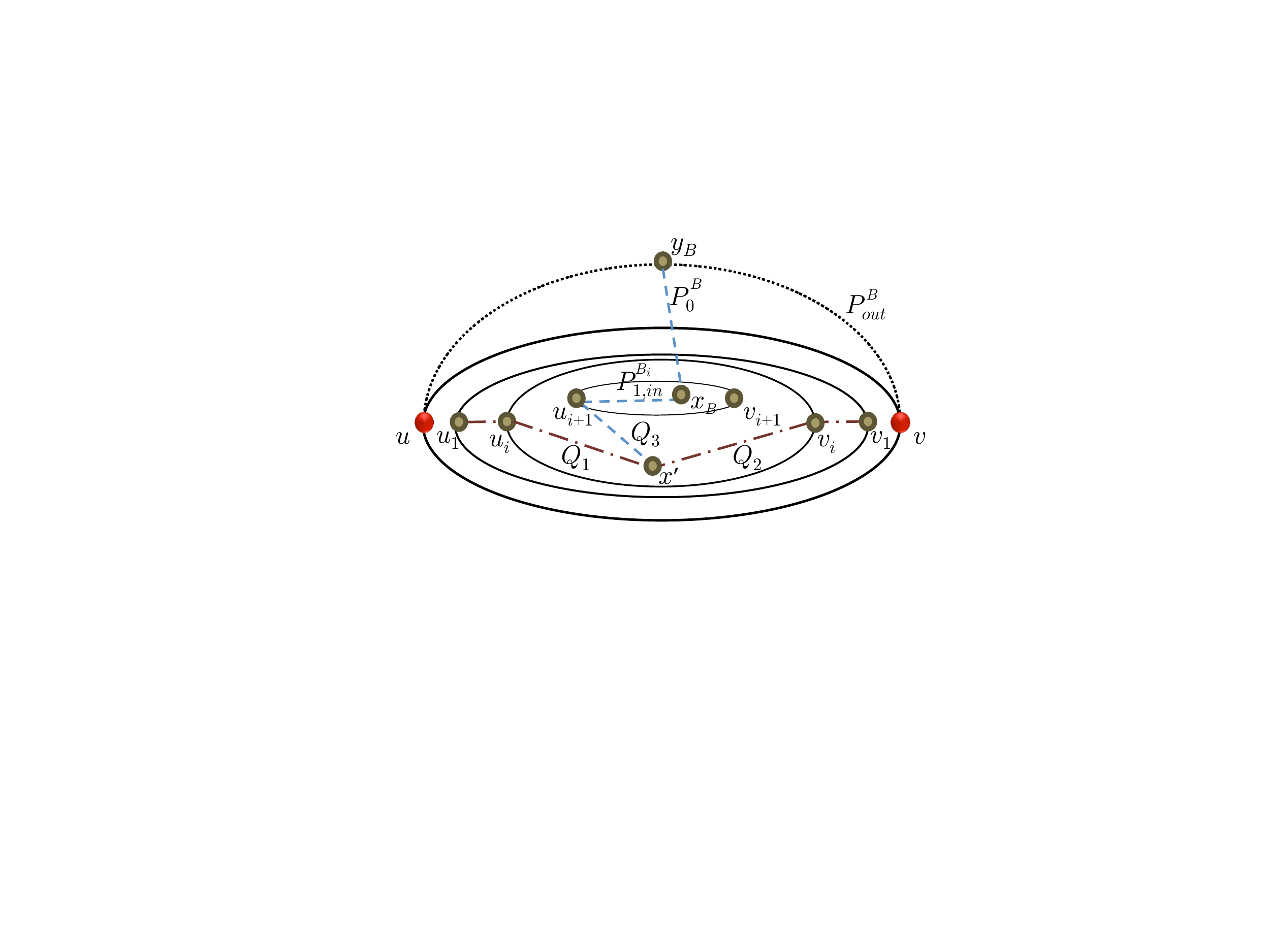}}
\caption{Paths $Q_1$, $Q_2$ (and their extensions $Q_1'$ and $Q_2'$), and $Q_3$. Recall that path $P_{in}'=(Q_1',Q_2')$, and path $P'_0=(Q_3,P_{1,in}^{B_i},P_0)$.
\label{fig:b3-paths}}
\end{center}
\end{figure}

We now verify that paths $P'_{in}$ and $P'_0$ satisfy other required conditions.
First, $P'_{in}$ only contains edges of $\hat{E}_B$ but no edges of $B^*$, since all paths $Q_1$, $Q_1'$, $Q_2$, $Q_2'$
lie in $B_1$ but do not contain edges of $B_{i+1} \supseteq B_5$. 
Next, path  $P_0':x'\connect y_B$ does not contain edges of $B^*$, since it
is the concatenation of the path $Q_3\sse B_i \subseteq B_1$,
the path $P^{B_{i}}_{1,in}\sse B_i \subseteq B_1$ and the path $P_0$, that does not contain edges of $B$. 
It is straightforward to verify that paths $P'_{in}$, $P_0'$, and $P^B_{out}$ 
share no vertices except for $y$ and $x'$. Therefore, the sets $E(B^*),E(P_0'),E(P_{in}')$ and $E(P_{out}^B)$ of edges are completely disjoint, as required. 

We now consider the drawing $\phi'$ obtained from $\bphi$, after we remove all edges and vertices, except those participating in $B^*$, $P_{out}^B,P'_{in}$ and $P'_0$. We call the edges of $\hat{E}_B$ blue, and the remaining edges red. Then $P'_{in}$ only consists of blue edges, but it does not contain edges of $B^*$. Since in the resulting drawing, $\phi'$, no blue edges participate in crossings, the only crossings involve paths $P_{1,out}^B,P_{2,out}^B$ and $P'_0$. As before, we can uncross them and obtain a planar drawing $\phi''$, which gives a planar drawing $\pi'$ of $\tilde{B}'$, in which the vertices $u,v,u_1,v_1$ all lie on the same face.
\ifabstract \qed \fi \end{proof}

\ifabstract \qed \fi \end{proof}

Let $\rset(X)=\rset_1(X)\cup \rset_2(X)\cup \rset_2(X)$, and let
$\rset'(X)$ be the set of all blocks $B\in \fset(X)$, whose parent belongs to $\rset(X)$.
Since all leaves of tree $\tset(X)$ belong to $\rset_1(X)$, it is easy to see that $|\rset'(X)|\leq |\rset_1(X)|$.
Therefore, we get the following corollary:

\begin{corollary}\label{corollary: sizes of R's}
$$\sum_{X\in\cset}( |\rset(X)|+|\rset'(X)|)\leq O(\cro_{\bphi}(\G) + |E^*|).$$
\end{corollary}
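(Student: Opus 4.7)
The plan is to combine the three bounds already established for $|\rset_1(X)|$, $|\rset_2(X)|$, and $|\rset_3(X)|$ with an elementary tree-counting argument that will control $|\rset'(X)|$ in terms of $|\rset(X)|$. First I would observe that since $\rset(X) = \rset_1(X) \cup \rset_2(X) \cup \rset_3(X)$, we trivially have $|\rset(X)| \leq |\rset_1(X)| + |\rset_2(X)| + |\rset_3(X)|$; summing over $X \in \cset$ and invoking the three claims already proved (two of which contribute $O(|E^*|)$, and Lemma~\ref{lemma:r3} which contributes $O(\cro_{\bphi}(\G))$) yields $\sum_{X} |\rset(X)| = O(|E^*| + \cro_{\bphi}(\G))$. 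This step is pure bookkeeping.

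Next I would bound $|\rset'(X)|$. Since every block $B \in \rset'(X)$ has a unique parent in $\tset(X)$ lying in $\rset(X)$, the quantity $|\rset'(X)|$ equals the total number of children (in $\tset(X)$) of the blocks in $\rset(X)$. By construction, every block whose $\tset(X)$-degree differs from $2$ is placed into $\rset_1(X)$, so blocks in $\rset_2(X) \cup \rset_3(X)$ all have degree exactly $2$ in $\tset(X)$ and therefore exactly one child each, contributing at most $|\rset_2(X)| + |\rset_3(X)|$ in total. For $\rset_1(X)$ I would pass to the skeleton tree $\tset^s(X)$, obtained from $\tset(X)$ by contracting every degree-$2$ vertex that does not belong to $\rset_1(X)$. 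Because $\rset_1(X)$ contains the root, every leaf, and every vertex of degree at least $3$, the vertex set of $\tset^s(X)$ is precisely $\rset_1(X)$, and the contraction of degree-$2$ non-$\rset_1$ vertices preserves, for each surviving vertex, its number of children (a child $c$ of $v$ in $\tset(X)$ corresponds bijectively to the first surviving descendant along the chain from $c$). Hence the total number of children of blocks in $\rset_1(X)$ in $\tset(X)$ equals the number of edges of $\tset^s(X)$, namely $|\rset_1(X)| - 1$. Adding the two contributions gives $|\rset'(X)| \leq (|\rset_1(X)| - 1) + |\rset_2(X)| + |\rset_3(X)| \leq |\rset(X)|$.

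Putting everything together, $\sum_{X \in \cset}(|\rset(X)| + |\rset'(X)|) \leq 2\sum_{X \in \cset} |\rset(X)| = O(|E^*| + \cro_{\bphi}(\G))$, as desired. The only non-mechanical part of this proof is the skeleton argument, and the main thing to verify there is that contracting degree-$2$ non-$\rset_1$ vertices does not alter the child-count at any surviving vertex; everything else is a direct summation of previously proved bounds.
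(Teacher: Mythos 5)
Your proof is correct, and it follows essentially the same outline as the paper's: sum the three already-proved bounds to control $\sum_X |\rset(X)|$, then count the children of $\rset(X)$-blocks in $\tset(X)$ to control $\sum_X |\rset'(X)|$. Where you differ is in how carefully that second count is carried out. The paper disposes of it in one line by asserting $|\rset'(X)|\leq|\rset_1(X)|$; your skeleton-tree argument, which separates the $\rset_1(X)$-contribution (exactly $|\rset_1(X)|-1$ children, since after contracting non-$\rset_1$ degree-$2$ nodes the surviving vertices are exactly $\rset_1(X)$, and contraction preserves child-counts) from the $\rset_2(X)\cup\rset_3(X)$-contribution (one child per block), is the more transparent and more defensible version of that step, and it is the one I would keep.

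One small slip, though: from $|\rset'(X)|\leq(|\rset_1(X)|-1)+|\rset_2(X)|+|\rset_3(X)|$ you conclude $|\rset'(X)|\leq|\rset(X)|$ and then write $\sum_X(|\rset(X)|+|\rset'(X)|)\leq 2\sum_X|\rset(X)|$. The inequality $(|\rset_1(X)|-1)+|\rset_2(X)|+|\rset_3(X)|\leq|\rset(X)|$ does not hold in general, since $\rset(X)$ is a \emph{union} and the three sets may overlap (indeed a block in $\rset_2$ or $\rset_3$ can also be in $\rset_1$ via the five-ancestor rule). The fix is immediate and does not change the conclusion: simply sum the three-term bound directly, $\sum_X|\rset'(X)|\leq\sum_X\bigl(|\rset_1(X)|+|\rset_2(X)|+|\rset_3(X)|\bigr)=O(\cro_{\bphi}(\G)+|E^*|)$, and do the same for $\sum_X|\rset(X)|$, which you already observed. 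So the final bound stands; just drop the unjustified comparison to $|\rset(X)|$.
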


By Theorem~\ref{thm: block decomposition}, every vertex in $S_2(X)$ is an endpoint of 
a block in $\fset(X)$, or it has degree 2 in $X$. 
Let $\tilde{S}_2(X)\sse S_2(X)$ denote the set of vertices of $S_2(X)$ that either have degree $2$ in $X$, or serve as endpoints of blocks in $\rset(X)\cup \rset'(X)$, and let $S'_2(X)=S_2(X)\setminus \tilde{S}_2(X)$. Additionally, let $\tilde{S}_2=\bigcup_{X\in\cset}\tilde{S}_2(X)$, and $S'_2=S_2\setminus\tilde{S}_2$. Since, as we already observed, vertices that have degree $2$ in $X$ belong to $\sset_X$, we have that: 
\ifabstract
\begin{align*}
|\tilde{S}_2| &\leq \sum_{X\in \cset}(2|\rset(X)|+2|\rset'(X)|+|\sset_X|)\\
&\leq O(\cro_{\bphi}(\G) + |E^*|).
\end{align*}
\fi\iffull
$$
|\tilde{S}_2| \leq \sum_{X\in \cset}(2|\rset(X)|+2|\rset'(X)|+|\sset_X|) \leq O(\cro_{\bphi}(\G) + |E^*|).
$$
\fi%
We let $\tilde{E}_2(X)\sse E_2(X)$ denote the edges of $E_2(X)$ that have at least one endpoint in $\tilde{S}_2(X)$, and $E'_2(X)=E_2(X)\setminus \tilde{E}_2(X)$. Additionally, let $\tilde{E}_2=\bigcup_{X\in\cset}\tilde{E}_2(X)$, and $E'_2=E_2\setminus \tilde{E}'_2$. Clearly,
\[|\tilde{E}_2|\leq \dmax|\tilde{S}_2|\leq O(\dmax)(\cro_{\bphi}(\G) + |E^*|).\]

It now only remains to bound the number of irregular vertices in set $S'_2$, and the number of irregular edges in set $E'_2$. From our definitions,  for each $X\in \cset$, for each $v\in S'_2(X)$, there is a block $B\in \fset(X)\setminus (\rset(X)\cup \rset'(X))$, such that $v$ is an endpoint of $B$. Moreover, for each $e\in E'_2(X)$, both endpoints of $e$ belong to $S_2'(X)$.

\subsubsection*{Step 3: Taking care of tunnels} 
We now consider blocks of $\fset(X)\setminus\rset(X)$.
The degree of each such block in $\tset(X)$ is $2$.
A \emph{tunnel} $Z$ is a maximal path in $\tset(X)$ containing blocks in 
$\fset(X)\setminus \rset(X)$.
Let $\zset(X)$ denote the set of all such tunnels in $\tset(X)$, and let $\zset=\bigcup_{X\in\cset}\zset(X)$. Notice that each pair of tunnels is completely 
disjoint in the tree $\tset(X)$ (but their blocks may share vertices: if the first block of one of the tunnels is a descendant of the last block of another in $\tset(X)$, then the blocks are nested; also, the first blocks of two tunnels can share endpoints).

The parent of the first block (closest to the root of ${\cal T}(X)$) in a tunnel 
belongs to $\rset(X)$. 
Therefore, by Corollary~\ref{corollary: sizes of R's}, the total number of tunnels is at most
\begin{equation}
|\zset|\leq\sum_{X\in \cset} |\rset'(X)| = O(\cro_{\bphi}(\G) + |E^*|). \label{eq:num of tunnels}
\end{equation}

Consider some tunnel $Z=B_1\supset\cdots\supset B_{\kappa}$. Denote the endpoints of the block $B_i$ by $(u_i,v_i)$, for $1\leq i\leq 
\kappa$.
Let $B'\sse B_{\kappa}$ be the unique child of block $B_{\kappa}$ in $\tset(X)$, and denote its endpoints by $(u',v')$. Since a tunnel consists of consecutive blocks in ${\cal T}(X)$, none of which
are in $\rset_2(X)$, all blocks in the tunnel have the same connector vertex.
Denote $x = x_{B_1}$, $y=y_{B_1}$, $P_0 = P_0^{B_1}$, and recall that for all $1\leq i\leq \kappa$, $x_{B_i}=x$, $y_{B_i}=y$, and $P_0^{B_i}=P_0$. Let
$P_{in}=P_{in}^{B'}$ and $P_{out} = P_{out}^{B_1}$.
Note that $x$ is an inner vertex of $P_{in}$, and $y$ is an inner vertex of $P_{out}$.
All three paths $P_0:x\connect y$, $P_{in}:u'\connect v'$ and $P_{out}:u\connect v$ share no vertices except for $x$ and $y$.

We define two auxiliary graphs corresponding to the tunnel $Z$. First, we remove all inner vertices of $B'$ from $B_1$, to obtain the graph $\H_Z$. 
We then add paths $P_0$, $P_{out}$, $P_{in}$ to $\H_Z$, contracting all degree-$2$ vertices
in the subgraph $P_0 \cup P_{out} \cup P_{in}$, to obtain the graph $J_Z$.
Therefore, the paths $P_0$, $P_{out}$ and $P_{in}$ are represented by $5$ edges in $J_Z$ (see Figure \ref{fig:tunnel}).
We call these edges \textit{artificial edges}. 


Observe that $\psi_{init}$ induces a planar drawing $\psi_Z$ of the graph $\H_Z\cup P_{in}\cup P_{out}$. However, in this drawing, we are not guaranteed that the vertices $(v_1,v_2,\ldots,v_{\kappa},v',u',u_{\kappa},\ldots,u_1)$ all lie on the boundary of the same face. Our next goal is to change the drawing $\psi_Z$ to ensure that all these vertices lie on the boundary of the same face. We can then extend this drawing to obtain a planar drawing of $J_Z$. Combining the final drawings $\bpsi_Z$ for all tunnels $Z$ will give the final drawing $\bpsi$ of the whole graph.

We start with the drawing $\psi_Z$ of $\H_Z\cup P_{in}\cup P_{out}$, induced by $\psi_{init}$. We then perform $\kappa$ iterations. In iteration $i: 1\leq i\leq \kappa$, we ensure that  all vertices in $(v_1,v_2,\ldots,v_{i+1},u_{i+1},\dots,u_1)$ lie on the boundary of the same face. We refer to this face as the outer face. For convenience, we denote $v'$ and $u'$ by $v_{\kappa+1}$ and $u_{\kappa+1}$, respectively.

Consider some iteration $i: 1\leq i\leq\kappa$, and assume that we are given a current drawing $\psi_Z$ of $H_Z\cup P_{in}\cup P_{out}$, in which the vertices in 
$(v_1,v_2,\ldots,v_{i},u_{i},\dots,u_1)$ lie on the boundary $\gamma$ of the outer face $F_{out}$ of the drawing. Let $\psi_i$ be the drawing, induced by $\psi_Z$, of the graph $B_i\cup \gamma$. Let $\psi_{i}'$ be the drawing obtained from $\psi_i$ after we replace $B_{i+1}$ with a single edge. Notice that $(u_i,v_i)$ both lie on $\gamma$, so we can view $\gamma$ as the drawing of the path $P_{out}^{B_i}$. Recall that in the unique planar drawing $\pi_{\tilde{B}_i'}$ of $\tilde{B}_i'$, the four vertices $u_i,v_i,u_{i+1},v_{i+1}$ all lie on the boundary of the same face. In particular, there is a cycle $C_i\sse B_i$, such that $u_i,v_i,u_{i+1},v_{i+1}\in C_i$, and if $\gamma_i$ denotes the drawing of $C_i$ given by $\pi_{\tilde{B}_i'}$, then all edges and vertices of $B_i\setminus C_i$ are drawn inside $\gamma_i$. Let $C_i'$, $C_i''$ be the two segments connecting $u_i$ to $v_i$ in $C_i$. Notice that both $u_{i+1}$ and $v_{i+1}$ must belong to the same segment, since otherwise, the ordering of the four vertices along $C_i$ is  either $(v_i,v_{i+1},u_{i},u_{i+1})$, or $(u_i,v_{i+1},v_{i},u_{i+1})$, and the images of the artificial edges $(u_i,v_i)$ and $(u_{i+1},v_{i+1})$ would cross in $\pi_{\tilde B'_i}$. Assume w.l.o.g. that $u_{i+1},v_{i+1}\in C_i'$
 We have three possibilities. The first possibility is that the vertices $u_{i+1},v_{i+1}$ belong to $\gamma$ -- in this case we do nothing. The second possibility is that the segment $C_i''\sse \gamma$. In this case we can ``flip'' the drawing of $B_i$, so that now $C_i'$ lies on the boundary of the outer face of the drawing of $\H_Z$, thus ensuring that all vertices  $(v_1,v_2,\ldots,v_{i+1},u_{i+1},\dots,u_1)$ lie on the boundary of the outer face.
The third possibility is that there is an edge $e=(u_i,v_i)$ that belongs to $\gamma$. In this case, we ``flip'' the image of the edge $e$ (possibly together with the image of $B_i$), so that $C'_i$ becomes the part of the boundary of the outer face (see Figure~\ref{fig: flipping one round}).

\begin{figure}
\begin{center}
\ifabstract
\scalebox{0.4}{\includegraphics{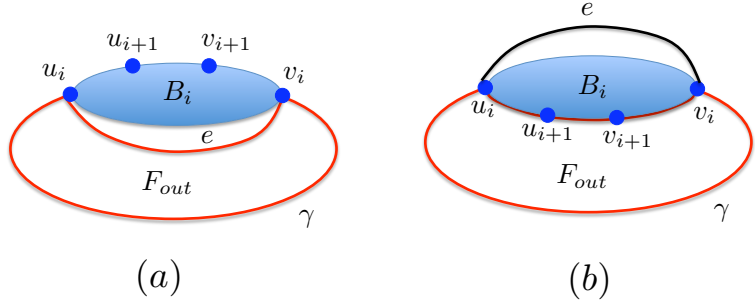}}
\fi\iffull
\scalebox{0.5}{\includegraphics{flip-one-round-cut.pdf}}
\fi
\caption{Iteration $i$.\label{fig: flipping one round}}
\end{center}
\end{figure}

Let $\bpsi_Z$ be this new embedding of the graph $\H_Z$. 
Since different tunnels are completely disjoint (except that it is possible that the last block of one tunnel contains the first block of another), we can perform this operation independently for each tunnel $Z\in \zset(X)$, for all $X\in \cset$ and the resulting planar 
embedding $\bpsi$ is our final planar embedding of $\H$. 
Notice that for every tunnel $Z$, we can naturally extend $\bpsi_Z$ to a planar embedding $\bpsi(J_Z)$ of $J_Z$, by adding a planar drawing of the $5$ artificial edges of $J_Z$ inside the face on whose boundary the vertices $u_1,u_2,\ldots,u_{\kappa},v_{\kappa},\ldots,v_1$ lie.

\begin{figure}
\begin{center}
\ifabstract
\scalebox{0.47}{\includegraphics{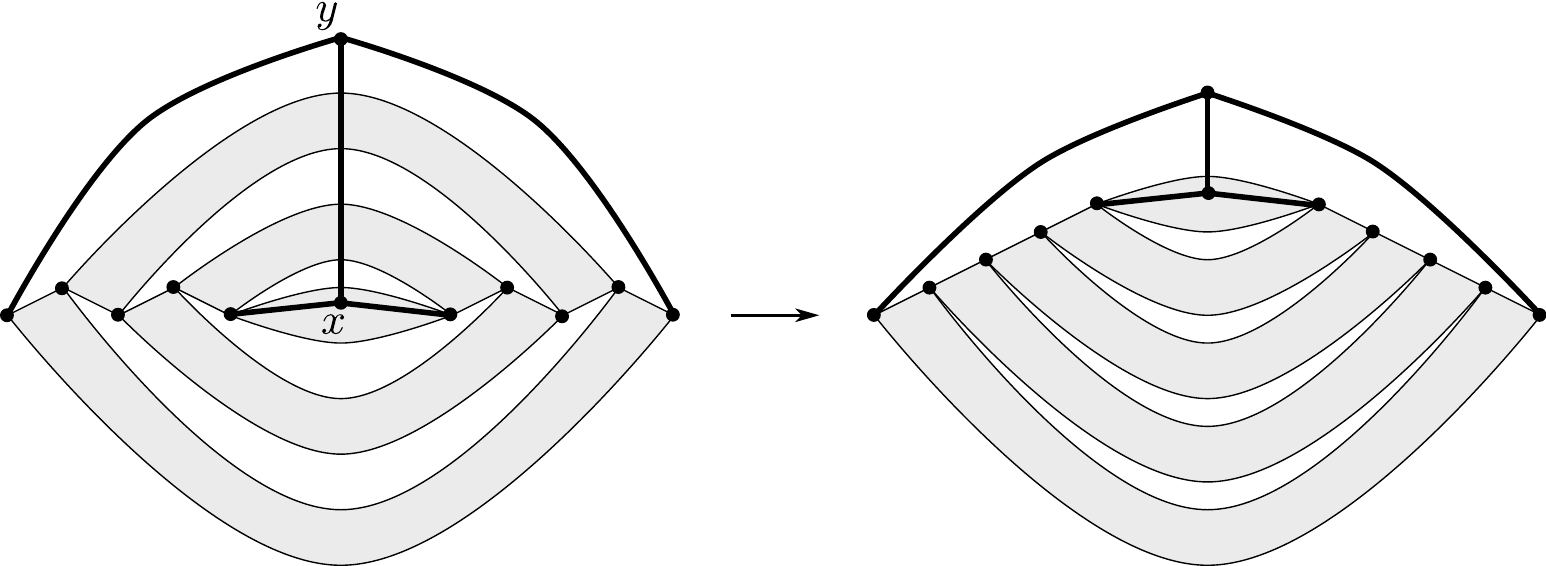}}
\fi\iffull
\scalebox{0.6}{\includegraphics{tunnel}}
\fi
\caption{Graph $J_Z$. Bold lines are the artificial edges, representing the paths $P_0,P_{in}$ and $P_{out}$. The second figure shows the outcome of the flipping procedure, where all vertices $u_1,u_2,\ldots,u_{\kappa},v_{\kappa},\ldots,v_1$ lie on the boundary of one face.\label{fig:tunnel}}
\end{center}
\end{figure}

It now only remains to bound the number of irregular vertices in $\irreg_V(\bphi,\bpsi)\cap S'_2$, and the number of irregular edges in $\irreg_E(\bphi,\bpsi)\cap E'_2$.

For every tunnel $Z\in\zset$, let $\hat{S}_2(Z)=\{u\in V(J_Z):\exists v\in V(J_Z)\text{ s.t. } (u,v) \text{ is a 2-separator for } J_Z\}$.
We need the following lemma, whose proof \iffull appears in the end of this section\fi \ifabstract appears in the full version of the paper\fi.

\begin{lemma} \label{lem:JZ-connected}
For every tunnel $Z\in \zset$, $|\hat{S}_2(Z)|\leq 8$.
\end{lemma}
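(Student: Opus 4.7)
I plan to exhibit a set $W \subseteq V(J_Z)$ of size at most $8$ that contains $\hat S_2(Z)$. Concretely, I take $W := \{u_1, v_1, u_\kappa, v_\kappa, u', v', x, y\}$. The strategy has two main steps: first establish $3$-vertex connectivity of the auxiliary graph $K := J_Z \cup \{(u_1, v_1), (u', v')\}$ obtained by closing off the outer boundary $(u_1, v_1)$ and the innermost boundary $(u', v')$ with artificial edges; then transfer this back to a classification of the $2$-separators of $J_Z$.

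For the first step, I view $K$ as the union of the $3$-connected graphs $\tilde B_i'$ for $i = 1, \dots, \kappa$ (each $3$-connected by Theorem~\ref{thm: block decomposition}, with $\tilde B_\kappa'$ using the artificial edge $(u', v')$ for its child), glued along the two-vertex interfaces $\{u_{i+1}, v_{i+1}\}$, together with the extra vertices $x, y$ and the five artificial edges $(u_1, y), (v_1, y), (u', x), (v', x), (x, y)$ of $J_Z$. Without these extra edges the glued union is only $2$-connected, with each shared pair $\{u_{i+1}, v_{i+1}\}$ a candidate $2$-separator that disconnects the inside of $B_{i+1}$ from the outside. However, the five artificial edges together form an explicit bypass: from any outer vertex one can reach $u_1$ or $v_1$, hop to $y$, cross the edge $(x, y)$ (representing $P_0$), and enter the inside through $u'$ or $v'$, all while avoiding any chosen $\{u_{i+1}, v_{i+1}\}$. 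A finite case analysis, split on whether the removed pair $\{a, b\}$ meets one of the eight vertices of $W$, then verifies that $K \setminus \{a, b\}$ is connected for every pair $\{a, b\} \subseteq V(K)$, so $K$ is $3$-vertex connected.

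Granted the $3$-connectivity of $K$, let $\{a, b\}$ be any $2$-separator of $J_Z$. Since $K \setminus \{a, b\}$ is connected and $J_Z = K \setminus \{e_1, e_2\}$ with $e_1 = (u_1, v_1), e_2 = (u', v')$, the disconnection of $J_Z \setminus \{a, b\}$ can only be caused by removing at least one of the surviving edges $e_1, e_2$ as a bridge of $K \setminus \{a, b\}$. This forces $\{a, b\}$ to separate $u_1$ from $v_1$ in $J_Z$ or to separate $u'$ from $v'$. Every $u_1$--$v_1$ separator must cut the length-$2$ contracted path $P_{out}: u_1 - y - v_1$, so $\{a, b\} \cap \{u_1, v_1, y\} \neq \emptyset$; similarly every $u'$--$v'$ separator meets $\{u', v', x\}$. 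A secondary argument pins down the possible partner of the special vertex $x$ or $y$ in a $2$-separator: removing $x$ (resp.~$y$) from $J_Z$ leaves a graph that, by the $3$-connectivity of $\tilde B_1'$ and $\tilde B_\kappa'$, is $2$-vertex connected except possibly at the boundary vertices $\{u_1, v_1, u_\kappa, v_\kappa, u', v'\}$. Hence both members of every $2$-separator of $J_Z$ lie in $W$, giving $|\hat S_2(Z)| \le |W| \le 8$.

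The principal technical obstacle is the case analysis for the $3$-connectivity of $K$, which must handle all degenerate configurations (such as coincidences $u_1 = u_2$ or $u' = u_\kappa$ among the eight vertices of $W$, or very small blocks $\tilde B_i'$ with few inner vertices) without breaking the bypass argument through $x$ and $y$. Fortunately these degeneracies can only decrease $|W|$, so they only strengthen the bound. Once the bypass argument is made rigorous in every case, the containment $\hat S_2(Z) \subseteq W$ follows and yields the lemma.
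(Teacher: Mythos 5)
Your plan takes a genuinely different route from the paper's. The paper works directly inside $J_Z$: it uses that $J_Z$ is planar with a distinguished face cycle $C$ through all the $u_i,v_i$ and $x,y$, invokes the Fan Lemma (via the complex-block structure) to get three disjoint paths from each generic vertex to endpoints of blocks, and then argues that any disconnecting pair must be on $C$, on the same arc, and in one of the four pairs $\pset=\{(x,u_\kappa),(x,v_\kappa),(y,u_2),(y,v_2)\}$. You instead propose to close up $J_Z$ into a graph $K=J_Z\cup\{(u_1,v_1),(u',v')\}$, prove $K$ is $3$-connected, and then observe that any $2$-separator of $J_Z$ must turn $(u_1,v_1)$ or $(u',v')$ into a bridge of $K$ minus the pair. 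Up to that point the reduction is sound and appealing, and it correctly forces one member of every $2$-separator to be $x$ or $y$ (or to meet $\{u_1,v_1,u',v'\}$).

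The gap is in the \emph{secondary argument} and, consequently, in the set $W$ itself. You claim that $J_Z\setminus\{y\}$ is ``$2$-vertex connected except possibly at the boundary vertices $\{u_1,v_1,u_\kappa,v_\kappa,u',v'\}$,'' but this fails precisely in the case your plan has to handle: $u_1$ may have degree $2$ in $J_Z$, with neighbors $y$ (from $P_{out}$) and a unique neighbor $w$ in $\H_Z$. Then $w$ is a cut vertex of $J_Z\setminus\{y\}$ (it separates the now-pendant $u_1$), and $\{y,w\}$ is a $2$-separator of $J_Z$, so $w\in\hat S_2(Z)$. But $w$ is a vertex \emph{of the tunnel} — in the paper's analysis it is an endpoint $u_2$ or $v_2$ of the second block $B_2$ — and when $\kappa\ge 3$ it is generally none of $u_1,v_1,u_\kappa,v_\kappa,u',v',x,y$. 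So $\hat S_2(Z)\not\subseteq W$, and $|W|\le 8$ does not imply the lemma. The fix is to replace $\{u_1,v_1,u',v'\}$ in $W$ by $\{u_2,v_2\}$, matching the paper's $\{x,y,u_2,v_2,u_\kappa,v_\kappa\}$; but proving that the $y$-partner is in $\{u_2,v_2\}$ and the $x$-partner is in $\{u_\kappa,v_\kappar\}$ is exactly the structural content your plan was hoping to avoid, and needs an argument along the paper's lines. In addition, the $3$-connectivity of $K$ is only asserted: the ``bypass through $x,y$'' is the right intuition, but the finite case analysis (including degenerate coincidences among the endpoints and the intermediate $\{u_i,v_i\}$ cuts) is where the real work lies and is not written out.
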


We now show how to complete the proof of
Lemma~\ref{lem:irregular2}, using Lemma~\ref{lem:JZ-connected}.

Recall that $\bphi$ is the optimal embedding of $\G$. For each tunnel $Z\in \zset$, we define the following drawing $\bphi(J_Z)$: first, erase from $\bphi$ all edges and vertices, except those participating in $Z$, $P_0$, $P_{in}$ and $P_{out}$ (that have been defined for $Z$). Next, route the five artificial edges of $J_Z$ along the images of the paths $P_0$, $P_{in}$ and $P_{out}$. Finally, if any pair of artificial edges crosses more than once in the resulting embedding, perform uncrossing, that eliminates such multiple crossings, without increasing the number of other crossings in the drawing. Let $\cro_{\bphi(J_Z)}$ denote the number of crossings in the resulting drawing. Since the five artificial edges may have at most $25$ crossings with each other, we have that:
\[\cro_{\bphi(J_Z)}\leq \cro_{\bphi}(\H_Z,\G)+25\]
and
\[\sum_{Z\in\zset}\cro_{\bphi(J_Z)}\leq O(\cro_{\bphi}(\G))+O(|\zset|)\leq O(\optcro{\G}+|E^*|).\]

Fix some tunnel $Z\in \zset$. Since the drawing $\bpsi(J_Z)$ is planar, we can apply
Lemma~\ref{lem:irregular3} to the drawings $\bpsi(J_Z),\bphi(J_Z)$ of $J_Z$, and get that:
 \[|\irreg_V(\bpsi(J_Z),\bphi(J_Z))|\leq O(\cro_{\bphi(J_Z)}(J_Z)+|\hat{S}_2(Z)|)\]
 and
 \[|\irreg_E(\bpsi(J_Z),\bphi(J_Z))|\leq O(\dmax)(\cro_{\bphi(J_Z)}(J_Z)+|\hat{S}_2(Z)|).\]
Summing up over all tunnels $Z\in\zset$, we get that:
\ifabstract
\begin{align*}
\sum_{Z\in\zset} &|\irreg_V(\bpsi(J_Z),\bphi(J_Z))| \leq O(\optcro{G}+|E^*|)\\
&\phantom{{}\leq{}}{}+O(\zset) = O(\optcro{G}+|E^*|)
\end{align*}
\fi\iffull
$$
\sum_{Z\in\zset}|\irreg_V(\bpsi(J_Z),\bphi(J_Z))| \leq O(\optcro{G}+|E^*|)+O(\zset) = O(\optcro{G}+|E^*|)
$$
\fi
and
\[\sum_{Z\in\zset}|\irreg_E(\bpsi(J_Z),\bphi(J_Z))|\leq O(\dmax)(\optcro{G}+|E^*|)\]

Finally, we observe that since the tunnels are disjoint, if $v\in S_2'$, $v\in V(Z)$, and $v\in \irreg_V(\bphi,\bpsi)$, then either $v\in \irreg_V(\bpsi(J_Z),\bphi(J_Z))$, or $v$ is an endpoint of the first block of the tunnel $Z$. Therefore,
\ifabstract
\begin{align*}
|\irreg_V(\bphi,\bpsi)\cap S'_2| &\leq \sum_{Z\in\zset}(|\irreg_V(\bpsi(J_Z),\bphi(J_Z))|+2)\\
&\leq O(\cro_{\bphi(J_Z)}(J_Z)+|E^*|).
\end{align*}
\fi\iffull
$$
|\irreg_V(\bphi,\bpsi)\cap S'_2| \leq \sum_{Z\in\zset}(|\irreg_V(\bpsi(J_Z),\bphi(J_Z))|+2)
\leq O(\cro_{\bphi(J_Z)}(J_Z)+|E^*|).
$$
\fi
Each edge in $E_2'$ has both endpoints in $S'_2$, and therefore must be either completely contained in some tunnel, or be adjacent to an endpoint of the first block of a tunnel. So if $e\in E_2'$, and $e\in \irreg_E(\bphi,\bpsi)$, then either $e\in \irreg_E(\bpsi(J_Z),\bphi(J_Z))$ for some tunnel $Z$, or it is adjacent to an endpoint of the first block of some tunnel $Z$. Therefore,
\ifabstract
\begin{align*}
|\irreg_E (\bphi,\bpsi)&\cap E'_2| \\ &\leq \sum_{Z\in\zset}(|\irreg_V(\bpsi(J_Z),\bphi(J_Z))|+2\dmax)\\
&\leq O(\dmax)(\cro_{\bphi(J_Z)}(J_Z)+|E^*|).
\end{align*}
\fi\iffull
$$
|\irreg_E (\bphi,\bpsi) \cap E'_2| \leq \sum_{Z\in\zset}(|\irreg_V(\bpsi(J_Z),\bphi(J_Z))|+2\dmax) \leq O(\dmax)(\cro_{\bphi(J_Z)}(J_Z)+|E^*|).
$$
\fi

\iffull
It now only remains to prove Lemma~\ref{lem:JZ-connected}.
\fi

\iffull
\subsubsection*{Proof of Lemma~\ref{lem:JZ-connected}}

Consider a tunnel $Z=(B_1,\ldots,B_{\kappa})$, $Z\sse X$ for some $X\in\cset$.
We will show that if we remove any pair of vertices, except for, possibly, pairs in the set
$\pset=\set{(x,u_{\kappa}),(x,v_{\kappa}),(y,u_2),(y,v_2)}$, the graph $J_Z$ remains connected.
For convenience, we denote $u_{\kappa+1} = u'$ and $v_{\kappa+1} =v'$, the endpoints of the unique child $B'$ of $B_{\kappa}$.

Observe that every vertex $a \notin \{u_1, v_1, u_{\kappa+1}, v_{\kappa+1}\}$ of $J_Z$ has degree
at least $3$ in $J_Z$: otherwise we would have a vertex $a\in V(Z)\setminus\set{u_1,v_1,u_{\kappa+1},v_{\kappa+1}}$ with $\deg_{X} a \leq 2$. Then either $a$ is incident on an edge in $E^*(X)$, or it belongs to $S_1$, and therefore
and $a\in \sset_X$. Then the smallest $B_i$ of $Z$ that contains $a$
would belong to $\rset_1(X)$. 

Consider now a vertex $w\notin \{u_i, v_i, x, y: 1\leq i\leq \kappa+1\}$.  
Let $B_j$ be the smallest block that contains $w$. Since $V(B_j) \setminus V(B_{j+1})$ contains an inner vertex $w$, $B_j$ must be a complex block. Therefore,  from Claim~\ref{claim: irregular}, $w$ is connected to $u_j$, $v_j$ and either $u_{j+1}$
or $v_{j+1}$ by three vertex disjoint paths. It is obvious that each of the 
vertices $x$ and $y$ is connected to $u_1$, $v_1$, and
either $u_{\kappa+1}$ and $v_{\kappa+1}$ by three vertex disjoint paths. 

Let $L=\set{u_i,v_i: 1\leq i\leq \kappa+1}$.
Observe that it is enough to show that for any pair $(p,q)\not\in \pset$ of vertices, all vertices in $L\setminus\set{p,q}$ remain connected in the graph $J_Z\setminus\set{p,q}$. Indeed, assume that all vertices in set $L\setminus\set{p,q}$ remain connected in graph $J_Z\setminus\set{p,q}$. Let $w,w'$ be any pair of vertices of $J_Z\setminus\set{p,q}$. We show that $w,w'$ remain connected as well in the resulting graph. Indeed, if both $w,w'\not\in L$, each one of these vertices has three paths disjoint paths connecting them to vertices of $L$, and at least one of the three paths must survive even after the removal of $p,q$ from $J_Z$. Similarly, if one of vertices $w,w'$ belongs to $L$, they remain connected as long as all vertices of $L$ remain connected.

We now show that for any pair $(p,q)$ of vertices, $(p,q)\not\in \pset$, the vertices in $L\setminus\set{p,q}$ all remain connected in graph $J_Z\setminus\set{p,q}$.

Consider the unique planar embedding of $J_Z$. If we remove the edge $(x,y)$ from this embedding, we obtain a cycle $C$ containing the vertices of $L$: this cycle is simply the boundary of the face that contained the edge $(x,y)$.
Denote the ordering of vertices on the cycle by $(v_1,v_2,\ldots,v_{\kappa+1},x,u_{\kappa+1},\ldots,u_1,y)$, where $(u_i,v_i)$ are endpoints of block $B_i$ (but observe that some consecutive vertices in this ordering may coincide, e.g. it is possible that $u_i=u_{i+1}$).
Let $L_u=\set{u_i: 1\leq i\leq \kappa+1}$, and $L_v=\set{v_i: 1\leq i\leq \kappa+1}$.

Consider a vertex $u_i$ with $\deg_X u_i \geq 3$. Denote its neighbors on the cycle
by $u_{i_1}$ and $u_{i_2}$ (since some vertices $u_j$ might coincide, the vertex
$u_{i_1}$ is not necessarily equal to $u_{i-1}$).
Let $w$ be a neighbor of $u_i$ other than $u_{i_1}$ and $u_{i_2}$. Clearly,
$w\notin L_u$, because of definition of block. 

Assume first that $w\notin L_v$ either.
Consider the smallest block $B_h$ that contains $w$.
Again, since $w\not\in L$, block $B_h$ has to be complex. Therefore, there is a path from $w$ to $v_h$ that 
does not visit any vertex in $L_u$. By concatenating this path with the edge $(u_i,w)$, 
we get a path $P(u_i)$ from $u_i$ to $v_h$  that does not visit any other vertex in $L_u$. If $w\in L_v$, then the path $P(u_i)$ is simply the edge $(u_i,w)$.
Similarly, there is a path $P(v_i)$ from every vertex $v_i$ of degree at least $3$ 
to some $u_h\in L_u$, that does not visit any other vertex in $L_v$. 

Assume for contradiction that there is some pair $(p,q)\not\in \pset$, such that in the graph $J_Z\setminus\set{p,q}$, the set $L\setminus\set{p,q}$ is not connected. Since all vertices of $L$ lie on the cycle $C$, it is clear that $p$ and $q$ must belong to $C$.
Moreover, both of them must lie on the same of the two arcs connecting $x$ and $y$ in $C$.
Let us say that $(p,q)$ belong to the arc on which the vertices $L_u$ lie.
Then the other arc of $C$ remains connected. But then for each vertex $u_j\not\in\set{p,q}$ of degree at least $3$, the path $P(u_j)$ will connect it to the vertices in $L_v$. Similarly, if $(p,q)$ belong to the arc containing vertices of $L_v$, each vertex $v_j\not\in\set{p,q}$ of degree at least $3$, remains connected to the vertices of $L_u$ via the path $P(v_j)$.
 
 Now if $u_i$ has degree $2$, then it must be either $u_1$ or $u_{\kappa+1}$.
It is clear that we can disconnect $u_1$ from $u_2$ and $y$
only by removing both $u_2$ and $y$ (since $B_i$ is $2$-vertex connected). Similarly, we can disconnect
$u_{\kappa+1}$ from $u_{\kappa}$ and $x$ only by removing both $u_{\kappa}$ and $x$. If vertex $v_i$ has degree $2$, then it must be either $v_1$, or $v_{\kappa+1}$, and so the only pairs of vertices that can disconnect it are $(x,v_{\kappa})$ and $(y,v_2)$.
\fi

\iffull
\section{Handling Non 3-Connected Graphs} \label{sec:reduce-three-connected}

In this section we prove Theorem~\ref{thm:main}, by describing a reduction from the general case --- when
the graph $G$ is not necessarily 3-connected --- to the 3-connected case. 
Our algorithm consists of two parts. In the first part, we decompose the original graph $\G$ into a number of sub-graphs, and find a drawing for each one of the sub-graphs separately. In the second part, we combine these drawings together to obtain the final drawing.

\subsection{Part 1: Decomposition}
We first note that we can assume w.l.o.g. that the input graph $\G$ is $2$-connected:
Otherwise, we can separately embed the 2-connected components of $\G$ and then combine their embeddings.
We also assume that the graph $\H=\G\setminus E^*$ is connected, since otherwise we can start removing edges from $E^*$ and adding them to $\H$, until it becomes connected, as in Section~\ref{sec:alg}. Finally, we can assume w.l.o.g. that the input graph contains no parallel edges: otherwise, if there is a collection $(e_1,\ldots,e_{\kappa})$ of parallel edges, we can subdivide each edge $e_i$ by adding a vertex $v_i$ to it, and add edges connecting every consecutive pair $(v_i,v_{i+1})$ of vertices, for $1\leq i\leq \kappa$ (we identify $v_{\kappa+1}$ and $v_1$). It is easy to verify that this transformation does not increase the maximum vertex degree in $\G$, and does not increase the cost of the optimal solution.
We will use the following theorem of Hlin\v{e}n\'{y} and Salazar~\cite{HlinenyS06}.

\begin{theorem}[\cite{HlinenyS06}]\label{thm planar and edge}
Let $G$ be any graph of maximum degree $\dmax$, $e\in E(G)$, such that $G\setminus\set{e}$ is planar. Then we can efficiently find a drawing $\psi$ of $G$ with at most $O(\dmax\cdot \optcro{G})$ crossings.
\end{theorem}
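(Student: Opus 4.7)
My plan is to find the best routing of $e$ simultaneously over all planar embeddings of $H = G \setminus \{e\}$, and then bound the resulting cost by comparison with the optimal drawing of $G$. First, I would reduce to the case where $G$ is $2$-connected by handling each block of $G$ separately (only the block containing $e$ is interesting; the others admit trivial planar drawings). Given such $G$, compute the SPQR tree $\mathcal{T}$ of $H$. The set of planar embeddings of $H$ is in bijection with the choices of a circular order at each $P$-node and a flip at each $R$-node; $S$-nodes are rigid and $3$-connected $R$-nodes have a unique planar embedding up to reflection by Whitney's theorem (\ref{thm:Whitney}). For any fixed embedding $\psi$ of $H$, the minimum number of crossings when adding $e=(u,v)$ equals the shortest-path distance in the dual of $\psi$ between a face incident to $u$ and a face incident to $v$, exactly as in the proof of Theorem~\ref{thm:main2}.

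Next, I would set up a polynomial-time dynamic program on $\mathcal{T}$ that jointly optimizes the embedding and the routing. For each skeleton node $\nu$ and each way the curve for $e$ enters and exits the region of $H$ corresponding to $\nu$ through its virtual edges (there are only $O(1)$ possibilities per virtual edge, times a polynomial factor at $P$-nodes), record the minimum number of real edges of $H$ crossed inside that region, optimized over the embedding choices local to $\nu$. Composing these tables along $\mathcal{T}$ produces a globally optimal pair $(\psi, \gamma)$, where $\gamma$ is the curve we add for $e$.

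To show this optimum is at most $O(\dmax \cdot \optcro{G})$, let $\phi$ be an optimal drawing of $G$ and let $c = \optcro{G}$. Apply Proposition~\ref{thm:rerouting-along-path} to the curve representing $e$ in $\phi$, obtaining a path $P \subseteq H$ from $u$ to $v$ and a curve $\gamma^* \in \Gamma(\phi)$ routed along $P$ with $\cro_\phi(\gamma^*, H) = O(\cro_\phi(H) + \cro_\phi(e,H)) = O(c)$. Then construct a planar embedding $\psi$ of $H$ by making all SPQR-tree choices along the $3$-connected components traversed by $P$ consistent with $\phi_H$ (each such skeleton is rigid up to a flip, and the flip is forced by the endpoints of $P$ inside it). Proposition~\ref{thm:embedding-along-path} then yields a curve $\gamma' \in \Gamma(\psi)$ for $e$ with at most $\cro_\phi(\gamma^*,H) + O(|\irreg_E(\phi,\psi)| + \dmax \cdot |\irreg_V(\phi,\psi)|)$ crossings; the DP finds something at least as good.

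The main obstacle, and the source of the $\dmax$ factor, is bounding the irregularity along $P$. Because we have only one extra edge rather than a set $E^*$, we cannot invoke Theorem~\ref{thm: good planar drawing} verbatim; instead, I would argue directly that every $2$-separator of $H$ that forces a choice of flip disagreeing with $\phi_H$ must be ``crossed'' by $\gamma^*$ itself. Concretely, since $G$ is $2$-connected and $E^* = \{e\}$, any $2$-separator $(x,y)$ of $H$ that splits $H$ into two pieces containing real edges on both sides must have the edge $e$ bridging the two sides (otherwise $\{x,y\}$ would separate $G$), and hence either $P$ traverses $(x,y)$ or $\gamma^*$ crosses edges incident to $\{x,y\}$. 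A Lemma~\ref{lem:irregular3}-style charging argument restricted to the blocks met by $P$ then yields $|\irreg_E(\phi_H, \psi)| + \dmax|\irreg_V(\phi_H, \psi)| = O(\dmax \cdot c)$ along $P$, and together with $\cro_\phi(\gamma^*, H) = O(c)$ gives the desired $O(\dmax \cdot \optcro{G})$ bound.
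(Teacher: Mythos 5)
First, note that the paper does not prove this theorem: it is attributed entirely to Hlin\v{e}n\'{y} and Salazar~\cite{HlinenyS06} and invoked as a black box in Section~\ref{sec:reduce-three-connected}. So there is no in-paper proof to compare against; your proposal is an original re-derivation using the irregularity machinery developed elsewhere in the paper. That is a reasonable idea to try, but as written it has a genuine gap.

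The problematic step is the parenthetical claim that any $2$-separator $(x,y)$ of $H$ with real edges on both sides ``must have the edge $e$ bridging the two sides (otherwise $\{x,y\}$ would separate $G$).'' You only reduced to $G$ being $2$-connected, so $\{x,y\}$ separating $G$ is not a contradiction --- it is the generic situation. Such separators are precisely the ones \emph{not} bridged by $e$, and for them your charging has nothing to charge: flipping a piece of $H\setminus\{x,y\}$ that $P$ never enters can still change the cyclic edge-order at $x$ or $y$, which may lie on $P$, producing irregular vertices with no associated crossing of $\gamma^*$ or of $\phi_H$. The remark that ``the flip is forced by the endpoints of $P$ inside it'' does not rescue this; at a free $2$-separator the flip is simply not constrained by $P$. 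Note that the paper's own Lemma~\ref{lem:irregular2}, the exact analogue of the bound you need, relies crucially on $\G$ being $3$-connected to construct the connector vertex $x_B$ and the path $P_0^B$ for each block; that construction fails under mere $2$-connectivity, which is why Theorem~\ref{thm:main2} carries that hypothesis. A correct version of your plan would have to decompose further along all $2$-separators of $G$ (i.e., use the SPQR tree of $G$, not of $H$) to isolate a single $3$-connected piece containing $e$ --- but at that point you are essentially recreating the reduction of Section~\ref{sec:reduce-three-connected}, where Theorem~\ref{thm planar and edge} is used as an external input, so the argument becomes circular in spirit. The SPQR-tree dynamic program you sketch is a plausible algorithm; what is missing is a valid upper bound on the value it returns.
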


Our high-level idea is to decompose the graph $\G$ into blocks (see Section~\ref{sec: blocks} for the definition). For each such block $B$, we will add an artificial edge connecting its endpoints, that will ``simulate'' the rest of the graph, $\G\setminus B$. Similarly, we will add an artificial edge connecting the endpoints of $B$ to the remaining graph, $\G\setminus B$, that will simulate $B$. In the course of such recursive decomposition, a block may end up containing a number of such artificial edges. We will then try to find drawings of each such augmented block separately. We need to argue that the total optimal solution cost in these new sub-problems does not increase by much. This is done as follows. We will have two types of blocks. The first type is blocks that have at most two artificial edges, and when one of these edges is removed from the block, we obtain a planar graph.  For such blocks, we will argue that their total solution cost is bounded by $O(\optcro{\G})$, and then use Theorem~\ref{thm planar and edge} to find their drawings. For the remaining blocks, we will show that we can augment the set $E^*$ of edges, to set $\hat{E}^*$, with $|\hat{E}^*|\leq O(|E^*|)$, such that for each such block $B$, $B\setminus \hat{E}^*$ is planar. (Observe that now $\hat{E}^*$ may have to contain artificial edges). We then use Theorem~\ref{thm:main2} to find the drawing of each such block separately. We now describe the decomposition procedure in more detail.

We say that a path $P$ in graph $\G$ is \emph{nice} if it does not contain any edge of $E^*$. We will be repeatedly using the following two easy observations.

\begin{observation}\label{observation: remains planar}
Let $G$ be any graph, $E^*\sse E(G)$ a subset of edges, such that $G\setminus E^*$ is planar. Let $B$ be a block of $G$, whose endpoints are $I(B)=(u,v)$, and assume that $B$ contains a nice path connecting $u$ to $v$. Let $G'$ be the graph obtained by removing $B$ from $G$, and adding an artificial edge $e=(u,v)$ to it. Then $G'\setminus E^*$ is also planar.
\end{observation}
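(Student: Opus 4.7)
\textbf{Proof plan for Observation~\ref{observation: remains planar}.} The plan is to start with a planar drawing of $H = G \setminus E^*$ (which exists by hypothesis), exhibit a curve from $u$ to $v$ that lies immediately adjacent to the drawing of the nice path $P$, and then verify that after removing the parts of $B$ that will no longer be present in $G' \setminus E^*$, this curve becomes a non-crossing embedding of the artificial edge $(u,v)$.

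First, I would fix a planar drawing $\psi$ of $H$. Because $P$ is a nice path, no edge of $P$ lies in $E^*$, so $E(P) \subseteq E(H)$; together with $P \subseteq B$ this means the entire path $P$, in particular its interior vertices, appears in $\psi$ and each such interior vertex is an inner vertex of $B$. Next, I would build a thin strip $S_P$ around the image of $P$ in $\psi$ (as in Section~\ref{sec:prelims}) and take as $\gamma'$ one of the two boundary curves of this strip, connecting the images of $u$ and $v$. Since $\psi$ is planar, the path $P$ has no self-crossings and no other edges cross $P$; consequently the only edges of $H$ whose images meet $S_P$ are edges of $P$ itself and edges of $H$ having at least one endpoint on $P$. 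Near the endpoints $u$ and $v$, I route $\gamma'$ to leave parallel to the first edge $e_1$ and the last edge $e_k$ of $P$ respectively, so that it does not cross any of the edges of $H$ incident to $u$ or $v$.

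With that setup, every edge of $H$ whose image is crossed by $\gamma'$ has an endpoint that is an interior vertex of $P$, and is therefore incident to some inner vertex of $B$. Since inner vertices of $B$ have no neighbors outside $B$, every such edge lies in $E(B)$. The construction of $G'$ removes precisely the inner vertices of $B$ together with all edges of $B$, and replaces them by the single artificial edge $e=(u,v)$. So if I delete from $\psi$ all inner vertices of $B$ along with the edges of $E(B) \cap E(H) = E(B) \setminus E^*$, every edge that $\gamma'$ crossed is erased. The remaining drawing is exactly a drawing of $(H \setminus B) \cup \{e\} = (G \setminus B) \setminus E^* \cup \{e\} = G' \setminus E^*$ (using that $e \notin E^*$), and by construction it has no crossings. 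Hence $G' \setminus E^*$ is planar.

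The main thing to be careful about is the behavior of $\gamma'$ at the endpoints $u$ and $v$ — these may have many incident edges outside $B$, and it must be checked that $\gamma'$ can be drawn to enter and leave $u,v$ without crossing any of them; this is handled by placing $\gamma'$ immediately adjacent to $e_1$ at $u$ and $e_k$ at $v$ inside one of the two faces of $\psi$ locally bordering those edges. Everything else is a routine verification that the only edges of $H$ meeting the strip $S_P$ are incident to inner vertices of $B$ and thus are removed together with those vertices.
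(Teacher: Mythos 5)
Your proof is correct. The paper states this as an Observation without giving a proof, so there is nothing to compare against, but your drawing-surgery argument does establish the claim: the key points you correctly identify are that all interior vertices of the nice path $P$ are inner vertices of $B$ (so every edge incident to them lies inside $B$ and is deleted when $B$ is removed), and that the artificial edge can be drawn alongside $P$ in a planar drawing of $H$ so that the only edges it crosses are exactly these doomed edges. You are also right to flag the endpoint routing as the one place that needs care, and placing $\gamma'$ immediately adjacent to the first and last edges of $P$ handles it.

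A somewhat shorter, purely combinatorial route avoids the curve-and-strip bookkeeping entirely. Since $B$ does not contain the edge $(u,v)$, the nice path $P$ has at least one interior vertex, and all interior vertices of $P$ are inner vertices of $B$. Hence $(H\setminus B)\cup P$ is a subgraph of $H$ and therefore planar, and in it every interior vertex of $P$ has degree exactly $2$ (its only incident edges are the two path edges, as it has no neighbors outside $B$). Suppressing these degree-$2$ vertices preserves planarity and yields $(H\setminus B)\cup\{e\} = G'\setminus E^*$. Both arguments are valid; yours makes the geometry explicit, while this version relies only on closure of planarity under taking subgraphs and smoothing degree-$2$ vertices.
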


\begin{observation}\label{observation: path}
Let $G$ be any $2$-connected graph, and let $B$ be any block of $G$ with endpoints $I(B)=(u,v)$. Then there is a path $P:u\connect v$ contained in $(G\setminus B)\cup\set{u,v}$.
\end{observation}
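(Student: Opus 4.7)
The plan is to argue by contradiction, using only the $2$-connectedness of $G$ together with the defining property of a block: every edge joining $V(B)$ to $V\setminus V(B)$ is incident to $u$ or $v$, since by definition $E(V\setminus V(B),V(B)\setminus\{u,v\})=\emptyset$. Consequently, any simple walk that crosses between $V(B)$ and $V\setminus V(B)$ must pass through $u$ or $v$, and hence may do so at most twice in a simple walk. Before running the main argument, I will dispose of the trivial case $(u,v)\in E(G)$: this edge is excluded from $B$ by definition, so it already yields the required path.

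Assume then $(u,v)\notin E(G)$, and suppose for contradiction that $u$ and $v$ lie in distinct connected components of $G':=(G\setminus B)\cup\{u,v\}$; let $C_u$ denote the component of $G'$ containing $u$. My first step will be to produce a vertex $z\in C_u\cap(V\setminus V(B))$. If no such $z$ existed, then $C_u=\{u\}$ in $G'$, meaning every neighbor of $u$ in $G$ is an inner vertex of $B$ (no edges from $u$ reach $v$ or $V\setminus V(B)$). By the crossing observation, any walk in $G$ from $u$ to the non-empty set $V\setminus V(B)$ would then have to leave $V(B)$ through $v$, so deleting $v$ would disconnect $u$ from $V\setminus V(B)$---contradicting $2$-connectedness of $G$.

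Having obtained $z$, I will invoke $2$-connectedness a second time: since $G\setminus\{u\}$ is connected, there is a path $\pi$ from $v$ to $z$ in $G\setminus\{u\}$. The crux is to show that $\pi$ in fact lies inside $G'$. By the crossing observation, $\pi$ can switch between $V(B)$ and $V\setminus V(B)$ only at $u$ or $v$; with $u$ deleted and $v$ visited only once as the starting endpoint of $\pi$, the path must leave $V(B)$ on its very first edge and remain in $V\setminus V(B)$ thereafter. Thus every vertex of $\pi$ lies in $(V\setminus V(B))\cup\{v\}$ and every edge of $\pi$ belongs to $E(G)\setminus E(B)$, so $\pi$ is a $v$-to-$z$ path in $G'$---placing $v$ in $C_u$, the desired contradiction.

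The only delicate point will be verifying that $\pi$ cannot re-enter $V(B)$ after its first edge; this rests entirely on the ``budget'' of at most one traversal through each of $u$ and $v$, combined with the structural property of a block. No deeper machinery will be required: plain $1$-vertex-connectedness of $G\setminus\{u\}$ (and of $G\setminus\{v\}$ for the symmetric step in the first case analysis) suffices throughout.
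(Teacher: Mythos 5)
Your proof is correct. The paper states this observation without proof, treating it as routine, so there is no internal argument to compare against---but your argument is a sound and complete way to close the gap. The two uses of $2$-connectedness are both necessary and correctly placed: the first, on $G\setminus\{v\}$, rules out $C_u=\{u\}$ and produces a vertex $z$ of $C_u$ outside $V(B)$; the second, on $G\setminus\{u\}$, yields a $v$-to-$z$ path $\pi$. The crossing argument---that $\pi$, after leaving $v$ on its first edge, can never re-enter $V(B)$ because doing so would require an edge from $V\setminus V(B)$ into $V(B)\setminus\{u,v\}$, which the block property forbids---is exactly right, and it places $v$ in $C_u$ for the desired contradiction. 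For what it is worth, a marginally shorter route with the same ingredients is the Fan Lemma (which the paper itself invokes elsewhere): fix any $z\in V\setminus V(B)$; by $2$-connectedness there are two internally disjoint paths from $z$ to $\{u,v\}$, and since their interiors avoid $\{u,v\}$, the block boundary forces both paths to remain in $(V\setminus V(B))\cup\{u,v\}$, so their concatenation is the required $u$-to-$v$ path at once.
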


We say that a block $B$ of $G$ is nice, iff it does not contain any edge of $E^*$. We start by iteratively removing {\bf maximal} (w.r.t. inclusion) nice blocks from $G$, and adding each one of them to the set $\aset$ of nice blocks we construct. 
Consider one such block $B$ with endpoints $I(B)=(u,v)$. We remove block $B$ from $G$, and add it to the set $\aset$ of nice blocks. We then add an artificial edge $(u,v)$ both to the remaining graph $G$, and to the block $B$. Let $\aset$ be the resulting set of these augmented nice blocks, and let $G'$ be the resulting remaining graph. From Observation~\ref{observation: remains planar}, graph $G'\setminus E^*$ is planar. It is also easy to see that $G'$ does not contain any nice blocks, and it is $2$-connected.
Consider now some block $B\in \aset$. Since the algorithm was repeatedly choosing maximal nice blocks, $B$ contains a unique artificial edge, $e=(u,v)$, connecting its endpoints $I(B)=(u,v)$. Moreover, from Observation~\ref{observation: path}, the graph $(\G\setminus B)\cup\set{u,v}$ contains a path $P:u\connect v$. Therefore, $\optcro{B}\leq \cro_{\bphi}(B, \G)$, and $\sum_{B\in \aset} \optcro{B}\leq O(\optcro{\G})$. The artificial edges that have been added to graph $G'$ at this step are called type-1 artificial edges, and all artificial edges that will be added throughout the rest of the algorithm are called type-2 artificial edges.
As our next step, we use Theorem~\ref{thm: block decomposition} to find a laminar block decomposition $\fset$ for graph $G'$.
Recall that for each block $B\in \fset$, we denote by $\tilde{B}$ the graph obtained from $B$ by replacing its children with artificial edges, and $\tilde{B}'$ is obtained from $\tilde{B}$ by adding an artificial edge connecting the endpoints of $B$. For each $B\in \fset$, we are guaranteed that $\tilde{B}'$ is $3$-connected. 
Intuitively, we would now like to solve each one of the blocks $\tilde{B}'$, for $B\in\fset$, separately. However, since we have added artificial edges to such blocks, the graph $\tilde{B}'\setminus E^*$ may not be planar anymore. Of course, if we add all type-$2$ artificial edges to the set $E^*$, this problem will be resolved, but then the resulting set $E^*$ may become too large. We show below how to avoid this problem. We start by defining a set $\bset$ of blocks, whose size will be bounded by $O(|E^*|)$. We show that the type-2 artificial edges belonging to all such blocks can be added to the set $E^*$ without increasing its size by too much. We then show how to take care of remaining blocks.

\begin{itemize}
\item Let $\bset_1\sse \fset$ be the set of blocks $\tilde{B}'$, for $B\in\fset$, such that $\tilde{B}'$ contains an edge of $E^*$. Clearly, $|\bset_1|\leq O(|E^*|)$, and all the leaves of the tree $\tset$ belong to $\bset_1$ (since otherwise such a leaf would be a nice block).

\item Let $\bset_2$ be the set of blocks $\tilde{B}'$, such that $B$ has at least two children in the tree $\tset$. Since all leaves of the tree $\tset$ belong to $\bset_1$, it is easy to see that $|\bset_2|\leq |\bset_1|\leq |E^*|$.
\end{itemize}

Consider the decomposition tree $\tset$, and remove all the vertices corresponding to the blocks in $\bset_1$ and $\bset_2$ from it. The resulting sub-graph of $\tset$ is simply a collection $\pset$ of disjoint paths. Moreover, $|\pset|\leq |\bset_1|+|\bset_2|\leq 2|E^*|$. Consider some such path $P\in \pset$, and assume that $P=(B_1,B_2,\ldots,B_k)$, where $B_k\subset B_{k-1}\subset\cdots\subset B_1$. Let $i^*$ be the largest index $i: 1\leq i\leq k$, such that $B_{i}$ contains a nice path connecting its endpoints. We add 
$\tilde{B}'_{i^*},\tilde{B}'_{i^*+1},\ldots,\tilde{B}'_k$ to $\bset_3$. We show in the next claim that $i^*\geq k-5$, so $|\bset_3|\leq 6|\pset|\leq 12|E^*|$. 

\begin{claim}
$i^*\geq k-5$.
\end{claim}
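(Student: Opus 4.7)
I plan to argue by the contrapositive: any block $B_i$ in the ``tail'' (i.e.\ with $i>i^*$, so $B_i$ admits no nice $u_i$-$v_i$ path) must be a \emph{simple} block in the sense of Section~\ref{sec:irregular-vertices-edges}, meaning $\tilde B_i'$ is a triangle; and then I will show that any chain of $\geq 6$ consecutive simple blocks on $P$ leads to a contradiction, which gives $k-i^*\leq 5$.

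\textbf{Tail blocks must be triangles, and $NP$ propagates.} Suppose $i>i^*$. I claim $|V(\tilde B_i')|=3$. Otherwise, by Theorem~\ref{thm: block decomposition} $\tilde B_i'$ is $3$-vertex-connected on at least $4$ vertices, so via $\kappa\leq\lambda$ it is also $3$-edge-connected. Deleting from $\tilde B_i'$ both the endpoint artificial edge $(u_i,v_i)$ and the child artificial edge $(u_{i+1},v_{i+1})$ therefore leaves a connected graph, giving a $u_i$-$v_i$ path in $\tilde B_i\setminus\{(u_{i+1},v_{i+1})\}$ whose edges are real edges (or type-$1$ artificial edges) of $B_i$ lying outside $B_{i+1}$; since $B_i\notin\bset_1$, none of these edges is in $E^*$, so the path is nice and lies inside $B_i$, contradicting $i>i^*$. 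Since $I(B_{i+1})\neq I(B_i)$ by Theorem~\ref{thm: block decomposition}, we must also have $|V(\tilde B_i')|\geq 3$, so $|V(\tilde B_i')|=3$ exactly and $\tilde B_i'$ is a triangle. For such a triangle $\tilde B_i'$, deleting the child artificial edge from $\tilde B_i$ already disconnects $u_i$ from $v_i$, so every $u_i$-$v_i$ path in $B_i$ must pass through $B_{i+1}$; hence $NP(B_i)\Leftrightarrow NP(B_{i+1})$, and every block $B_{i^*+1},\ldots,B_k$ in the tail is a simple block.

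\textbf{Bounding the tail length.} Suppose for contradiction that $k-i^*\geq 6$. I plan to apply Claim~\ref{cor:regular-block-in-a-chain} (adapted to the present setting with $G'$ playing the role of the ambient $2$-connected component) to the $5$ consecutive simple blocks $B_{i^*+1},\ldots,B_{i^*+5}$: since none of the first four can be complex (they are all simple), a vertex $w\in V(B_{i^*+1})\setminus V(B_{i^*+5})$ of degree $2$ in $G'$ must exist. The outer edges of the tail blocks form a path of non-$E^*$ edges in $G'$ through $w$, and I will combine this with the local triangle structure at $w$ to exhibit a nice subgraph of $G'$ that strictly enlarges one of the nice blocks already chosen when $G'$ was built, contradicting the maximality of the initial nice-block extraction.

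\textbf{Main obstacle.} The hardest step is the last one. In the $3$-connected analogue (Lemma~\ref{lem:irregular2}), the corresponding argument closes immediately because degree-$2$ vertices of each $2$-connected component of $\H$ live in the set $\sset_X$ of size $O(|E^*|)$; here $\G$ is only assumed $2$-connected, so arbitrary degree-$2$ vertices of $G'$ are not directly forbidden, and the contradiction must come from the maximality of the nice-block removal rather than from a counting bound on $\sset_X$. The delicate part will be identifying precisely the ``hidden'' nice block inside the long simple-block chain---using the $I(B)\neq I(B')$ condition of Theorem~\ref{thm: block decomposition} to pin down the triangle structure---and showing that it properly contains (or supersedes) a previously extracted nice block, thereby yielding the desired contradiction.
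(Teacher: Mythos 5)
Your proposal is incomplete, and the route you take diverges substantially from the paper's.

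Your Step~1 is a clean and essentially correct observation: for $i>i^*$, the graph $\tilde B_i'$ has exactly three vertices, because a $3$-connected graph on $\geq 4$ vertices is $3$-edge-connected (via $\kappa\leq\lambda$), so deleting the two artificial edges leaves a connected graph and thus a nice $u_i$--$v_i$ path in $B_i\setminus B_{i+1}$. (One minor issue to flag: $G'$ may contain parallel edges --- a type-$1$ artificial edge alongside an actual edge --- so $B_i$ may add \emph{two} parallel edges to $B_{i+1}$ rather than exactly one, and so ``triangle'' does not literally give ``simple block'' in the formal sense of Section~\ref{sec:finding a planar drawing}; this does not break the argument but you would need to say so.)

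Step~2 is where the gap is, and you acknowledge it yourself (``I plan to apply\ldots'', ``I will combine\ldots'', ``the hardest step is the last one''). The idea of feeding the $5$ simple blocks into Claim~\ref{cor:regular-block-in-a-chain} to obtain a degree-$2$ vertex $w\in V(B_{i^*+1})\setminus V(B_{i^*+5})$, and then exhibiting a nice block of $G'$ around $w$, is plausible but not carried out. The concrete missing piece: for $\{a,w,b\}$ to be a nice block of $G'$ you need both edges at $w$ to lie outside $E^*$. If $w$ is an \emph{inner} vertex of $B_{i^*+1}$, both edges lie in some $\tilde B_j$ with $B_j\notin\bset_1$ and you are done; but the degree-$2$ vertex promised by Claim~\ref{cor:regular-block-in-a-chain} ranges over endpoints of $B_2,B_3$, which can coincide with an endpoint of $B_{i^*+1}$. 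In that case $w$ has one edge leaving $B_{i^*+1}$, and to show that edge avoids $E^*$ you would have to track it further up the tunnel --- something the proposal does not do.

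The paper's proof is more direct and bypasses the simple/complex-block machinery entirely. It takes four consecutive tail blocks $B^1\supset B^2\supset B^3\supset B^4$, observes that since $B^1$ has no nice $u$--$v$ path and $B^1\setminus B^4$ has no $E^*$ edges, every $u$--$v$ path in $B^1$ must pass through both endpoints $u',v'$ of $B^4$. Removing $B^4$'s interior therefore separates $B^1$ into two components $R\ni u,u'$ and $R'\ni v,v'$, each of which would be a nice block of $G'$ unless it is a single (possibly multi-)edge. This forces $V(B^1)=V(B^4)\cup\{u,v\}$, leaving no room for $B^2$ and $B^3$ to be strictly nested between $B^4$ and $B^1$ --- an immediate contradiction. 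That counting step replaces the whole of your Step~2, so even if you do close the gap via Claim~\ref{cor:regular-block-in-a-chain}, the paper's route is shorter.
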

\begin{proof}
Assume otherwise. Then the laminar family $\fset$ contains four blocks $B^1,B^2,B^3,B^4$, such that for $1\leq j<4$, $B^j$ is the father of $B^{j+1}$ in $\tset$, and  $B^j\setminus B^{j+1}$ does not contain edges of $E^*$. Moreover, each one of the four blocks has exactly one child, and none of these blocks contains a nice path connecting its endpoints.

Denote $I(B^1)=(u,v)$ and $I(B^4)=(u',v')$. Since $B^1$ does not contain a nice path connecting its endpoints, and $B^1\setminus B^2,B^2\setminus B^3, B^3\setminus B^4$ do not contain edges of $E^*$, all paths $P':u\connect v$ in $B^1$ must contain $u'$ and $v'$. Therefore, if we remove the vertices of $B^4\setminus\set{u',v'}$ from $B^1$, we will obtain two connected components, $R$ and $R'$, where $u\in R$ and $v\in R'$. Each one of the two components contains exactly one vertex from $\set{u',v'}$, and we assume w.l.o.g. that $u'\in R$, $v'\in R'$.

We now claim that $R$ does not contain any vertices outside of $u$ and $u'$, that is, $R$ is just a collection of parallel edges (or just a single edge): otherwise, $R$ is a nice block with end-points at $u$ and $u'$, and we have assumed that $G'$ does not contain any nice blocks. Similarly, $R'$ does not contain any vertices outside of $v$ and $v'$. Therefore, the set of vertices of block $B^1$ is $V(B^4)\cup \set{v,u}$. But then it is impossible that there are two additional blocks, $B^2,B^3$, such that $B^4\subset B^3\subset B^2\subset B^1$, as every pair of distinct blocks must differ in their vertices. 
\ifabstract \qed \fi \end{proof}

Finally, we let $C_P=(B_1\setminus B_{i^*})\cup I(B_{i^*})$. We add to $C_P$ two artificial edges: edge $e$ connecting the endpoints of $B_1$, and edge $e'$ connecting the endpoints of $B_{i^*}$. For simplicity, we denote the new graph by $C'$, and the old graph by $C=C_P$. We then add $C'$ to a new set $\cset$ of sub-graphs of $G'$. We need the following claim:

\begin{claim}
Graph $C'\setminus \set{e}$ is planar. Moreover, $\optcro{C'}\leq \cro_{\bphi}(C,\G)+1$.
\end{claim}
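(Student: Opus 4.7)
For the planarity statement ($C'\setminus\{e\}$ is planar), I would invoke Observation~\ref{observation: remains planar} applied to the graph $G'$ (whose sub-graph $G'\setminus E^*$ is planar) and the block $B_{i^*}$, which by the choice of $i^*$ contains a nice path between its endpoints $u_{i^*},v_{i^*}$. This yields that $(G'\setminus B_{i^*})\cup\{e'\}\setminus E^*$ is planar. Since every block $B_j$ on the path $P$ has its $\tilde{B}_j'$-graph free of $E^*$-edges (because $B_j\notin\bset_1$), every actual edge of $C$ (which lies in some $B_j\setminus B_{j+1}$ for $j\le i^*$) avoids $E^*$. Hence $C'\setminus\{e\}=C\cup\{e'\}$ is a subgraph of $(G'\setminus B_{i^*})\cup\{e'\}\setminus E^*$ and is therefore planar.

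For the crossing bound, I would construct an explicit drawing $\psi$ of $C'$ from $\bphi$. First extend $\bphi$ to a drawing of $G'$ by routing each type-$1$ artificial edge along its associated nice path in $\G$. Then keep $\bphi_C$ as the drawing of $C$ in $\psi$, and add $e,e'$ as follows: invoke Observation~\ref{observation: path} on the $2$-connected graph $G'$ and its block $B_1$ to get a $uv$-path $Q_e\subseteq (G'\setminus B_1)\cup\{u,v\}$, and route $e$ along $Q_e$ in a thin strip on one side of its image in $\bphi$, via the formal routing-along-paths construction of Section~\ref{sec:prelims}. Route $e'$ analogously along the nice $u_{i^*}v_{i^*}$-path $Q_{e'}\subseteq B_{i^*}$. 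Self-loops of $e,e'$ arising from self-intersections of the strips are removed by short-cutting.

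The crossing count then splits cleanly. The drawing $\bphi_C$ contributes $\cro_{\bphi}(C)$ internal crossings. The vertices of $Q_e$ lying in $V(C)$ are only $u$ and $v$, and at each the strip can be aligned with the first/last edge of $Q_e$ (which lies in $G'\setminus B_1$, not in $C$), so $e$ crosses an edge $f\in C$ in $\psi$ exactly when $f$ crosses $Q_e$ in $\bphi$, giving $\cro_{\psi}(e,C)\le \cro_{\bphi}(Q_e,C)\le \cro_{\bphi}(\G\setminus B_1,C)$. Symmetrically $\cro_{\psi}(e',C)\le \cro_{\bphi}(B_{i^*},C)$. The two curves $e,e'$ can then be locally uncrossed at pairs of consecutive crossings down to at most one mutual crossing without changing their crossings with any third edge. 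Using the edge partition $E(\G)=E(C)\sqcup E(\G\setminus B_1)\sqcup E(B_{i^*})$, summing yields
\[
\cro_{\psi}(C')\le \cro_{\bphi}(C)+\cro_{\bphi}(\G\setminus B_1,C)+\cro_{\bphi}(B_{i^*},C)+1 = \cro_{\bphi}(C,\G)+1.
\]

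The main subtlety I anticipate is the strip-routing bookkeeping, since the images of $Q_e,Q_{e'}$ in $\bphi$ need not be self-avoiding, causing the strips themselves to overlap; this is exactly the scenario handled by the formal routing construction in Section~\ref{sec:prelims}. Combined with the observation that both short-cutting and local uncrossings are topological operations preserving crossings with third edges, the bound follows.
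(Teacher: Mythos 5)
Your proof takes essentially the same route as the paper's: you use Observation~\ref{observation: remains planar} for the planarity of $C'\setminus\{e\}$, and for the crossing bound you route $e$ along a $u$--$v$ path outside $B_1$ (guaranteed by Observation~\ref{observation: path}) and $e'$ along the nice path inside $B_{i^*}$, then uncross $e$ and $e'$ down to a single mutual crossing, exactly as the paper does. Your version is a bit more explicit about invoking the strip-routing machinery of Section~\ref{sec:prelims} and about the edge partition used in the final summation (and it applies Observation~\ref{observation: remains planar} to $G'$ rather than directly to $B_1$ and then restricts), but these are only presentational differences from the paper's argument.
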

\begin{proof}
 Since block $B_{i^*}$ contained a nice path, denoted by $P'$, connecting $v_{i^*}$ to $u_{i^*}$, and since $B_1\setminus B_{i^*}$ did not contain edges of $E^*$, from Observation~\ref{observation: remains planar}, $C\cup\set{e'}=C'\setminus\set{e}$ is planar.

For the second part, from Observation~\ref{observation: path}, there is a path $P''$, connecting the endpoints of $B_1$ in graph $(\G\setminus B_1)\cup I(B_1)$. 
Consider now the optimal embedding $\bphi$ of $\G$, and remove from it all edges and vertices except for those in $C, P',P''$. This drawing gives a drawing $\phi$ of graph $C'$, where edge $e$ is drawn along the image of $P'$, and edge $e'$ along the image of $P''$. The number of crossings in this drawing is at most $\cro_{\bphi}(C,\G)+\cro_{\bphi}(P',P'')$. Finally, if the images of edges $e,e'$ cross multiple times in $\phi$, we can un-cross them, without increasing the number of crossings between any other pair of edges. This will result in a drawing of $C'$ with at most $ \cro_{\bphi}(C,\G)+1$ crossings.
\ifabstract \qed \fi \end{proof}

Since $|\cset|\leq |\pset|\leq 2|E^*|$, we have that $\sum_{C'\in \cset}\optcro{C'}\leq \sum_{C'\in \cset}(\cro_{\bphi}(C,\G)+1)\leq O(\optcro{\G}+|E^*|)$. 

Finally, let $\bset=\bset_1\cup\bset_2\cup \bset_3$. From the above discussion, $|\bset|\leq O(|E^*|)$. Moreover, the total number of children of blocks in $\bset$ in the tree $\tset$ is also bounded by $O(|\bset|)\leq O(|E^*|)$. 
Let $\tilde{B}'\in \bset$ be any such block, and let $B_1,\ldots,B_{\kappa}$ be its children. Recall that $\tilde{B}'$ is obtained from $B$ by replacing each child $B_i$ by an artificial edge $e_i$, connecting the endpoints $(v_i,u_i)$ of $B_i$. For each such child $B_i$, there is also a path $P_i:v_i\connect u_i$, $P_i\sse B_i$. Additionally, we have added an edge $e$ connecting the endpoints $(u,v)$ of block $B$. We associate this edge with a path $P: u\connect v$, $P\sse(\G\setminus B)\cup I(B)$, that is guaranteed by Observation~\ref{observation: path}. We now add the edges 
$e,e_1,\ldots,e_{\kappa}$ to $E^*$, and we denote by $\hat{E}^*$ be the resulting set of edges. We then have $|\hat{E}^*|\leq O(|E^*|)$.

Clearly, for each $\tilde{B}'\in \bset$, $\tilde{B}'\setminus\hat{E}^*$ is planar. This is since $B\setminus E^*$ was planar, and all the edges that have been added to $B$ in order to obtain $\tilde{B}'$, were also added to $\hat{E}^*$. Moreover, $\optcro{\tilde{B}'}\leq \optcro{\G}$, since we have a collection $P,P_1,\ldots,P_{\kappa}$ of edge disjoint paths associated with the edges $e,e_1,\ldots,e_{\kappa}$, that are contained in $\G\setminus \tilde{B}'$, connecting the endpoints of their corresponding edges.

Let $\Gamma=\aset\cup \bset\cup\cset$. Notice that since we have added artificial edges, it is possible that graphs $A\in \Gamma$ now contain parallel edges. For each such graph $A\in \Gamma$, we let $A'$ denote the corresponding graph with no parallel edges, that is, we replace every set of parallel edges with a single edge. Let $\aset',\bset',\cset'$ and $\Gamma'$ be the collections of these modified graphs, corresponding to the collections $\aset,\bset,\cset$ and $\Gamma$, respectively.

As we have already observed, each graph $A'\in \aset'\cup \cset'$ can be decomposed into a planar graph plus one additional edge, and
 $\sum_{A\in \aset\cup \cset} \optcro{A}\leq O(\optcro{\G}+|E^*|)$. Using Theorem~\ref{thm planar and edge}, we can efficiently find drawings $\psi_{A'}$ of graphs $A'\in \aset'\cup \cset'$, with at most $O(\dmax\cdot(\optcro{\G}+|E^*|))$ crossings in total.
 We can also use Theorem~\ref{thm:main2} to find a drawing $\psi_{A'}$ for each graph $A'\in \bset'$, having at most $O(\dmax\cdot |E^*|\cdot(\optcro{\G}+|E^*|))$ crossings in total. Overall, from the above discussion, for each graph $A'\in \Gamma'$, we can efficiently find a drawing $\psi_{A'}$ of $A'$, such that the total number of crossings in these drawings is bounded by $O(\dmax\cdot |E^*|\cdot(\optcro{\G}+|E^*|))$.

\subsection{Part 2: Composition of Drawings}
In this section we show how to compose the drawings of the graphs in $\Gamma'$, to obtain the final drawing of $\G$.
We build a binary decomposition tree $\cal T'$ corresponding to the
collection $\Gamma$ of sub-graphs of $\G$, as follows.
The graph at the root of the tree is $\G$. The graphs at the leaves of $\cal T'$
are the graphs in $\Gamma$. For every non-leaf node, the corresponding graph $G_0$
is the composition of its two child subgraphs $G_1$
and $G_2$ along the unique artificial edge that belongs to both $G_1$ and $G_2$.
Notice that our original decomposition tree $\tset$ can be turned into a binary tree whose leaves are graphs in $\bset\cup \cset$, and we can add graphs in $\aset$ to this tree one-by-one, as we merge them with the root of the tree, to obtain the final binary tree $\tset'$.

\begin{theorem}
Suppose that we are given the decomposition tree $\cal T'$,
and drawings $\psi_{A'}$ of graphs $A'\in \Gamma'$.
Then we can efficiently find a drawing of $\G$ with at most
$\dmax^2 \sum_{A'\in {\Gamma'}} \cro_{\psi_{A'}} (A')$ crossings.
\end{theorem}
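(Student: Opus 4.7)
The plan is to construct the drawing of $\G$ inductively along the decomposition tree $\cal T'$, processing it from leaves to root. At each internal node we merge the drawings $\phi_1,\phi_2$ of the two child subgraphs $G_1,G_2$ along their shared artificial edge $e=(u,v)$ via a \emph{strip embedding}: take a thin strip $S_e$ around the image of $e$ in $\phi_1$ and place $\phi_2$ (with $e$ deleted) inside $S_e$, identifying the images of $u$ and $v$ in $\phi_2$ with those in $\phi_1$. After deleting the curve of $e$, the edges of $G_1$ whose images crossed $e$ in $\phi_1$ must now be extended across $S_e$, since each enters $S_e$ on one side and must exit on the other. For each such edge I route it close to whichever of $u$ or $v$ is nearer to the old crossing point, going around that endpoint from one side of the strip to the other. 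Only the edges of $G_2$ incident to that endpoint can be crossed in this way, and there are at most $\dmax$ such edges, so a single re-routing introduces at most $\dmax$ new crossings with $G_2$. Routing all re-routed edges in a consistent ``onion-layer'' order (preserving the cyclic order in which they entered $S_e$) ensures no new crossings among re-routed edges themselves.

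To bound the total number of crossings, I charge every crossing in the final drawing to an \emph{origin crossing} in some $\psi_{A'}$, and argue that each origin crossing accounts for at most $\dmax^2$ crossings in the final picture. Fix a crossing $c$ in $\psi_{A'}$ between edges $e_1,e_2$ of $A'$. If both $e_1,e_2$ are non-artificial (hence edges of $\G$), then $c$ is topologically preserved, contributing a single crossing. If exactly one of them, say $e_2$, is artificial, then at the merge where $e_2$ is resolved, the role of $e_2$ is taken over by the strip containing the other subgraph's drawing; by the merge analysis above, $e_1$ then contributes at most $\dmax$ crossings in place of the original $c$. Finally, if both $e_1$ and $e_2$ are artificial, then when $e_1$ is resolved the crossing $c$ splits into at most $\dmax$ crossings along the corresponding strip boundary, and each of these in turn becomes at most $\dmax$ crossings when $e_2$ is resolved, for a total of at most $\dmax^2$. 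Summing over all origin crossings gives $\sum_{A'\in\Gamma'}\cro_{\psi_{A'}}(A')\cdot\dmax^2$, as required.

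The main obstacle is to make the charging argument rigorous: I must verify that re-routings at successive merges do not cascade and compound the amplification beyond the $\dmax^2$ factor. The crucial observation is that a re-routing introduced at vertex $u$ when resolving an artificial edge $e$ only creates crossings with edges incident to $u$, so a later merge along a different artificial edge whose endpoints avoid $u$ does not interact with this re-routing at all. When a later artificial edge $e'$ does share an endpoint $u$ with $e$, the edges re-routed around $u$ when resolving $e$ lie within a tiny neighborhood of $u$, and resolving $e'$ only adds at most $\dmax$ further crossings per previously-created crossing near $u$ --- which is exactly the extra factor absorbed by the ``one artificial edge'' case of the charging. I also need to check carefully that the strip $S_e$ around $e$ in $\phi_1$ can actually accommodate $\phi_2$ (which is done by first subdividing edges of $G_2$ finely enough and then scaling into $S_e$), and that the choice of side to re-route around never contradicts the cyclic orderings enforced at $u$ and $v$ by other merges. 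With these checks in place, the inductive construction yields the drawing of $\G$ with the claimed $\dmax^2\sum_{A'\in\Gamma'}\cro_{\psi_{A'}}(A')$ crossings.
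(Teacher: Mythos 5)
Your merge step (placing $\phi_2$ in a thin strip around $e$ and re-routing each crossing edge of $G_1$ around one endpoint of $e$) is close in spirit to the paper's construction (which superimposes the two drawings with $v$ and part of $e$ on the outer boundary of each, then contracts a connecting curve), and the case analysis you set up for the charging has the right shape. The gap is in the claim that the cascade of re-routings is ``absorbed.'' You correctly identify the danger, but the argument that the blow-up stays at $\dmax^2$ does not hold up. When an origin crossing $(e_1,e_2)$ with $e_2$ artificial is resolved at an endpoint $u$, the re-routed $e_1$ now crosses up to $\dmax$ edges incident to $u$ in the other subgraph, and some of these may themselves be artificial edges (incident to $u$) that will be resolved at later merges. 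Each such later merge replaces one of the newly-created crossings with up to $\dmax$ further ones; the chain of such events can have depth linear in the size of the decomposition tree. Your case analysis charges a factor of $\dmax$ for each artificial edge \emph{in the original crossing}, but the cascade is driven by artificial edges that $e_1$ meets only during re-routing and that do not appear in the original crossing. Saying the extra factor is ``exactly the extra factor absorbed by the one artificial edge case'' is not an argument --- that case is supposed to produce $\dmax$ descendants, whereas each further multiplication in the chain costs another $\dmax$.

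The paper closes exactly this gap with a potential-function amortization that your proposal does not have. It assigns edge weights top-down in the decomposition tree, giving each original edge of $\G$ weight $1$ and each artificial edge $e=(u,v)$ the weight $\min\{\deg^w_{G_1\setminus e}(u),\deg^w_{G_1\setminus e}(v),\deg^w_{G_2\setminus e}(u),\deg^w_{G_2\setminus e}(v)\}$, which keeps all weighted vertex degrees bounded by $\dmax$. The weighted cost of a crossing is defined as the product of the two edge weights, and the merge is performed around the endpoint at which the minimum is attained, which guarantees that the weighted cost never increases during a merge. Since every edge of $\G$ has weight $1$, the weighted cost of the final drawing equals the actual number of crossings, while each crossing in a leaf drawing has weighted cost at most $\dmax^2$. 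The essential point, and the one your charging scheme needs, is that the weight of an artificial edge already ``pre-pays'' for the entire cascade: any artificial edge $f$ incident to $u$ has its weight included in $\deg^w(u)$, so the future blow-up from resolving $f$ is already accounted for inside $weight(e)$. Without this (or some equivalent conserved quantity), the per-origin-crossing bound of $\dmax^2$ does not follow from the local re-routing analysis. Note also that the paper's choice of \emph{which} endpoint to contract around is not ``whichever is nearer to the old crossing point'' but is dictated by the minimum weighted degree; this choice is exactly what makes the weighted cost conserved rather than merely non-blowing-up in a single step.
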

\begin{proof}
We start by assigning weights to the edges of the graphs in the decomposition tree $\tset'$. Once the weights are assigned, for each graph $G_0$ in the tree, the weighted degree of a vertex $x\in V(G_0)$, denoted by $\deg_{G_0}^w x$, is the sum of 
the weights of the edges incident to $x$ in $G_0$. 
We assign the weights to the edges of the graphs from the top to the bottom of the tree $\tset'$. For the root graph $\G$, the weights of all its edges (which are non-artificial edges), are $1$. Let $G_0$ be the current graph, with its two children $G_1$ and $G_2$,
that share an artificial edge $e=(u,v)$.  The weights of all edges, other than the edge $e$ remain in graphs $G_1$ and $G_2$ the same as in graph $G_0$. The weight of the edge $e$ is set in both graphs $G_1$ and $G_2$ to be:
 
$$weight(e) = \min\set{\deg_{G_1\setminus e}^w(u), \deg_{G_1\setminus e}^w(v), \deg_{G_2\setminus e}^w(u), \deg_{G_2\setminus e}^w (v)}.$$

 It is easy to see that if, for all vertices $x\in V(G_0)$, $\deg_{G_0}^w(x)\leq \dmax$, then for all vertices $y\in V(G_i)$, for $i\in\set{1,2}$, $\deg_{G_i}^w(y)\leq \dmax$ as well. Therefore, the weighted degrees of all vertices in all graphs in the tree $\tset'$ are bounded by $\dmax$.
Finally, we assign weights to edges of graphs $A'\in \Gamma'$, as follows. Let $A\in \Gamma$ be the graph corresponding to $A'$. For each set $e_1,\ldots,e_{\kappa}$ of parallel edges in $A$, the weight of the corresponding edge in $A'$ is the sum of the weights of the edges $e_1,\ldots,e_{\kappa}$ in $A$. The weights of all other edges are identical in $A$ and $A'$.

We now define the weighted cost of a drawing $\psi$ of any edge-weighted graph $H$, $\cro_\psi^w(H)$, as follows.
The weighted cost of a crossing of two edges of weights $w_1$ and $w_2$ is $w_1 w_2$. 
The cost of the drawing is the sum of weighted costs of all crossings. 

Notice that for each graph $A'\in \Gamma'$, the drawing $\psi_{A'}$ of $A'$ induces a drawing $\psi_A$ of the corresponding graph $A\in \Gamma$, such that the weighted cost of $\psi_A$ is bounded by that of $\psi_{A'}$. 
Since the weighted degrees of vertices in all graphs in $\Gamma'$ are bounded by $\dmax$, we have
$$\sum_{A\in {\Gamma}} \cro_{\psi_A}^w (A) \leq\sum_{A'\in {\Gamma'}} \cro_{\psi_{A'}}^w (A') \leq \dmax^2 \sum_{A'\in {\Gamma'}} \cro_{\psi_{A'}} (A').$$
We now combine all drawings of graphs in $\Gamma$ as follows. We proceed from the bottom 
to the top of the tree $\tset'$. At each node $G_0$, we combine the
two drawings $\psi_1$ and $\psi_2$ of its two children $G_1$ and $G_2$
into a drawing $\psi_0$ of $G_0$ so that 
$$\cro^w_{\psi_0} (G_0) \leq \cro^w_{\psi_1} (G_1) + \cro^w_{\psi_2} (G_2).$$
Finally, we obtain a drawing $\psi$ of $\G$ with
$$\cro_{\psi}(\G) \leq \cro^w_{\psi} (\G) \leq \sum_{A'\in {\Gamma'}} \cro_{\psi_{A'}}^w (A')
\leq \dmax^2 \sum_{A'\in {\Gamma}} \cro_{\psi_{A'}} (A').$$

We now show how to combine the drawings $\psi_1$ and $\psi_2$ of graphs $G_1$ and $G_2$. Let $e=(u,v)$ be the unique artificial edge shared by $G_1$ and $G_2$.
Without loss of generality, we assume that $weight(e) = \deg^w_{G_2 \setminus e}(u)$.

We note that we can assume that the following properties hold (for each $i=1,2$):
\begin{itemize} 
\item the vertex $v$ lies on the external boundary of the drawing $\psi_i$;
\item there is a point $t_i$ on the drawing of the edge 
$e$ in $\psi_i$, such that the segment of the drawing of $e$ 
between $t_i$ and $\psi_i(v)$ lies on the external boundary of the drawing $\psi_i$.
\end{itemize}

If these properties do not hold, we transform each drawing $\psi_i$ as follows.
For convenience, assume that the drawing $\psi_i$ is on the 2-sphere.
We take a point $t_i$ on the curve corresponding to the edge $e$ in the drawing $\psi_i$, so that there
are no crossing points on the segment of $e$
between $t_i$ and $\psi_i(v)$.
Then we take a point $t_i'$ that lies on the same face of $\psi_i$  as $v$ and $t_i$.
Finally, we perform a stereographic projection from $t_i'$
and obtain the desired drawing $\hat\psi_i$. 
Since $v$ and $t_i$ lie on the face bounding $t_i'$ in $\psi_i$, it follows that they both lie on the outer face in $\hat\psi_i$.

We superimpose drawings $\psi_1$ and $\psi_2$ so that drawings of $G_1$ and $G_2$
do not overlap and points $\psi_1(v)$, $\psi_2(v)$, $t_1$, and $t_2$
lie on the external boundary of the drawing. We then connect points
$t_1$ and $t_2$ with a curve $\gamma_t$ and points $\psi_1(v)$ and $\psi_2(v)$
with a curve $\gamma_v$ so that curves $\gamma_t$ and $\gamma_v$
do not cross each other and do not cross the drawings of $G_1$ and $G_2$ (see Figure~\ref{fig:composition}).
Now, we erase the drawings of segments of $\psi_i(e)$
between points $t_i$ and $v$. Let $\gamma_u$ be the concatenation
of remaining pieces of $\psi_1(e)$ and $\psi_2(e)$
and $\gamma_t$. The curve $\gamma_u$ connects $\psi_1(u)$
and $\psi_2(u)$. Finally, we ``contract'' curves $\gamma_u$ and $\gamma_v$:
we move points $\psi_2(u)$ and $\psi_2(v)$ along the curves $\gamma_u$ and $\gamma_v$,
until they reach $\psi_1(u)$ and $\psi_1(v)$. We route each edge $e$ incident to $u$
(respectively $v$) in $G_2 \setminus e$: first along the curve $\gamma_u$ (respectively $\gamma_v$)
and then along the original drawing $\psi_2$ of $e$. (If edges parallel to $e$ belong to $G_2$, we re-route them in the same way: first along $\gamma_v$, then along their original drawing in $\psi_2$, and finally along $\gamma_u$).
We obtain an embedding $\psi_0$ of $G_0$ (curves $\gamma_u$, $\gamma_v$ and
the embeddings of the edge $e$ are not parts of $\psi_0$).
Figure \ref{fig:composition} depicts an example of the above composition step.

\begin{figure}
\begin{center}
\scalebox{0.80}{\includegraphics{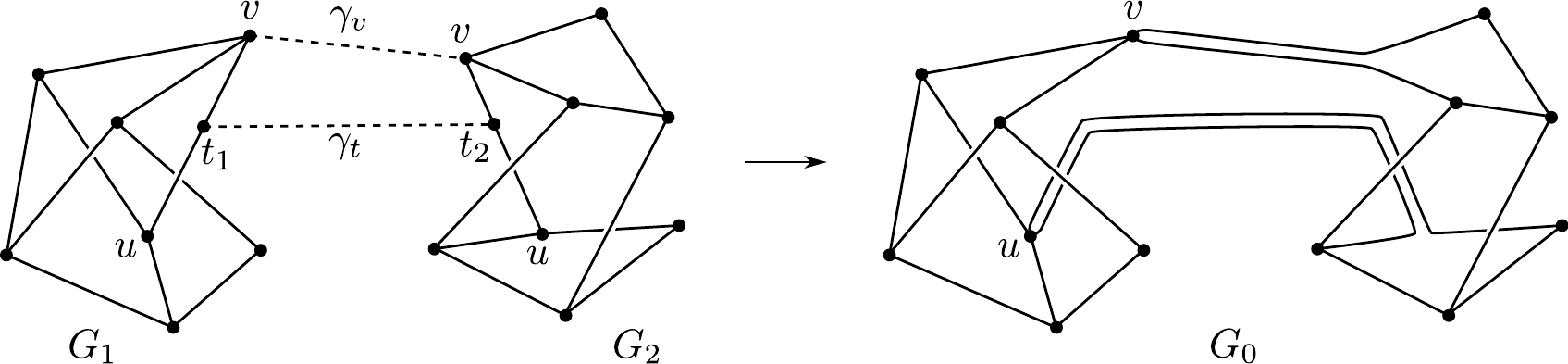}}
\caption{Obtaining a drawing for $G_0$ by composing the drawings for $G_0$ and $G_1$.\label{fig:composition}}
\end{center}
\end{figure}

Let us compute the cost of drawing $\psi_0$.
Since $\gamma_v$ does not cross the drawings $\psi_1$ and $\psi_2$,
we do not introduce any new crossings when we contract $\gamma_v$.
For every crossing of an edge $e'\in E(G_1\cup G_2)$ with $\psi_1(e)$ or $\psi_2(e)$,
we introduce crossings between all edges incident to $u$ in $G_2\setminus e$ and  $e'$.
The total weighted cost of these crossings is $\deg^w_{G_2\setminus e}(u) \cdot weight(e')$.
It is equal to $weight(e) \cdot weight(e')$,
the cost of the crossing between $e$ and $e'$. Therefore, the total weighted 
cost of the drawing does not increase, that is,
$$\cro_{\psi_0}^w(G_0) \leq \cro_{\psi_1}^w(G_1) + \cro_{\psi_2}^w(G_2).$$
\ifabstract \qed \fi \end{proof}

\fi
\iffull
\section{Algorithms for bounded-genus graphs}\label{sec:genus}
In this section we present the proof of Theorem \ref{thm: bounded genus}.

\begin{remark}
Note that for any \emph{fixed} $\gamma\geq 0$, given a graph $G$ of genus $\gamma$, we can find an embedding into a surface of genus $\gamma$ in linear time \cite{Mohar99,KawarabayashiMR08}.
However, in our algorithm we do not assume that the input graph is embedded into a surface of \emph{minimum} genus.
Therefore, if the input graph $G$ has genus $\gamma$, but we are only given an embedding into a surface of genus $g>\gamma$, the approximation guarantee of the drawing produced by our algorithm will depend on $g$.
This is in particular interesting when the genus of the graph $\gamma$ is super-constant, in which case computing an embedding of minimum genus becomes NP-hard \cite{genus-np-complete}.
\end{remark}

For any graph $H$, and an embedding $\tau$ of $H$ into a surface of genus at least 1, the \emph{nonseparating dual edge-width} of $(H, \tau)$, denoted by $\ndew(H,\tau)$, is the length of the shortest surface-nonseparating cycle in the dual of $H$, w.r.t. the embedding $\tau$.
We also write $\ndew(H)$ when $\tau$ is clear form the context.
We will use the following algorithmic result by Hlin\v{e}n\'{y} and Chimani~\cite{crossing_genus}.

\begin{theorem}[Hlin\v{e}n\'{y} \& Chimani \cite{crossing_genus}]\label{thm:genus_soda}
Let $G$ be a graph embedded in an orientable surface of genus $g\geq 1$, with $\ndew(G) \geq 2^{2g+2}\cdot \dmax$.
Then there is an efficient algorithm that computes a drawing of $G$ in the plane with at most $3\cdot 2^{3g+2} \cdot \dmax^2 \cdot \optcro{G}$ crossings.
\end{theorem}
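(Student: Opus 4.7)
The plan is to planarize $G$ by cutting along $g$ surface-non-separating cycles in the given genus-$g$ embedding and then re-routing the affected edges in the plane, finally comparing the resulting crossing count to $\optcro{G}$ via a rigidity argument that exploits the hypothesis $\ndew(G) \geq 2^{2g+2}\dmax$.

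First, I would compute a sequence of non-separating dual cycles $C_1,\dots,C_g$ whose cumulative cutting reduces the surface to a sphere. Each $C_i$ can be chosen as a shortest non-separating cycle in the dual of the (possibly partially cut) surface, found by running BFS from every vertex and keeping the shortest cycle that remains non-separating. A standard surgery argument shows that each $C_i$ has length $|C_i| \leq 2^{O(g)}\cdot \ndew(G)$, and that after cutting along all $C_i$ the surface is planar. Each $C_i$ determines a primal edge set $E_i \sse E(G)$ of the same size, and cutting $G$ along $\bigcup_i E_i$ yields a graph $\hat G$ with a canonical planar drawing $\hat\phi$ inherited from the surface embedding, in which each edge of $E := \bigcup_i E_i$ has been split into two ``half-edges'' sitting on opposite sides of a cut.

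Second, I would re-glue the half-edges in the plane to recover a drawing of $G$ itself. For each cut $C_i$, the $2|C_i|$ half-edges on its two sides appear in matched circular order along a pair of faces of $\hat\phi$, so each half-edge can be reconnected by a curve routed along the cut boundary; since each such curve traverses at most $O(|C_i|)$ faces and each face is bounded by at most $\dmax$ edges, the reconnection introduces at most $O(\dmax)$ new crossings per edge of $C_i$. Summing, the total number of crossings introduced is $O\!\left(\dmax \cdot \sum_i |C_i|\right) \leq 2^{O(g)}\cdot \dmax \cdot \ndew(G)$, producing a plane drawing of $G$ with at most $2^{O(g)}\cdot\dmax\cdot\ndew(G)$ crossings overall.

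The heart of the proof, and the main obstacle, is the matching lower bound $\optcro{G} \geq \Omega\!\left(\ndew(G)/\dmax\right)$. The argument runs as follows: given any plane drawing $\phi^*$ of $G$ with $\optcro{G}$ crossings, replace every crossing by a degree-$4$ vertex to obtain a plane graph $G^+$, and observe that $G$ is a minor of $G^+$ on a sphere; topological surgery then produces an embedding of $G$ on some surface of genus at most $g$ in which at most $O(\optcro{G})$ edges have been modified. Because $\ndew(G) > 2^{2g+2}\dmax$, a theorem on the uniqueness of polyhedral embeddings (originating with Robertson--Vitray and used by Gitler et al.~\cite{crossing_projective} and Hlin\v{e}n\'{y}--Salazar~\cite{crossing_torus}) implies that this ``ambient'' genus-$g$ embedding must essentially coincide with the given one, so a planar drawing can be produced only by tearing each of the $g$ handles, each of which requires crossing at least $\Omega(\ndew(G)/\dmax)$ edges (because any non-contractible curve in the original surface must intersect at least $\ndew(G)$ dual edges).

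Combining the two bounds yields an approximation ratio
\[
\frac{2^{O(g)}\dmax\cdot \ndew(G)}{\Omega(\ndew(G)/\dmax)} = 2^{O(g)}\cdot \dmax^{O(1)},
\]
matching the claimed $3\cdot 2^{3g+2}\cdot \dmax^2$ up to constants once the surgery and rigidity steps are executed with the explicit constants tracked. The delicate point is ensuring that the rigidity/uniqueness step tolerates $O(\optcro{G})$ modifications of the drawing: this is precisely why the lower bound $\ndew(G)\geq 2^{2g+2}\dmax$ is required, as it provides enough ``slack'' for the local surgery to leave the genus-$g$ combinatorial embedding intact.
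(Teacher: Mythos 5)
First, note that this statement is not proved in the paper at all: it is quoted verbatim from Hlin\v{e}n\'{y} and Chimani \cite{crossing_genus} and used purely as a black box in Section~\ref{sec:genus}, so there is no internal proof to compare your argument against. Judged on its own merits, your outline does follow the general shape of the known proof (cut along non-separating dual cycles, re-insert the severed edges, and lower-bound $\optcro{G}$ by an argument exploiting the large dual edge-width), but two of its load-bearing steps do not hold as stated.

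The first gap is in the upper bound. After cutting the surface along a non-separating dual cycle $C_i$, the two copies of the cut become the two boundary components of an annulus (in the final planar picture, two nested closed curves with the rest of the drawing of $G_{i+1}$ filling the annulus between them). A curve reconnecting a half-edge on the outer boundary to its partner on the inner boundary cannot be ``routed along the cut boundary'': it must traverse the annulus, and hence must cross a dual path joining the two boundary components. The number of edges on such a path is governed by the length of a dual cycle interleaving with $C_i$ (which, by the hypothesis, is at least $\ndew(G)$), not by $\dmax$. So the re-insertion cost is on the order of $|C_i|$ times the length of an interleaving cycle --- quadratic in $\ndew(G)$ --- rather than the $O(\dmax\cdot|C_i|)$ you claim. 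The second gap is in the lower bound: planarizing an optimal drawing and re-splitting the crossing vertices yields an embedding of $G$ on a surface of genus up to $\optcro{G}$, not of genus at most $g$, so the appeal to uniqueness of polyhedral (large face-width) embeddings of the \emph{given} genus-$g$ surface does not get off the ground, and the resulting bound $\optcro{G}\geq\Omega(\ndew(G)/\dmax)$ would in any case be too weak to cancel the corrected, quadratic upper bound. The actual argument of \cite{crossing_genus} (building on \cite{crossing_torus,crossing_projective}) lower-bounds $\optcro{G}$ by the \emph{stretch} of the embedding --- essentially the minimum product of lengths of two interleaving dual cycles, divided by $\poly(\dmax)$ --- which is quadratic in $\ndew(G)$ and matches the re-insertion cost; that is the missing ingredient here.
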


Let $G$ be a graph, and let $\sigma$ be an embedding of $G$ into an orientable surface $S$ of genus $g$.
We begin by computing an integer ${\kappa}\in \{0,\ldots,g\}$, and a sequence of graphs $G_0,\ldots,G_{\kappa}$.
For each graph $G_i$ we also compute a drawing $\sigma_i$ of $G_i$ into a surface of genus $g-i$.
Initially, we set $G_0 = G$, and $\sigma_0 = \sigma$.
For each $i:0\leq i<g$, if $\ndew(G_i,\sigma_i)\geq 2^{2g+2}\cdot \dmax$, then we set ${\kappa}=i$, and we terminate the sequence $G_0,\ldots,G_{\kappa}$.
Otherwise, if $\ndew(G_i,\sigma_i) < 2^{2g+2} \cdot \dmax$, then 
we first compute a shortest surface-nonseparating cycle $C_i^*$ in the dual of $G_i$ w.r.t. the embedding $\sigma_i$. Such a cycle can be found in time $O(g^3 n \log n)$ using the
algorithm of Cabello and Chambers \cite{CabelloC07}.
We construct $G_{i+1}$ by removing from $G_i$ all edges whose duals are in $E(C_i^*)$.
We also construct an embedding of $G_{i+1}$ by cutting the surface into which $G_i$ is embedded along the cycle $C_i^*$.
This gives us an embedding $\sigma_{i+1}$ of $G_{i+1}$.
As observed in \cite{crossing_genus}, the graph $G_{i+1}$ is a spanning subgraph of $G_i$, and the embedding $\sigma_{i+1}$ is into a surface of genus $g-i-1$.

Let us define
\[
E^*_1 = E(G)\setminus E(G_{\kappa}).
\]
We have
\begin{align}
|E^*_1| = \sum_{i=0}^{{\kappa}-1} |E(C_i^*)| = \sum_{i=0}^{{\kappa}-1} \ndew(G_i, \sigma_i) < {\kappa} \cdot 2^{2g+2} \cdot \dmax \leq g\cdot 2^{2g+2} \cdot \dmax. \label{eq:genus1}
\end{align}

If ${\kappa}=g$, then the graph $G_{\kappa}$ is drawn into a surface of genus $0$, and therefore $G_{\kappa}$ is planar.
Otherwise, if ${\kappa}<g$, then we have a drawing $\sigma_{\kappa}$ of the graph $G_{\kappa}$ into a surface of genus $g-{\kappa}\geq 1$, and with
\[
\ndew(G_{\kappa}, \sigma_{\kappa})\geq 2^{2g+2}\cdot \dmax > 2^{2(g-{\kappa})+2}\cdot \dmax.
\]
This means in particular that we can run the algorithm from Theorem \ref{thm:genus_soda} to obtain a drawing $\phi$ of $G_{\kappa}$ into the plane with at most $3\cdot 2^{3(g-{\kappa})+2} \cdot \dmax^2 \cdot \optcro{G_{\kappa}}$ crossings.
Define the set $E^*_2\subseteq E(G_{\kappa})$ of edges to contain all edges participating in crossings in $\phi$.
We have
\begin{align}
|E^*_2| \leq |\cro_\phi(G_{\kappa})| \leq 3\cdot 2^{3(g-{\kappa})+2} \cdot \dmax^2 \cdot \optcro{G_{\kappa}} \leq 3\cdot 2^{3g+2} \cdot \dmax^2 \cdot \optcro{G}. \label{eq:genus2}
\end{align}

Let $E^*=E_1^*\cup E_2^*$.
Observe that the graph $G\setminus E^* = G_{\kappa}\setminus E_2^*$ is planar.
The assertion of Theorem~\ref{thm: bounded genus} follows trivially if the graph $G$ is planar, so we may assume that $\optcro{G}\geq 1$.
By \eqref{eq:genus1} and \eqref{eq:genus2} we therefore have
\begin{align}
|E^*| = |E^*_1| + |E^*_2| = 2^{O(g)} \cdot \dmax^2 \cdot \optcro{G}. \nonumber
\end{align}
Running the algorithm from Theorem \ref{thm:main} with the planarizing set $E^*$, we obtain a drawing of $G$ into the plane with at most $\dmax^{O(1)} \cdot |E^*| \cdot (\optcro{G} + |E^*|) = \dmax^{O(1)} \cdot 2^{O(g)} \cdot \optcrosq{G}$ crossings.

To obtain an $\tilde{O}\left(2^{O(g)} \cdot \sqrt{n}\right)$-approximation for bounded-degree graphs, 
run the above algorithm, and the algorithm of Even et al.~\cite{EvenGS02}, and output the drawing with fewer crossings.
\hfill \ensuremath{\Box}

\fi

\iffull
\section{Proof of Theorem~\ref{thm: Even: extension}}
Recall that the algorithm of Even et al.~\cite{EvenGS02} finds a drawing of a bounded degree graph with at most $O(\log^3 n) \cdot (n+\optcro{G})$
crossings. We first show that this algorithm can be extended to arbitrary graphs
to produce drawings with at most $\poly(\dmax) \log^3 n \cdot (n+\optcro{G})$ crossings, where $\dmax$ is the maximum vertex degree in $G$.
We then note that by using the approximation algorithm of Arora et al.~\cite{ARV} for {\sf Balanced Separator}
instead of the algorithm of Leighton and Rao~\cite{LR}, this guarantee can be improved to $\poly(\dmax) \log^2 n \cdot (n+\optcro{G})$.

\begin{lemma}  
Suppose that there is a polynomial time algorithm $\cal A$, that, for any $n$-vertex graph $G=(V,E)$ with vertex degrees at most 3, finds a drawing of $G$ with at most $\alpha (n + \optcro{G})$ crossings. 
Then there is a polynomial time algorithm that finds a drawing of any graph $G$ with at most $O(\dmax^4 \cdot \alpha\cdot  (n + \optcro{G}))$ crossings, where $\dmax$ is the maximum vertex degree in $G$.
\end{lemma}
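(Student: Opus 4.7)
The plan is to apply a standard vertex-splitting reduction: given an $n$-vertex graph $G$ of maximum degree $\dmax$, I would construct an auxiliary graph $G'$ of maximum degree $3$ such that (i) $|V(G')| = O(\dmax \cdot n)$, (ii) $\optcro{G'} \leq \optcro{G}$, and (iii) any drawing of $G'$ can be efficiently converted into a drawing of $G$ with at most a $\poly(\dmax)$ blowup in the number of crossings. Applying $\mathcal{A}$ to $G'$ and converting back will then yield the desired bound.

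To build $G'$, first fix any optimal drawing $\phi$ of $G$, which determines a cyclic rotation of the incident edges at each vertex. Replace each vertex $v$ of degree $d_v$ with a cycle $C_v$ on $d_v$ new vertices $v_1,\ldots,v_{d_v}$, ordered consistently with the rotation at $v$ in $\phi$. Each edge $e_i=(v,u)$ originally incident to $v$ is re-attached to $v_i$ instead. Every $v_i$ then has exactly two cycle neighbors and one external edge, so $G'$ has maximum degree $3$; the total number of vertices is $\sum_v d_v = 2|E(G)| \leq \dmax\cdot n$. Drawing each $C_v$ as a tiny circle around $\phi(v)$ and placing the $v_i$ in the prescribed cyclic order on this circle introduces no new crossings, giving $\optcro{G'} \leq \optcro{G}$. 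Running $\mathcal{A}$ on $G'$ therefore yields a drawing $\psi'$ of $G'$ with at most $\alpha(|V(G')|+\optcro{G'}) = O(\alpha\dmax(n+\optcro{G}))$ crossings.

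To convert $\psi'$ into a drawing $\psi$ of $G$, for each original vertex $v$ I would declare $v_1$ to serve as the image of $v$, and reroute every external edge attached to $v_i$ (for $i\neq 1$) by prepending a tail that follows the shorter arc of $C_v$ between $v_1$ and $v_i$, drawn inside a thin strip parallel to that arc, exactly as in the routing-along-paths construction of Section~\ref{sec:routing along paths}. The analogous operation is performed at the other endpoint of each edge. At most $d_v$ tails may follow any single cycle edge of $C_v$; routing them in $O(d_v)$ nested parallel strips contributes, for that edge, at most $O(d_v)$ times its crossings in $\psi'$. Summing over all cycle edges of $C_v$ and then over all $v$, the tail-to-original crossings blow up the crossing count by a factor of $O(\dmax^2)$, and tail-to-tail crossings (which occur only near the cycle, inside the strip bundle) contribute a further $O(\dmax)$ factor.

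The main obstacle will be the careful accounting in the contraction step: several tails share each cycle edge, interact near shared endpoints on $C_v$, and must be composed with the analogous tails on the far endpoint of each edge. One has to be pedantic about how the nested strips are ordered so that tail-tail crossings are bounded by $O(\dmax)$ per cycle edge rather than by the square of the number of tails. Once this bookkeeping is done, the total blowup relative to $\cro_{\psi'}(G')$ is $O(\dmax^3)$, so that
\[
\cro_{\psi}(G) \;\leq\; O(\dmax^3)\cdot \cro_{\psi'}(G') \;\leq\; O(\dmax^4\cdot \alpha\cdot (n+\optcro{G})),
\]
as required.
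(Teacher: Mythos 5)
There is a genuine gap in the construction of $G'$: you ``fix any optimal drawing $\phi$ of $G$'' and order the cycle $C_v$ consistently with the rotation of $\phi$ at $v$. But $\phi$ is exactly what the algorithm is trying to compute, so this ordering is not available. The claim $\optcro{G'}\leq\optcro{G}$ hinges on matching the optimal rotation, and without it the inequality fails. The paper sidesteps this: it replaces each $v$ by a \emph{path} $P_v$ in an \emph{arbitrary} order and proves instead that $\optcro{\tilde G}\leq\optcro{G}+\sum_u\binom{\deg u}{2}\leq\optcro{G}+n\dmax(\dmax-1)$ -- the mismatch with the optimal rotation is paid for locally by letting the replacement edges cross each other within a tiny disk around each vertex, and the additive $O(n\dmax^2)$ is then absorbed into the $\alpha\cdot n$ term. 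Your cycle-based construction can be repaired in the same spirit: with an arbitrary cyclic order you would get $\optcro{G'}\leq\optcro{G}+O(n\dmax^2)$, which still yields the target $O(\dmax^4)$ bound; but as written the proposal relies on information the algorithm does not have.

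A secondary issue is the $O(\dmax)$ bound you claim for tail-to-tail crossings, attributed to such crossings occurring ``only near the cycle''. That is too optimistic. If a cycle edge $e$ of $C_u$ and a cycle edge $f$ of $C_v$ (with $u\neq v$) cross in $\psi'$, and each carries $\Theta(\dmax)$ tails in parallel strips, rerouting produces up to $\Theta(\dmax^2)$ tail-to-tail crossings at that single point, which need not be anywhere near either cycle. The paper's accounting charges $\dmax^2$ for exactly this case. This does not change the final exponent (one can afford a $\dmax^2$ blowup from the conversion step once the $\optcro{\tilde G}$ bound has been corrected as above), but the stated $O(\dmax)$ justification is not correct.
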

\begin{proof}
The algorithm first constructs an auxiliary graph $\tilde G$ with maximum vertex degree $3$. Informally, $\tilde G$ is the graph obtained from $G$ by replacing every vertex $v$  with a path $P_v$ of length $\deg v$ (e.g., if $G$ is a $d$-regular graph then $\tilde G$ is the replacement product of $G$ and the path of length $d$). Formally, the vertices of $\tilde G$ are pairs $(u, e)$, where $u\in V$, $e\in E$ and $e$ is incident on $u$. The edges of $\tilde G$ consist of two subsets. First, for every edge $e=(u,v)\in E$, we connect the vertices $(u,e)$ and $(v,e)$ of $\tilde G$ with an edge $\tilde e$. We call such edges ``type 1 edges''. Additionally, for each $u\in V(G)$, if  $e_1, \dots, e_{\deg u}$ is the list of all edges incident
on $u$ (in an arbitrary order), then we connect every consecutive pair $(u, e_i),(u, e_{i+1})$ of vertices, for $1\leq i< \deg u$, with a type-2 edge. Let $P_u$ denote the resulting path formed by these edges. This completes the description of $\tilde{G}$. Note that $\tilde G$ has at most $\dmax \cdot n$ vertices, and every vertex of $\tilde G$ has degree at most 3. Also note that if we contract every path $P_u$  in $\tilde G$, for $u\in V(G)$, into a vertex,  we obtain the graph $G$.

We now bound the crossing number of $\tilde G$. Observe that we can obtain a drawing $\phi_{\tilde G}$ of $\tilde G$ from any drawing $\phi_G$ of $G$ as follows. We put each vertex $(u,e)$ on the drawing of the edge $e$ very close to the drawing of $u$. We draw each type-1 edge $\tilde e = ((u,e), (v,e))$ of $\tilde G$  along the segment of the drawing of $e$ in $\phi_G$, connecting the images of $(u,e)$ and $(v,e)$. We draw type-2 edges on the line segments connecting their endpoints. We now bound the number of crossings in this drawing. Notice that there are no crossings between the type-1 and the type-2 edges. The number of crossings between pairs of type-1 edges is bounded by the total number of crossings in $\phi_{G}$. Finally, in order to bound the number of crossings between pairs of type-2 edges, we notice that if $u\neq v$, then the edges of $P_u$ and $P_v$ do not cross. Any pair of edges on path $P_u$ may cross at most once, since any pair of line segments crosses at most once. Therefore, 
there are at most $\binom{\deg u}{2}$ crossings among the edges of the path $P_u$ for every vertex $u$. Overall, 
$$\cro_{\phi_{\tilde G}}(\tilde G) \leq \cro_{\phi_{G}}(G) + \sum_{u\in V} \binom{\deg u}{2}\leq  \cro_{\phi_{G}}(G)+n\dmax(\dmax-1),$$

and 

$$\optcro{\tilde G} \leq \optcro{G} + n\dmax(\dmax-1).$$

Our algorithm runs $\cal A$ on $\tilde G$ and finds a drawing $\phi'_{\tilde G}$ of $\tilde G$ with at most 
$$\alpha (\dmax n + \optcro{\tilde G})\leq \alpha (\dmax^2 n + \optcro{G}) $$ crossings.
We now show how to transform the resulting drawing $\phi'_{\tilde G}$ of $\tilde G$ into a drawing $\phi'_{G}$ of $G$. Informally, this is done by
contracting the drawing of every path $P_u$ into a point.
More precisely, we draw every vertex $u\in V(G)$ at the point $\phi'_{\tilde G}((u, e_1))$ (where $e_1$ is the first edge in the incidence list for $u$).
For each path $P_u$, for $u\in V(G)$, we construct $(\deg u)$ auxiliary curves $\gamma_{u, e_1}$, \dots, $\gamma_{u, e_{\deg u}}$, where for each $i: 1\leq i\leq \deg u$, curve $\gamma_{u,e_i}$ connects the images of $(u,e_1)$ and $(u,e_i)$ in $\phi'_{\tilde G}$,
and no pair of such curves cross (though curves that correspond to different vertices of $V$ are allowed to cross).
This is done as follows. First, we draw each curve $\gamma_{u,e_i}$  along the image of the path $P_u$ in $\phi'_{\tilde{G}}$,
following the segment that connects the images of $(u,e_1)$ and $(u,e_i)$; in a neighborhood of each vertex $(u,e_j)$ of $P_u$, we draw the curve  $\gamma_{u,e_i}$ on the side of $P_u$ opposite to the side where the edge $\tilde e_j$ enters $(u, e_j)$ (thus $\gamma_{u,e_i}$ does not cross the drawing of $e_j$ near the point 
$\phi'_{\tilde{G}}((u, e_j))$). We make sure that all curves $\gamma_{u,e_i}$ are drawn in general position. 
Next, if any of the resulting curves cross themselves, or cross each other, we perform uncrossing. Since all these curves start at the same point -- the image of $(u,e_1)$ -- we can uncross them so that the final curves do not cross each other, and do not cross themselves.

We are now ready to describe the drawing of every edge $e=(u,v)\in E(G)$. The drawing of $e$ is a concatenation of three curves: 
$\gamma_{u,e}$, $\phi'_{\tilde G}(\tilde e)$, and $\gamma_{v,e}$. The second segment of this drawing is called a type-1 segment, while the first and the third segments are called type-2 segments.
Note that it is possible that some pairs of edges have more than one crossing, adjacent edges cross each other and some edges have self crossings in this drawing; we will fix that later.
We now bound the number of crossings $\cro_{\phi'_G}(G)$.
\begin{itemize}
\item The number of crossings between all pairs of type-1 segments is bounded by $\cro_{\phi'_{\tilde G}}(\tilde G)$.
\item The number of crossings between any pair of type-2 segments $\gamma_{u,e}$ and $\gamma_{v,e'}$ (where $u\neq v$), is bounded by the number of crossings between
paths $P_u$ and $P_v$. Since every crossing between the paths $P_u$ and $P_v$ may pay for crossings of at most $\dmax^2$ such pairs of curves,
the total number of such crossings is at most $\dmax^2 \cro_{\phi'_{\tilde G}}(\tilde G)$.
\item Similarly, the number of crossings between a type-2 curve $\gamma_{u,e}$ and a type-1 curve $\phi'_{\tilde G}(\tilde e')$ (for an arbitrary edge $e'$ of $G$)
is at most the number of crossings between $P_u$ and $\tilde e'$ in $\phi'_{\tilde G}$. So the total number of such crossings is 
at most $\dmax \cdot \cro_{\phi'_{\tilde G}}(\tilde G)$.
\end{itemize}
We conclude that the number of crossings is
$O(\dmax^2 \cro_{\phi'_{\tilde G}}(\tilde G)) \leq O(\alpha \dmax^4(n + \optcro{G}))$.
Finally, the algorithm uncrosses drawings of edges that cross more than once, crossing pairs of adjacent edges, and edges that cross themselves.
During this step the number of crossings can only go down.
\ifabstract \qed \fi \end{proof}

The algorithm of Even et al.~\cite{EvenGS02} uses an algorithm for \textsf{Balanced Separator} as a subroutine.
We need a few definitions. Suppose we are given a graph $G=(V,E)$ with non-negative vertex weights $w$. For each subset $A\sse V$ of vertices, let $w(A)$ denote the total weight of vertices in $A$. We say that a cut $(S,\nots)$ is $b$-balanced w.r.t. the weights $w$, iff $w(S),w(\nots)\geq bw(V)$.
The cost of the cut is $|E(S,\nots)|$.
Even et al. prove the following theorem.

\begin{theorem}\label{lem:ARV-corollary}
Suppose that there is some function $\beta:{\mathbb Z}\rightarrow {\mathbb R}^+$, and an efficient algorithm for Balanced Separator, with the following property. Given any $n$-vertex, vertex-weighted graph $G = (V,E)$ and values $0\leq b\leq 1/2$, $C_b$, such that every sub-graph of $G$ has a $b$-balanced separator of size at most $C_b$, the algorithm returns a $1/3$-balanced cut of $G$, whose cost is $O(\beta(n) C_b)$ (the constant in the $O$-notation may depend on $b$).   Then there is an efficient algorithm to find a drawing of any bounded degree graph $G$ with $O(\beta^2(n) \log n) \cdot (n+\optcro{G})$ crossings.
\end{theorem}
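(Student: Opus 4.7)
The plan is to follow the Bhatt--Leighton recursive decomposition scheme as used by Even et al., swapping in the postulated abstract balanced-separator routine in place of their specific one. The proof proceeds in three stages.

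First, I would bound the separator size $C_b$ needed to invoke the hypothesis. Starting from an optimal drawing $\bphi$ of $G$, construct the planarization $\tilde G$ by inserting a degree-4 vertex at each crossing; then $\tilde G$ is planar, has $n+\optcro{G}$ vertices, and has bounded maximum degree. By the Planar Separator Theorem it admits, for any vertex weighting, a $b$-balanced separator of size $O(\sqrt{n+\optcro{G}})$. Pulling such a separator back to $G$ (replacing each crossing-vertex in it by the at-most-four original endpoints of the two crossing edges) yields a $b$-balanced separator of $G$ of size $O(\sqrt{n+\optcro{G}})$. The same reasoning applies to any induced subgraph $H\subseteq G$, since $\bphi$ induces a drawing of $H$ with at most $\optcro{G}$ crossings. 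Hence every weighted subgraph of $G$ has a $b$-balanced separator of size $C_b=O(\sqrt{n+\optcro{G}})$, and the oracle can be invoked on any such subgraph to obtain a $1/3$-balanced cut of cost $O(\beta(n)\sqrt{n+\optcro{G}})$.

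Second, I would recursively apply the oracle to build a binary decomposition tree $\mathcal T$ of depth $O(\log n)$: the root is $G$; each internal node $v$ corresponds to a subgraph $G_v$ on $n_v$ vertices that the oracle splits into two children with cut size $c_v=O(\beta(n_v)\sqrt{n_v+\optcro{G_v}})$; recursion stops when $n_v=O(1)$, and the resulting leaf pieces are drawn trivially with $O(1)$ crossings each.

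Third, I would assemble the final drawing bottom-up. At an internal node $v$, the two sub-drawings of its children are placed side-by-side inside a bounding box and the $c_v$ cut edges are routed through a thin channel separating them, using a standard book-embedding/VLSI routing that contributes $O(c_v^2)$ new crossings and does not disturb edges routed at lower levels. Summing over all nodes of $\mathcal T$,
\[
\sum_v c_v^2 \;\leq\; O(\beta^2(n))\sum_v (n_v + \optcro{G_v}) \;=\; O(\beta^2(n)\log n)\,(n+\optcro{G}),
\]
where we use that each vertex of $G$ and each crossing of $\bphi$ lies in at most $O(\log n)$ ancestors in $\mathcal T$, so $\sum_v n_v = O(n\log n)$ and $\sum_v \optcro{G_v} = O(\optcro{G}\log n)$.

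The main obstacle is the tight analysis of step three: one must arrange the channels at different levels so that they do not interact and the per-level bounds $O(c_v^2)$ accumulate additively across $\mathcal T$ rather than incurring an extra $\log n$ factor. This is a purely layout-combinatorial argument, independent of the separator algorithm, and can be imported intact from~\cite{EvenGS02}; plugging in the abstract oracle only changes the numerical value of $c_v$, not the geometry of the layout.
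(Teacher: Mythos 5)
Your proposal follows the same high-level logic as the paper: the paper cites Theorem~\ref{lem:ARV-corollary} as a result of Even et al.~\cite{EvenGS02} and does not reprove it, observing only that their recursive-separator framework is agnostic to which balanced-separator subroutine is plugged in, so that replacing Leighton--Rao by ARV lowers $\beta(n)$ from $O(\log n)$ to $O(\sqrt{\log n})$. You make the identical observation, and you additionally reconstruct the skeleton of the Even et al. argument. The parts of your reconstruction that stand on their own are correct: the planarization argument showing that every (weighted) subgraph of $G$ admits a $b$-balanced separator of size $O(\sqrt{n+\optcro{G}})$, the $O(\log n)$-depth recursion tree produced by $1/3$-balanced cuts, and the accounting $\sum_v c_v^2 \leq O(\beta^2(n))\sum_v(n_v+\optcro{G_v})= O(\beta^2(n)\log n)(n+\optcro{G})$ by grouping nodes level by level.

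The gap is in step three, and it is more than a deferred detail. You describe a bottom-up construction in which each child's drawing is produced first, the two boxes are placed side by side, and the $c_v$ cut edges are then routed through a ``thin channel,'' claiming this contributes only $O(c_v^2)$ new crossings and ``does not disturb edges routed at lower levels.'' That last clause is exactly what fails under the construction as you describe it: a cut edge of $v$ has its endpoints strictly \emph{inside} the two children's boxes, so to reach them it must traverse each child's interior, where it will cross edges drawn at deeper levels. There is no reason the number of such crossings is controlled by $c_v$ alone. Avoiding this requires a future-aware layout: one must compute the entire recursion tree first, identify for each region the vertices incident to edges cut at ancestor levels, and place those vertices on the region boundary throughout the recursion (exploiting bounded degree so that each vertex is such a ``boundary vertex'' at only $O(\dmax)$ levels). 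The naive bottom-up ``combine and route'' order you describe cannot do this, because when a child's drawing is produced it does not yet know which of its vertices will need boundary access. So the sentence ``can be imported intact from~\cite{EvenGS02}'' is doing all the real work; what you actually sketch is not the Even et al. layout, and as written it produces uncontrolled cross-level crossings. In short: the $\sum_v c_v^2$ accounting gives the right final bound, but you have not established (and your sketch actively contradicts) that the per-node cost really is $O(c_v^2)$ and that the costs at distinct nodes are disjoint.

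A smaller issue: you write $c_v = O(\beta(n_v)\sqrt{n_v+\optcro{G_v}})$, but the separator guarantee you derived from planarization is $O(\sqrt{n_v+\cro_{\bphi}(G_v)})$, where $\cro_{\bphi}(G_v)\geq \optcro{G_v}$ is the number of crossings that $\bphi$ induces on $G_v$; moreover neither quantity is known to the algorithm, so the oracle must be used as a pure approximation guarantee (or via binary search over $C_b$). This is benign for the final bound, since at each level $\sum_v \cro_{\bphi}(G_v)\leq\optcro{G}$ just as $\sum_v\optcro{G_v}$ does, but the statement as written conflates what the algorithm can promise with what it would like to promise.
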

Even et al. use the algorithm of Leighton and Rao~\cite{LR} that gives an algorithm for balanced separators with approximation factor $\beta(n) = O(\log n)$. We note that the results of Arora, Rao and Vazirani~\cite{ARV} gives an improved algorithm, with $\beta(n) = O(\sqrt{\log n})$, and thus we can efficiently find a drawing of a bounded degree graph with at most $O(\log^2 n) \cdot (n+\optcro{G})$ crossings. 
\begin{theorem}[Arora et al. \cite{ARV}]\label{thm:ARV-alg}
For every constant $0<b<1/2$ and some $0 < b' < b$ (that depends on $b$), there is a bi-criteria approximation algorithm for the Balanced Cut Problem with the following approximation guarantee. Given a graph $G=(V,E)$ and a set of vertex weights $w$, the algorithm finds a $b'$-balanced cut w.r.t. $w$ of cost at most $O(\sqrt{\log n} \cdot C_b)$, where $C_b$ is the cost of the optimal $b$-balanced cut w.r.t. $w$. (The constant in the $O$-notation depends on $b$.)
\end{theorem}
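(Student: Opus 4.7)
The plan is to derive Theorem~\ref{thm:ARV-alg} from the $O(\sqrt{\log n})$-approximation of Arora--Rao--Vazirani for (weighted) Sparsest Cut, via the classical iterative reduction from Balanced Cut to Sparsest Cut due to Leighton--Rao, but with ARV in place of Leighton--Rao as the inner oracle. Recall that the Sparsest Cut problem asks, given $G=(V,E)$ with vertex weights $w$, for a cut $(S,\bar S)$ minimizing the ratio $|E(S,\bar S)|/\min(w(S),w(\bar S))$. The ARV algorithm, originally presented for uniform weights, extends via the same SDP relaxation and $\ell_1$-embedding machinery to arbitrary vertex weights, yielding an efficient algorithm that returns a cut with sparsity at most $O(\sqrt{\log n})$ times the optimum.

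The reduction runs as follows. Fix the target balance parameter $b$ and a constant $b'<b$ to be chosen (say $b'=b/3$). The algorithm maintains a shrinking live set $V_i\subseteq V$ together with a growing removed set $D_i=V\setminus V_i$; initially $V_1=V$, $D_1=\emptyset$. At iteration $i$ we run the weighted-ARV sparsest cut approximation on $G[V_i]$ (with the induced vertex weights) to obtain a cut $(S_i,\bar S_i)$ of $V_i$, with $w(S_i)\le w(\bar S_i)$. If $w(D_i\cup S_i)\ge b'\,w(V)$, we output $(D_i\cup S_i,\,\bar S_i)$; otherwise we set $D_{i+1}=D_i\cup S_i$, $V_{i+1}=\bar S_i$, and iterate.

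To bound the cost, observe that the optimal $b$-balanced cut of $G$ has cost $C_b$ and hence sparsity at most $C_b/(b\,w(V))$. As long as $w(V_i)\ge (1-b)\,w(V)$, the restriction of that optimal cut to $G[V_i]$ remains reasonably balanced, so the optimal sparsest cut in $G[V_i]$ has sparsity at most $O(C_b/(b\,w(V)))$, and ARV returns a cut of sparsity $O(\sqrt{\log n})\cdot C_b/(b\,w(V))$. Since sparsity is cost divided by the smaller side's weight and $w(S_i)\le w(\bar S_i)$, the cost of cut $(S_i,\bar S_i)$ is at most $O(\sqrt{\log n})\cdot C_b\cdot w(S_i)/(b\,w(V))$. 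Summing over iterations, the total edges removed is at most
\[
\sum_i |E(S_i,\bar S_i)| \;\le\; \frac{O(\sqrt{\log n})\,C_b}{b\,w(V)}\sum_i w(S_i) \;\le\; \frac{O(\sqrt{\log n})\,C_b}{b\,w(V)}\cdot w(V) \;=\; O(\sqrt{\log n})\cdot C_b.
\]
The final output cut $(D_i\cup S_i,\bar S_i)$ separates $V$ along the removed boundary, whose total cost telescopes into exactly this sum.

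The main obstacle is verifying that the process terminates with a genuinely $b'$-balanced cut, rather than eroding $V_i$ beyond the regime where the sparsity bound $C_b/(b\,w(V))$ applies. This is handled by the choice $b'<b$ together with a short case analysis: by construction each $D_i$ satisfies $w(D_i)<b'\,w(V)$, so $w(V_i)\ge (1-b')\,w(V)\ge (1-b)\,w(V)$, keeping us in the valid regime; and once $w(D_i\cup S_i)$ first crosses $b'\,w(V)$ the output is automatically $b'$-balanced on the $D$-side, while $w(\bar S_i)\ge w(V_i)-w(S_i)\ge (1-b)\,w(V)-b'\,w(V)\ge b'\,w(V)$ supplies balance on the other side. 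The remaining subtlety is just that the ARV guarantee is for weighted vertex-balanced sparsity and not for the edge-weighted variant, which is precisely the form stated in Theorem~\ref{thm:ARV-alg}; this is standard.
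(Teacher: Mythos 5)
The paper does not prove Theorem~\ref{thm:ARV-alg} at all --- it imports it verbatim as a result of Arora, Rao and Vazirani~\cite{ARV} (it is their pseudo-approximation for $c$-balanced separator). Your proposal re-derives that statement from ARV's sparsest-cut algorithm via the standard iterative ``peel off the small side'' reduction, which is essentially the argument in the ARV paper itself; so you have reconstructed the cited result rather than anything this paper proves. The argument is basically sound, and the cost bound via $\sum_i w(S_i)\le w(V)$ telescoping against the sparsity bound $O(\sqrt{\log n})\,C_b/((b-b')w(V))$ is correct. One slip in the balance check: you write $w(\bar S_i)\ge w(V_i)-w(S_i)\ge(1-b)w(V)-b'w(V)$, which tacitly assumes $w(S_i)\le b'w(V)$; but $S_i$ is merely the smaller side of a sparsest cut of $V_i$ and can have weight as large as $w(V_i)/2$. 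The correct bound is $w(\bar S_i)\ge w(V_i)/2>(1-b')w(V)/2\ge b'w(V)$, which holds as long as $b'\le 1/3$, so your choice $b'=b/3<1/6$ is fine. You should also note explicitly that you are using the \emph{product-demands} (vertex-weighted) version of ARV sparsest cut, for which the $O(\sqrt{\log n})$ guarantee is known to carry over from the uniform case, rather than the general-demands version (whose guarantee is $O(\sqrt{\log n}\log\log n)$).
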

We point out that the algorithm in Theorem~\ref{thm:ARV-alg} does not directly satisfy the requirements of Theorem~\ref{lem:ARV-corollary}, since it finds a $b'$-balanced cut only for \textit{some} $b'\in (0,b)$ that depends on $b$, while we are required to produce a $1/3$-balanced cut.
For completeness, we show that the algorithm of~\cite{ARV} can still be used to obtain an algorithm for balanced separator as required in the statement of Theorem~\ref{lem:ARV-corollary}, for $\beta(n)=O(\sqrt{\log n})$.

We iteratively apply the algorithm of Arora et al. We start with $S = V$ and $\nots=\emptyset$. We first find a $b'$-balanced cut in $G[S] = G$ 
(where $b'$ is the constant guaranteed by Theorem~~\ref{thm:ARV-alg}). If the larger side of the cut contains at most $2/3$ of the total weight,
then the cut is $1/3$-balanced and we are done. Otherwise, we let $S$ be the larger side of the cut (w.r.t. weights $w$),
and we add the smaller side of the cut to $\nots$. Then we iteratively apply the ARV-algorithm to $G[S]$, update sets $S$ and $\nots$, and repeat.
We stop when $S$ contains at most $2/3$ of the total weight of $G$. Note that after each iteration the weight of $S$ decreases 
by at least a factor $(1-b')$, since the algorithm of \cite{ARV} finds a $b'$-balanced cut. Therefore, the algorithm stops in at most 
$\lceil \log_{1-b'} 2/3\rceil$ steps. Observe that after each iteration, $w(S) \geq w(V)/3$, since before each iteration
$w(S) \geq 2 w(V)/3$ and we let $S$ to be the larger of the two sides of the cut. Hence, $w(V)/3 \leq w(S) \leq 2w(V)/3$
when the algorithm terminates. That is, the cut $(S, \nots)$ is $1/3$-balanced. 

In every iteration, we cut at most $O(\sqrt{\log n} \cdot C_b)$ edges, and the total number of iterations is at most $\lceil\log_{1-b'} 2/3\rceil$.
Thus the cost of the cut is $O(\sqrt{\log n} \cdot C_b)$.

\fi
\end{document}